\newcommand\blankpage{%
    \null
    \thispagestyle{empty}%
    \addtocounter{page}{-1}%
    \newpage}
\theoremstyle{definition}
\newtheorem{thma}{Theorem}
\newtheorem{stmt}[thma]{Statement}
\newtheorem{theorem}{Theorem}
\newtheorem{definition}{Definition}
\def\d{\delta}
\def\ta{\~a}
\def\ccd{\c{c}}
\def\1{{\bf 1}}
\def\inbar{\vrule height1.5ex width.4ptdepth0pt}
\def\rlx{\relax\leavevmode}
\def\I{\leavevmode
\hbox{\small1\kern-3.8pt\normalsize1}}
\def\openone{\leavevmode
\hbox{\small1\kern-3.3pt\normalsize1}}
\def\Ione{\rlx{\rm 1\kern-2.7pt l}}
\font\cmss=cmss10 \font\cmsss=cmss10 at 7pt
\def\ZZ{\rlx\leavevmode \ifmmode\mathchoice
{\hbox{\cmssZ\kern-.4em Z}} {\hbox{\cmss Z\kern-.4em Z}}
{\lower.9pt\hbox{\cmsss Z\kern-.36em Z}}
{\lower1.2pt\hbox{\cmsssZ\kern-.36em Z}} \else{\cmss Z\kern-.4em
Z}\fi}
\def\Ik{\rlx{\rm I\kern-.18em k}}
\def\IC{\rlx\leavevmode
\ifmmode\mathchoice {\hbox{\kern.33em\inbar\kern-.3em{\rm C}}}
{\hbox{\kern.33em\inbar\kern-.3em{\rm C}}}
{\hbox{\kern.28em\sinbar\kern-.25em{\rm C}}}
{\hbox{\kern.25em\ssinbar\kern-.22em{\rm C}}}
\else{\hbox{\kern.3em\inbar\kern-.3em{\rm C}}}\fi}
\def\IP{\rlx{\rmI\kern-.18em P}}
\def\IR{\rlx{\rm I\kern-.18em R}}
\def\IN{\rlx{\rm I\kern-.20em N}}
\newcommand{\mb}\mbox
\newcommand{\ol}\overline
\newcommand{\ul}\underline
\newcommand{\ti}\tilde
\newcommand{\wt}\widetilde
\newcommand{\wh}\widehat
\newcommand{\bv}\breve
\newcommand{\dg}\dagger
\newcommand{\be}{\begin{equation}}
\newcommand{\ee}{\end{equation}}
\newcommand{\bl}{\begin{eqnarray}&}
\newcommand{\el}{&\end{eqnarray}}
\newcommand{\bq}{\begin{eqnarray}}
\newcommand{\eq}{\end{eqnarray}}
\begin{document}
\begin{titlepage}
  \begin{center}
    \LARGE{\textsc{ Universidade Federal Fluminense}} \\
           \textsc{Instituto de F\'isica} \\          
          
    \par\vfill
    \LARGE{{\bf \textit{Exploring new horizons of the Gribov problem in Yang-Mills theories}}}
    \par\vfill
   
    \par\vfill
    \small{\textsc{Ant\^onio Duarte Pereira Junior}} \\
          \small{\textsc{Niter\'oi, 2016}} \\          
          
  \end{center}
\end{titlepage}

\afterpage{\blankpage}

\clearpage
\thispagestyle{empty}
\begin{center}

ANT\^ONIO DUARTE PEREIRA JUNIOR

\vfill

EXPLORING NEW HORIZONS OF THE GRIBOV PROBLEM IN YANG-MILLS THEORIES

\vspace{3.0cm}

\begin{flushright}
\begin{minipage}{0.50\textwidth}

A thesis submitted to the Departamento\linebreak
de F\'isica - UFF in partial fulfillment\linebreak
of the requirements for the degree of\linebreak
Doctor in Sciences (Physics).

\end{minipage}
\end{flushright}

\vspace{3.0cm}

Advisor: Prof. Dr. Rodrigo Ferreira Sobreiro

\vfill

Niterói-RJ\\2016

\end{center}

\afterpage{\blankpage}

\clearpage
\begin{flushright}
\begin{minipage}{0.5\textwidth}

\vspace{15.0cm} 

\textit{To the reason of everything. To my cosmic love. To my state of love and trust. To Anna Gabriela.}

\end{minipage}
\end{flushright}

\afterpage{\blankpage}

\clearpage
\chapter*{Acknowledgements}
\addcontentsline{toc}{chapter}{Acknowledgements}

Scientific research is definitely a collaborative activity. But this is in the broad sense. Not only the scientific collaborators are fundamental to make progress in research, but also all support from different people outside the scientific community is crucial. In this small space, I try to express my genuine gratitude to scientific or not collaborators who helped me to start my career. It is a cliche, but now that I am writing this words, I realize how difficult is to describe the importance of all people that played some direct or indirect influence on my Ph.D activity. Nevertheless, I will try my best.

First, I would like to thank my advisor Prof. Rodrigo Sobreiro. His guidance during the last four years was crucial for me. He trusted me, gave me freedom and let me follow my personal interests. Since 2009 we have been discussing physics and is not even possible to discriminate what I have learned from him. Thank you very much for everything!

During my Ph.D I had the opportunity to spend one year and half at SISSA in Trieste. There I met two professors who taught me a lot of physics and scientific research in general: Prof. Loriano Bonora and Prof. Roberto Percacci. They received me, gave me all support and dedicated a lot of time to me. I am not able to express how generous they were and it is even more difficult to emphasize how inspired I am by them to continue my career. Grazie mille per tutti!

My most sincere thanks to my group mates in Brazil: Anderson Tomaz, Tiago Ribeiro and Guilherme Sadovski. First for their friendship, company and amazing coffees/lunches. Second for softening the hard times. Third for the invaluable discussions on physics and scientific collaboration. These were wonderful times! Also, in SISSA, I had the amazing company of a brazilian group mate: Bruno Lima de Souza. Thank you very much for making my time in Trieste even better, with lots of pizzas, japanese food, piadina, gelato and wonderful panini con porcina. I am grateful for his lessons on physics and Mathematica. Many many thanks for all of you being much more than group mates but marvelous friends! 

During this journey I met many people that strongly contributed to my scientific formation. My estimated friends and collaborators from UERJ: Pedro Braga, M\'arcio Capri, Diego Fiorentini, Diego Granado, Marcelo Guimar\~aes, Igor Justo, Bruno Mintz, Let\'icia Palhares and Silvio Sorella. They were fundamental to this thesis and every time I go to UERJ I learn more and more about non-perturbative quantum field theories. A special thanks goes to Marcelo for his amazing course on quantum field theories and to Silvio Sorella for his very clear and nice explanations and discussions on the Gribov problem. Part of this huge group but located a bit far from Maracan\~a is David Dudal. Thank you very much for infinitely many discussions about Yang-Mills theories and scientific career, for your patience, attention and support all the time. I am very grateful to Markus Huber for our discussions, conversations, e-mails and skypes. To Reinhard Alkofer for encouragement, very nice and deep comments on Yang-Mills theories and to highly interesting (and non-trivial) questions that made me improve my understanding on non-perturbative Yang-Mills theories. To Urko Reinosa for stimulating discussions and attention! Many thanks to Henrique Gomes and Flavio Mercati. They opened their Shape Dynamics doors to me. I am very grateful for discussions and several hints concerning scientific career. Also, their independent way of thinking is very stimulating and has been teaching me a lot in the last months. To Rodrigo Turcati, an estimated friend, for our discussions and coffees. In particular, many thanks for a careful reading of the manuscript.

Special thanks to my group mates at the Albert Einstein Institute, where I spent almost four months of my Ph.D. First, I express my gratitude to my supervisor Prof. Daniele Oriti for sharing his huge knowledge on quantum gravity and his very clear point of view of physics. To Joseph Bengeloun for discussions and for very nice and stimulating questions. My sincere acknowledgements to Alex Kegeles, Goffredo Chirco, Giovanni Tricella, Marco Finocchiaro, Isha Kotecha, Dine Ousmane-Samary and Cedrick Miranda for discussions and company during this period! 

I thank my professors at UFF: Luis Oxman for sharing his knowledge on quantum field theories and Yang-Mills as well as very nice conversations during lunch, Ernesto Galv\~ao for very nice courses since my undergraduate studies and for his support, Nivaldo Lemos for teaching me not only physics but a way of thinking of it, Marco Moriconi for his continuous support and discussions, Jorge S\'a Martins for his invaluable ``Lectures on Physics", Marcelo Sarandy for support and for sharing his ideas and knowledge, Caio Lewenkopf for courses, conversations and for being an extraordinary professional and Antonio Tavares da Costa for his support during my Ph.D studies.  

An impossible-to-express gratefulness to my family. Mom and Dad for giving me everything I needed to follow my dreams and for their effort to provide all the support irrespective of the occasion. They showed me the real meaning of dedication and obstinacy. Grandma and Grandpa (\textit{in memoriam}) for making the world a place free of problems to live. Granny was the first to take me to a center of research and Granpa the first to know (and support) my decision of studying physics. My parents in law for a wonderful time together and to give me support in many different situations. My uncle Antonio and aunt Lidia for support and care during my entire life. To my sisters Carol and Andresa and to my cousin Pedro for their love and support. To my in-laws Bruna, Liana, Artur, Biel, Bruno, Thiago, Cristiano, Carol and Sandra for their friendship and company. Last, but not least, to my nephew Lucas for being this wonderful kid! 

To the love of my life, Anna Gabriela, for giving me all the support and love a humankind can imagine. She encouraged me to follow all my desires, dreams and plans along these years. It is simply not possible to express my feelings here. Thank you for letting me understand the real meaning of love and happiness, for being my best friend, my best company and for all your care. I love you, minha Pequenininha. 

I am lucky to have so many extraordinary friends. To Bella and Duim for their company, frienship and love. To Bruno, Jimmy, RVS, Pedro and Rafael for being the brotherhood! To Fred, Victor, Ana Clara, D\'ebora, Tatyana, La\'is, St\'ephanie, J\'essica, Danilo, Laise, Allan, Luiza, L\'eo, Rosembergue, J\'ulia, Stefan, Tatiana (Bio),Brenno and Tikito for being part of my life. It is a pleasure to thank Mary Brand\~ao and Carlos Alexandre (\textit{in memoriam}) for giving me all the inspiration to follow my career, for their kindness and care.

Finally, I am grateful to CNPq, CAPES, SISSA and DAAD for financial support along the last years.

\pagestyle{empty}

\newpage
\begin{flushright}
\begin{minipage}{0.5\textwidth}

\vspace{15.0cm} 

\textit{``Physics is like sex: sure, it may give some practical results, but that's not why we do it."}\\
\sffamily \small Richard Feynman \rmfamily 

\end{minipage}
\end{flushright} 

\newpage 
\chapter*{Abstract}
\addcontentsline{toc}{chapter}{Abstract}

The understanding of the non-perturbative regime of Yang-Mills theories remains a challenging open problem in theoretical physics. Notably, a satisfactory description of the confinement of gluons (and quarks in full quantum chromodynamics) is not at our disposal so far. In this thesis, the Refined Gribov-Zwanziger framework, designed to provide a proper quantization of Yang-Mills theories by taking into account the existence of the so-called Gribov copies is explored. Successfully introduced in the Landau gauge, the Refined Gribov-Zwanziger set up does not extend easily to different gauges. The main reason is that a clear formulation of the analogue of the Gribov horizon in the Landau gauge is obstructed by technical difficulties when more sophisticated gauges are chosen. Moreover, the Refined Gribov-Zwanziger action breaks BRST symmetry explicitly, making the task of extracting gauge invariant results even more difficult. The main goal of the present thesis is precisely to provide a consistent framework to extend the Refined Gribov-Zwanziger action to gauges that are connected to Landau gauge via a gauge parameter. Our main result is the reformulation of the theory in the Landau gauge with appropriate variables such that a \textit{non-perturbative} BRST symmetry is constructed. This symmetry corresponds to a deformation of the standard BRST symmetry by taking into account non-perturbative effects. This opens a toolbox which allow us to explore what would be the Gribov horizon in different gauges as linear covariant and Curci-Ferrari gauges. Consistency with gauge independence of physical quantities as well as the computation of the gluon propagator in these gauges is provided. Remarkably, when lattice or functional methods results are available, we verify very good agreement with our analytical proposal giving support that it could provide some insights about the non-perturbative regime of Yang-Mills theories. A positivity violating gluon propagator in the infrared seems to be a general feature of the formalism. Gluons, therefore, cannot be interpreted as stable particles in the physical spectrum of the theory being, thus, confined. 

\clearpage
\chapter*{List of Publications}
\addcontentsline{toc}{chapter}{List of Publications}

Here I present my complete list of publications during my Ph.D. 

\begin{itemize}
	\item \textit{``More on the non-perturbative Gribov-Zwanziger quantization of linear covariant gauges,''} \\
  M.~A.~L.~Capri, D.~Dudal, D.~Fiorentini, M.~S.~Guimaraes, I.~F.~Justo, A.~D.~Pereira, B.~W.~Mintz, L.~F.~Palhares, R.~F.~Sobreiro and S.~P.~Sorella,\\
  Phys.\ Rev.\ D \textbf{93}, no. 6, 065019 (2016)


  \item \textit{``Non-perturbative treatment of the linear covariant gauges by taking into account the Gribov copies,''}\\
	M.~A.~L.~Capri, A.~D.~Pereira, R.~F.~Sobreiro and S.~P.~Sorella,\\  
  Eur.\ Phys.\ J.\ C \textbf{75}, no. 10, 479 (2015)

  \item \textit{``Exact nilpotent nonperturbative BRST symmetry for the Gribov-Zwanziger action in the linear covariant gauge,''}\\
	M.~A.~L.~Capri, D.~Dudal, D.~Fiorentini, M.~S.~Guimaraes, I.~F.~Justo, A.~D.~Pereira, B.~W.~Mintz, L.~F.~Palhares, R.~F.~Sobreiro and S.~P.~Sorella,\\  
  Phys.\ Rev.\ D \textbf{92}, no. 4, 045039 (2015)

	\item \textit{``Regularization of energy-momentum tensor correlators and parity-odd terms,''}\\
	L.~Bonora, A.~D.~Pereira and B.~L.~de Souza,\\  
  JHEP \textbf{1506}, 024 (2015)

	\item \textit{``Gribov ambiguities at the Landau-maximal Abelian interpolating gauge,''}\\
  A.~D.~Pereira, Jr. and R.~F.~Sobreiro,\\
  Eur.\ Phys.\ J.\ C \textbf{74}, no. 8, 2984 (2014)\\

  \item \textit{``On the elimination of infinitesimal Gribov ambiguities in non-Abelian gauge theories,''}	\\
 A.~D.~Pereira and R.~F.~Sobreiro,\\
 Eur.\ Phys.\ J.\ C \textbf{73}, 2584 (2013)\\
	
	\item \textit{``Dark gravity from a renormalizable gauge theory,''}\\
T.~S.~Assimos, A.~D.~Pereira, T.~R.~S.~Santos, R.~F.~Sobreiro, A.~A.~Tomaz and V.~J.~V.~Otoya,\\
    arXiv:1305.1468 [hep-th].
\end{itemize}

\noindent This thesis intends to cover in detail the first three ones plus further material to appear. 

After the submission of the thesis, the following papers were published:

\begin{itemize}
\item \textit{``Gauges and functional measures in quantum gravity I: Einstein theory,''}\\
  N.~Ohta, R.~Percacci and A.~D.~Pereira,\\  
  JHEP \textbf{1606}, 115 (2016)
	
\item \textit{``A local and BRST-invariant Yang-Mills theory within the Gribov horizon,''}\\
M.~A.~L.~Capri, D.~Dudal, D.~Fiorentini, M.~S.~Guimaraes, I.~F.~Justo, A.~D.~Pereira, B.~W.~Mintz, L.~F.~Palhares, R.~F.~Sobreiro and S.~P.~Sorella,\\  
  arXiv:1605.02610 [hep-th]

\item \textit{``Non-perturbative BRST quantization of Euclidean Yang-Mills theories in Curci-Ferrari gauges,''}\\
A.~D.~Pereira, R.~F.~Sobreiro and S.~P.~Sorella,\\
arXiv:1605.09747 [hep-th]	
\end{itemize}
\tableofcontents

\pagestyle{plain}

\chapter{Introduction}

Theoretical and experimental physicists face a very interesting moment of elementary particle physics: The Standard Model (SM) of particle physics (taking into account neutrinos masses), constructed about five decades ago seems to describe nature much better than we expected. Up to now, the most crucial tests it was submitted to were successfully overcame. The most recent and urgent test was the detection or not of the Higgs boson in the LHC. Its detection in 2012 \cite{Chatrchyan:2012ufa} was responsible for a great excitement among physicists and put the SM as one of the biggest intellectual achievements of humankind. Despite of aesthetic discussions concerning the beauty or not of the SM, it is undeniable it provides our best understanding of fundamental physics up to date. 

The current paradigm establishes we have four fundamental interactions in nature: The electromagnetic, strong, weak and gravitational. A consistent quantum description of the first three aforementioned interactions is provided by the SM. Gravity stays outside of this picture. Seemingly, the coexistence of gravity and quantum mechanics in a consistent framework requires a profound change in our current way of thinking of the other interactions. This challenging problem of providing a quantum theory of gravity is one of the biggest problems in theoretical physics and the lack of experimental data to guide us in a path instead of the other makes the problem even worse. 

We must comment, however, that is far from being accepted that the SM is the final word about the fundamental interactions. In particular, besides its apparent inconsistency with gravity, a prominent problem which, so far, has no satisfactory explanation within the SM is the existence of dark matter. Also, the strong CP problem and matter-antimatter asymmetry correspond to other examples of phenomena which are not currently accommodated in the SM. Those facts point toward a physics beyond the SM and this is strongly investigated in the current years. 

Nevertheless, despite of problems which are not (at least so far) inside the range of the SM power, there is a different class of problems which are those we believe the SM is able to describe, but due to technical difficulties or our ignorance about how to control all the scales believed to be described by the SM with a single mathematical tool, are still open. The focus of this thesis is precisely to get a better understanding on this sort of problem. 

\section{The gauge theory framework}

The SM model is built upon a class of quantum field theories known as \textit{non-Abelian gauge} or \textit{Yang-Mills theories}. This name is due to the seminal work by Yang and Mills \cite{Yang:1954ek} where these theories were introduced. Essentially, they generalize the $U(1)$ gauge invariance of electromagnetism to non-Abelian groups. The particular case considered by Yang and Mills was the group $SU(2)$. This group was supposed to represent the isotopic spin rotations and due to its non-Abelian nature, the analogue of photons in this model self-interacts. Also, due to the requirement of gauge invariance, these fields have to represent massless particles. At that time, this was a big obstacle for the Yang-Mills model of isotopic spin, since those massless particles should be easily observed and no such particle was detected. Nevertheless, the gauge invariance principle \cite{Salam:1961en} - the determination of the form of interactions due to the invariance under certain gauge symmetry - was theoretically powerful. An inspired work by Utiyama \cite{Utiyama:1956sy}, who considered a wider class of groups instead of just $SU(2)$, showed how such principle could coexist with electromagnetism, Yang-Mills theories and even general relativity. Also, the quantization of these theories was worked out by Feynman, Faddeev, Popov and De Witt in \cite{Feynman:1963ax,Faddeev:1967fc,DeWitt:1967uc}. 

However, it was the construction of the Glashow-Salam-Weinberg theory for the electroweak sector (the unification of electromagnetism and weak interactions) which brought Yang-Mills theories as the arena to formulate the elementary interactions. The realization that the strong interactions could be formulated through a Yang-Mills-type of theories was not easily clear. In fact, the construction of quantum chromodynamics (QCD) had very interesting turns which, for instance, led to the construction of string theory. The electroweak theory is a gauge theory with gauge group $SU(2)\times U(1)$ and QCD, a gauge theory for $SU(3)$. These sectors of the SM encompass very different physical mechanisms. In particular, the electroweak sector suffers a spontaneous symmetry breaking giving masses to the gauge fields, the gauge bosons. This mechanism is driven by the Higgs boson and after 2012, this picture is well grounded by experimental data. QCD, on the other hand, displays confinement, mass gap, chiral symmetry breaking and asymptotic freedom. In this thesis, we will particularly focus on the strong interaction sector. More precisely, we will disregard the existence of fermions along this thesis \textit{i.e.} we will focus on a \textit{pure} Yang-Mills theory. The reason is technical: The pure Yang-Mills theory already displays phenomena as confinement and mass gap. Being a simpler theory than full QCD, we believe it is useful to provide some insights for the more complicated theory with fermions. Nevertheless, although simpler, pure Yang-Mills theory is far from being simple. In the next section we introduce the referred problems in the context of pure Yang-Mills theories. 

\section{Pure Yang-Mills theories}

Pure Yang-Mills theories describe the dynamics of gauge bosons - which we will simply call ``gluons", since we are considering these theories in the context of the strong interactions. At the classical level, the action which dictates the dynamics of such particles is given by\footnote{We refer the reader to Ap.~\ref{appendixA} for our conventions.}

\begin{equation}
S_{\mathrm{YM}}=\frac{1}{4}\int d^dx\,F^{a}_{\mu\nu}F^{a}_{\mu\nu}\,.
\label{pym1}
\end{equation}  
The non-Abelian structure of the gauge group includes in \eqref{pym1} cubic and quartic interaction terms for the gluon field. These self-interaction terms drive a highly non-trivial dynamics for Yang-Mills theories already at the classical level. However, it is at the quantum level that intriguing phenomena take place. Even the quantization procedure itself already brings very subtle points (in fact, it is the topic studied in this entire thesis). Before pointing out these subtleties in the quantization procedure, we present some key results of the quantum theory.

\subsection{Asymptotic freedom}

One of the most remarkable features of quantum Yang-Mills theories is the so-called asymptotic freedom, \cite{Gross:1973id,Politzer:1973fx}. Working the explicit perturbative renormalization of these theories, we obtain at one-loop order, 

\begin{equation}
g^2(\mu)=\frac{1}{\frac{11N}{16\pi^2}\mathrm{ln}\frac{\mu^2}{\Lambda^2_{\mathrm{QCD}}}}\,,
\label{af1}
\end{equation}
where $\mu$ is an energy scale, $\Lambda^2_{\mathrm{QCD}}$ is a renormalization group invariant cut-off and $g$, the coupling constant of the theory. From expression \eqref{af1}, the running of the coupling is such that for high energies \textit{i.e.} $\mu>>\Lambda_{\mathrm{QCD}}$, $g$ goes to zero. In other words: In the UV (short distances), the coupling tends to zero. It means that the theory is UV complete and well-defined up to arbitrary high energy scales. This is precisely what is known as \textit{asymptotic freedom}. An immediate consequence of this fact is that for high energies, the coupling is small and thus, perturbation theory becomes an efficient tool to be applied, since the perturbative series is based on powers of $g$. For high energies, namely for short distances, gluons are weakly interacting and behave (almost) as free particles.

On the other hand, if we take smaller values for $\mu$, the value of $g$ increases. In particular, as $\mu\rightarrow\Lambda_{\mathrm{QCD}}$, $g\rightarrow\infty$. Sometimes this is referred as Landau pole and its existence is due to the breakdown of perturbation theory. To see this, we note that for

\begin{equation}
\mu^2=\Lambda^{2}_{\mathrm{QCD}}\mathrm{e}^{\frac{16\pi^2}{11N}}\,,
\label{af2}
\end{equation}
we have $g=1$. At this level, the perturbative expansion is not trustful since $g$ is not small. Hence, the expression \eqref{af1} is not meaningful at this scale. We see thus that the existence of the Landau pole is due to the fact that one enters the \textit{non-perturbative} regime where perturbation theory cannot be safely applied. Here, a challenging problem takes place: From the analytical point of view, the main tool at our disposal in standard quantum field theories is precisely perturbation theory. However, going towards the infrared scale of Yang-Mills, or, equivalently, large distances regime, perturbation theory does not apply and one needs a genuine \textit{non-perturbative} setting. Although many clever techniques were developed to attack non-perturbative phenomena, a systematic framework analogous to perturbation theory is not at our disposal up to date. The state-of-the-art is the gluing of complementary results from different non-perturbative approaches to try to build a consistent picture. We shall comment more about this later on. 

\subsection{Confinement}

Considering full QCD, a basic fact is observed: Quarks and gluons, the fundamental particles of the theory, are not observed in the spectrum of the theory as asymptotic states \cite{Alkofer:2000wg}. Even if we remove the quarks and stay just with gluons, this remains true. This phenomenon is known as \textit{confinement} of quark and gluons. We call QCD a \textit{confining} theory. Pure Yang-Mills are also \textit{confining}. Although widely accepted by our current understanding of elementary particle physics, an explanation of \textit{why} confinement exists still lacks. In fact, up to date we have several proposals to deal with this problem, but any of them give a full picture of the story. It is also accepted that QCD (or pure Yang-Mills) should be the correct framework to describe confinement. Our difficulty of providing a satisfactory understanding of this phenomenon should be directly associated with our ignorance of controlling the non-perturbative regime of QCD (or Yang-Mills theories). Flowing to the infrared, the theory becomes strongly interacting. Pictorially, as we separate quarks or gluons, the coupling which controls their interaction increases becoming so strong that we cannot separate them further. As such, quarks and gluons are not observed freely, but only in \textit{colorless} bound states, the hadrons. 

In this thesis, we will focus on one particular proposal to describe confinement, the so-called \textit{Refined Gribov-Zwanziger} framework. The essential feature of this set up is the finding that the standard quantization of Yang-Mills theories through the \textit{Faddeev-Popov} procedure is not completely satisfactory at the non-perturbative level. Taking into account an improvement of the Faddeev-Popov method and additional non-perturbative effects as the formation of non-trivial vacuum condensates bring features that are in agreement with the existence of confinement of gluons (and, possibly, quarks in recent proposals). Before pointing out the essential features of the Faddeev-Popov method which might be problematic in the infrared regime, we comment on another important non-perturbative effect.

\subsection{Dynamical mass generation} 

In four dimensions, the absence of a dimensionfull parameter in the classical pure Yang-Mills action could naively convince us that there is no room for the generation of mass parameters in the theory. On the other hand, it is known that the renormalization procedure intrinsically introduces a dimensionfull (arbitrary) scale $\mu$ and $\Lambda_{\mathrm{QCD}}$ in the game. We can see from eq.\eqref{af1} that the dimensionless coupling $g$ is written in terms of a ratio of dimensionfull parameters which arise from the renormalization procedure.   

Thus, having introduced those scales due to quantum effects, we might as well ask if physical mass parameters can emerge. For physical parameters, we mean those that respect the renormalization group equation. To make this statement more precise, we write the beta function associated with the coupling $g$ as a function of powers of $g^{2n}$,

\begin{equation}
\beta_{g^2} (g^2)=g^4\beta_0+g^6\beta_1+\ldots\,.
\label{dmg1}
\end{equation}
Assuming the existence of a \textit{physical} mass parameter $m$, we demand its invariance under the renormalization group equation

\begin{equation}
\mu\frac{d}{d\mu}m=\left(\mu\frac{\partial}{\partial \mu}+\beta_{g^2}(g^2)\frac{\partial}{\partial g^2}\right)m=0\,.
\label{dmg2}
\end{equation}
If we stick to leading order in perturbation theory \cite{Collins}, it is easy to check that 

\begin{equation}
m\,\propto\,\mu\,\mathrm{e}^{\frac{1}{\beta_0 g^2}}\,.
\label{dmg3}
\end{equation}
For an asymptotically free theory, $\beta_0<0$. Hence, at the perturbative regime, $m\rightarrow 0$, while as $g$ increases when flowing to the IR, $m$ does not vanish. We see thus a genuine non-perturbative nature of $m$ and as a final remark, due to the singularity at $g^2=0$ of expression \eqref{dmg3}, the perturbative expansion of $m$ might be problematic. 

The simple observation presented above shows that novel mass parameters might be dynamically generated at the quantum level. Moreover, this is a genuine non-perturbative feature, since perturbation theory is not able to generate the non-perturbative expression \eqref{dmg3}. A very nice concrete example of such phenomenon is given by the two-dimensional Gross-Neveu model \cite{Gross:1974jv}, where a non-trivial vacuum expectation value for $\langle\bar{\psi}\psi\rangle$ is dynamically generated. 

A similar issue might very well happen in pure Yang-Mills theories. In fact, the introduction of a mass parameter in the IR regime of Yang-Mills theories seems to be a very welcome feature. As usual, mass parameters provide a consistent infrared regularization. Therefore, it is logically acceptable to conceive that if Yang-Mills theories are supposed to describe the IR regime of the strong interactions and it is plagued by IR divergences, a mass parameter might be a consistent way of controlling such problems. 

Different phenomenological, theoretical and lattice results favor the existence of non-trivial vacuum condensates in Yang-Mills theories. In this thesis, they will play a crucial role and we shall return to this topic several times along the text. 

\section{Quantization of Yang-Mills theories}

A covariant quantization of Yang-Mills theories in $d$ Euclidean dimensions with gauge group $SU(N)$ can be formally written as

\begin{equation}
\EuScript{Z}_{\mathrm{YM}}=\int \left[\EuScript{D}A\right]\,\mathrm{e}^{-S_{\mathrm{YM}}}\,,
\label{qym1.0}
\end{equation}
where we use the notation $A\equiv\left\{A^{a}_{\mu}(x)\,,\,\,\,x\in\mathbb{R}^d\right\}$ with the formal definition of the measure as

\begin{equation}
\left[\EuScript{D}A\right]=\prod_{x}\prod_{\mu}\prod_{a}\mathrm{d}A^{a}_{\mu}(x)\,.
\label{qym2.0}
\end{equation}
This is a very formal definition and might not be properly well-defined. For a proper mathematical definition, a lattice construction of this measure is possible, see \cite{Wilson:1974sk}. 

To proceed with the standard perturbative quantization using \eqref{qym1.0}, we should define the Feynman rules for Yang-Mills theories. One of the building blocks is the gluon propagator which is obtained, by definition, out of the quadratic part of $S_{\mathrm{YM}}$. Up to quadratic order, the Yang-Mills action is 

\begin{equation}
S^{\mathrm{quad}}_{\mathrm{YM}}=\frac{1}{2}\int d^dx\,A^{a}_{\mu}\underbrace{\delta^{ab}\left(\partial_{\mu}\partial_{\nu}-\delta_{\mu\nu}\partial^2\right)}_{K^{ab}_{\mu\nu}}A^{b}_{\nu}\,.
\label{qym3.0}
\end{equation}
It is simple to note that action \eqref{qym3.0} is invariant under gauge transformations. For concreteness, we can write a gauge transformation (see eq.\eqref{a5}) and disregard terms containing the coupling $g$ to keep the same order as action \eqref{qym3.0},

\begin{equation}
A'^a_{\mu}=A^a_{\mu}-\partial_{\mu}\xi^a\,.
\label{qym4.0}
\end{equation}
For an arbitrary gauge parameter $\xi$, action \eqref{qym3.0} is left invariant by \eqref{qym4.0} due to the form of the kernel $K^{ab}_{\mu\nu}$. However, it is precisely the form of $K^{ab}_{\mu\nu}$ that ensures gauge invariance which gives rise to ``problems" to the very definition of the gluon propagator. The reason: Gauge invariance is ensured by the fact that $K^{ab}_{\mu\nu}$ develops zero-modes,

\begin{equation}
K^{ab}_{\mu\nu}\partial_{\nu}\xi^b=0\,.
\label{qym5.0}
\end{equation}
Hence, since the gluon propagator should be defined by $\left(K^{-1}\right)^{ab}_{\mu\nu}$, we face the problem that gauge invariance hinders the inversion of $K$ due to the presence of zero-modes. As is very well-known, this problem is cured by the gauge-fixing procedure, here called the Faddeev-Popov procedure, see \cite{Faddeev:1967fc}. In the next subsection we describe the method. As we shall see, this procedure relies on some assumptions which will play a key role in this thesis.

\subsection{Dealing with gauge symmetry - Part I}

The measure \eqref{qym2.0} is defined in the very wild space $\mathcal{A}$ of all gauge potential configurations $A=\left\{A^{a}_{\mu}(x)\,,\,\,\,x\in\mathbb{R}^d\right\}$. We formally write $\mathcal{A}\equiv\left\{A\right\}$. This measure enjoys invariance under the \textit{local} gauge transformations group $\mathcal{G}$, written as

\begin{equation}
\mathcal{G}=\left\{ U\right\}\,\,\,\,\mathrm{with}\,\,\,\,U=\left\{U(x)\,\in\,SU(N)\,;\,x\in\mathbb{R}^d\right\}\,.
\label{qym6.0}
\end{equation} 
Intuitively, is not difficult to accept that \eqref{qym2.0} is invariant under gauge transformations \eqref{a5}. The first term of the the gauge transformation corresponds to a ``rotation" on $A$ which preserves the measure, while the second term is just a harmless ``translation". Hence, by construction,

\begin{equation}
\left[\EuScript{D}A\right]=\left[\EuScript{D}A^{U}\right]\,,
\label{qym7.0}
\end{equation}

\begin{figure}[t]
	\centering
		\includegraphics[width=0.50\textwidth]{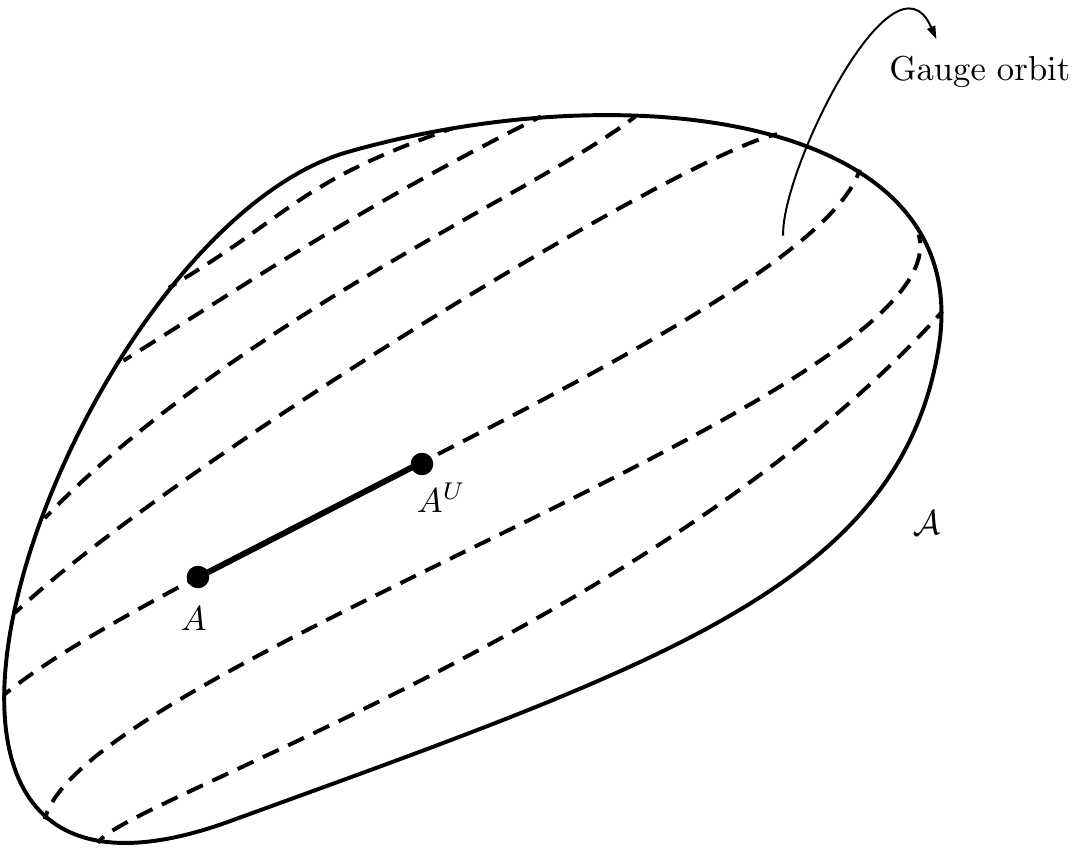}
	\caption{Space of gauge field configurations. This space is sliced by gauge orbits, namely, subset of gauge configurations which are related through gauge transformations.}
	\label{spaceA}
\end{figure}

\noindent which means the path integral measure is invariant under $\mathcal{G}$. An useful pictorial representation of the action of $\mathcal{G}$ on $\mathcal{A}$ is given by Fig.~\ref{spaceA}. Given a gauge field configuration $A\,\in\,\mathcal{A}$, we can pick an element of $\mathcal{G}$ and perform a gauge transformation parametrized by $U$. The resulting field is denoted as $A^U$. All gauge field configurations connected to $A$ via a gauge transformation lie on the dashed line represented in Fig.~\ref{spaceA} which $A$ and $A^U$ belong to. This dashed line is called a \textit{gauge orbit} and different dashed lines in Fig.~\ref{spaceA} represent different gauge orbits of gauge fields which are not related through the action of $\mathcal{G}$. The space of physically inequivalent configurations is denoted as $\mathcal{C}$ and is usually called \textit{moduli space}. It is written as the following quotient space,

\begin{equation}
\mathcal{C}=\frac{\mathcal{A}}{\mathcal{G}}\,.
\label{qym8.0}
\end{equation}
In principle, the path integral could be performed in the entire $\mathcal{A}$. This is achieved, in principle, in the lattice formulation of Yang-Mills theories. On the other hand, in the continuum, we argued just before this subsection that the construction of one of the building blocks of perturbation theory, the gluon propagator, is ill-defined due to gauge invariance. 

A possible strategy is to reduce our space of (functional) integration. Instead of integrating over the entire space $\mathcal{A}$, which contains physically equivalent as well as configurations which are not related through a gauge transformation, we pick one representative per gauge orbit. We then integrate over physically distinguishable gauge configurations. This idea is the basis of the gauge-fixing procedure. Although the idea is simple, we should be careful with the construction of such method. First, the resulting functional integration should be independent on the way we gauge fix, namely, on our choice of each representative per orbit. Second, our gauge-fixing choice should be such that for each gauge orbit we collect just one and only one representative. If each orbit contributes with a different number of representatives, their contribution will have different weights to the path integral. This will lead to inconsistencies with our previous requirement. A gauge-fixing which collects one representative per orbit is said to be \textit{ideal}.

In order to collect one representative per orbit, we introduce a \textit{gauge fixing} function(al) $F[A]$. The gauge-fixing is implemented by the equation

\begin{equation}
F[A]=0\,.
\label{qym9.0}
\end{equation}
We emphasize that $F[A]$ is an oversimplified notation. In fact, $F[A]$ denotes an application from $\mathcal{A}$ to $\mathbb{G}=\prod_x su(N)_x$, the local Lie algebra, and is in fact a set of functions given by

\begin{figure}[t]
	\centering
		\includegraphics[width=0.60\textwidth]{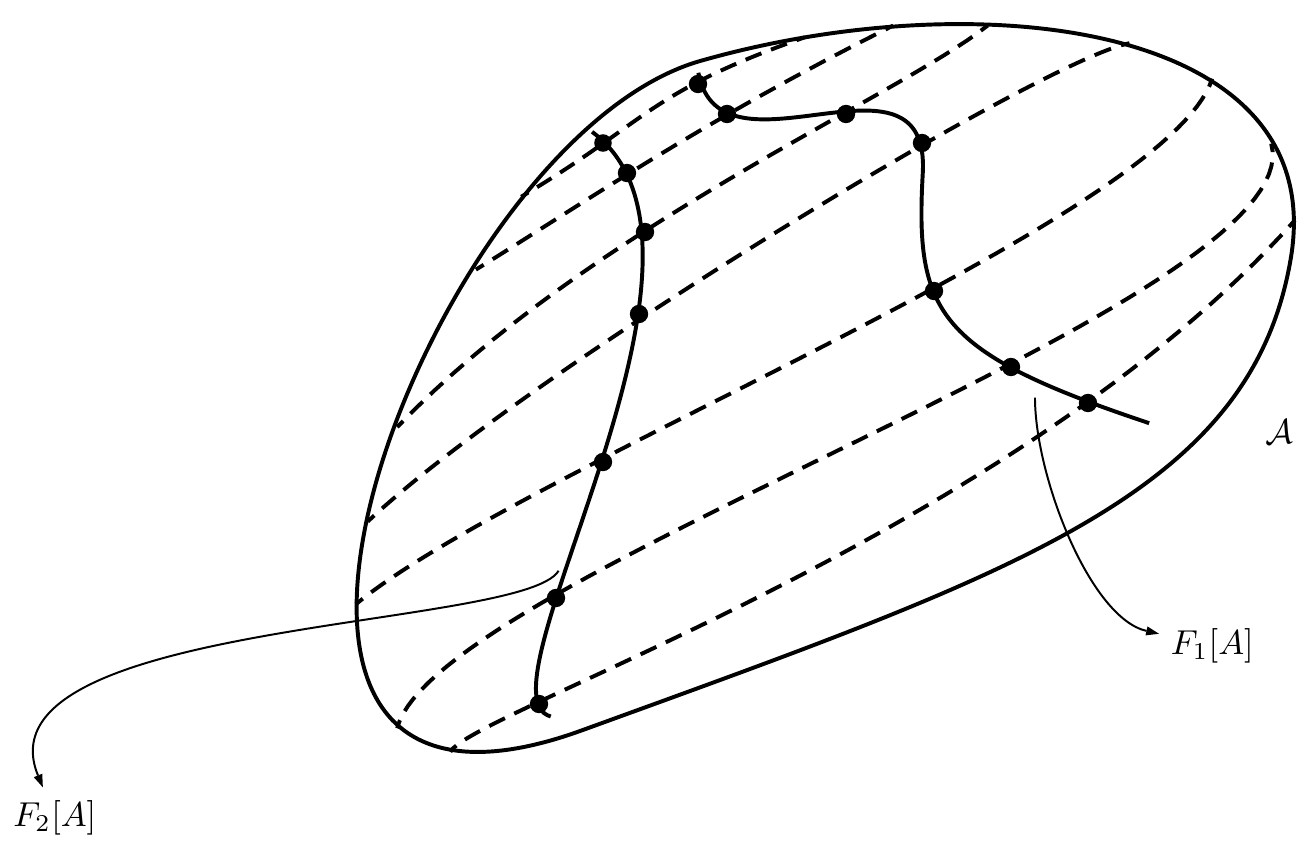}
		\caption{Two different choices of \textit{ideal} gauge-fixing functions $F_1[A]$ and $F_2[A]$. Each continuum section traced for each gauge-fixing function corresponds to a gauge slice.}
	\label{spaceA2}
\end{figure}

\begin{equation}
F[A]=\left\{F^a\left(A_{\mu}(x),\partial_{\mu}A_{\nu}(x),\ldots;x\right)\,;\,x\,\in\,\mathbb{R}^d\,;a=1,\ldots,N^2-1\right\}\,.
\label{qym10.0}
\end{equation}
By assumption, for each gauge orbit eq.\eqref{qym10.0} has a single solution. The collection of solutions of eq.\eqref{qym10.0} defines a \textit{gauge slice} or \textit{gauge section}, as represented in Fig.~\ref{spaceA2}. The gauge-fixing goal is to reduce the path integral measure, in a consistent fashion, to the gauge slice. As previously described, different choices $F_1[A]$ and $F_2[A]$ give rise to different gauge slices as shown im Fig.~\ref{spaceA2}, but the path integral result should not change for different gauge slicing. Following the representation of Fig.~\ref{spaceA2}, we give an example of a gauge-fixing which is not ideal in Fig.~\ref{spaceA3}. To construct the Faddeev-Popov procedure, we assume from now on that the gauge-fixing is ideal. We shall point out, however, that the entire motivation of this thesis is precisely the non-trivial fact that is not possible to choose an \textit{ideal} gauge-fixing condition (which is continuous in field space). In other words: All continuous gauge-fixing conditions are not ideal, \cite{Singer:1978dk}. This is the so-called Gribov problem. However, since we will devote all the rest of this thesis to this issue, we will ignore this for a while. 

\begin{figure}[t]
		\includegraphics[width=0.40\textwidth]{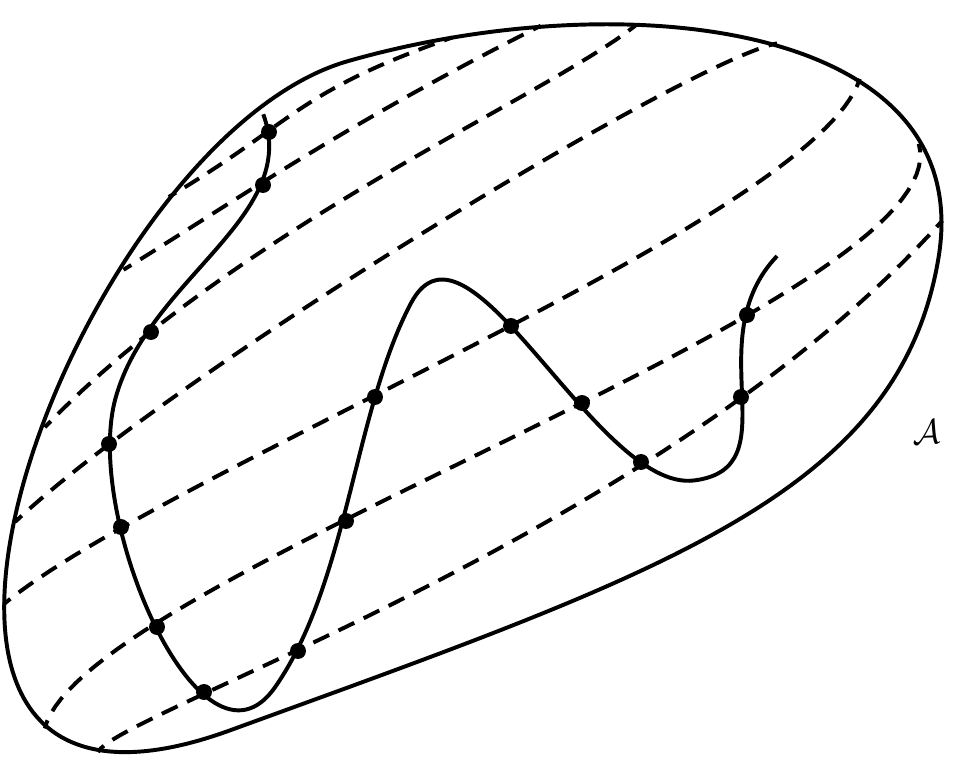}\,\,\,\,\,\,\,\,\,\,\,\,\,\,\,\,\,\,\,\,\,\,\,\,\,\,\,\,
		\includegraphics[width=0.40\textwidth]{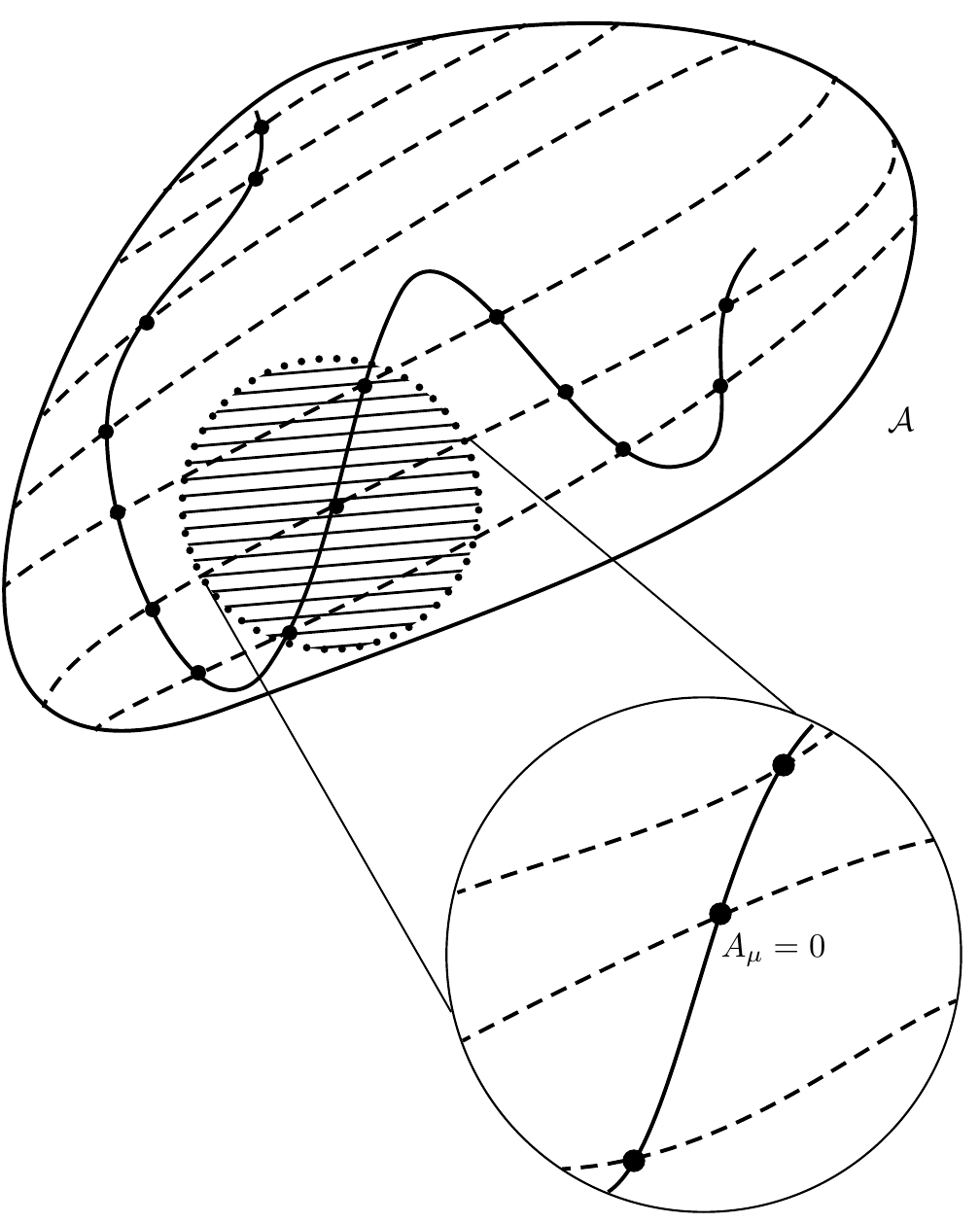}
	\caption{Left side: Gauge-fixing is not ideal. For some gauge orbits, the gauge slice intersects them more than once. Right side: Locally, it is always possible to find an ideal gauge-fixing. In particular, considering just small fluctuations around $A=0$, namely, perturbation theory, we can define a proper gauge-fixing.}
	\label{spaceA3}
\end{figure}

A final comment regarding the ``ideal" gauge choice is the following: Considering standard perturbation theory, we are concerned with small fluctuations around the classical vacuum $A=0$. Hence, given the point in $\mathcal{A}$ corresponding to $A=0$, we are just concerned with the vicinity of this point. Locally, an ideal gauge-fixing is always possible. As a consequence, at the perturbative regime, it is possible to choose a gauge-fixing function which is ideal. This is shown, pictorially, in Fig.~\ref{spaceA3}.

We want to build a path integral measure for a gauge slice corresponding to an ideal gauge fixing, namely, 

\begin{equation}
\left[\EuScript{D}A\right]\Big|_{F[A]=0}\,,
\label{qym11.0}
\end{equation}
where by assumption, if $F[A]=0$, then $F[A^{U}]\neq 0$ for all $U\,\in\,\mathcal{G}$. In the next subsection, we will present an ingenuous procedure to construct a measure \eqref{qym11.0}. 

\subsection{Dealing with gauge symmetry - Part II: Faddeev-Popov trick} \label{FPtrick}

In this subsection, the so-called Faddeev-Popov procedure \cite{Faddeev:1967fc} is reviewed. The main goal of this reminder is to emphasize where assumptions are made along the construction of the Faddeev-Popov method to prepare the reader for the introduction of the Gribov problem in the next chapter. Before working out the method in its full glory, we build a toy version of a result that will be used later on. 

Let us consider a real function $f(x)$ which has $n$ roots given by $\left\{x_1,\ldots,x_n\right\}$. Formally, we can write

\begin{equation}
\delta(f(x))=\sum^{n}_{i=1}\frac{\delta (x-x_i)}{|f'(x_i)|}\,,
\label{qym12.0}
\end{equation}
where $f'(x_i)$ denotes the derivative of $f(x)$ computed at $x_i$. We assume the derivative exists for all $x_i$ \textit{and} that it is different from zero. Then, we can integrate eq.\eqref{qym12.0} over $x$, 

\begin{equation}
\int dx\,\delta(f(x))=\sum^{n}_{i=1}\frac{1}{|f'(x_i)|}\,\,\,\,\,\Rightarrow\,\,\,\,\,\frac{1}{\sum^{n}_{i=1}\frac{1}{|f'(x_i)|}}\int dx\,\delta(f(x))=1\,.
\label{qym13.0}
\end{equation} 
Now, let us consider the particular case where $f(x)$ has \textit{just} one root $\tilde{x}$ and that the derivative $f'(\tilde{x})$ is \textit{positive}. Then, \eqref{qym13.0} reduces to

\begin{equation}
f'(\tilde{x})\int dx\,\delta(f(x))=1\,.
\label{qym14.0}
\end{equation} 
Eq.\eqref{qym14.0} is constructed upon two assumptions: \textit{(i)} The function $f(x)$ has only one root; \textit{(ii)} Its derivative computed at this root is positive. We call assumption \textit{(i)} as uniqueness and \textit{(ii)} as positivity. An useful analogy before turning to the gauge-fixing procedure is the following: If $f(x)$ plays the role of $F[A]$, then uniqueness plays the role of an \textit{ideal} gauge-fixing. The analogy with the positivity condition is not meaningful at the present moment, but we will return to this point soon. 

Returning to the Yang-Mills theory context, the functional generalization of \eqref{qym13.0} is\footnote{We already assume the gauge condition is ideal and thus, no analogous summation to the one present in eq.\eqref{qym13.0} appears.}

\begin{equation}
1=\int_{\mathcal{G}}\left[\EuScript{D}U\right]\Big|\mathrm{det}\left(\frac{\delta F[A^{\delta U}]}{\delta \xi}\right)\Big|\,\delta\left[F[A^U]\right]\,,
\label{qym15.0}
\end{equation}
\noindent where $\left[\EuScript{D}U\right]=\prod_x \mathrm{d}U(x)$, with $\mathrm{d}U$ the Haar measure of $SU(N)$. The functional derivative is taken with respect to $\xi$, the infinitesimal parameter associated with $\delta U=1-ig\xi$. Also, we emphasize that this functional derivative is computed at the root of $F[A^U]=0$ for a given gauge orbit. The original path integral for Yang-Mills theories is  written as

\begin{equation}
\EuScript{Z}_{\mathrm{YM}}=\int_{\mathcal{A}}\left[\EuScript{D}A\right]\mathrm{e}^{-S_{\mathrm{YM}}}\,,
\label{qym16.0}
\end{equation}
with gauge invariant measure and action $S_{\mathrm{YM}}$. Then, plugging the unity \eqref{qym15.0} in \eqref{qym16.0}, we obtain

\begin{equation}
\EuScript{Z}_{\mathrm{YM}}=\int_{\mathcal{A}}\left[\EuScript{D}A\right]\int_{\mathcal{G}}\left[\EuScript{D}U\right]\Big|\mathrm{det}\left(\frac{\delta F[A^{\delta U}]}{\delta \xi}\right)\Big|\,\delta\left[F[A^U]\right]\mathrm{e}^{-S_{\mathrm{YM}}}\,.
\label{qym17.0}
\end{equation}
Commuting the integral over $\mathcal{G}$ with the integral over $\mathcal{A}$ yields

\begin{equation}
\EuScript{Z}_{\mathrm{YM}}=\int_{\mathcal{G}}\left[\EuScript{D}U\right]\int_{\mathcal{A}}\left[\EuScript{D}A\right]\Big|\mathrm{det}\left(\frac{\delta F[A^{\delta U}]}{\delta \xi}\right)\Big|\,\delta\left[F[A^U]\right]\mathrm{e}^{-S_{\mathrm{YM}}(A)}\,.
\label{qym18.0}
\end{equation}
Making use of the gauge invariant measure, and the gauge invariant $S_{\mathrm{YM}}$, we can rewrite $\left[\EuScript{D}A\right]=\left[\EuScript{D}A^U\right]$ and $S_{\mathrm{YM}}(A)=S_{\mathrm{YM}}(A^U)$. Then,

\begin{equation}
\EuScript{Z}_{\mathrm{YM}}=\int_{\mathcal{G}}\left[\EuScript{D}U\right]\int_{\mathcal{A}}\left[\EuScript{D}A^U\right]\Big|\mathrm{det}\left(\frac{\delta F[A^{\delta U}]}{\delta \xi}\right)\Big|\,\delta\left[F[A^U]\right]\mathrm{e}^{-S_{\mathrm{YM}}(A^U)}\,.
\label{qym19.0}
\end{equation}
We see thus that integral over $\mathcal{A}$ performed over the dummy variable $A^U$ and the integral over $\mathcal{G}$ is completely factorized. This integral can be formally computed,

\begin{equation}
\int_{\mathcal{G}}\left[\EuScript{D}U\right]=\mathrm{Vol}(\mathcal{G})=\mathrm{Vol}(\prod_x SU(N)_x)\,.
\label{qym20.0}
\end{equation}
Of course, the volume of eq.\eqref{qym20.0} is infinity. However, this is just a prefactor of the path integral and is harmless for the computation of expectation values. Neglecting the normalization factor, we have

\begin{equation}
\EuScript{Z}_{\mathrm{YM}}=\int_{\mathcal{A}}\left[\EuScript{D}A\right]\Big|\mathrm{det}\left(\frac{\delta F[A^{\delta U}]}{\delta \xi}\right)\Big|\,\delta\left[F[A]\right]\mathrm{e}^{-S_{\mathrm{YM}}(A^U)}\,.
\label{qym21.0}
\end{equation}
Then, the Dirac delta functional $\delta\left[F[A]\right]$ projects the measure $\left[\EuScript{D}A\right]$ to the desired one, given by eq.\eqref{qym11.0}. The partition function \eqref{qym21.0} is the standard Faddeev-Popov one, presented in textbooks, \cite{pokorski,bailinlove,zinnjustin,weinberg2}. We can write \eqref{qym11.0} in local fashion by lifting the delta functional and the Faddeev-Popov determinant to the exponential. Here comes the \textit{positivity} assumption. In general, it is assumed that the Faddeev-Popov determinant is \textit{positive} and we can simply remove the absolute value of expression \eqref{qym21.0}. This is precisely the analogue of condition $(ii)$ aforementioned for the one-dimensional function $f(x)$. Hence, we write

\begin{equation}
\EuScript{Z}_{\mathrm{YM}}=\int_{\mathcal{A}}\left[\EuScript{D}A\right]\mathrm{det}\left(\frac{\delta F[A^{\delta U}]}{\delta \xi}\right)\,\delta\left[F[A]\right]\mathrm{e}^{-S_{\mathrm{YM}}(A^U)}\,.
\label{qym22.0}
\end{equation}
By introducing the Faddeev-Popov ghosts and the Nakanishi-Lautrup field, we modify Yang-Mills action by the introduction of the so-called ghost and gauge-fixing terms. 

To summarize, the construction of \eqref{qym22.0} relies on the assumptions:

\begin{itemize}

\item We are able to find a gauge condition $F[A]=0$ which selects \textit{one} representative per gauge orbit (uniqueness);

\item Out of $F[A]$, we construct the Faddeev-Popov operator computed at the root of $F[A]$. This operator is non-singular and even more, positive (positivity). 

\end{itemize}
As argued before, the first assumption is safe in perturbation theory. Through explicit examples, we can also show that the second assumption is well grounded in perturbation theory, at least for a large class of gauge conditions. Therefore, \eqref{qym22.0} is perfectly fine as long as perturbation theory is concerned. However, this procedure is not well-defined beyond perturbation theory. The reason is that the non-trivial topological structure of Yang-Mills theories forbids the construction of an \textit{ideal} gauge-fixing, thus spoiling the uniqueness assumption. This was explicitly verified by Gribov in \cite{Gribov:1977wm} and formalized by Singer in \cite{Singer:1978dk}. 

Since the Faddeev-Popov procedure breakdown beyond perturbation theory, we can ask ourselves if an improvement of such method could shed some light to the understanding of the IR physics of Yang-Mills theories. In this thesis, we will present some progress on this direction. By taking into account these non-trivial facts concerning a proper quantization of Yang-Mills theories, we will provide some evidence that non-perturbative physics can be reached in an analytical way. 

\section{Work plan}

As pointed out in the last section, in this thesis we investigate non-perturbative phenomena in pure Yang-Mills theories by constructing what would-be an optimized quantization scheme which takes into account problems in the standard Faddeev-Popov gauge-fixing procedure. The outline we follow is: In Ch.~\ref{ch.2} we discuss precisely the failure of gauge-fixing in Yang-Mills theories through particular examples of gauge conditions. We state the \textit{Gribov problem} and introduce some important concepts as the Gribov region. In Ch.~\ref{ch.3} we review the original attempt carried out by Gribov and later on by Zwanziger to deal with the presence of spurious configurations in the path integral domain of integration. At this stage we introduce the so-called Gribov-Zwanziger framework, a local and renormalizable way of dealing with the Gribov problem. Nevertheless, as originally worked out, this is restricted to Landau gauge. Also in this chapter we discuss a pivotal point of this thesis, namely, the BRST soft breaking of the Gribov-Zwanziger action. After that, in Ch.~\ref{RGZch}, we discuss the Refined Gribov-Zwanziger action. The refinement arises due to infrared instabilities of the Gribov-Zwanziger theory, which favors the formation of dimension-two condensates. As explicitly shown in this chapter, the Refined Gribov-Zwanziger scenario provides a consistent set up which leads to a gluon and ghost propagators with very good agreement with the most recent lattice data. Again, this is restricted to Landau gauge. Then, in Ch.~\ref{ch.5} we start to present our original results. Our aim is to extend the (Refined) Gribov-Zwanziger action to linear covariant gauges. Notably, in this class of gauges, several technical complications appear and the introduction of a gauge parameter plays a non-trivial role. Without BRST symmetry, the control of gauge dependence becomes a highly non-trivial task. In this chapter, we present a first attempt to handle the Gribov problem in linear covariant gauges, but we describe how this construction asks for an important conceptual change in the original Gribov-Zwanziger construction. This brings us to Ch.~\ref{nonpBRSTRGZ}, where we discuss in more detail the fate of BRST symmetry in the Refined Gribov-Zwanziger action. By a convenient change of variables we demonstrate in this chapter how to build a modified BRST transformation for the Refined Gribov-Zwanziger setting in the Landau gauge. These transformations enjoy nilpotency, a highly desired technical tool for the power of BRST. Also, the ``deformation" of such BRST transformations with respect to the standard one has an intrinsic non-perturbative nature. After this, in Ch.~\ref{LCGrevisited}, we come back to linear covariant gauges and construct a consistent framework with the non-perturbative BRST transformations. We show how this symmetry is powerful in order to control gauge dependence. Also, we propose a ``non-perturbative" BRST quantization scheme akin to the standard one. Also in this chapter, we discuss the fact that the refinement of the Gribov-Zwanziger action is not consistent when $d$, the spacetime dimension, is two. This fact brings different behaviors for the gluon propagator for different $d$. This result also holds in the particular case of Landau gauge and thus gives some hint that it might be more general. 

Albeit nilpotent, the non-perturbative BRST symmetry is not local, as well as the (Refined) Gribov-Zwanziger action in linear covariant gauges. In Ch.~\ref{locnonpBRST} we present a full local version of this construction and establishes a local quantum field theory which deals with the Gribov problem in linear covariant gauges and takes into account condensation of dimension-two operators in a non-perturbative BRST invariant way. We present the immediate consequences of the Ward identities of this action and gauge independence of correlation functions of physical operators as well as the non-renormalization of the longitudinal sector of the gluon propagator, a highly non-trivial fact in this context. 

In Ch.~\ref{CFGaugeCh} we extend our formalism to a class of one-parameter non-linear gauges, the Curci-Ferrari gauges. Due to the non-linearity of this gauge, novel dimension-two condensates must be taken into account and we provide our results for the gluon propagator. Again, we observe the same qualitative dependence on the behavior of this propagator with respect to spacetime dimension. So far, no lattice results are available for this class of gauges. We have thus the opportunity to test our formalism's power against future lattice data.

Finally we draw our conclusions and put forward some perspectives. A list of appendices collects conventions, techniques and derivations that are avoided along the text.

\chapter{The (in)convenient Gribov problem}\label{ch.2}

The Faddeev-Popov procedure \cite{Faddeev:1967fc,pokorski,bailinlove,zinnjustin,weinberg2}, although extremely efficient for perturbative computations, relies on two strong assumptions: First, the gauge condition is \textit{ideal}, which means it selects exactly one representative per gauge orbit. Second, the Faddeev-Popov operator associated with the \textit{ideal} gauge condition is positive (which ensures it does not develop zero-modes \textit{i.e.} it is invertible). Although we can argue these assumptions are well grounded \textit{a posteriori} by the nice agreement between perturbative computations and experimental measurements, it is not possible to guarantee the same happens when we start looking at non-perturbative scales. Actually, it is possible to prove, at least for some very useful gauges in perturbative analysis, that these assumptions do not hold at the non-perturbative level, \cite{Gribov:1977wm}. In other words, the Faddeev-Popov gauge fixing procedure is not able to remove all equivalent gauge field configurations at the non-perturbative level. This implies a residual over counting in this regime by the path integral and these spurious configurations are called \textit{Gribov copies}. The presence of Gribov copies in the quantization process is precisely the \textit{Gribov problem}, \cite{Gribov:1977wm}. In the following lines, we will make these words more precise and show in two particular gauges the manifestation of the Gribov problem. Although we intend to give an introduction to the problem for the benefit of the reader, we do not expose computational details. The reason for omitting these details is the existence of pedagogical and detailed reviews on the topic, see for instance \cite{Sobreiro:2005ec,Vandersickel:2011zc,Vandersickel:2012tz}.

\section{Landau gauge and Gribov copies} \label{section1.1}

An extremely popular gauge choice for continuum and lattice computations is the so-called \textit{Landau gauge}. This covariant gauge is defined by imposing the gauge field to be  transverse, namely

\begin{equation}
\partial_{\mu}A^{a}_{\mu}=0\,.
\label{1.1}
\end{equation}

\noindent In the same language used before, this is an ideal gauge choice \textit{if}, for a gauge field configuration that satisfies eq.(\ref{1.1}), an element of its gauge orbit, \textit{i.e.} all gauge field configurations $\tilde{A}^{a}_{\mu}$ which are obtained by a gauge transformation of $A^{a}_{\mu}$, will not satisfy condition (\ref{1.1}) \textit{i.e.}

\begin{equation}
\partial_{\mu}\tilde{A}^{a}_{\mu}\neq 0\,.
\label{1.2}
\end{equation}

\noindent To make our life simpler, we can test this assumption with infinitesimal gauge transformations at first place. A gauge field configuration $A'^{a}_{\mu}$ which is connected to $A^{a}_{\mu}$ via an infinitesimal gauge transformation\footnote{See the conventions in Appendix~\ref{appendixA}.} is given by

\begin{equation}
A'^{a}_{\mu}=A^{a}_{\mu}-D^{ab}_{\mu}\xi^{b}\,,
\label{1.3}
\end{equation}

\noindent with $\xi$ being the real infinitesimal gauge parameter associated with the gauge transformation. Therefore, we rewrite (\ref{1.2}) as

\begin{equation}
\partial_{\mu}A'^{a}_{\mu}=\partial_{\mu}A^{a}_{\mu}-\partial_{\mu}D^{ab}_{\mu}\xi^{b}=-\partial_{\mu}D^{ab}_{\mu}\xi^{b}\neq 0\,,
\label{1.4}
\end{equation}

\noindent whereby we used eq.(\ref{1.1}). The operator $-\partial_{\mu}D^{ab}_{\mu}$ with condition (\ref{1.1}) applied is nothing but the Faddeev-Popov operator in the Landau gauge. Hence, condition (\ref{1.2}) can be rephrased, at least at infinitesimal level, as the non appearance of zero-modes of the Faddeev-Popov operator (which, again, is an assumption in the standard Faddeev-Popov construction). Conversely, the equation 

\begin{equation}
\partial_{\mu}D^{ab}_{\mu}\xi^b=0\,.
\label{1.41}
\end{equation}

\noindent is known as \textit{copies equation} for obvious reasons. To check if the Faddeev-Popov operator indeed satisfies (\ref{1.4}), we can make a pure analysis of the spectrum of the Faddeev-Popov operator considering its eigenvalue equation,

\begin{equation}
\EuScript{M}^{ab}\chi^b\equiv -\partial_{\mu}D^{ab}_{\mu}\chi^{b} = \epsilon (A) \chi^a\,\,\, \Rightarrow\,\,\, (-\delta^{ab}\partial^{2}+\underbrace{gf^{abc}A^{c}_{\mu}\partial_{\mu}}_{(i)})\chi^b=\epsilon (A) \chi^a\,.
\label{1.5}
\end{equation}

\noindent Solving this eigenvalue problem in generality is a difficult task. Nevertheless, we just need to understand if zero-modes exist and a qualitative analysis is enough. A property which is very important to understand the spectrum of $\EuScript{M}$ in a qualitative way is that it is a Hermitian operator in Landau gauge\footnote{See the explicit proof in \cite{Sobreiro:2005ec}, for instance.}. First, we note that term $(i)$ characterizes the non-Abelian nature of the gauge group. Second, if we turn off the coupling $g$, eq.(\ref{1.5}) reduces to

\begin{equation}
-\partial^2\chi^a=\epsilon\chi^a\,.
\label{1.6}
\end{equation}

\noindent The operator $-\partial^2$ is a positive operator\footnote{We are working in Euclidean space to avoid issues of the validity of the Wick rotation at the non-perturbative regime.}. It implies the only\footnote{This is true under suitable boundary conditions. This also implies Abelian gauge theories do not develop non-trivial zero-modes.} zero-mode is given by $\chi=0$, which is a trivial solution. As a consequence, if we turn on $g$, for a small value of the product $gA^a_{\mu}$ we can see eq.(\ref{1.5}) as a perturbation of eq.(\ref{1.6}). Therefore the full operator $\EuScript{M}$ remains positive for sufficiently small $gA^a_{\mu}$. The regime where $gA^a_{\mu}$ is small is precisely where perturbation theory around the trivial vacuum $A=0$ is performed. It implies there are no zero-modes at the perturbative regime and, even more, the Faddeev-Popov operator is positive. These conditions match the assumptions on the Faddeev-Popov procedure and this is why we can state the Gribov problem plays no role at the perturbative level. But, increasing the value of $gA^a_{\mu}$, the first term of eq.(\ref{1.5}) will be comparable to (i) and a negative eigenvalue will emerge. This change of sign should be characterized by the presence of a zero-mode, \textit{i.e.} as long as we increase $gA^a_{\mu}$, we will hit a zero-mode of the Faddeev-Popov operator. As argued before, this phenomenon should occur as long as we walk away from the trivial vacuum $A=0$ and might be a genuine non-perturbative feature. Hence, the Faddeev-Popov procedure is well-defined in the ultraviolet (or perturbative) regime, while as the theory goes to the infrared (or the non-perturbative) region, these assumptions start to fail. 

\begin{figure}
	\centering
		\includegraphics[width=0.50\textwidth]{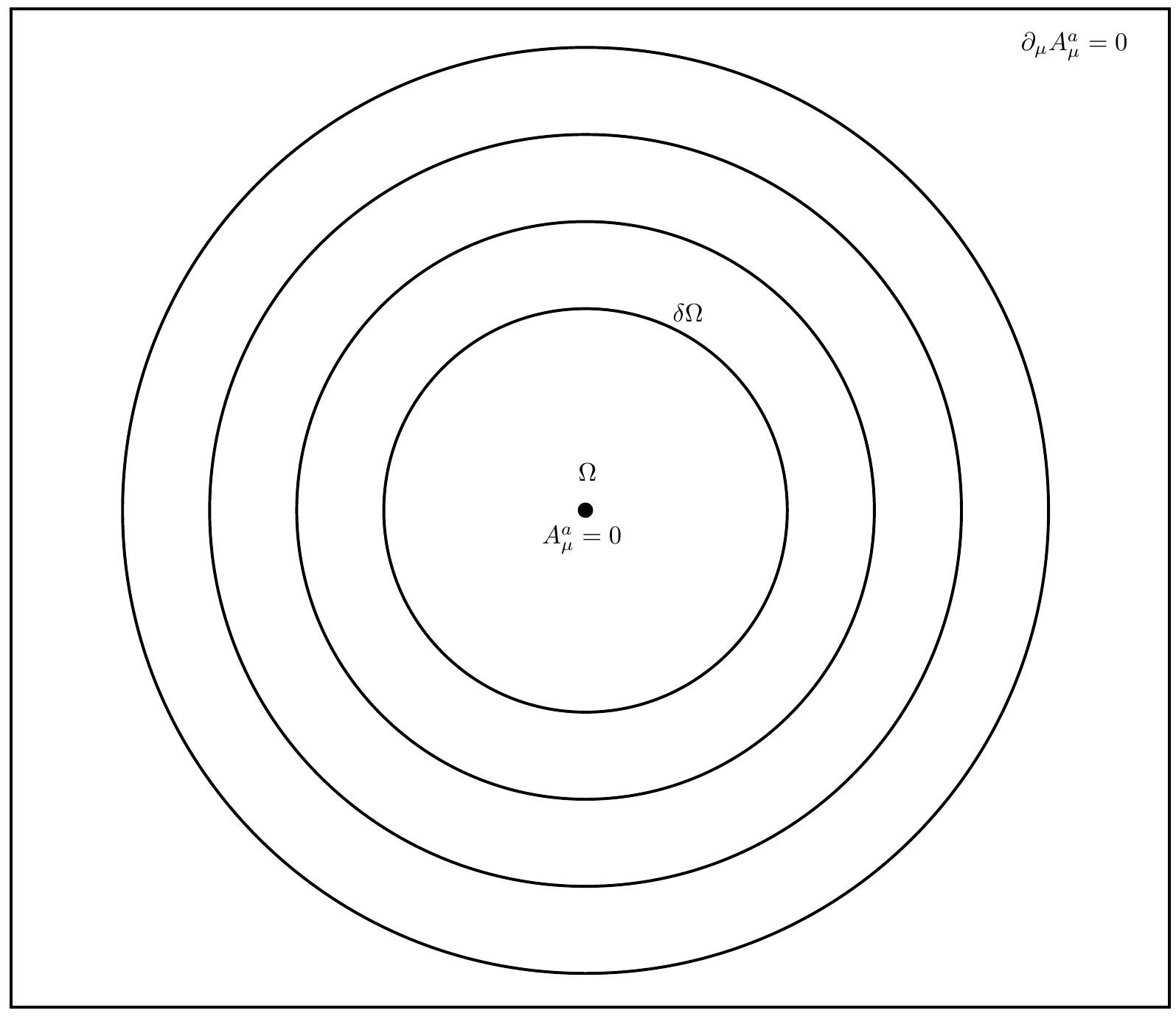}
	\caption{The gribov region $\Omega$ and its boundary $\delta\Omega$, the Gribov horizon.}
	\label{fig1}
\end{figure}

Pictorially, we can reproduce our qualitative analysis in a simple picture: In Fig.~\ref{fig1}, we consider the configuration space of transverse gauge fields \textit{i.e.} the fields that enjoy $\partial_{\mu}A^{a}_{\mu}=0$. At the center we have the trivial vacuum configuration $A=0$. The region enclosed by the boundary $\delta\Omega$ represents the place where the operator $\EuScript{M}$ is positive \textit{i.e.} we start walking from $A=0$, increasing the magnitude of $gA$, up to $\delta\Omega$ where the Faddeev-Popov operator reaches its first zero-modes. Beyond the boundary $\delta\Omega$ the Faddeev-Popov operator changes its sign and the other boundaries represent other places where zero-modes turn out to appear. The region $\Omega$, where the Faddeev-Popov operator is positive, is known as the \textit{first Gribov region} or simply \textit{Gribov region}. Its boundary $\delta\Omega$ is called \textit{Gribov horizon}. Actually, a series of very important properties concerning the Gribov region were proven by Dell'Antonio and Zwanziger in \cite{Dell'Antonio:1991xt}. In particular, the Gribov region enjoys the following properties:

\begin{itemize}
\item It is bounded in every direction;
\item It is convex;
\item All gauge orbits cross $\Omega$ at least once.
\end{itemize}

\noindent These very nice properties give support to the (partial) solution of the Gribov problem - proposed by Gribov himself in \cite{Gribov:1977wm} - where a restriction of the domain of the path integral to $\Omega$ is performed. We will discuss more details on this in the next section. The proofs of those properties of $\Omega$ are not presented here and the reader is referred to \cite{Vandersickel:2012tz,Dell'Antonio:1991xt}. Nevertheless, the essential tool to prove those properties is the definition of $\Omega$ by a minimization procedure of a functional. We introduce it here for completeness and also because this is an important device for gauge fixed lattice simulations, one of the sources we have to compare our results. Given the functional

\begin{equation}
\mathcal{A}_L=\frac{1}{2}\int d^dx~A^{a}_{\mu}A^{a}_{\mu}\,,
\label{1.7}
\end{equation}

\noindent we can impose the Landau gauge over a gauge field configuration $A^{a}_{\mu}$ by imposing the variation of $\mathcal{A}_L$ with respect to infinitesimal gauge transformations to be an extrema,

\begin{equation}
\delta\mathcal{A}_{\mathrm{L}}=\int d^dx~\xi^a\partial_{\mu}A^{a}_{\mu}=0\,\,\,\Rightarrow\,\,\,\partial_{\mu}A^{a}_{\mu}=0\,,
\label{1.8}
\end{equation}

\noindent with $\xi$ an arbitrary parameter. To define the Gribov region, we demand this extrema to be a minimum,

\begin{equation}
\delta^2\mathcal{A}_{\mathrm{L}}=-\int d^dx~\xi^a\partial_{\mu}D^{ab}_{\mu}\xi^b>0\,\,\,\Rightarrow\,\,\,-\partial_{\mu}D^{ab}_{\mu}>0\,.
\label{1.9}
\end{equation}

\noindent After this discussion, we establish the mathematical definition of the Gribov region which will play a very important role in this thesis. 

\begin{definition}\label{gribovregion}
The set of transverse gauge fields for which the Faddeev-Popov operator is positive is called the Gribov region $\Omega$ \textit{i.e.}
\begin{equation}
\Omega = \left\{A^{a}_{\mu}\,,\,\,\partial_{\mu}A^{a}_{\mu}=0\,\Big|-\partial_{\mu}D^{ab}_{\mu}>0\right\}\,.
\end{equation}
\end{definition} 

\noindent This definition deserves a set of comments: \textit{(a)} The Gribov region is free from \textit{infinitesimal} Gribov copies, namely, copies that correspond to zero-modes of the Faddeev-Popov operator; \textit{(b)} So far, we have not said anything about copies generated by finite gauge transformations. It implies $\Omega$ might not be free from all Gribov copies but of a class of copies; \textit{(c)} In fact, it was shown that $\Omega$ does contain copies, \cite{vanBaal:1991zw,Landim:2014fea}; \textit{(d)} In spite of our qualitative arguments to justify the presence of zero-modes of the Faddeev-Popov operator, it is possible to construct explicit examples of normalizable ones, \cite{Gribov:1977wm,Sobreiro:2005ec,Guimaraes:2011sf,Capri:2012ev}.

Since this region is not free from all Gribov copies, a natural question is how to define a region which is truly free from all spurious configurations. This region is known as \textit{Fundamental Modular Region (FMR)} (usually, is denoted as $\Lambda$) and as naturally expected is a subset of $\Omega$. This region is defined as the set of \textit{absolute} minimum of the functional (\ref{1.7}) where now we take into account variations with respect to finite gauge transformations. As $\Omega$, the FMR has many important properties which we list:

\begin{itemize}
\item It is bounded in every direction;
\item It is convex;
\item All gauge orbits cross $\Lambda$;
\item The boundary $\delta\Lambda$ shares points with $\delta\Omega$.

\end{itemize}

\noindent Again, we do not present the proofs, but we refer to \cite{Vandersickel:2012tz,vanBaal:1991zw}. At this level, we remind the reader that everything discussed up to now is about the Landau gauge choice. We emphasize that along the discussions and computations, the gauge condition (\ref{1.1}) was exhaustively employed. However, we should be able to work with different gauge choices (otherwise we are not dealing with a \textit{gauge} theory anymore). To show a second concrete example of the manifestation of the Gribov problem, we devote some words to the maximal Abelian gauge in the next section.

\section{Maximal Abelian gauge and Gribov copies}

The maximal Abelian gauge (MAG) is a very important gauge for non-perturbative studies in the context of the dual superconductivity model for confinement \cite{'tHooft:1981ht}. In particular, this gauge treats the diagonal and off-diagonal components of the gauge field in different footing which is very appropriate to the study of the so-called Abelian dominance \cite{Ezawa:1982bf}. Due to this decomposition, we first set our conventions and notation\footnote{These are compatible with those presented in Appendix~\ref{appendixA}, but we make them explicit for the benefit of the reader.}. The gauge field $A_{\mu}$ is decomposed into diagonal (Abelian) and off-diagonal (non-Abelian) parts as

\begin{equation}
A_{\mu} = A_{\mu}^{A}T^{A} = A_{\mu}^{a}T^{a} + A_{\mu}^{i}T^{i},
\label{1.10}
\end{equation}

\noindent with $T^a$ the off-diagonal sector of generators and $T^i$ the Abelian generators. The generators $T^{i}$ commute with each other and generate the Cartan subgroup of $SU(N)$. To avoid confusion, we have to keep in mind that capital indices $\left\{A, B, C,\ldots \right\}$ are related to the entire $SU(N)$ group, and so, they run in the set $\left\{1, \ldots, (N^{2} - 1)\right\}$. Small indices $\left\{a, b, c, \ldots h\right\}$ (from the beginning of the alphabet) represent the off-diagonal part of $SU(N)$ and they vary in the set $\left\{1, \ldots, N(N-1)\right\}$. Finally, small indices $\left\{i,j,k, \ldots\right\}$ (from the middle of the alphabet) describe the Abelian part of $SU(N)$ and they run in the range $\left\{1, \ldots, (N-1)\right\}$. From $SU(N)$ Lie algebra, we can write the following decomposed algebra

\begin{eqnarray}
\left[T^{a},T^{b}\right] &=&if^{abc}T^{c} + if^{abi}T^{i}, \nonumber \\
\left[T^{a},T^{i}\right] &=& -if^{abi}T^{b}, \nonumber \\
\left[T^{i},T^{j}\right] &=& 0.
\label{1.11}
\end{eqnarray}

\noindent The Jacobi identity (\ref{a0.1}) decomposes as

\begin{eqnarray}
f^{abi}f^{bcj} + f^{abj}f^{bic} &=& 0, \nonumber \\
f^{abc}f^{cdi} + f^{adc}f^{cib} + f^{aic}f^{cbd} &=& 0, \nonumber \\
f^{abc}f^{cde} + f^{abi}f^{ide} + f^{adc}f^{ceb} + f^{adi}f^{ieb} + f^{aec}f^{cbd} + f^{aei}f^{ibd} &=& 0.
\label{1.12}
\end{eqnarray}

Proceeding in this way, we can write the off-diagonal and diagonal components of an infinitesimal gauge transformation (\ref{a8}) with parameter $\alpha$ as

\begin{eqnarray}
\delta A_{\mu}^{a} &=& -(\EuScript{D}_{\mu}^{ab}\alpha^{b} + gf^{abc}A_{\mu}^{b}\alpha^{c} + gf^{abi}A_{\mu}^{b}\alpha^{i}), \nonumber \\
\delta A_{\mu}^{i} &=& - (\partial_{\mu}\alpha^{i} + gf^{abi}A_{\mu}^{a}\alpha^{b}),
\label{1.13}
\end{eqnarray}

\noindent where the covariant derivative $\EuScript{D}_{\mu}^{ab}$ is defined with respect to the Abelian component of the gauge field, \textit{i.e.}

\begin{equation}
\EuScript{D}_{\mu}^{ab} = \delta^{ab}\partial_{\mu} - gf^{abi}A_{\mu}^{i}.
\label{1.14}
\end{equation}

\noindent For completeness we show explicitly the decomposition of the Yang-Mils action, namely

\begin{equation}
S_{\mathrm{YM}} = \frac{1}{4}\int d^{d}x(F_{\mu \nu}^{a}F_{\mu \nu}^{a} + F_{\mu \nu}^{i}F_{\mu \nu}^{i}),
\label{1.15}
\end{equation}

\noindent with

\begin{eqnarray}
F_{\mu \nu}^{a} &=& \EuScript{D}_{\mu}^{ab}A_{\nu}^{b} - \EuScript{D}_{\nu}^{ab}A_{\mu}^{b} + gf^{abc}A_{\mu}^{b}A_{\nu}^{c}, \nonumber \\
F_{\mu \nu}^{i} &=& \partial_{\mu}A_{\nu}^{i} - \partial_{\nu}A_{\mu}^{i} + gf^{abi}A_{\mu}^{a}A_{\nu}^{b}.
\label{1.16}
\end{eqnarray}

Now, we introduce the gauge conditions that characterize the MAG. This gauge is obtained by fixing the non-Abelian sector in a Cartan subgroup covariant way,

\begin{equation}
\EuScript{D}_{\mu}^{ab}A_{\mu}^{b}=0\;.
\label{1.17}
\end{equation}

\noindent This condition does not fix the Abelian gauge symmetry, which is usually fixed by a Landau-like condition,

\begin{equation}
\partial_{\mu}A_{\mu}^i=0\;.
\label{1.18}
\end{equation}

\noindent With conditions (\ref{1.17}) and (\ref{1.18}) we can ask the same question we elaborated in the last section: Given a configuration $(A^a_{\mu},A^i_{\mu})$ that satisfies (\ref{1.17}) and (\ref{1.18}), is there a configuration $(\tilde{A}^a_{\mu},\tilde{A}^i_{\mu})$ that also satisfies the gauge condition \textit{and} is related to $(A^a_{\mu},A^i_{\mu})$ through a gauge transformation? Again, we restrict ourselves to infinitesimal gauge transformations and therefore apply eq.(\ref{1.13}) to (\ref{1.17}) and (\ref{1.18}). The result is

\begin{eqnarray}
\EuScript{M}^{ab}_{\mathrm{MAG}}\alpha^b&=&0\;,\nonumber\\
\partial_\mu\left(\partial_\mu \alpha^i+gf^{abi}A_\mu^a \alpha^b\right)&=&0\;,\label{1.19}
\end{eqnarray}

\noindent with

\begin{equation}
\EuScript{M}^{ab}_{\mathrm{MAG}} = \EuScript{D}_{\mu}^{ac}\EuScript{D}_{\mu}^{cb} + gf^{acd}A_{\mu}^{c}\EuScript{D}_{\mu}^{db} + g^{2}f^{aci}f^{bdi}A_{\mu}^{c}A_{\mu}^{d}.
\label{1.20}
\end{equation}

\noindent We recognize (\ref{1.19}) as the Gribov copies equations for the MAG. Although apparently we have two equations to characterize an infinitesimal Gribov copy in the MAG, we must notice the first equation of (\ref{1.19}) contains just the off-diagonal components of the gauge parameter. On the other hand, the second equation of (\ref{1.19}) contains both diagonal and off-diagonal parameters. It is easily rewritten as

\begin{equation}
\alpha^{i} = \frac{-gf^{abi}\partial_{\mu}(A^{a}_{\mu}\alpha^{b})}{\partial^{2}}\,,
\label{1.21}
\end{equation}

\noindent which implies that, once we solve the first equation of (\ref{1.19}), the diagonal components of the solution are not independent, but determined by eq.(\ref{1.21}). In this sense, the first equation of (\ref{1.19}) is truly the one which determines if a non-trivial solution is viable. Therefore, we take care just of the first equation and establish that infinitesimal Gribov copies exist in the MAG if and only if $\EuScript{M}_{\mathrm{MAG}}$ has non-trivial zero-modes. 

We have defined at the infinitesimal level the Gribov problem in the MAG. Remarkably, the operator $\EuScript{M}_{\mathrm{MAG}}$ is Hermitian and a similar analysis we carried out in Landau gauge can be employed here to characterize the spectrum of such operator. This analysis turns out to be very fruitful also in this case and is possible to define an analogue of the Gribov region $\Omega$. As before, we define a region $\Omega_{\mathrm{MAG}}$ which enjoys the following properties:

\begin{itemize}
\item It is unbounded in all Abelian (or diagonal) directions\,;
\item It is bounded in all non-Abelian (or off diagonal) directions\,;
\item It is convex\,.
\end{itemize}

\noindent Again, we do not expose the proofs of these properties but the reader can find all details in \cite{Capri:2005tj,Capri:2008vk,Capri:2010an,Capritese,arturotese}. Those properties can be pictorially represented by Fig.~\ref{fig2}, which is more precise in the case of $SU(2)$, where we have only one Abelian and two non-Abelian directions.

\begin{figure}
	\centering
		\includegraphics[width=0.50\textwidth]{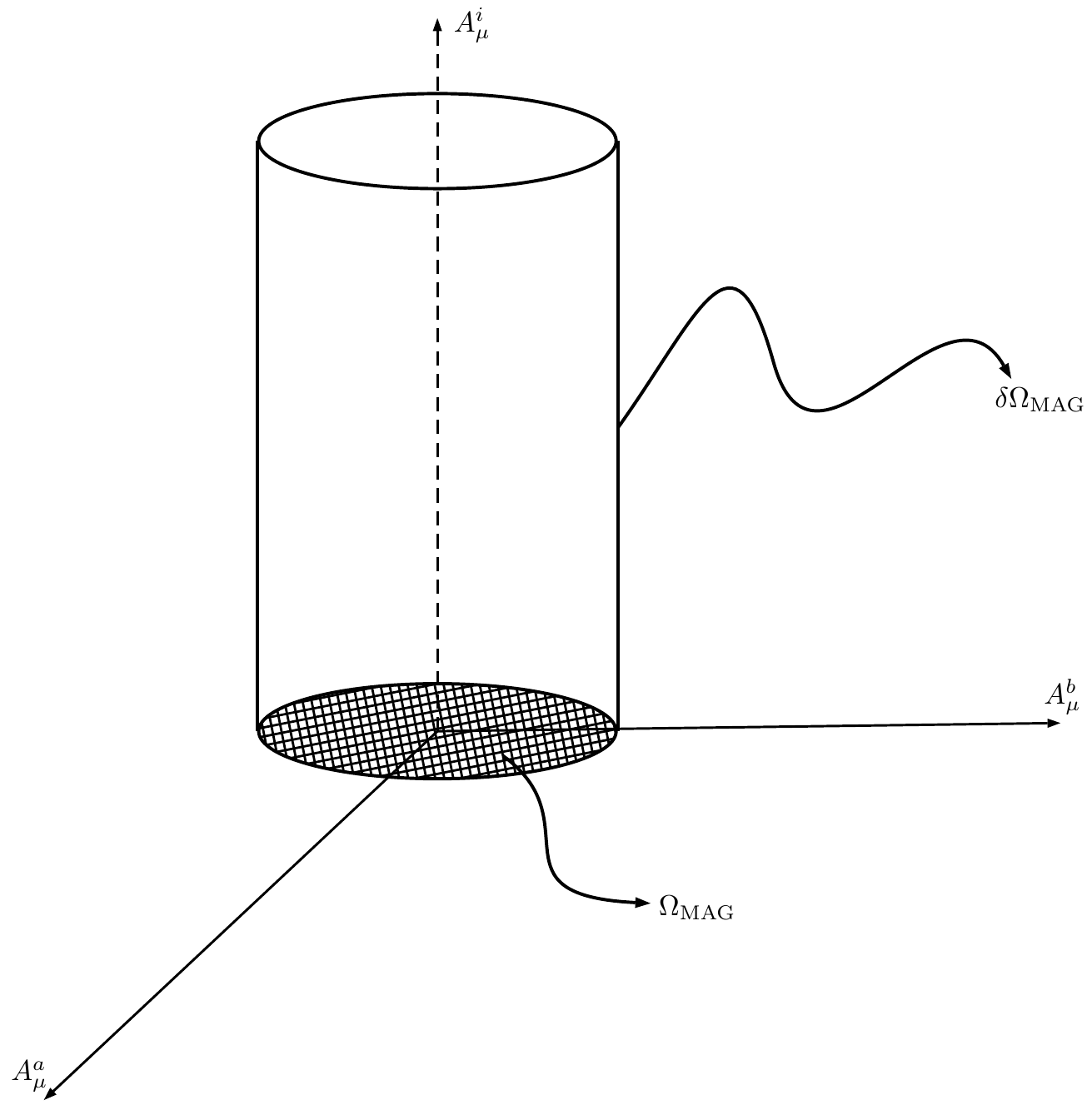}
	\caption{Gribov region $\Omega_{\mathrm{MAG}}$ in the MAG.}
	\label{fig2}
\end{figure}

This region is defined by the set of gauge field configurations which satisfy conditions (\ref{1.17}) and (\ref{1.18}) and render a positive Faddeev-Popov operator $\EuScript{M}_{\mathrm{MAG}}$. More precisely,

\begin{definition}
The Gribov region in the MAG is given by the set

\begin{equation}
\Omega_{\mathrm{MAG}}=\left\{(A^{a}_{\mu},A^{i}_{\mu})\,; D^{ab}_{\mu}A^{b}_{\mu}=0\,, \partial_{\mu}A^{i}_{\mu}=0\,\Big|\,\EuScript{M}^{ab}_{\mathrm{MAG}}>0\right\}\,.
\end{equation}
\end{definition}

\noindent Similarly to the Landau gauge, we can introduce a minimizing functional which taking the first variation and demanding the extrema condition gives the gauge condition and requiring this extrema is a minimum gives the positivity condition of the Faddeev-Popov operator $\EuScript{M}_{\mathrm{MAG}}$. The expression of this functional is 

\begin{equation}
\mathcal{A}_{\mathrm{MAG}}=\frac{1}{2}\int d^dx~A^{a}_{\mu}A^{a}_{\mu}\,.
\label{1.22}
\end{equation}
We note the minimizing functional depends just on the non-Abelian components of the gauge field and its first variation gives just condition (\ref{1.17}). We emphasize the same issue as in the Landau gauge: The Gribov region $\Omega_{\mathrm{MAG}}$ is not free of Gribov copies, but from infinitesimal ones. Again, a further reduction to a FMR is necessary to obtain a truly Gribov copies independent region. For explicit examples of normalizable zero-modes of $\EuScript{M}_{\mathrm{MAG}}$ we refer to \cite{Guimaraes:2011sf,Capri:2013vka}.

\section{General gauges and Gribov copies}

In the last two sections, we pointed out the existence of Gribov copies in two specific gauge choices. Also, we have argued about the existence of a region in configuration space which is free from infinitesimal Gribov copies and a region which truly free from all copies. However, to achieve such construction, we used particular properties of Landau and maximal Abelian gauges. In particular, it was crucial that these gauges have Hermitian Faddeev-Popov operator to perform the qualitative analysis we pursued and therefore to define the Gribov region. It indicates the definition of a ``Gribov region" strongly depends on the gauge condition we are working on. At this stage it is rather natural to ask if we could use our freedom to choose a smart gauge condition which does not suffer from the Gribov problem. First of all, from our previous analysis it is not difficult to realize that the Gribov problem is not a peculiar feature of these two gauge conditions. On the other hand, it is not simple to state that there is no gauge condition which is Gribov copies free just with these two examples. To accomplish such statement we should rely on the deep geometrical setting behind gauge theories, \cite{nakahara,bertlmann,frenkel}. This was achieved by Singer in \cite{Singer:1978dk} where he showed that, taking into account regularity conditions of the gauge field at infinity, there is no gauge fixing condition which is continuous in configuration space which is free from the Gribov problem. Clearly, we might be able to avoid some Singer's assumptions and construct a suitable gauge fixing which avoids copies \cite{Becchi:1998vv}. This task, however, has shown to be very hard because despite of the ``construction" of a copies free gauge-fixing, it should be still useful for performing concrete computations. 

It was advocated by Basseto \textit{et al.} in \cite{Bassetto:1983rq} that it is possible to find suitable boundary conditions for the gauge fields such that (space-like) planar gauges are free from copies. Also, Weinberg in \cite{weinberg2} refers to the axial gauge as copies free. Nevertheless, a common belief within these class of ``algebraic gauges" is the violation of Lorentz invariance, which could be an undesirable aspect for standard quantum field theories tools. More recently, a series of works by Quadri and Slavnov argued for the construction of gauges free from copies which also enjoy renormalizability and Lorentz invariance, \cite{Slavnov:2008xz,Quadri:2010vt,Quadri:2010xb}. Also, the construction of a gauge-fixing condition inspired in Laplacian center-gauges type implemented in the lattice was proposed in the continuum, \cite{Oxman:2015ira}. It is argued that this class of gauges should avoid the Gribov problem. 

Another valid point of view is the following: let us assume that Gribov copies are out there. Do they play a significant role in the quantization of Yang-Mills action? Following this reasoning, many authors discussed this issue, \cite{Sharpe:1984vi,Hirschfeld:1978yq,Fujikawa:1982ss,Fujikawa:1995fb,Fujikawa:1995gb}. However, we should emphasize an important aspect: So far, our arguments on the existence of Gribov copies, show they should appear at the \textit{non-perturbative} level. This fact tells us that within standard perturbation theory, every gauge choice seems to be a good choice in what concerns Gribov copies (disregarding possible intrinsic pathological gauges). Also, it suggests that the relevance or not of these copies should be tested in a full non-perturbative framework, a monumental achievement which is far beyond our current capabilities. Therefore, in this thesis, the point of view is that we have gauge fixed lattice simulations and Dyson-Schwinger equations that provide non-perturbative data for some quantities as correlation functions. Using standard perturbation theory, these correlation functions are not reproduced and we should improve somehow our methods. Our main point is that in a class of gauges which are implementable in lattice up to date, taking into account Gribov copies changes radically the infrared behavior of such Green's functions and a good agreement with lattice results emerges. Therefore, from this point of view, the copies play a very important role. Also, from a more formal and pragmatic point of view, removing copies from the path integral justifies the assumptions of the Faddeev-Popov method making it improved after all. In the next chapter we give a brief review of how copies can be (partially) eliminated from the path integral giving an optimized quantization of non-Abelian gauge theories. 

\chapter{Getting rid of gauge copies: the Gribov-Zwanziger action} \label{ch.3}

In Ch.~\ref{ch.2} we pointed out the existence of Gribov copies in the standard quantization of Yang-Mills theories. Furthermore, we showed a subclass of these copies, those generated by infinitesimal gauge transformations, are associated with zero-modes of the Faddeev-Popov operator in the Landau and maximal Abelian gauges. In fact, this statement is more general (as will be clear in the next chapters). Therefore, from a pure technical point of view the presence of zero-modes of the Faddeev-Popov operator makes the gauge fixing procedure ill-defined, see Subsect.~\ref{FPtrick}. In this sense it is desirable to remove these zero-modes/infinitesimal copies from the path integral domain in such a way the Faddeev-Popov procedure is well-grounded. From a more physical picture, we could state these configurations correspond to an over counting in the path integral quantization and therefore we should eliminate them to keep the integration over physically inequivalent configurations. In this picture, removing all Gribov copies (including those generated by a finite gauge transformations) is essential. Nevertheless we adopt the following point of view: The removal of infinitesimal Gribov copies makes the Faddeev-Popov procedure well defined and justifies the assumption concerning the \textit{positivity} or, at least, the non-vanishing of the Faddeev-Popov operator. Hence, this corresponds to an improvement with respect to the standard quantization procedure. To complete the elimination of all copies, we should also eliminate finite ones, but this is a much harder task and we stick with the simpler one, which is already highly non-trivial. 

A consistent way to eliminate infinitesimal Gribov copies was already proposed by Gribov himself in his seminal paper, \cite{Gribov:1977wm}. In his paper, however, he worked out just a ``semiclassical" solution (we will be more specific on the meaning of semiclassical in this context later on). Some years later, Zwanziger was able to generalize the elimination of Gribov copies up to all orders in perturbation theory, \cite{Zwanziger:1989mf}. In this chapter we will review the main features of these constructions. In the already cited reviews, the reader can find almost all details for the technical computational steps, \cite{Sobreiro:2005ec,Vandersickel:2011zc,Vandersickel:2012tz}. The takeaway message of this chapter is that the elimination of infinitesimal Gribov copies from the path integral domain can be implemented by a local and renormalizable action known as the \textit{Gribov-Zwanziger} or, simply, the GZ action. For simplicity, we restrict ourselves to the construction of the GZ action in the Landau gauge, but we could also work the construction for \textit{e.g.} the maximal Abelian gauge. 

\section{The no-pole condition}

In this section we provide a review of Gribov's proposal to eliminate (infinitesimal) copies, \cite{Gribov:1977wm}. The idea is simple once we know the existence of the Gribov region $\Omega$ (see Sect.~\ref{section1.1}). This region is defined in such a way that no infinitesimal Gribov copies live inside it. Therefore, Gribov's strategy was to restrict the functional integral domain precisely to $\Omega$. In this way, no zero-modes of the Faddeev-Popov operator are taken into account. Very important is the fact that this region, as discussed in Sect.~\ref{section1.1}, contains all physical configurations, since all gauge orbits cross it. In this way, the restriction to $\Omega$ is a true elimination of spurious configurations since all gauge fields have a representative inside $\Omega$. So, the partition function can be formally written as

\begin{equation}
\EuScript{Z}= \int_\Omega\left[\EuScript{D}A\right]\mathrm{e}^{-S_{\mathrm{YM}}} \equiv \int\left[\EuScript{D}A\right]\left[\EuScript{D}\bar{c}\right]\left[\EuScript{D}c\right]\EuScript{V}(\Omega)\delta(\partial_{\mu}A^{a}_{\mu})\mathrm{e}^{-S_{\mathrm{YM}}-\int d^4x\bar{c}^{a}\partial_{\mu}D^{ab}_{\mu}c^b}\,,
\label{2.1}
\end{equation}

\noindent with $\EuScript{V}(\Omega)$ the factor responsible to restrict the domain of integration to $\Omega$. Intuitively, the function $\EuScript{V}(\Omega)$ works as a step function which imposes a cut-off at the Gribov horizon $\partial\Omega$. To make this construction more concrete, we benefit from the fact that the Faddeev-Popov operator is intrinsically related with the ghosts two-point function. In fact,  performing the path integral over the ghosts fields (coupled to an external source) and writing the expression of the two-point function, we have

\begin{equation}
\langle \bar{c}^{a}(x)c^{b}(y)\rangle = \int \left[\EuScript{D}A\right]\EuScript{V}(\Omega)\;\delta(\partial_{\mu}A^{a}_{\mu})\mathrm{det}(\EuScript{M})\left[\EuScript{M}^{-1}\right]^{ab}(x,y)\mathrm{e}^{-S_{\mathrm{YM}}}\,.
\label{2.2}
\end{equation} 

\noindent We note this expression contains $\EuScript{M}^{-1}$ which is well-defined inside $\Omega$, but not in the entire configuration space. Also, we observe the relation of the inverse of $\EuScript{M}$ and the ghosts two-point function. The ideia to concretly implement the restriction to $\Omega$ formulated by Gribov thus follows: To restrict the path integral to the Gribov region we should demand the ghosts two-point function does not develop poles. This is the so-called \textit{no-pole condition}. To understand its content better, let us initiate the analysis with standard perturbation theory. At the tree-level, the ghosts two-point function goes as $1/p^2$ and the only pole is $p^2=0$. However, for any other value of $p$, this function is positive and we are ``safe", \textit{i.e.} inside $\Omega$. This approximation, however, is poor since it does not reveal non-trivial poles and we know they exist. So, if this was an exact result no copies would disturb the quantization. Going to one-loop order the two-point function is, in momentum space,

\begin{equation}
\langle \bar{c}^{a}(p)c^{b}(-p)\rangle = \delta^{ab}\mathcal{G}(p)\equiv \delta^{ab}\frac{1}{p^2}\frac{1}{\left(1-\frac{11g^2N}{48\pi^2}\mathrm{ln}\frac{\Lambda^2}{p^2}\right)^{\frac{9}{44}}}\,,
\label{2.3}
\end{equation}

\noindent with $\Lambda$ being an ultraviolet cut-off. The function $\mathcal{G}(p)$ has two singularities: $p^2=0$ and $p^2=\Lambda^2\exp\left(-\frac{1}{g^2}\frac{48\pi^2}{11N}\right)$. Now, within this approximation, we see a non-trivial pole. Reaching this pole means we are outside the Gribov region and this is precisely what we want to avoid. Therefore, the no-pole condition can be stated as the removal all poles of the ghost two-point function but $p^2=0$. The meaning of the singularity $p^2=0$ is the following: Since $1/p^2$ is a positive function and develops a pole just at $p^2=0$, it is associate with the Gribov horizon, where the first zero-modes appear. 

The characterization of the no-pole condition is performed by computing the ghosts two-point function considering the gauge field $A^{a}_{\mu}$ as a classical field. So, we consider the quantity\footnote{Note we have some ``abuse" of notation, since we used $\mathcal{G}$ for the form factor of the \textit{true} ghosts two-point function, where the functional integral over $A$ is performed.} 

\begin{equation}
\mathcal{G}(p)=\frac{1}{N^2-1}\int\left[\EuScript{D}\bar{c}\right]\left[\EuScript{D}c\right]\bar{c}^{a}(p)c^a(-p)\mathrm{e}^{-(S_{\mathrm{YM}}+\int d^4x~\partial_{\mu}D^{ab}_{\mu}c^{b})}\,.
\label{2.4}
\end{equation}

\noindent This quantity can be computed order by order in perturbation theory. Gribov's original computation was performed up to second order in perturbation theory. As explicitly shown in \cite{Sobreiro:2005ec,Vandersickel:2011zc,Vandersickel:2012tz}, eq.(\ref{2.4}) computed up to second order in $g$ reduces to 

\begin{figure}
	\centering
		\includegraphics[width=0.80\textwidth]{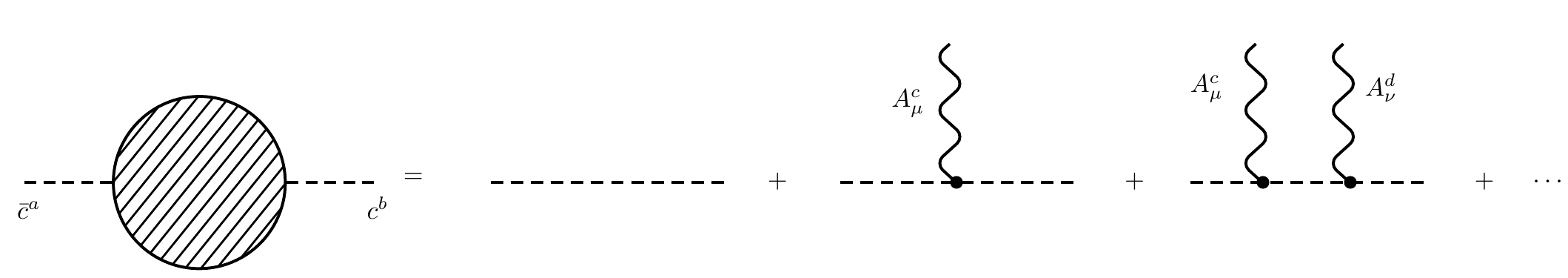}
	\caption{Ghost two-point function with insertions of gluons.}
	\label{fig:ghost2point}
\end{figure}

\begin{equation}
\mathcal{G}(p^2;A)=\frac{1}{p^2}+\frac{1}{V}\frac{1}{p^4}\frac{Ng^2}{N^2-1}\int \frac{d^{4}q}{(2\pi)^4}A^{a}_{\mu}(-q)A^{a}_{\nu}(q)\frac{(p-q)_{\mu}p_{\nu}}{(p-q)^2}\,,
\label{2.5}
\end{equation}

\noindent with $V$ the spacetime volume and the dependence on $A$ is made explicit to emphasize the functional integral over the gauge field was not performed. It is useful to rewrite eq.(\ref{2.5}) as 

\begin{equation}
\mathcal{G}(p^2,A)=\frac{1}{p^2}(1+\sigma(p,A))\,,
\label{2.6}
\end{equation}

\noindent with 

\begin{equation}
\sigma(p,A)=\frac{1}{V}\frac{1}{p^2}\frac{Ng^2}{N^2-1}\int\frac{d^4q}{(2\pi)^4}A^{a}_{\mu}(-q)A^{a}_{\nu}(q)\frac{(p-q)_{\mu}p_{\nu}}{(p-q)^{2}}\,.
\label{2.7}
\end{equation}

\noindent Is is possible to show $\sigma(p;A)$ decreases as $p^2$ increases, see \cite{Sobreiro:2005ec,Vandersickel:2011zc,Vandersickel:2012tz}. A nice trick is to write eq.(\ref{2.6}) as

\begin{equation}
\mathcal{G}(p,A)\approx\frac{1}{k^2}\frac{1}{1-\sigma(p,A)}\,,
\label{2.8}
\end{equation}

\noindent and due to the way $\sigma$ depends on $p^2$, it is enough to define the no-pole condition as 

\begin{equation}
\sigma(0,A)<1\,.
\label{2.9}
\end{equation}

\noindent To make eq.(\ref{2.9}) explicit we explore the transversality of the gauge field, namely, $p_\mu A^{a}_{\mu}(p)=0$ to write

\begin{equation}
A^{a}_{\mu}(-q)A^{a}_{\nu}(q)=\frac{1}{3}A^{a}_{\lambda}(-q)A^{a}_{\lambda}(q)\left(\delta_{\mu\nu}-\frac{q_{\mu}q_{\nu}}{q^2}\right)\equiv \frac{1}{3}A^{a}_{\lambda}(-q)A^{a}_{\lambda}(q)\mathcal{P}_{\mu\nu}(q)\,,
\label{2.10}
\end{equation}

\noindent and rewrite eq.(\ref{2.7}) as

\begin{eqnarray}
\sigma(p,A)&=&\frac{1}{3V}\frac{Ng^2}{N^2-1}\frac{p_{\mu}p_{\nu}}{p^2}\int\frac{d^4q}{(2\pi)^4}A^{a}_{\lambda}(-q)A^{a}_{\lambda}(q)\frac{1}{(p-q)^{2}}\mathcal{P}_{\mu\nu}(q)\nonumber \\
&=&\frac{1}{3V}\frac{Ng^2}{N^2-1}\left(\int\frac{d^4q}{(2\pi)^4}A^{a}_{\lambda}(-q)A^{a}_{\lambda}(q)\frac{1}{(p-q)^{2}}-\frac{p_{\mu}p_{\nu}}{p^2}\int\frac{d^4q}{(2\pi)^4}A^{a}_{\lambda}(-q)A^{a}_{\lambda}(q)\right.\nonumber\\
&\times&\left.\frac{1}{(p-q)^{2}}\frac{q_{\mu}q_{\nu}}{q^2}\right)\,.
\label{2.11}
\end{eqnarray}

\noindent We see that the limit $p^2\rightarrow 0$ must be carefully taken. As an intermediate step, 

\begin{eqnarray}
\sigma(0,A)&=&\frac{1}{3V}\frac{Ng^2}{N^2-1}\left(\int\frac{d^4q}{(2\pi)^4}A^{a}_{\lambda}(-q)A^{a}_{\lambda}(q)\frac{1}{q^{2}}-\lim_{p^2\rightarrow 0}\frac{p_{\mu}p_{\nu}}{p^2}\int\frac{d^4q}{(2\pi)^4}A^{a}_{\lambda}(-q)A^{a}_{\lambda}(q)\right.\nonumber\\
&\times&\left.\frac{1}{q^{2}}\frac{q_{\mu}q_{\nu}}{q^2}\right)\,,
\label{2.12}
\end{eqnarray}

\noindent which reduces to 

\begin{equation}
\sigma(0,A)=\frac{1}{4V}\frac{Ng^2}{N^2-1}\int\frac{d^4q}{(2\pi)^4}A^{a}_{\lambda}(-q)A^{a}_{\lambda}(q)\frac{1}{q^{2}}\,,
\label{2.13}
\end{equation}

\noindent by using 

\begin{equation}
\int d^4q\, f(q^2)\frac{q_{\mu}q_{\nu}}{q^2}=\frac{1}{d}\delta_{\mu\nu}\int d^4q\, f(q^2)\,.
\label{2.131}
\end{equation}

\noindent With this we can come back to eq.(\ref{2.9}) and define the explicit form of the factor $\EuScript{V}(\Omega)$, responsible to restrict the path integral domain to $\Omega$. Using eq.(\ref{2.9}),

\begin{equation}
\EuScript{V}(\Omega)=\theta(1-\sigma(0,A))\,,
\label{2.14}
\end{equation}

\noindent with $\theta(x)$ the standard Heaviside function. The path integral (\ref{2.1}) can be expressed as

\begin{equation}
\EuScript{Z}= \int\left[\EuScript{D}A\right]\left[\EuScript{D}\bar{c}\right]\left[\EuScript{D}c\right]\theta(1-\sigma(0,A))\delta(\partial_{\mu}A^{a}_{\mu})\mathrm{e}^{-S_{\mathrm{YM}}-\int d^4x\bar{c}^{a}\partial_{\mu}D^{ab}_{\mu}c^b}\,.
\label{2.15}
\end{equation}

The usual delta function in the partition function (\ref{2.15}) employs the Landau gauge condition. It is widely known we can lift this term to the exponential by means of the introduction of a Lagrange multiplier\footnote{See Ap.~\ref{appendixA}.} $b^a$. It is extremely desirable to lift the $\theta$ function as well and implement the restriction of the path integral domain as an effective modification of the gauge fixed action. To proceed in this direction, we make use of the integral representation of the $\theta$ function, namely

\begin{equation}
\theta(x)=\int^{+i\infty+\epsilon}_{-i\infty+\epsilon}\frac{d\beta}{2\pi i\beta}\mathrm{e}^{\beta x}\,.
\label{2.16}
\end{equation}

\noindent Using eq.(\ref{2.16}), we write the path integral (\ref{2.15}) as

\begin{equation}
\EuScript{Z}= \int^{+i\infty+\epsilon}_{-i\infty+\epsilon}\frac{d\beta}{2\pi i\beta}\int\left[\EuScript{D}A\right]\left[\EuScript{D}\bar{c}\right]\left[\EuScript{D}c\right]\mathrm{e}^{\beta (1-\sigma(0,A))}\delta(\partial_{\mu}A^{a}_{\mu})\mathrm{e}^{-S_{\mathrm{YM}}-\int d^4x\bar{c}^{a}\partial_{\mu}D^{ab}_{\mu}c^b}\,.
\label{2.17}
\end{equation}

\subsection{Gluon propagator and the Gribov parameter}\label{gpropgparam}

Expression (\ref{2.17}) is an explicit form for the partition function of $SU(N)$ Yang-Mills theory in the Landau gauge restricted to the Gribov region $\Omega$ (within Gribov's approximation, of course). A natural step is to compute the gluon propagator (one of the main actors of this thesis) with the modified partition function (\ref{2.17}). As usual, we retain the quadratic terms in $A^{a}_{\mu}$ in the gauge fixed action (and simply ignore the ghost sector which does not contribute to this computation),

\begin{equation}
\EuScript{Z}_q\left[J\right]=\int\frac{d\beta}{2\pi i\beta}\int\left[\EuScript{D}A\right]\mathrm{e}^{\beta(1-\sigma(0,A))}\mathrm{e}^{-\left(\int d^4x~(\partial_{\mu}A^{a}_{\nu}-\partial_{\nu}A^{a}_{\mu})^2+\int d^4x~\frac{1}{2\alpha}(\partial_{\mu}A^{a}_{\mu})^2+\int d^4x~A^a_{\mu}J^{a}_{\mu}\right)}\,,
\label{2.18}
\end{equation}

\noindent whereby $J^{a}_{\mu}$ is an external source introduced for the computation of correlation functions and $\alpha$ is a gauge parameter introduced to perform the functional integral over the Lagrange multiplier which enforces the gauge condition. At the end of the computation, we have to take the limit $\alpha\rightarrow 0$ to recover the Landau gauge. Performing standard computations, see \cite{Sobreiro:2005ec}, we have

\begin{equation}
\langle A^{a}_{\mu}(p)A^{b}_{\nu}(-p)\rangle = \int\frac{d\beta}{2\pi i\beta}\mathrm{e}^{\beta}\left(\mathrm{det}~K^{ab}_{\mu\nu}\right)^{-1/2}(K^{ab}_{\mu\nu})^{-1}(p)\,,
\label{2.19}
\end{equation}

\noindent with

\begin{equation}
K^{ab}_{\mu\nu}=\delta^{ab}\left(\beta\frac{1}{V}\frac{1}{2}\frac{Ng^2}{N^2-1}\frac{1}{p^2}\delta_{\mu\nu}+p^2\delta_{\mu\nu}+\left(\frac{1}{\alpha}-1\right)p_{\mu}p_{\nu}\right)\,.
\label{2.20}
\end{equation}

\noindent The determinant of the operator $K^{ab}_{\mu\nu}$ can be computed using standard techniques, but a pedagogical step by step guide can be found in the appendix of \cite{Vandersickel:2011zc,Vandersickel:2012tz}. We report the result here,

\begin{equation}
\left(\mathrm{det}~K^{ab}_{\mu\nu}\right)^{-1/2}=\mathrm{exp}\left[-\frac{3(N^2-1)}{2}V\int \frac{d^4q}{(2\pi)^4}~\mathrm{ln}\left(q^2+\frac{\beta Ng^2}{N^2-1}\frac{1}{2V}\frac{1}{q^2}\right)\right]\,.
\label{2.21}
\end{equation}

\noindent Plugging eq.(\ref{2.21}) into eq.(\ref{2.19}) we obtain

\begin{equation}
\langle A^{a}_{\mu}(p)A^{b}_{\nu}(-p)\rangle = \int\frac{d\beta}{2\pi i}\mathrm{e}^{f(\beta)}(K^{ab}_{\mu\nu})^{-1}(p)\,,
\label{2.22}
\end{equation}

\noindent with

\begin{equation}
f(\beta)=\beta-\mathrm{ln}\beta-\frac{3}{2}(N^2-1)V\int\frac{d^4q}{(2\pi)^4}\mathrm{ln}\left(q^2+\frac{\beta Ng^2}{N^2-1}\frac{1}{2V}\frac{1}{q^2}\right)\,.
\label{2.23}
\end{equation}

\noindent To perform the integral over $\beta$, we apply the steepest descent approximation method. Hence, we impose

\begin{equation}
f'(\beta_0)=0\,\,\Rightarrow\,\, 1=\frac{1}{\beta_0}+\frac{3}{4}Ng^2\int\frac{d^4q}{(2\pi)^4}\frac{1}{\left(q^4+\frac{\beta_0 Ng^2}{N^2-1}\frac{1}{2V}\right)}
\label{2.24}
\end{equation}

\noindent and define the so-called \textit{Gribov parameter} $\gamma$ by

\begin{equation}
\gamma^4=\frac{\beta_0}{4V(N^2-1)}\,.
\label{2.25}
\end{equation}

\noindent We note the Gribov parameter $\gamma$ has mass dimension. Since, formally $V\rightarrow \infty$, to have a finite value for $\gamma$ we should have $\beta_0 \propto V$. Therefore, the term $1/\beta_0$ can be neglected from eq.(\ref{2.24}). The result is

\begin{equation}
1=\frac{3}{4}Ng^2\int\frac{d^4q}{(2\pi)^4}\frac{1}{\left(q^4+2Ng^2\gamma^4\right)}
\label{2.26}
\end{equation} 

\noindent which is recognized as a \textit{gap equation} responsible to fix the Gribov (mass) parameter $\gamma$. Note the important fact that $\gamma$ acts as an infrared regulator for the integral (\ref{2.26}). Solving eq.(\ref{2.26}) gives

\begin{equation}
\int\frac{d^4q}{(2\pi)^4}\frac{1}{\left(q^4+2Ng^2\gamma^4\right)}=\frac{1}{(2\pi)^4}\int d\Omega_4\int^{\Lambda}_{0}dq\frac{q^3}{\left(q^4+2Ng^2\gamma^4\right)}\,,
\label{2.27}
\end{equation}

\noindent where $\int d\Omega_4=2\pi^2$ and $\Lambda$ is an ultraviolet cut-off. The remaining integral leads to 

\begin{eqnarray}
\int^{\Lambda}_{0}dq\frac{q^3}{\left(q^4+2Ng^2\gamma^4\right)}&=&\frac{1}{4}\left(\mathrm{ln}(\Lambda^4+2g^2N\gamma^4)-\mathrm{ln}(2g^2N\gamma^4)\right)\approx \frac{1}{4}\left(\mathrm{ln}(\Lambda^4)-\mathrm{ln}(2g^2N\gamma^4)\right)\nonumber \\
&=&\frac{1}{4}\mathrm{ln}\left(\frac{\Lambda^4}{2g^2N\gamma^4}\right)\,.
\label{2.28}
\end{eqnarray}

\noindent Plugging eq.(\ref{2.28}) into eq.(\ref{2.27}) and the result in eq.(\ref{2.26}), we obtain the following result for $\gamma^2$,

\begin{equation}
\gamma^2=\frac{\Lambda^2}{2g^2N}\mathrm{e}^{-\frac{64\pi^2}{3Ng^2}}\,.
\label{2.29}
\end{equation}

\noindent We will address more comments on the Gribov parameter soon, but let us return to the computation of the gluon propagator. The gluon propagator (\ref{2.22}) reduces to

\begin{equation}
\langle A^{a}_{\mu}(p)A^{b}_{\nu}(-p)\rangle = \delta^{ab}\frac{p^2}{p^4+2g^2N\gamma^4}\mathcal{P}_{\mu\nu}\equiv  \delta^{ab} \mathcal{D}(p)\mathcal{P}_{\mu\nu}\,,
\label{2.30}
\end{equation}

\noindent where we must emphasize we dropped the ${(2\pi i)}^{-1}\mathrm{e}^{f(\beta_0)}$ factor (absorbed in a normalization factor not written explicitly here) and took the $\alpha\rightarrow 0$ limit. Very often, it is referred to a propagator with the behavior of (\ref{2.30}) as of Gribov-type, \cite{Capri:2015mna}. We highlight the following properties of (\ref{2.30}) - see Fig.~\ref{fig3} for a qualitative plot of the Gribov and the usual perturbative propagator (the plots are not supposed to be numerically precise, but only illustrative):

\begin{itemize}
\item In the infrared regime, the gluon propagator form factor is suppressed by the presence of the Gribov parameter $\gamma$.

\item The gluon propagator form factor goes to zero at zero momentum. A very different behavior is observed in standard perturbation theory, where the form factor diverges at the origin.

\item Setting $\gamma\rightarrow 0$ or, equivalently, considering $p^2\rightarrow\infty$ we recover the standard $1/p^2$ perturbative form factor. 

\item The presence of the Gribov parameter generates two complex poles $p^2=\pm i\sqrt{2g^2N}\gamma^2$. This forbids us to interpret gluons as physical excitations and is interpreted as a manifestation of confinement.  
\end{itemize}

\begin{figure}[t]
	\centering
		\includegraphics[width=0.70\textwidth]{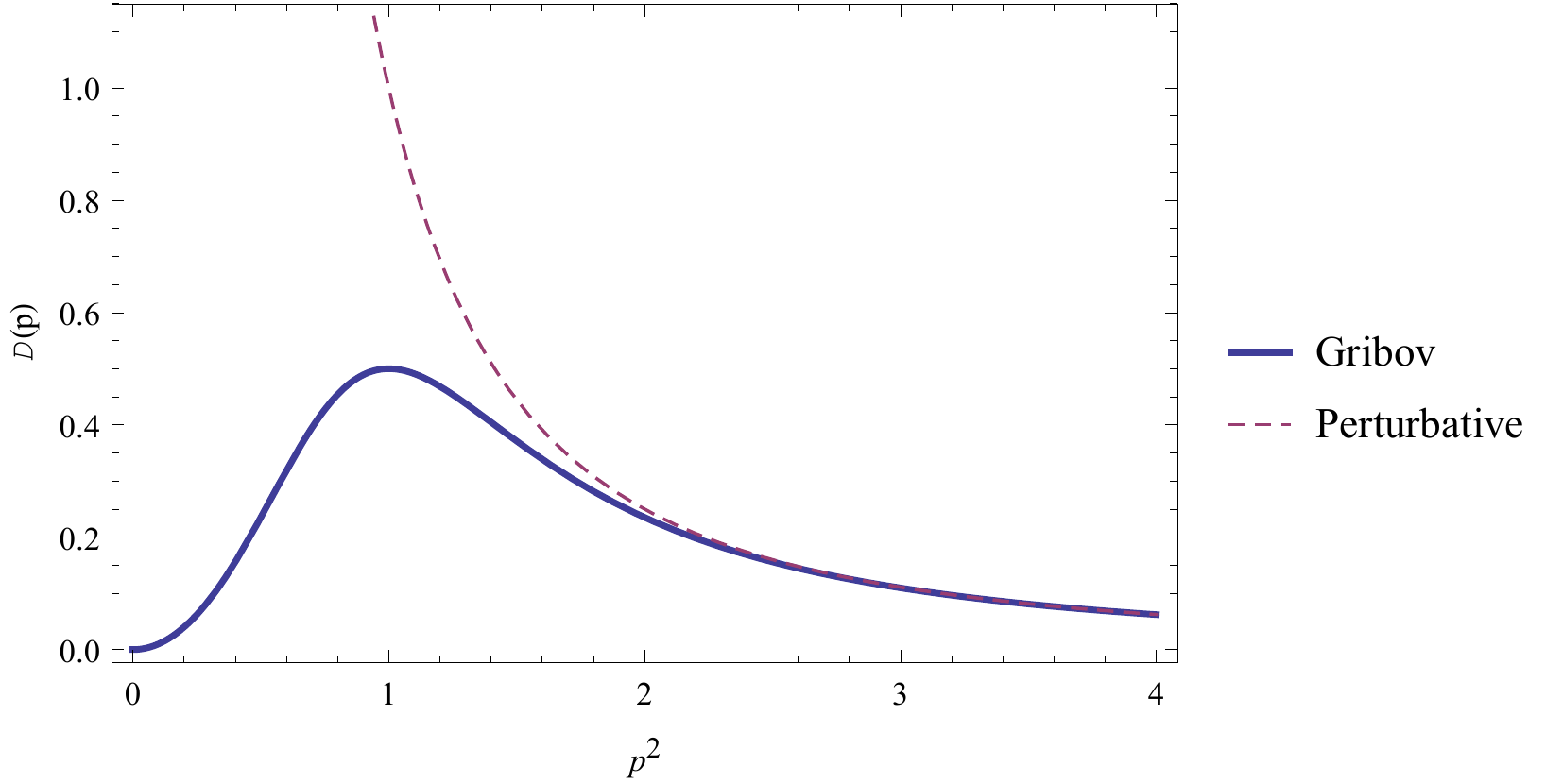}
	\caption{Qualitative comparison between the Gribov gluon propagator and the perturbative one.}
	\label{fig3}
\end{figure}

\noindent We see the restriction of the path integral domain to the Gribov region $\Omega$ affects substantially the gluon propagator. In particular (and as expected by previous discussions), the effects of taking into account Gribov copies are manifest in the infrared (non-perturbative region). The introduction of a boundary in the configuration space, namely, the Gribov horizon naturally generates a mass gap (the Gribov parameter). This massive parameter is not free but fixed in a self consistent way through the gap equation (\ref{2.26}). Also, since $\gamma$ is directly related to the restriction to $\Omega$, setting $\gamma\rightarrow 0$ should correspond to the standard Faddeev-Popov quantization. A propagator of Gribov-type is also known as a \textit{scaling} propagator.  We end this subsection with a remark concerning the gap equation (\ref{2.26}): Within Gribov's approximation, the gap equation must be regularized since the integral defining the equation is ultraviolet divergent. For a full consistent treatment using renormalization theory, we need a renormalizable action which implements the restriction to $\Omega$. This was achieved by Zwanziger in \cite{Zwanziger:1989mf} and will be discussed in the next section. Before, though, we briefly discuss the ghost propagators in Gribov's approximation. 

\subsection{Ghost propagator}

Now that we have the gluon propagator expression (\ref{2.30}) we can finish the computation of the one-loop ghost two-point function. From eq.(\ref{2.8}),

\begin{equation}
\langle \bar{c}^{a}(-p)c^b(p)\rangle_{\mathrm{1-loop}}=\delta^{ab}\frac{1}{p^2}\frac{1}{1-\langle \sigma(p,A) \rangle}\,,
\label{2.31}
\end{equation}

\noindent with

\begin{eqnarray}
\langle \sigma(p,A) \rangle &=& Ng^2\frac{p_{\mu}p_{\nu}}{p^2}\int\frac{d^4q}{(2\pi)^4}\langle A^{a}_{\mu}(-q)A^{a}_{\nu}(q)\rangle\frac{1}{(p-q)^{2}}\nonumber \\
&=&  Ng^2\frac{p_{\mu}p_{\nu}}{p^2}\int\frac{d^4q}{(2\pi)^4}\frac{q^2}{q^4+2g^2N\gamma^4}\frac{1}{(p-q)^{2}}\left(\delta_{\mu\nu}-\frac{q_{\mu}q_{\nu}}{q^2}\right)\,.
\label{2.32}
\end{eqnarray}

\noindent Inhere, we will restrict ourselves to the infrared behavior of the correlation function \textit{i.e.} around $p^2\approx 0$. To proceed, we will use a trick which essentially consists in writing ``one" in a fancy way. Let us begin with the following relation,

\begin{equation}
\int \frac{d^4q}{(2\pi)^4}\frac{1}{q^4+2g^2N\gamma^4}\left(\delta_{\mu\nu}-\frac{q_{\mu}q_{\nu}}{q^2}\right)=\frac{3}{4}\delta_{\mu\nu}\int \frac{d^4q}{(2\pi)^4}\frac{1}{q^4+2g^2N\gamma^4}
\label{2.33}
\end{equation}

\noindent Invoking the gap equation (\ref{2.26}), we can rewrite eq.(\ref{2.33}) as

\begin{equation}
\frac{3}{4}\delta_{\mu\nu}\int \frac{d^4q}{(2\pi)^4}\frac{1}{q^4+2g^2N\gamma^4}=\frac{1}{Ng^2}\delta_{\mu\nu}\,,
\label{2.34}
\end{equation}

\noindent and contracting eq.(\ref{2.34}) with $p_{\mu}p_{\nu}/p^2$ and using eq.(\ref{2.33}) again, we obtain

\begin{equation}
\frac{p_{\mu}p_{\nu}}{p^2}\int \frac{d^4q}{(2\pi)^4}\frac{1}{q^4+2g^2N\gamma^4}\left(\delta_{\mu\nu}-\frac{q_{\mu}q_{\nu}}{q^2}\right)=\frac{1}{Ng^2}\,.
\label{2.35}
\end{equation}

\noindent This implies, 

\begin{equation}
Ng^2\frac{p_{\mu}p_{\nu}}{p^2}\int \frac{d^4q}{(2\pi)^4}\frac{1}{q^4+2g^2N\gamma^4}\left(\delta_{\mu\nu}-\frac{q_{\mu}q_{\nu}}{q^2}\right)=1\,.
\label{2.36}
\end{equation}

\noindent Now that we have this weird (but convenient) way of expressing the unity, we can write

\begin{eqnarray}
1-\langle \sigma(p,A) \rangle &=& Ng^2\frac{p_{\mu}p_{\nu}}{p^2}\int\frac{d^4q}{(2\pi)^4}\frac{1}{q^4+2g^2N\gamma^4}\underbrace{\left(1-\frac{q^2}{(p-q)^{2}}\right)}_{(\ast)}\left(\delta_{\mu\nu}-\frac{q_{\mu}q_{\nu}}{q^2}\right)\nonumber\\
&\equiv& Ng^2\frac{p_{\mu}p_{\nu}}{p^2}\mathcal{C}_{\mu\nu}(p)\,,
\label{2.37}
\end{eqnarray}

\noindent from which we immediately obtain

\begin{equation}
\mathcal{C}_{\mu\nu}(0)=0\,.
\label{2.38}
\end{equation}

\noindent To obtain a more complete information about the limit $p\rightarrow 0$, we take advantage from the fact that $(\ast)$ can be expressed as

\begin{equation}
(\ast)=\left(1-\frac{q^2}{(p-q)^{2}}\right)=\frac{p^2-2p\cdot q}{q^2\left(\frac{p^2}{q^2}-2\frac{p\cdot q}{q^2}+1\right)}\approx \frac{p^2-2\overbrace{p\cdot q}^{(\ast\ast)}}{q^2}\,,
\label{2.39}
\end{equation}

\noindent whereby we retained terms up to $p^2$. We note the term $(\ast\ast)$ forms an odd function on $q$ to be integrated within a symmetric interval. Therefore $(\ast\ast)$ gives an automatic vanishing term and we can rewrite $\mathcal{C}_{\mu\nu}(p)$ as

\begin{equation}
\lim_{p\rightarrow 0}\mathcal{C}_{\mu\nu}(p)=p^2\int\frac{d^4q}{(2\pi)^4}\frac{1}{q^2}\frac{1}{q^4+2g^2N\gamma^4}\left(\delta_{\mu\nu}-\frac{q_{\mu}q_{\nu}}{q^2}\right)=\frac{3p^2}{4}\delta_{\mu\nu}\int\frac{d^4q}{(2\pi)^4}\frac{1}{q^2}\frac{1}{q^4+2g^2N\gamma^4}\,.
\label{2.40}
\end{equation}

\noindent The integral (is UV finite) can be easily performed,

\begin{eqnarray}
\lim_{p\rightarrow 0}\mathcal{C}_{\mu\nu}(p)&=&\frac{3p^2}{4}\delta_{\mu\nu}\int\frac{d^4q}{(2\pi)^4}\frac{1}{q^2}\frac{1}{q^4+2g^2N\gamma^4}=\frac{3p^2}{4}\delta_{\mu\nu}\int d\Omega_{4}\int^{\infty}_{0} \frac{dq}{(2\pi)^4}\frac{q}{q^4+2g^2N\gamma^4}\nonumber\\
&=& \frac{3p^2}{128\pi}\frac{1}{\sqrt{2g^2N}\gamma^2}\delta_{\mu\nu}\,.
\label{2.41}
\end{eqnarray}

\noindent With eq.(\ref{2.41}) the infrared behavior of the ghost two-point function is given by

\begin{equation}
\lim_{p\rightarrow 0}\langle \bar{c}^{a}(-p)c^b(p)\rangle_{\mathrm{1-loop}}=\delta^{ab}\frac{128\pi\sqrt{2g^2N}\gamma^2}{3}\frac{1}{p^4}\,.
\label{2.42}
\end{equation}

\noindent We see from (\ref{2.42}) that the ghost propagator is enhanced \textit{i.e.} more singular near $p=0$ than the usual $1/p^2$ one. At this stage we just present expressions (\ref{2.30}) and (\ref{2.42}) in $d=4$. It is possible to show, however, these results are also valid for $d=2,3$ in the context of the Gribov modification. A more detailed discussion on the relation between propagators and spacetime dimensions will be presented in Ch.~\ref{LCGrevisited}.
 
\section{Zwanziger's Horizon function}

The solution proposed by Gribov and presented in the last section, although self-consistent, had the limitation of implementing the no-pole condition just at leading order. An all order implementation would be desirable to understand in more details the effects of the restriction to $\Omega$. The first effort in the direction of restricting the path integral domain to the Gribov region $\Omega$ to all orders in perturbation theory was done by Zwanziger in \cite{Zwanziger:1989mf}. In his seminal paper, Zwanziger implemented the restriction to $\Omega$ using a different strategy. Instead of dealing with the ghost propagator he managed to study directly the Faddeev-Popov operator spectrum,

\begin{equation}
\EuScript{M}^{ab}\chi^b=-\partial_{\mu}D^{ab}_{\mu}\chi^b=\epsilon(A)\chi^a\,\,\, \mathrm{with}\,\,\, \partial_{\mu}A^{a}_{\mu}=0\,,
\label{2.43}
\end{equation}

\noindent and defining properly the Gribov region $\Omega$ by the condition

\begin{equation}
\epsilon_{\mathrm{min}}(A)\geq 0\,,
\label{2.44}
\end{equation}

\noindent \textit{i.e.} the $A$-dependent minimum eigenvalue of the Faddeev-Popov operator should be non-negative. This defines precisely the region where the operator $\EuScript{M}$ is positive \textit{i.e.} the Gribov region $\Omega$. With this, he computed the trace of the Faddeev-Popov operator and found the following expression\footnote{This computation is lengthy and we refer to \cite{Vandersickel:2011zc,Vandersickel:2012tz,Zwanziger:1989mf} for details.},

\begin{equation}
\mathrm{Tr}\;\EuScript{M}(A) = Vd(N^2-1)-H_L(A)\,,
\label{2.45}
\end{equation} 

\noindent with $d$ the spacetime dimension and $V$ its volume. The function $H_L(A)$ will play a prominent role in this thesis and is the so-called \textit{horizon function}. Explicitly, 

\begin{equation}
H_L(A)=g^2\int d^dxd^dy~f^{abc}A^{b}_{\mu}(x)\left[\EuScript{M}^{-1}(A)\right]^{ad}(x,y)f^{dec}A^{e}_{\mu}(y)\,.
\label{2.46}
\end{equation} 

\noindent Zwanziger argued (see \cite{Vandersickel:2011zc,Vandersickel:2012tz,Zwanziger:1989mf} for details) that condition (\ref{2.44}) is well implemented by demanding the non-negativity of (\ref{2.45}). This equivalence should hold at the thermodynamic or infinity spacetime volume limit. Also, under considerations of the thermodynamic limit and the implications of this for the equivalence between canonical and microcanonical ensembles, Zwanziger implemented condition (\ref{2.44}) in the path integral. The result is

\begin{equation}
\EuScript{Z}=\int\left[\EuScript{D}A\right]\left[\EuScript{D}\bar{c}\right]\left[\EuScript{D}c\right]\left[\EuScript{D}\bar{b}\right]\mathrm{e}^{-S_{\mathrm{GZ}}}\,,
\label{2.47}
\end{equation}

\noindent with

\begin{equation}
S^{L}_{\mathrm{GZ}}=S_{\mathrm{YM}}+\int d^dx\left(b^a\partial_{\mu}A^{a}_{\mu}+\bar{c}^{a}\partial_{\mu}D^{ab}_{\mu}c^{b}\right)+\gamma^{4}H_L(A)-\gamma^{4}Vd(N^2-1)\,,
\label{2.48}
\end{equation}

\noindent where $\gamma$ is a mass parameter (the same Gribov parameter we introduced in Subsect.~\ref{gpropgparam}) which is not free but fixed through the so-called \textit{horizon condition}

\begin{equation}
\langle H_{L}(A)\rangle=Vd(N^2-1)\,,
\label{2.49}
\end{equation}

\noindent whereby expectation values $\langle\ldots\rangle$ are taken with respect to the path integral with modified measure (\ref{2.47}) - this is the reason why although $\gamma$ is not apparent in eq.(\ref{2.49}) it will enter the expectation value computation. The action defined by (\ref{2.48}) is the so-called \textit{Gribov-Zwanziger} action (which we shall frequently refer to as GZ action). Two important remarks about the GZ action can be immediately done: $(i)$ This action effectively implements the restriction of the path integral domain to the Gribov region $\Omega$ and therefore removes infinitesimal Gribov copies from the functional integral; $(ii)$ the horizon function contain the inverse of the Faddeev-Popov operator, which is a well-defined object since the restriction to $\Omega$ ensures $\EuScript{M}$ is positive. Due to the form of the horizon function, the GZ action is clearly non-local, an inconvenient feature for the application of standard quantum field theories techniques. Remarkably it is possible to reformulate the GZ action in local fashion by the introduction of a suitable set of auxiliary fields. This procedure will be described in the next subsection. To close this discussion, we emphasize a highly non-trivial feature: Although Gribov and Zwanziger pursued different paths to construction a partition function that takes into account infinitesimal Gribov copies, it was shown that working Gribov's procedure to all orders in perturbation theory leads to the same result Zwanziger's found \cite{Gomez:2009tj,Capri:2012wx}. This is a non-trivial check and also very reassuring. 

\subsection{Localization of the Gribov-Zwanziger action}\label{localizationGZaction}

The GZ action can be cast in a local form by the introduction of a suitable set of auxiliary fields. Essentially, we want to localize the horizon function term and for this, we write the following identity,

\begin{eqnarray}
\mathrm{e}^{-\gamma^{4}H_L(A)}&=&\exp{\left[\int d^dxd^dy~\underbrace{\gamma^2gf^{abc}A^{b}_{\mu}(x)}_{\bar{J}^{ac}_{\mu}}\left[-\EuScript{M}^{-1}(A)\right]^{ad}(x,y)\underbrace{\gamma^2gf^{dec}A^{e}_{\mu}(y)}_{J^{dc}_{\mu}}\right]}\nonumber\\
&=&(\mathrm{det}\;(-\EuScript{M}))^{d(N^2-1)}\int \left[\EuScript{D}\bar{\varphi}\right]\left[\EuScript{D}\varphi\right]\exp \left[\int d^dx\int d^dy\left(\bar{\varphi}^{ac}_{\mu}(x)\EuScript{M}^{ab}(x,y)\varphi^{bc}_{\mu}(y)\right)\right.\nonumber\\
&+&\left.\int d^dx\left(\bar{J}^{ab}_{\mu}(x)\varphi^{ab}_{\mu}(x)+\bar{\varphi}^{ab}_{\mu}(x)J^{ab}_{\mu}(x)\right)\right]\,,
\label{2.50}
\end{eqnarray}

\noindent with $(\bar{\varphi},\varphi)^{ab}_{\mu}$ a pair of bosonic fields. Finally, we can also lift to an exponential the term $(\mathrm{det}\;(-\EuScript{M}))^{d(N^2-1)}$ by the introduction of a pair of anti-commuting fields $(\bar{\omega},\omega)^{ab}_{\mu}$,

\begin{equation}
(\mathrm{det}\;(-\EuScript{M}))^{d(N^2-1)}=\int \left[\EuScript{D}\bar{\omega}\right]\left[\EuScript{D}\omega\right]\exp\left[-\int d^dx\int d^dy~\bar{\omega}^{ac}_{\mu}(x)\EuScript{M}^{ab}(x,y)\omega^{bc}_{\mu}(y)\right]\,.
\label{2.51}
\end{equation}

\noindent This implies the horizon function term can be rewritten as 

\begin{eqnarray}
\mathrm{e}^{-\gamma^{4}H_L(A)}&=& \int \left[\EuScript{D}\bar{\omega}\right]\left[\EuScript{D}\omega\right]\left[\EuScript{D}\bar{\varphi}\right]\left[\EuScript{D}\varphi\right]\exp\left[\int d^dx\int d^dy\left(\bar{\varphi}^{ac}_{\mu}(x)\EuScript{M}^{ab}(x,y)\varphi^{bc}_{\mu}(y)\right.\right.\nonumber\\
&-&\left.\left.\bar{\omega}^{ac}_{\mu}(x)\EuScript{M}^{ab}(x,y)\omega^{bc}_{\mu}(y)\right)-\gamma^2g\int d^dx~f^{bac}A^{b}_{\mu}(x)(\varphi+\bar{\varphi})^{ac}_{\mu}(x)\right]\,,
\label{2.52}
\end{eqnarray}

\begin{table}[t]
\centering
\begin{tabular}{|c|c|c|c|c|}
\hline
Fields & $\overline{\varphi}$ & $\varphi$ & $\overline{\omega}$ & $\omega$ \\ \hline
Dimension & 1 & 1 & 1 & 1\\
Ghost number & 0 & 0 & $-1$ & 1 \\ \hline
\end{tabular}
\caption{Quantum numbers of the auxiliary fields.}
\label{table2}
\end{table}

\noindent and the partition function for the GZ action is expressed as

\begin{equation}
\EuScript{Z}=\int \underbrace{\left[\EuScript{D}A\right]\left[\EuScript{D}b\right]\left[\EuScript{D}\bar{c}\right]\left[\EuScript{D}c\right]\left[\EuScript{D}\bar{\omega}\right]\left[\EuScript{D}\omega\right]\left[\EuScript{D}\bar{\varphi}\right]\left[\EuScript{D}\varphi\right]}_{\left[\EuScript{D}\mu_{\mathrm{GZ}}\right]}\mathrm{e}^{-S^{L}_{\mathrm{GZ}}}\,,
\label{2.53}
\end{equation}

\noindent where we define $\left[\EuScript{D}\mu_{\mathrm{GZ}}\right]$ for convenience and

\begin{eqnarray}
S^{L}_{\mathrm{GZ}} &=& S_{\mathrm{YM}}+\int d^dx\left(b^a\partial_{\mu}A^{a}_{\mu}+\bar{c}^{a}\partial_{\mu}D^{ab}_{\mu}c^b\right)-\int d^dx\left(\bar{\varphi}^{ac}_{\mu}\EuScript{M}^{ab}\varphi^{bc}_{\mu}-\bar{\omega}^{ac}_{\mu}\EuScript{M}^{ab}\omega^{bc}_{\mu} \right)\nonumber\\
&+&\gamma^2\int d^dx~gf^{abc}A^{a}_{\mu}(\varphi+\bar{\varphi})^{bc}_{\mu}-\int d^dx~d\gamma^4(N^2-1)\,.
\label{2.54}
\end{eqnarray}  

\noindent is the \textit{local} Gribov-Zwanziger action. We will call simply Gribov-Zwanziger (GZ) action either (\ref{2.48}) or (\ref{2.54}). The GZ action, besides written in a local form, is renormalizable at all orders in perturbation theory (actually, we have to perform a shift on the $\omega$ field which will be discussed in the next subsection) \cite{Vandersickel:2011zc,Vandersickel:2012tz,Zwanziger:1989mf,Dudal:2008sp}. Therefore, in the Landau gauge we have a local and renormalizable action which takes into account the restriction to the Gribov region $\Omega$ \textit{i.e.} eliminates at least all infinitesimal Gribov copies. In this local fashion, the horizon condition which fixes the $\gamma$-parameter is written as

\begin{equation}
\frac{\partial\mathcal{E}_0}{\partial\gamma^2}=0\,\,\,\Rightarrow\,\,\,-\langle gf^{abc}A^{a}_{\mu}(\varphi+\bar{\varphi})^{bc}_{\mu}\rangle+2\gamma^2d(N^2-1)=0\,,
\label{2.55}
\end{equation}

\noindent where

\begin{equation}
\mathrm{e}^{-V\mathcal{E}_0}=\int \left[\EuScript{D}\mu_{\mathrm{GZ}}\right]\mathrm{e}^{-S^{L}_{\mathrm{GZ}}}\,.
\label{2.56}
\end{equation}

\noindent We refer to the fact that working out condition (\ref{2.55}) to leading order, we end up with the gap equation (\ref{2.26}). This is precisely the very first evidence of the equivalence between Gribov's no-pole condition and Zwanziger's horizon condition. 

\subsection{The fate of BRST symmetry}\label{brstbreakingGZ}

In this subsection we introduce one of the main features we will further explore in this thesis: The Gribov-Zwanziger action (\ref{2.54}) \textit{breaks} the BRST symmetry - see \cite{Becchi:1974xu,Becchi:1974md,Becchi:1975nq,Tyutin:1975qk,Barnich:2000zw}- explicitly but in a \textit{soft} way. To explain this fact in a clear way, we need to clarify a few aspects before. First of all, action (\ref{2.54}) within the path integral admits the following non-local field redefinition with trivial Jacobian,

\begin{equation}
\omega^{ab}_{\mu}\,\,\,\longrightarrow\,\,\,\omega^{ab}_{\mu}+gf^{dlm}\int d^dy\left[\EuScript{M}^{-1}\right]^{ad}(x,y)\partial_{\nu}\left(\varphi^{mb}_{\mu}D^{le}_{\nu}c^e\right)\,.
\label{2.57}
\end{equation}

\noindent The resulting action, after field redefinition (\ref{2.57}) and taking into account the triviality of the Jacobian in the path integral measure is written as 

\begin{eqnarray}
S^{L}_{\mathrm{GZ}} &=& S_{\mathrm{YM}}+\int d^dx\left(b^a\partial_{\mu}A^{a}_{\mu}+\bar{c}^{a}\partial_{\mu}D^{ab}_{\mu}c^b\right)-\int d^dx\left(\bar{\varphi}^{ac}_{\mu}\EuScript{M}^{ab}\varphi^{bc}_{\mu}-\bar{\omega}^{ac}_{\mu}\EuScript{M}^{ab}\omega^{bc}_{\mu}\right.\nonumber\\
&+&\left.gf^{adl}\bar{\omega}^{ac}_{\mu}\partial_{\nu}\left(\varphi^{lc}_{\mu}D^{de}_{\nu}c^e\right) \right)
+\gamma^2\int d^dx~gf^{abc}A^{a}_{\mu}(\varphi+\bar{\varphi})^{bc}_{\mu}\nonumber\\
&-&\int d^dx~d\gamma^4(N^2-1)\,.
\label{2.58}
\end{eqnarray}

\noindent This shift on the $\omega$ field is relevant for the BRST discussion. For the standard Faddeev-Popov fields $(A,\bar{c},c,b)$ the BRST transformations are given by eq.(\ref{a11}) while the auxiliary localizing fields $(\varphi,\bar{\varphi},\omega,\bar{\omega})$ transforms as BRST doublets in such a way they never enter the non-trivial part of the cohomology of $s$, the BRST operator, \cite{Piguet:1995er}. So, the complete set of BRST transformations is

\begin{align}
sA^{a}_{\mu}&=-D^{ab}_{\mu}c^b\,,     &&sc^a=\frac{g}{2}f^{abc}c^bc^c\,, \nonumber\\
s\bar{c}^a&=b^{a}\,,     &&sb^{a}=0\,, \nonumber\\
s\varphi^{ab}_{\mu}&=\omega^{ab}_{\mu}\,,   &&s\omega^{ab}_{\mu}=0\,, \nonumber\\
s\bar{\omega}^{ab}_{\mu}&=\bar{\varphi}^{ab}_{\mu}\,,         &&s\bar{\varphi}^{ab}_{\mu}=0\,,
\label{2.59}
\end{align}

\noindent with $s^2=0$. The GZ action (\ref{2.58}) can be rewritten as

\begin{eqnarray}
S^{L}_{\mathrm{GZ}} &=& S_{\mathrm{YM}}+s\int d^dx~\bar{c}^{a}\partial_{\mu}A^{a}_{\mu}-s\int d^dx~\bar{\omega}^{ac}_{\mu}\EuScript{M}^{ab}\varphi^{bc}_{\mu}\nonumber\\
&+&\gamma^2\int d^dx~gf^{abc}A^{a}_{\mu}(\varphi+\bar{\varphi})^{bc}_{\mu}-\int d^dx~d\gamma^4(N^2-1)\,,
\label{2.60}
\end{eqnarray}

\noindent and we see explicitly that the shift on $\omega$ allows us to write the auxiliary fields sector in a BRST exact way due to the presence of the term 

\begin{equation}
\int d^dx~gf^{adl}\bar{\omega}^{ac}_{\mu}\partial_{\nu}\left(\varphi^{lc}_{\mu}D^{de}_{\nu}c^e\right)\,.
\label{2.61}
\end{equation}

\noindent A comment is relevant here: Written in the form (\ref{2.54}), is clear that as long as we take $\gamma=0$, we recover the Faddeev-Popov action, since the integration over the auxiliary fields sector is just an insertion of the unity. On the other hand, written as (\ref{2.60}) the limit $\gamma=0$ leads to an addition of a BRST exact term which does not affect physical gauge invariant observables. Therefore, independently of the form we write the GZ action (\ref{2.54}) or (\ref{2.60}), the Faddeev-Popov action and its physical content are recovered as long as the parameter $\gamma$ which implements the restriction to the Gribov region is set to zero. Of course, this is a very important check of consistency.

Given the nilpotency of the BRST operator $s$, it is very easy from eq.(\ref{2.60}) to obtain the following expression

\begin{equation}
sS^{L}_{\mathrm{GZ}}=\Delta_{\gamma^2}=\gamma^2gf^{abc}\int d^dx\left(A^{a}_{\mu}\omega^{bc}_{\mu}-D^{ad}_{\mu}c^{d}(\varphi+\bar{\varphi})^{bc}_{\mu}\right)\,.
\label{2.62}
\end{equation}

\noindent Eq.(\ref{2.62}) shows an outstanding feature of the GZ action: It breaks the BRST symmetry explicitly but the presence of the Gribov parameter $\gamma$ makes the breaking \textit{soft}. From the explicit solution of the gap equation which fixes $\gamma$, given by eq.(\ref{2.29}) we see that as long as we go to the UV regime $\gamma\rightarrow 0$ and eq.(\ref{2.62}) reduces to 

\begin{equation}
sS^{L}_{\mathrm{GZ}}=sS^{L}_{\mathrm{FP}}=0\,.
\label{2.63}
\end{equation}

\noindent Therefore, in the UV sector the BRST breaking term vanishes and we recover all the known features of the standard Faddeev-Popov action. On the other hand, in the IR, the breaking cannot be neglected and within the GZ framework, this breaking is manifest. Again, this is another consequence of the fact that Gribov copies play a relevant role at the infrared regime. The role of this soft breaking is subject of investigation up to date and a full understanding of this feature still lacks. Essentially, it is precisely this breaking which will motivate us to a reformulation of the Gribov-Zwanziger scenario in Ch.~\ref{nonpBRSTRGZ}. We mention this is not a particular feature of Landau gauge, but also in the construction of the Gribov-Zwanziger action for the MAG \cite{Capri:2008vk,Capri:2010an}.  

It is not difficult to understand the breaking of BRST symmetry at the qualitative level: The infinitesimal gauge transformation (\ref{a8}) is formally the same as the BRST transformation of the gauge field (\ref{2.59}), just exchanging the infinitesimal gauge parameter by the anti-commuting ghost field $c$. Heuristically, we can identify infinitesimal gauge transformations with BRST transformations for the gauge field (a more detailed discussion on this feature was explored on \cite{Pereira:2013aza,Pereira:2014apa}). As discussed in Ch.~\ref{ch.2}, the Gribov region is free from infinitesimal gauge copies. So, if we choose a configuration which lies in $\Omega$ and perform an infinitesimal gauge/BRST transformation, the resulting configuration necessarily is located outside $\Omega$ and since we are cutting off these configurations, the breaking of BRST symmetry seems to be unavoidable. We underline this is a heuristic argument and should not be taken as an ultimate reasoning.

\section{$\gamma$ is a physical parameter}\label{gammaphys}

For completeness, we expose here an important result explicitly presented in \cite{Dudal:2008sp}. The Gribov parameter is not akin to a gauge parameter, but a truly physical parameter of the theory and therefore can enter physical quantities like gauge invariant correlation functions. The elegant algebraic proof of \cite{Dudal:2008sp} goes as follows: Taking the derivative of (\ref{2.60}) and acting with the BRST operator $s$, we obtain

\begin{equation}
s\frac{\partial S^{L}_{\mathrm{GZ}}}{\partial\gamma^2}=\frac{\Delta_{\gamma^2}}{\gamma^2}=gf^{abc}\int d^dx\left(A^{a}_{\mu}\omega^{bc}_{\mu}-(D^{ad}_{\mu}c^{d})(\varphi+\bar{\varphi})^{bc}_{\mu}\right)\,,
\label{2.64}
\end{equation}

\noindent and since $s^2=0$, the only possibility we have is

\begin{equation}
\frac{\partial S^{L}_{\mathrm{GZ}}}{\partial\gamma^2}\neq s(\mathrm{something})\,,
\label{2.65}
\end{equation}

\noindent \textit{i.e.} is not a BRST exact quantity. Since the $\gamma$-dependent part depends on the auxiliary fields $(\varphi,\bar{\varphi},\omega,\bar{\omega})$ which form a BRST quartet, it cannot belong to the non-trivial cohomology of $s$. So, the only way to have a BRST invariant $\gamma$-dependent terms is to have an exact BRST term (which is not the case (\ref{2.64})). However, let us assume the contrary, namely,

\begin{equation}
S_{\gamma}=s\Sigma_{\gamma}\,,
\label{2.66}
\end{equation}

\noindent with $S_{\gamma}$ the $\gamma$-dependent part of the action. Immediately follows 

\begin{equation}
\frac{\partial S_{\gamma}}{\partial\gamma^2}=s\frac{\partial\Sigma_{\gamma}}{\partial\gamma^2}\,.
\label{2.67}
\end{equation}

\noindent From (\ref{2.67}) we can show $\gamma$ is an unphysical parameter from a direct computation: Consider a gauge invariant quantity $\mathcal{O}$. We can write

\begin{equation}
\frac{\partial}{\partial\gamma^2}\langle \mathcal{O} \rangle = \int \left[\EuScript{D}\mu_{\mathrm{GZ}}\right]\mathcal{O}(s\Sigma_{\gamma})\mathrm{e}^{-S^{L}_{\mathrm{GZ}}}=\int \left[\EuScript{D}\mu_{\mathrm{GZ}}\right]s(\mathcal{O}\Sigma_{\gamma})\mathrm{e}^{-S^{L}_{\mathrm{GZ}}}=\langle s(\mathcal{O}\Sigma_{\gamma}) \rangle = 0\,,
\label{2.68}
\end{equation}

\noindent which leads a complete independence of $\mathcal{O}$ from $\gamma$. Therefore we conclude that the BRST breaking is the mechanism which ensures the physical character of $\gamma$ because (\ref{2.66}) does not hold. The BRST soft breaking is a very important mechanism to implement a physical self-consistent non-perturbative mass parameter in a local and renormalizable way \cite{Dudal:2008sp}.

\section{Alternative solutions: a brief comment}

In this chapter, we have presented an overview of the original approach developed by Gribov and Zwanziger to circumvent the existence of Gribov copies and provide a consistence quantization of Yang-Mills theories. Although implemented in different ways, the essence of their methods lies on the existence of the Gribov region with all of nice properties it enjoys. Nevertheless, it is possible to at least conceive a different solution of the Gribov problem in a manageable way. Recently, two different ``alternative" methods to deal with gauge copies were proposed. 

In \cite{Pereira:2013aza}, the similarity between infinitesimal gauge and BRST transformations was explored. The infinitesimal copies equation is defined by performing an infinitesimal gauge transformation over the gauge-fixing condition and enforcing the gauge condition again. If such equation has solutions, them we automatically have gauge copies. In \cite{Pereira:2013aza}, the copies equation was derived by taking the BRST transformation of the gauge condition instead. Since these transformations are formally the same, this is simply an obvious fact. However, in order to avoid copies, what is required is that the resulting equation does not have any solutions. In other words: We spoil the copies equation and implement this property as a constraint in the theory. Formally, given a gauge condition $F^a=0$, we demand $sF^a\neq 0$ for $F^a=0$. Introducing this constraint in the gauge fixed Yang-Mills action automatically implies a breaking of the BRST symmetry. It turns out that, if we require that the UV regime of Yang-Mills theories is not affected by this constraint, we recover the Gribov-Zwanziger when $F^a$ is the Landau gauge condition. It was also shown that this is also true for the MAG. 

Since this method does not rely on any geometric property of a region akin to the Gribov region, there is no \textit{a priori} requirement of dealing with a Hermitian Faddeev-Popov operator. Hence, the method could be applied to gauges with a non-Hermitian Faddeev-Popov operator. This was exploited in \cite{Pereira:2014apa}, where this method was employed to deal with the Gribov problem in an interpolating gauge among Landau and maximal Abelian gauges, which has a non-Hermitian Faddeev-Popov operator. 

The method, however, is not completely unambiguous and further investigations are necessary to provide a closed framework. In particular, there is an ambiguity on the definition of the gap equation that fixes the Gribov-parameter that is not resolved so far. Also, the prescription of the method is such that the gauge condition $F^a$, albeit arbitrary, should be a function of the gauge field $A$ only. 

In \cite{Serreau:2012cg,Serreau:2013ila,Serreau:2015yna}, a different strategy was adopted. Instead of removing the copies from the quantization, what is proposed is an averaging over them. Hence, a proper weight is assigned to the copies and it is possible to ``control" their contribution to the path integral. This method seems to have the remarkable property of taking care of finite gauge copies also. Again, the breaking of BRST symmetry is observed. Also, a nice agreement with the standard Gribov-Zwanziger framework is obtained.

\chapter{Refinement of the Gribov-Zwanziger action}\label{RGZch}

Until 2007 gauge fixed lattice simulations in the Landau gauge were pointing towards an IR suppressed, positivity violating, vanishing at zero momentum gluon propagator and an enhanced ghost propagator in the IR \cite{Sternbeck:2004xr,Cucchieri:2004mf} for $d=2,3,4$. This \textit{scaling} behavior is precisely the one predicted by the GZ framework as reported in Ch.~\ref{ch.3}. Also, solutions for Dyson-Schwinger equations \cite{Alkofer:2000wg,vonSmekal:1997ohs,Lerche:2002ep,Alkofer:2003jj,Huber:2010ne} and results from the functional renormalization group approach \cite{Pawlowski:2003hq} were in agreement with this scenario.

However, more accurate lattice simulations performed in larger volumes changed this picture: The gluon propagator remained suppressed in the infrared, positivity violating but attaining a \textit{non-vanishing} value at zero momentum and the ghost propagator was \textit{no longer} enhanced in the IR for $d=3,4$ while for $d=2$ the scaling behavior was maintained, \cite{Cucchieri:2007md,Bogolubsky:2007ud,Cucchieri:2008fc,Cucchieri:2008yp,Cucchieri:2008mv,Bogolubsky:2009dc,Bogolubsky:2009qb,Cucchieri:2009zt,Maas:2011se}. This behavior observed by the most recent lattice simulations is known as \textit{decoupling} (massive) type. Together with the new lattice results, different groups working on functional approaches were able to find the so-called decoupling propagator \cite{Aguilar:2008xm,Boucaud:2008ky,Fischer:2008uz}. Therefore, in the context of this thesis, a natural question seems to be how to reconcile the GZ scenario with the decoupling behavior. Although naively incompatible, it was shown in \cite{Dudal:2008sp,Dudal:2007cw} that taking into account further non-perturbative effects besides the elimination of infinitesimal Gribov copies, it is possible to construct a framework which is based on a local and renormalizable action which reproduces the most recent lattice results in a good qualitative agreement. This framework is the so-called \textit{Refined Gribov-Zwanziger} (RGZ) scenario. These further non-perturbative effects comes from the formation of condensates in the theory, an eminent non-perturbative feature. Last decade witnessed a grown interest in the introduction of dimension 2 condensates in standard Yang-Mills theories due to their power to lower the effective potential value with respect to the without-condensate one, \cite{Knecht:2001cc,Verschelde:2001ia,Kondo:2001nq,Dudal:2002aj,Dudal:2002xe,Dudal:2002pq,Dudal:2003vv,Dudal:2003gu,Dudal:2003pw,Dudal:2003dp,Kondo:2003uq,Dudal:2003pe,Dudal:2003tc,Dudal:2003np,Dudal:2003by,Sobreiro:2004us,Dudal:2004rx,Sobreiro:2004yj,Dudal:2005na,Sorella:2005zh,Sorella:2006ax,Capri:2005vw,Lemes:2006aw}. To motivate even more the introduction of such composite operators in the game, we provide a straightforward computation within the GZ framework which explicitly shows the existence of particular examples of dimension two condensates.

\section{Infrared instabilities of the GZ action}

Let us consider the introduction of the following dimension two operators

\begin{equation}
 A^{a}_{\mu}(x)A^{a}_{\mu}(x)\,\,\,\,\mathrm{and}\,\,\,\, \bar{\varphi}^{ab}_{\mu}(x)\varphi^{ab}_{\mu}(x)-\bar{\omega}^{ab}_{\mu}(x)\omega^{ab}_{\mu}(x)\,,
\label{3.1}
\end{equation}

\noindent in the GZ path integral coupled to constant sources $m$ and $J$. So,

\begin{equation}
\mathrm{e}^{-V\mathcal{E}(m,J)}=\int\left[\EuScript{D}\mu_{\mathrm{GZ}}\right]\mathrm{e}^{-S^{L}_{\mathrm{GZ}}-m\int d^dx A^{a}_{\mu}A^{a}_{\mu}+J\int d^dx (\bar{\varphi}^{ab}_{\mu}\varphi^{ab}_{\mu}-\bar{\omega}^{ab}_{\mu}\omega^{ab}_{\mu})}\,.
\label{3.2}
\end{equation}

\noindent From eq.(\ref{3.2}) it is immediate,

\begin{eqnarray}
\langle \bar{\varphi}^{ac}_{\mu}\varphi^{ac}_{\mu}-\bar{\omega}^{ac}_{\mu}\omega^{ac}_{\mu}\rangle &=& - \frac{\partial {\cal E}(m,J)}{\partial J}\Big|_{J=m=0}\,, \nonumber  \\
\langle A^{a}_{\mu}A^{a}_{\mu}\rangle &=& \frac{\partial {\cal E}(m,J)}{\partial m}\Big|_{J=m=0}\,.
\label{3.3}
\end{eqnarray}

\noindent and we can easily compute explicitly eq.(\ref{3.3}) at one-loop order. The result is 

\begin{equation}
{\cal E}(m,J)=\frac{(d-1)(N^2-1)}{2}\int \frac{d^dk}{(2\pi)^d}~\mathrm{ln}\left(k^2+\frac{2\gamma^4g^2N}{k^2+J}+2m\right)-d\gamma^4(N^2-1)\,. 
\label{3.4}
\end{equation}

\noindent Eqs.(\ref{3.3}) and (\ref{3.4}) give thus

\begin{equation}
\langle \bar{\varphi}^{ac}_{\mu}\varphi^{ac}_{\mu}-\bar{\omega}^{ac}_{\mu}\omega^{ac}_{\mu}\rangle = \gamma^4g^2N(N^2-1)(d-1)\int \frac{d^dk}{(2\pi)^d}\frac{1}{k^2}\frac{1}{(k^4+2g^2\gamma^4N)}
\label{3.5}
\end{equation}

\noindent and

\begin{equation}
\langle A^{a}_{\mu}A^{a}_{\mu}\rangle = -\gamma^4(N^2-1)(d-1)\int\frac{d^dk}{(2\pi)^d}\frac{1}{k^2}\frac{2g^2N}{(k^4+2g^2\gamma^4N)}\,.
\label{3.6}
\end{equation}

\noindent In $d=4,3$ these integrals are perfectly convergent and the the correlation functions (\ref{3.5}) and (\ref{3.6}) are non-vanishing (even imposing the gap equation) as long as $\gamma\neq 0$. The case of $d=2$ will not be discussed in the moment, but we will return to it in Ch.~\ref{nonpBRSTRGZ}. Albeit simple, this computation tells us once we introduce the non-perturbative $\gamma$ parameter we have the dynamical formation of dimension 2 condensates which can be probed already at one-loop order. This, of course, is a consequence of the fact that the perturbative series is supplied with the non-pertubative information carried by $\gamma$ and therefore we can capture non-pertubative effects - like condensates formation - from perturbation theory. The non-trivial Gribov background imposed by the horizon gives its own contribution to the formation of condensates. The presence of these condensates reveals the GZ action is plagued by IR instabilities which we can take into account from the beginning. The result is precisely the \textit{Refined Gribov-Zwanziger} action. Although we will not enter in this discussion, the condensates (\ref{3.5}) and (\ref{3.6}) are not the only ones formed. However, they are enough to capture all the relevant features of the addition of dimension 2 condensates in the GZ action and therefore we restrict ourselves to them. For more details on this issue, we refer to \cite{Gracey:2010cg,Dudal:2011gd}.

\section{The \textit{Refined} Gribov-Zwanziger action}

In order to present the framework and the important outcomes of the RGZ action, we begin by \textit{defining} it as\footnote{We omit the vacuum terms for the present purposes.}

\begin{equation}
S^{L}_{\mathrm{RGZ}} = S^{L}_{\mathrm{GZ}} + \frac{{m}^2}{2}\int d^dx~A^{a}_{\mu}A^{a}_{\mu}-{M}^2\int d^dx~(\bar{\varphi}^{ab}_{\mu}\varphi^{ab}_{\mu}-\bar{\omega}^{ab}_{\mu}\omega^{ab}_{\mu})\,,
\label{3.7}
\end{equation}

\noindent with $m$ and $M$ mass parameters. For organizational purposes, we rewrite action (\ref{3.7}) in the following way

\begin{equation}
S^{L}_{\mathrm{RGZ}} = S^{L}_{\mathrm{GZ}} + S_{A^{2}}+S_{\bar{\varphi}\varphi}\,,
\label{3.8}
\end{equation}

\noindent where the notation is established in an obvious way. The construction of (\ref{3.8}) can be established step by step by the introduction of one condensate and then the other, which allows the identification of the effects of each condensate separately. This is how it was originally proceeded in \cite{Dudal:2008sp}. We follow this construction but the very technical details are indicated in the appropriated literature. 

\subsection{The construction of $S^{L}_{\mathrm{GZ}} + S_{A^{2}}$} \label{gzplusa2}

The introduction of $S_{A^{2}}$ to the GZ action was studied in great detail in \cite{Sobreiro:2004us,Dudal:2005na} (see also \cite{Vandersickel:2011zc,Dudal:2008sp,Sobreiro:2007tv}). In this section, we review the most important features of this construction. Before discussing the introduction of the local composite operator $A^2_{\mu}$, we introduce a relevant language for the GZ action. The GZ action can be written as\footnote{Again, vacuum terms are disregarded.}

\begin{equation}
S^{L}_{\mathrm{GZ}} = S_0+\gamma^2\int d^dx~gf^{abc}A^{a}_{\mu}(\varphi+\bar{\varphi})^{bc}_{\mu}\,,
\label{3.9}
\end{equation}

\noindent where

\begin{eqnarray}
S_0&=&S_{\mathrm{YM}}+\int d^dx\left(b^a\partial_{\mu}A^{a}_{\mu}+\bar{c}^{a}\partial_{\mu}D^{ab}_{\mu}c^b\right)-\int d^dx\left(\bar{\varphi}^{ac}_{\mu}\EuScript{M}^{ab}\varphi^{bc}_{\mu}\right.\nonumber\\
&-&\left.\bar{\omega}^{ac}_{\mu}\EuScript{M}^{ab}\omega^{bc}_{\mu}\right)+\int d^dx~gf^{adl}\bar{\omega}^{ac}_{\mu}\partial_{\nu}\left(\varphi^{lc}_{\mu}D^{de}_{\nu}c^e\right)\,. 
\label{3.10}
\end{eqnarray}

\noindent We note from eq.(\ref{3.10}) that the auxiliary field sector enjoys a $U(d(N^2-1))$ symmetry due to the pattern of contraction of the second pair of indices $(a,\mu)$ of each field. It means we can employ a multi-index notation, namely, $i=(a,\mu)$ and rewrite eq.(\ref{3.10}) as 

\begin{eqnarray}
S_0&=&S_{\mathrm{YM}}+\int d^dx\left(b^a\partial_{\mu}A^{a}_{\mu}+\bar{c}^{a}\partial_{\mu}D^{ab}_{\mu}c^b\right)-\int d^dx\left(\bar{\varphi}^{a}_i\EuScript{M}^{ab}\varphi^{b}_i\right.\nonumber\\
&-&\left.\bar{\omega}^{a}_i\EuScript{M}^{ab}\omega^{b}_i\right)+\int d^dx~gf^{adl}\bar{\omega}^{a}_i\partial_{\nu}\left(\varphi^{l}_iD^{de}_{\nu}c^e\right)\,. 
\label{3.11}
\end{eqnarray}

\noindent As explicitly shown in eq.(\ref{2.60}), $S_0$ is invariant under BRST transformations. On the other hand, the second term of (\ref{3.9}) can be expressed as

\begin{equation}
\gamma^2\int d^dx~gf^{abc}A^{a}_{\mu}(\varphi+\bar{\varphi})^{bc}_{\mu}=\gamma^2\int d^dx\left(D^{cb}_{\mu}\varphi^{bc}_{\mu}+D^{cb}_{\mu}\bar{\varphi}^{bc}_{\mu}\right)\,,
\label{3.12}
\end{equation}

\noindent since they differ just by total derivatives. We can treat $gfA\varphi$ and $gfA\bar{\varphi}$ as composite operators that are introduced in the action $S_0$ via the introduction of local external sources $M^{ab}_{\mu\nu}$ and $V^{ab}_{\mu\nu}$ as

\begin{equation}
\int d^dx\left(M^{ai}_{\mu}D^{ab}_{\mu}\varphi^{bi}+V^{ai}_{\mu}D^{ab}_{\mu}\bar{\varphi}^{bi}\right)\,.
\label{3.13}
\end{equation}

\noindent and expression (\ref{3.12}) is obtained by setting the sources to the ``physical values"

\begin{equation}
M^{ab}_{\mu\nu}\big|_{\mathrm{phys}}=V^{ab}_{\mu\nu}\big|_{\mathrm{phys}}=\gamma^2\delta_{\mu\nu}\delta^{ab}\,.
\label{3.14}
\end{equation}

\noindent Following the formalism described in Ap.~\ref{LCO}, we have to introduce sources in such a way to form BRST doublets, namely

\begin{eqnarray}
sU^{ai}_{\mu}&=&M^{ai}_{\mu}\,,\,\,\,\,\,\,\,\,\,\,\,\, sM^{ai}_{\mu}=0\nonumber\\
sV^{ai}_{\mu}&=&N^{ai}_{\mu}\,,\,\,\,\,\,\,\,\,\,\,\,\, sN^{ai}_{\mu}=0\,.
\label{3.142}
\end{eqnarray}

\noindent The resulting BRST invariant action containing sources and composite operators can be expressed as

\begin{eqnarray}
S_{\mathrm{s}}&=&s\int d^dx\left(U^{ai}_{\mu}D^{ab}_{\mu}\varphi^{b}_{i}+V^{ai}_{\mu}D^{ab}_{\mu}\bar{\omega}^{b}_i-U^{ai}_{\mu}V^{ai}_{\mu}\right)\nonumber\\
&=&\int d^dx\left(M^{ai}_{\mu}D^{ab}_{\mu}\varphi^{b}_{i}-U^{ai}_{\mu}D^{ab}_{\mu}\omega^{b}_i-gf^{abc}U^{ai}_{\mu}(D^{cd}_{\mu}c^d)\varphi^{b}_i+N^{ai}_{\mu}D^{ab}_{\mu}\bar{\omega}^{b}_i\right.\nonumber\\
&+&\left.V^{ai}_{\mu}D^{ab}_{\mu}\bar{\varphi}^{b}_i+gf^{abc}V^{ai}_{\mu}(D^{cd}_{\mu}c^d)\bar{\omega}^{b}_i-M^{ai}_{\mu}V^{ai}_{\mu}+U^{ai}_{\mu}N^{ai}_{\mu}\right)\,.
\label{3.143}
\end{eqnarray}

\noindent After the introduction of the the complete set of sources, the physical limit which reconstruct the GZ action is then

\begin{equation}
M^{ab}_{\mu\nu}\big|_{\mathrm{phys}}=V^{ab}_{\mu\nu}\big|_{\mathrm{phys}}=\gamma^2\delta_{\mu\nu}\delta^{ab}\,\,\,\,\mathrm{and}\,\,\,\,U^{ab}_{\mu\nu}\big|_{\mathrm{phys}}=N^{ab}_{\mu\nu}\big|_{\mathrm{phys}}=0\,.
\label{3.144}
\end{equation}

\noindent The action $S_0$ with the inclusion of the composite operators (\ref{3.13}) is renormalizable to all order in perturbation theory as we already mentioned \cite{Zwanziger:1989mf,Dudal:2008sp,Zwanziger:1992qr}. However, what is highly non-trivial and a very interesting property of the GZ action is that the renormalizability is preserved if we add the local composite operator $A^{2}_{\mu}$. For a detailed exposition of the proof within the algebraic renormalization framework \cite{Piguet:1995er} we refer to \cite{Dudal:2005na}. Although we do not present the proof here, we give some important steps towards the proof which are very important in a wild range of applications of introduction of local composite operators in the GZ action. Our aim now is to\footnote{We refer to Ap.~\ref{LCO} for details on the local composite operator technique.} introduce besides the composite operators $gfA\varphi$ and $gfA\bar{\varphi}$, the LCO $\mathcal{O}_{A^2}=A^{a}_{\mu}A^{a}_{\mu}$ to $S_0$. This operator is introduced in the following way

\begin{eqnarray}
S'_{A^{2}}&=&s\int d^dx\left(\frac{1}{2}\eta A^{a}_{\mu}A^{a}_{\mu}-\frac{1}{2}\zeta\tau\eta\right)\nonumber\\
&=&\int d^dx\left(\frac{1}{2}\tau A^{a}_{\mu}A^{a}_{\mu}+\eta A^{a}_{\mu}\partial_{\mu}c^{a}-\frac{1}{2}\zeta\tau^2\right)\,,
\label{3.145}
\end{eqnarray}

\noindent with

\begin{equation}
s\eta=\tau\,\,\,\,\, \mathrm{and}\,\,\,\,\, s\tau=0\,,
\label{3.146}
\end{equation}

\noindent local sources. As explained in Ap.~\ref{LCO}, the introduction of quadratic terms in the sources (see (\ref{3.143}) and (\ref{3.145})) is allowed by power counting and are important to absorb novel divergences. Remarkably, although a parameter $\zeta$ is introduced for the quadratic terms of the sources associated with the $\mathcal{O}_{A^2}$ LCO, no extra parameter is needed for the sources associated with the $gfA\varphi$ and $gfA\bar{\varphi}$. This is a non-trivial feature of the GZ action and should be verified through the complete renormalizability analysis \cite{Sobreiro:2004us,Dudal:2005na}. The action $S$ defined by

\begin{equation}
S=S_s+S'_{A^2}\,,
\label{3.147}
\end{equation}

\noindent enjoys BRST invariance due to the presence of the external sources. Therefore we have a more general action which contains the GZ action with a LCO $\mathcal{O}_{A^2}$ as a particular case. We note that at the physical level (\ref{3.144}), part of the action $S$ reduces to

\begin{equation}
S_{\gamma}=\int d^dx\left(\gamma^2gf^{abc}A^{a}_{\mu}\varphi^{bc}_{\mu}+\gamma^2gf^{abc}A^{a}_{\mu}\bar{\varphi}^{bc}_{\mu}-d\gamma^4(N^2-1)\right)\,,
\label{3.148}
\end{equation}

\noindent where the quadratic term on the sources generates the vacuum term $d\gamma^4(N^2-1)$. Our aim is to study the condensation of $\mathcal{O}_{A^2}$, therefore at the physical level $\eta=0$. Hence at the physical level (\ref{3.147}) reduces to

\begin{equation}
S=S_0+S_{\gamma}+\int d^dx\left(\frac{\tau}{2}A^{a}_{\mu}A^{a}_{\mu}-\frac{\zeta}{2}\tau^2\right)\,.
\label{3.149}
\end{equation}

\noindent Employing the procedure described in Sect.~\ref{HSFields} a Hubbard-Stratonovich field $\sigma$ associated with $\mathcal{O}_{A^2}$ is introduced via the unity

\begin{equation}
1=\EuScript{N}\int \left[\EuScript{D}\sigma\right]\exp\left[-\frac{1}{2\zeta}\int d^dx\left(\frac{\sigma}{g}+\frac{1}{2}A^{a}_{\mu}A^{a}_{\mu}-\zeta\tau\right)^2\right]\,.
\label{3.1491}
\end{equation}

\noindent This procedure removes the $\tau^2$ term and the resulting action is 

\begin{equation}
S=S_0+S_\gamma+S_\sigma-\int d^dx~\tau\frac{\sigma}{g}\,,
\label{3.1492}
\end{equation}

\noindent with
\begin{equation}
S_\sigma=\frac{\sigma^2}{2g^2\zeta}+\frac{\sigma}{2g\zeta}A^{a}_{\mu}A^{a}_{\mu}+\frac{1}{8\zeta}(A^{a}_{\mu}A^{a}_{\mu})^2\,.
\label{3.1493}
\end{equation}

\noindent As dicussed in Sect.~\ref{HSFields},

\begin{equation}
\langle \mathcal{O}_{A^2}\rangle=-\frac{1}{g}\langle\sigma\rangle\,,
\label{3.1494}
\end{equation}

\noindent and a nonvanishing value for $\langle\sigma\rangle$ implies a non-trivial value for the vacuum expectation value for the LCO $\mathcal{O}_{A^2}$. To see if a non-trivial value of $\langle\sigma\rangle$ is energetically favored, we must solve the following equation

\begin{equation}
\frac{\partial\Gamma}{\partial \sigma}=0\,,
\label{3.1495}
\end{equation}

\noindent for constant field configurations $\sigma$ with $\Gamma$ the quantum action defined in Ap.~\ref{EA}. On the other hand the Gribov parameter $\gamma$ is fixed by the horizon condition,

\begin{equation}
\frac{\partial\Gamma}{\partial\gamma^2}=0\,.
\label{3.1496}
\end{equation}

\noindent This equation admits the solution $\gamma=0$ and we should stress, however, this is a consequence of the localization procedure through the auxiliary fields and should be discarded. This solution corresponds to not restrict the path integral to $\Omega$. The discussion of the solutions for the gap equations (\ref{3.1495}) and (\ref{3.1496}) are carefully analyzed at one-loop order in \cite{Dudal:2005na,Sobreiro:2007tv}. The results obtained in \cite{Dudal:2005na} seems to require a higher order computation to obtain a definite picture for the vacuum energy. Taking into account the inclusion of the LCO $\mathcal{O}_{A^2}$ to the GZ action we have the following effective action which takes into account the restriction of the path integral domain to $\Omega$ and the non-perturbative effects of the condensation of the operator $\mathcal{O}_{A^2}$,
 
\begin{eqnarray}
S_0+S_{A^2}&=& S_{\mathrm{YM}}+\int d^dx\left(b^a\partial_{\mu}A^{a}_{\mu}+\bar{c}^{a}\partial_{\mu}D^{ab}_{\mu}c^b\right)-\int d^dx\left(\bar{\varphi}^{ac}_{\mu}\EuScript{M}^{ab}\varphi^{bc}_{\mu}\right.\nonumber\\
&-&\left.\bar{\omega}^{ac}_{\mu}\EuScript{M}^{ab}\omega^{bc}_{\mu}\right)+\int d^dx~gf^{adl}\bar{\omega}^{ac}_{\mu}\partial_{\nu}\left(\varphi^{lc}_{\mu}D^{de}_{\nu}c^e\right)+\gamma^2\int d^dx~gf^{abc}A^{a}_{\mu}(\varphi+\bar{\varphi})^{bc}_{\mu}\nonumber\\
&+&\frac{m^2}{2}\int d^dx~A^{a}_{\mu}A^{a}_{\mu}\,.
\label{3.15}
\end{eqnarray}

\noindent From eq.(\ref{3.15}) the expression for the tree-level gluon propagator is

\begin{equation}
\langle A^{a}_{\mu}(p)A^{b}_{\nu}(-p)\rangle_{S_0+S_{A^2}}=\delta^{ab}\frac{p^2}{p^2(p^2+m^2)+2g^2\gamma^4N}\mathcal{P}_{\mu\nu}\,,
\label{3.16}
\end{equation}

\noindent while for the ghost propagator at one-loop order,

\begin{equation}
\lim_{p\rightarrow0}\frac{\delta^{ab}}{N^2-1}\langle c^a(p)c^b(-p)\rangle_{S_0+S_{A^2}} \approx \frac{4}{3Ng^2\mathcal{J}p^4}
\label{3.17}
\end{equation}

\noindent with

\begin{equation}
\mathcal{J}=\int \frac{d^dk}{(2\pi)^d}\frac{1}{k^2(k^4+m^2k^2+2g^2N\gamma^4)}\,,
\label{3.18}
\end{equation}

\noindent which is real and finite for $d=3,4$. We see from eq.(\ref{3.16}) and (\ref{3.17}) that the introduction of $\mathcal{O}_{A^2}$ to the Gribov-Zwanziger action does not alter qualitatively the propagators obtained from the standard GZ action (see (\ref{2.30}) and (\ref{2.42})). Therefore, the gluon propagator is still supressed at the IR vanishing at zero momentum and positivity violating \cite{Dudal:2005na}. Also the ghost propagator at one-loop order maintains its enhancement at $p\approx 0$. This implies that taking into account the condensation of the $\mathcal{O}_{A^{2}}$ operator is not enough to capture the \textit{decoupling} behavior reported by recent lattice simulations.

\subsection{The birth of $S^{L}_{\mathrm{RGZ}}$}

As discussed in Subsect.~\ref{gzplusa2}, the inclusion of dynamical effects associated with the LCO $\mathcal{O}_{A^2}$ is not enough to generate a decoupling gluon propagator. However, we should remind the straightforward computation (\ref{3.5}), where a non-trivial expectation value for the LCO

\begin{equation}
\mathcal{O}_{\varphi\omega}=\bar{\varphi}^{ab}_{\mu}(x)\varphi^{ab}_{\mu}(x)-\bar{\omega}^{ab}_{\mu}(x)\omega^{ab}_{\mu}(x)\,,
\label{3.19}
\end{equation}

\noindent was reported. Although these auxiliary fields were introduced with the goal of localizing the horizon function, we cannot simply forget they will develop their own quantum dynamics and eventually condensate. Therefore, introducing the operator $\mathcal{O}_{\varphi\omega}$ under the LCO formalism and studying its role for the effective potential is a very important task. In this case, we have to introduce the following term

\begin{equation}
S'_{\varphi\omega}=\int d^dx\left(s\left(-J\bar{\omega}^{a}_{i}\varphi^{a}_{i}\right)+\rho J\tau\right)=\int d^dx\left(-J\left(\bar{\varphi}^{a}_{i}\varphi^{a}_{i}-\bar{\omega}^{a}_{i}\omega^{a}_{i}\right)+\rho J\tau\right)\,,
\label{3.20}
\end{equation}

\noindent with $\rho$ a parameter and $J$ an external source invariant under BRST transformations,

\begin{equation}
sJ=0\,.
\label{3.21}
\end{equation}

\noindent There are several details on the introduction of $\mathcal{O}_{\varphi\omega}$ that will be omitted in this thesis and we refer to \cite{Vandersickel:2011zc,Dudal:2008sp,Dudal:2007cw}. However let us emphasize some important features:

\begin{itemize}
\item The introduction of $\mathcal{O}_{\varphi\omega}$ does not spoil the renormalizability of the $S^{L}_{\mathrm{GZ}} + S_{A^{2}}$ action.

\item In the LCO formalism we should introduce a term like $\beta J^2$ to remove divergences proportional to $J^2$ which are typical of this procedure. However, for this particular LCO $\mathcal{O}_{\varphi\omega}$ these divergences do not show up in the correlation functions and we can simply ignore this term. 

\item The LCO $\mathcal{O}_{\varphi\omega}$ is a BRST exact object. Hence, in a BRST invariant theory it is trivial that $\langle \mathcal{O}_{\varphi\omega} \rangle=0$. However, we should remind ourselves the restriction of the path integral to the Gribov region $\Omega$ breaks BRST softly in such a way that

\begin{equation}
\langle \mathcal{O}_{\varphi\omega} \rangle \neq 0
\label{3.22}
\end{equation}

is allowed. 
\end{itemize}

As done before, we introduce a term to $S^{L}_{\mathrm{GZ}} + S_{A^{2}}$ to take into account the condensation of $\mathcal{O}_{\varphi\omega}$ from the beginning, giving rise to the \textit{Refined Gribov-Zwanziger action},

\begin{equation}
S^{L}_{\mathrm{RGZ}} = S^{L}_{\mathrm{GZ}} + \frac{{m}^2}{2}\int d^dx~A^{a}_{\mu}A^{a}_{\mu}-{M}^2\int d^dx~(\bar{\varphi}^{ab}_{\mu}\varphi^{ab}_{\mu}-\bar{\omega}^{ab}_{\mu}\omega^{ab}_{\mu})\,.
\label{3.23}
\end{equation}

\noindent It is clear from the quadratic coupling $A(\bar{\varphi}+\varphi)$ that the introduction of a mass like term $\bar{\varphi}\varphi$ will affect the gluon propagator. Clearly, if the gluon propagator is modified then the resulting ghost propagator should change due to the gluon-ghost vertex. This poses a fundamental question: Does the introduction of these LCO's in the action keeps the theory inside the Gribov region $\Omega$ since the ghost propagator is affected ? The answer is no and to fix this problem, a vacuum term should be added to the action. We refer to \cite{Dudal:2008sp} for details on this and we will return to this point soon. The essence of this modification relies on the fact that a vacuum term will contribute to the gap equation which is directly used in the computation of the ghost two-point function. Before going through this point, from eq.(\ref{3.23}) it is easy to compute the gluon propagator (at tree-level) since the novel vacuum term does not affect it. The expression is

\begin{eqnarray}
\langle A^{a}_{\mu}(p)A^{b}_{\nu}(-p)\rangle_{\mathrm{RGZ}} &=& \delta^{ab}\frac{p^2+M^2}{(p^2+m^2)(p^2+M^2)+2g^2N\gamma^4}\left(\delta_{\mu\nu}-\frac{p_{\mu}p_{\nu}}{p^2}\right)\nonumber\\
&\equiv& \delta^{ab}\mathcal{D}(p^2)\left(\delta_{\mu\nu}-\frac{p_{\mu}p_{\nu}}{p^2}\right)\,.
\label{3.24}
\end{eqnarray}

\noindent Remarkably expression (\ref{3.24}) displays infrared suppression and $\mathcal{D}(0)$ attains a non-vanishing value a fact, as mentioned before, reported by the most recent lattice data. A qualitative representation of (\ref{3.24}) is displayed on Fig.~\ref{fig:rgzprop}.

\begin{figure}
	\centering
		\includegraphics[width=.70\textwidth]{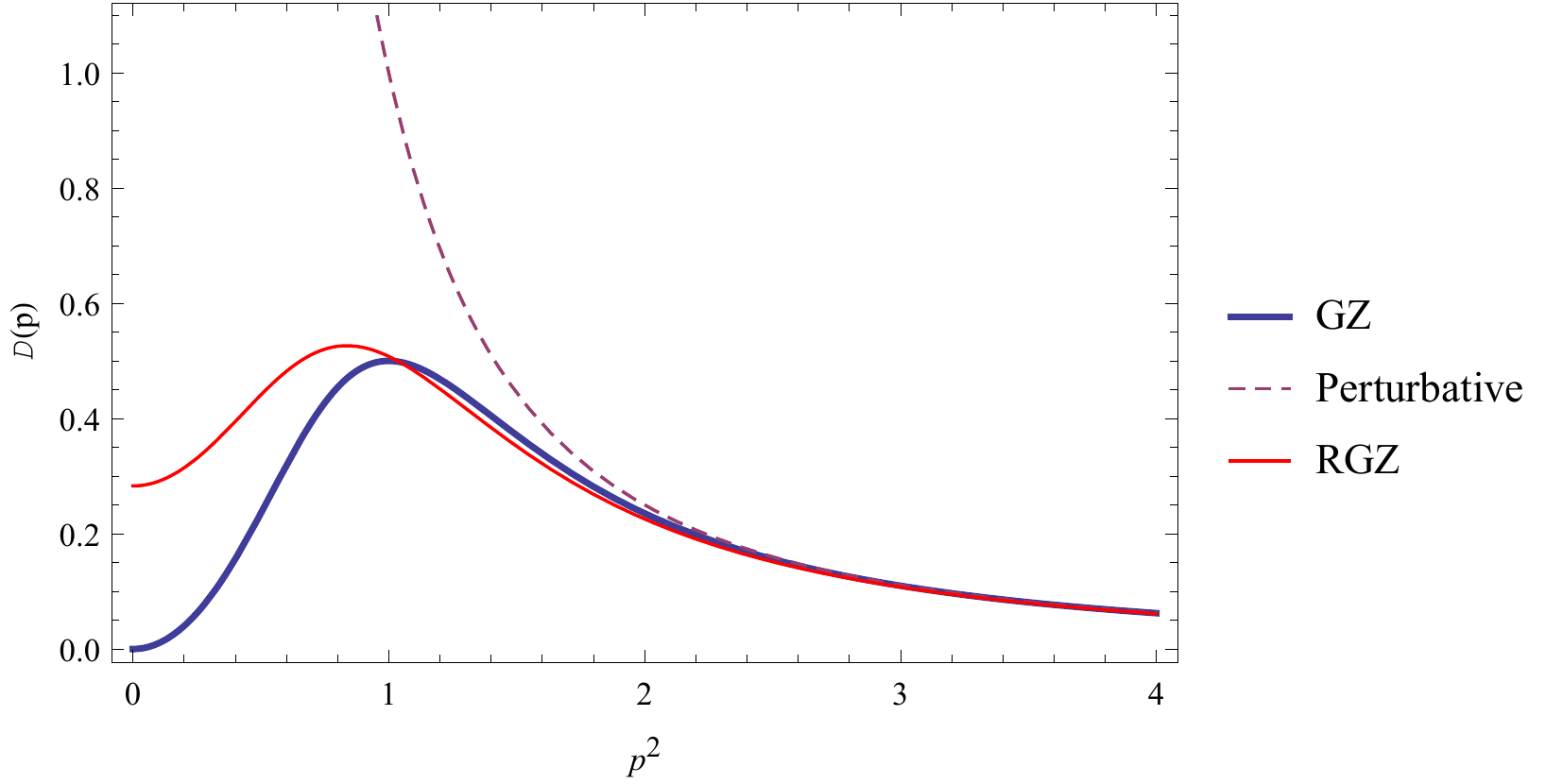}
	\caption{Qualitative representation of the gluon form factor in the GZ, RGZ and perturbative actions.}
	\label{fig:rgzprop}
\end{figure}

Therefore the RGZ framework enables an agreement with lattice simulations at least qualitatively and a dynamical justification for the finiteness of $\mathcal{D}(0)$. We must emphasize the fact that even for $m^2=0$ the lattice behavior is still qualitatively reproduced by eq.(\ref{3.24}). Therefore, the non-trivial finite value for $\mathcal{D}(0)$ comes from the condensation of the localizing fields $\mathcal{O}_{\varphi\omega}$. Now, let us return to the issue of whether the refinement of the Gribov-Zwanziger action is consistent with the horizon (or no-pole) condition without the addition of novel vacuum terms. To begin with, we add a vacuum term parametrized in the following way 

\begin{equation}
S_{\upsilon}=2\frac{d(N^2-1)}{\sqrt{2g^2N}}\int d^dx~\upsilon\gamma^2M^2\,,
\label{3.25}
\end{equation}

\noindent where the specific choice of the coefficient is for computational convenience and $\upsilon$ is a new parameter to be fixed. Now, let us proceed to the computation of the ghost propagator. We use eq.(\ref{2.31}) and the first line of eq.(\ref{2.32}) to write\footnote{For simplicity we set $m^2=0$ since it does not affect the qualitative behavior of the gluon propagator.}

\begin{eqnarray}
\langle\sigma(p^2)\rangle&=&\frac{N}{N^2-1}\frac{g^2}{p^2}\int \frac{d^dq}{(2\pi)^d}\frac{(p-q)_{\mu}p_{\nu}}{(p-q)^2}\langle A^{a}_{\mu}(-q)A^{a}_{\nu}(q)\rangle\nonumber\\
&=&Ng^2\frac{p_\mu p_\nu}{p^2}\int \frac{d^dq}{(2\pi)^d}\frac{1}{(p-q)^2}\frac{q^2+M^2}{q^2(q^2+M^2)+2g^2N\gamma^4}\left(\delta_{\mu\nu}-\frac{q_{\mu}q_{\nu}}{q^2}\right)\nonumber\\
\label{3.26}
\end{eqnarray}

\noindent where we have used expression (\ref{3.24}) with $m^2=0$ and the transversality of the gluon propagator. We expand expression (\ref{3.26}) around $p^2\approx 0$ and define $\lambda^4=2g^2N\gamma^4$, we obtain

\begin{equation}
\langle\sigma(p^2\approx 0)\rangle=Ng^2\frac{p_\mu p_\nu}{p^2}\int \frac{d^dq}{(2\pi)^d}\frac{1}{q^2}\frac{q^2+M^2}{q^2(q^2+M^2)+\lambda^4}\left(\delta_{\mu\nu}-\frac{q_{\mu}q_{\nu}}{q^2}\right)\,.
\label{3.27}
\end{equation}

\noindent Making use of 

\begin{equation}
\int d^dq~f(q^2)q_\mu q_\nu=\frac{1}{d}\delta_{\mu\nu}\int d^dq~f(q^2)q^2\,,
\label{3.28}
\end{equation}

\noindent we end up with

\begin{equation}
\langle\sigma(p^2\approx 0)\rangle=Ng^2\left(\frac{d-1}{d}\right)\int \frac{d^dq}{(2\pi)^d}\frac{1}{q^2}\frac{q^2+M^2}{q^2(q^2+M^2)+\lambda^4}\,.
\label{3.29}
\end{equation}

\noindent In the computation of the ghost propagator for the GZ action we make use of the gap equation. To derive the gap equation in the RGZ scenario, we use the one-loop vacuum energy

\begin{equation}
\mathcal{E}^{(1)}_v=\frac{(d-1)(N^2-1)}{2}\int \frac{d^dq}{(2\pi)^d}~\mathrm{ln}\left(q^2+\frac{\lambda^4}{q^2+M^2}\right)-d\gamma^4(N^2-1)+2\frac{d(N^2-1)}{\sqrt{2g^2N}}\upsilon\gamma^2 M^2\,,
\label{3.30}
\end{equation}

\noindent and we notice the presence of $\upsilon$ in the expression. It is convenient to deal with the following normalization,

\begin{equation}
\tilde{\mathcal{E}}^{(1)}_v\equiv\frac{\mathcal{E}^{(1)}_v}{N^2-1}\frac{2g^2N}{d}=-\lambda^4+2\lambda^2 M^2\upsilon+g^2N\frac{d-1}{d}\int \frac{d^dq}{(2\pi)^d}~\mathrm{ln}\left(q^2+\frac{\lambda^4}{q^2+M^2}\right)\,.
\label{3.31}
\end{equation} 

\noindent The one-loop gap equation is obtained through the condition

\begin{equation}
\frac{\partial\tilde{\mathcal{E}}^{(1)}_v}{\partial\lambda^2}=0\,,
\label{3.32}
\end{equation}

\noindent which is equivalent to

\begin{equation}
-1+\frac{M^2}{\lambda^2}\upsilon+g^2N\frac{d-1}{d}\int \frac{d^dq}{(2\pi)^d}\frac{1}{q^2(q^2+M^2)+\lambda^4}=0\,.
\label{3.33}
\end{equation}

\noindent We emphasize that in the derivation of eq.(\ref{3.33}) we tacitly removed the $\lambda=0$ solution which emerges as an artificial solution from the localization procedure and corresponds to not restrict the path integral domain to the Gribov region $\Omega$. The eq.(\ref{3.33}) is used to fix the Gribov parameter $\gamma$ (or, equivalently, $\lambda$). However, it contains two extra parameters which are in principle free: $\upsilon$ and $M$. The parameter $M$ can be determined dynamically as $m$ in the previous section. This is explored in details in \cite{Dudal:2008sp} and we shall devote few words on this later on. To fix the parameter $\upsilon$, we introduce an important boundary condition. It is the following: The function $\sigma(p^2)$ in the standard GZ action satisfies $\sigma(0)=1$. Therefore, when we introduce a mass $M$, we expect the following property

\begin{equation}
\lim_{M^2\rightarrow 0}\langle\sigma(0,M^2)\rangle=1\,.
\label{3.34}
\end{equation}

\noindent A reasonable assumption which is physically expected is that the function $\sigma$ is stationary around $M^2=0$, namely, for infinitesimal values of $M^2$ we do not expect variations of $\langle\sigma(0,M^2)\rangle$ of first order. Hence, 

\begin{equation}
\frac{\partial\langle\sigma (0,M^2)\rangle}{\partial M^2}\Big|_{M^2=0}=0\,.
\label{3.35}
\end{equation}

\noindent Eq.(\ref{3.35}) fixes the parameter $\upsilon$. To see this, we rewrite eq.(\ref{3.29}) as

\begin{eqnarray}
\langle\sigma(0)\rangle&=&Ng^2\frac{d-1}{d}\left(\int \frac{d^dq}{(2\pi)^d}\frac{1}{q^4+q^2M^2+\lambda^4(M^2)}\right.\nonumber\\
&+&\left.M^2\int \frac{d^dq}{(2\pi)^d}\frac{1}{q^2}\frac{1}{q^4+q^2M^2+\lambda^4(M^2)}\right)\,,
\label{3.36}
\end{eqnarray}

\noindent and imposing condition (\ref{3.35}) we immediately get

\begin{equation}
\int\frac{d^dq}{(2\pi)^d}\frac{q^2}{(q^4+\lambda^4(0))^2}=\int\frac{d^dq}{(2\pi)^d}\frac{1}{q^2}\frac{1}{(q^4+\lambda^4(0))}\,.
\label{3.37}
\end{equation}

\noindent Now, we take the derivative with respect to $M^2$ of eq.(\ref{3.33}) and impose $M^2=0$. The resulting expression is

\begin{equation}
\frac{\upsilon}{\lambda^2(0)}-g^2N\frac{d-1}{d}\int \frac{d^dq}{(2\pi)^d}\frac{q^2}{(q^4+\lambda^4(0))^2}=0\,,
\label{3.38}
\end{equation}

\noindent and making use of (\ref{3.37}), we end up with

\begin{equation}
\upsilon=\lambda^2(0)g^2N\frac{3}{4}\int \frac{d^4q}{(2\pi)^4}\frac{1}{q^2}\frac{1}{q^4+\lambda^{4}(0)}=\frac{3g^2N}{128\pi}\,,
\label{3.39}
\end{equation}

\noindent where we set $d=4$ for concreteness. In this way, we fixed the parameter $\upsilon$. Substituting eq.(\ref{3.33}) in eq.(\ref{3.36}), we obtain

\begin{equation}
\langle\sigma(0)\rangle=1-\frac{M^2}{\lambda^2}\upsilon+M^2Ng^2\frac{d-1}{d}\int \frac{d^dq}{(2\pi)^d}\frac{1}{q^2}\frac{1}{q^4+q^2M^2+\lambda^4(M^2)}\,,
\label{3.40}
\end{equation}

\noindent with $\upsilon$ fixed by eq.(\ref{3.39}). It is very clear from eq.(\ref{3.40}) that for $M^2=0$ we recover the standard GZ result, namely $\langle\sigma(0)\rangle=1$. Also, if we didn't include the novel vacuum term $\upsilon$, it is clear that  $\langle\sigma(0)\rangle$ would violate the no-pole condition \textit{i.e.} $\langle\sigma(0)\rangle>1$. In this case, the theory is pushed outside the Gribov horizon and this contradicts the initial hypothesis of the GZ construction. Therefore, the presence of an extra vacuum term is essential to balance the right-hand side of eq.(\ref{3.40}) in such a way that $\langle\sigma(0)\rangle<1$. Of course, in the form (\ref{3.40}) it is not possible to guarantee $\langle\sigma(0)\rangle<1$ in an obvious way, but only to see this is a possibility. To see it concretely, we plug eq.(\ref{3.39}) in (\ref{3.40}) and after few manipulations we end up with

\begin{equation}
\langle\sigma(p^2\approx 0)\rangle=1-\frac{3x^2}{4}g^2N\int \frac{d^4q}{(2\pi)^4}\frac{1}{(1+q^4)(q^4+xq^2+1)}\,,
\label{3.41}
\end{equation}

\noindent with\footnote{The mass parameter $M^2$ must be non-negative to avoid tachyonic modes in the $\bar{\omega}\omega$ sector.} $x=M^2/\lambda^2(M^2)\geq 0$. From eq.(\ref{3.41}), we see there is no dependence on the momentum $p^2$ (we remember this is not the case in eq.(\ref{2.41}) for the standard GZ construction). The immediate consequence is that the ghost propagator at one-loop order is not enhanced differently from the GZ framework and this is in agreement with the most recent lattice at the qualitative level. 

Before going ahead with further features of the RGZ scenario, we believe it is worth to summarize what we have so far for the benefit of the reader: The GZ action was constructed to take into account the presence of (infinitesimal) Gribov copies in the quantization of Yang-Mills theories. This action is originally written in a non-local fashion due to the presence of the horizon function and contains a mass parameter, the Gribov parameter, which is fixed by a gap equation. On the other hand, auxiliary fields can be introduced in order to localize the horizon function. The resulting GZ action is local and renormalizable to all orders in perturbation theory. Nevertheless, in this chapter we have shown that even at the perturbative level, we can probe IR instabilities of this action. They are associated with the formation of dimension two condensates, which as shown at one-loop order are proportional to the Gribov parameter, see eq.(\ref{3.5}) and eq.(\ref{3.6}). Therefore, two mass terms were introduced in the original GZ action to take into account those instabilities. The first mass term $m^2$ associated with the LCO $\mathcal{O}_{A^2}$ does not change the qualitative behavior of the gluon and ghost propagators derived from the GZ action: An infrared suppressed positivity violating gluon propagator in the IR which attains a vanishing value at zero momentum and an enhanced ghost propagator in the deep IR. However, the mass $M^2$ associated with the LCO $\mathcal{O}_{\varphi\omega}$ modifies the propagators as just showed and are in very good qualitative agreement with lattice results. The introduction of such condensates in the GZ action is the so-called \textit{refinement}. As also discussed, the refinement obliged us to introduce a novel vacuum term to keep the theory inside the Gribov region $\Omega$. In summary, the RGZ action contains the following parameters apart from the ordinary Yang-Mills coupling $g$: The Gribov parameter $\gamma$, the mass $m^2$, the mass $M^2$ and the parameter $\upsilon$. The Gribov parameter is fixed by the horizon condition while the new parameter $\upsilon$ is fixed by the boundary condition (\ref{3.35}) and eq.(\ref{3.38}). The mass parameters associated with the condensates should be dynamically fixed and this is a very important (and not so easy) task. As described in Ap.~\ref{LCO} with particular application in Subsect.~\ref{gzplusa2}, we can construct a consistent effective action taking into account the LCO's $\mathcal{O}_{A^2}$ and $\mathcal{O}_{\varphi\omega}$. The non-trivial values of $\langle \mathcal{O}_{A^2}\rangle$ and $\langle\mathcal{O}_{\varphi\omega}\rangle$ are obtained by demanding the minimization of the effective action with respect to the parameters $m^2$ and $M^2$. In summary, for each mass parameter introduced in the RGZ we have a gap equation, namely

\begin{equation}
\frac{\partial\Gamma}{\partial\gamma^2}=0\,,\,\,\,\,\,\,\,\,\frac{\partial\Gamma}{\partial m^2}=0\,,\,\,\,\,\,\,\,\,\frac{\partial\Gamma}{\partial M^2}=0\,.
\label{3.42}
\end{equation}

\noindent We should emphasize two points: \textit{(i)} The gap equations (\ref{3.42}) fix the the mass parameters in a dynamical way. Therefore, they are not free parameters and this implies no free extra parameters are introduced in the theory; \textit{(ii)} Solving the gap equations (\ref{3.42}) might not be easy (and in fact is not), but they guarantee we have a consistent way to fix those parameters. Unfortunately, solving these equations at leading order is not enough in general, making this task quite involved. Therefore, playing with fits with lattice data or Schwinger-Dyson results is an efficient way to fix those parameters in practice. Since this sort of analysis will not be extended in the second part of this thesis, we simply refer to \cite{Vandersickel:2011zc,Dudal:2008sp,Dudal:2005na,Dudal:2011gd} for details on the solutions of (\ref{3.42}) and let the take away message that we have a consistent way of fixing the mass parameters and no free parameters are introduced. 

\section{Features of the RGZ action}

The RGZ action defined by eq.(\ref{3.7}) displays many interesting features. In this section we point out a subset of them without full details, but giving the appropriate references. We emphasize, however, the study of different features of the RGZ action is an active research topic.  

\subsection{Gluon propagator: positivity violation} \label{positivityviol}

The tree-level gluon propagator obtained from the RGZ action, given by eq.(\ref{3.24}), as already described previously, attains a finite form factor at zero momentum. Also, a crucial property which is also reported from lattice simulations, is the positivity violation. To show this, we write the gluon propagator form factor in its K\"all\'en-Lehmann spectral representation as

\begin{equation}
\mathcal{D}(p^2)=\int^{\infty}_{0}dM^2\frac{\rho(M^2)}{p^2+M^2}\,,
\label{3.43}
\end{equation}

\noindent where $\rho(M^2)$ is the spectral density function. To associate the gluon field with a stable particle, the spectral density function \textit{must} be positive. If $\rho(M^2)<0$ for a given $M^2$, we say $\mathcal{D}(p^2)$ is positivity violating and we cannot associate the gluon propagator with a physical particle in the spectrum of the theory. This is advocated as a signal of confinement, see \cite{Cucchieri:2004mf,vonSmekal:1997ohs}. A practical procedure to check if $\rho(M^2)$ is not positive for all values of $M^2$ is to use the temporal correlator $\mathcal{C}(t)$ given by

\begin{eqnarray}
\mathcal{C}(t)&=&\int^{\infty}_{0}dM~\rho(M^2)\,\mathrm{e}^{-Mt}=\frac{1}{2\pi}\int^{\infty}_{-\infty}dp~\mathcal{D}(p^2)\,\mathrm{e}^{-ipt}\nonumber\\
&=&\frac{1}{2\pi}\int^{\infty}_{-\infty}dp~\frac{p^2+M^2}{(p^2+m^2)(p^2+M^2)+2g^2N\gamma^4}\,\mathrm{e}^{-ipt}\,.
\label{3.44}
\end{eqnarray}

\noindent The temporal correlator is useful to test if the spectral density is not always positive \textit{i.e.} if $\rho(M^2)$ is positive, then $\mathcal{C}(t)$ is necessarily positive. The converse is not necessarily true. However, if $\mathcal{C}(t)$ is negative for a given value of $t$, then $\rho(M^2)$ cannot be positive for all $M^2$. It was proved in \cite{Vandersickel:2011zc,Vandersickel:2012tz,Zwanziger:1989mf,Dudal:2005na} that as long as the Gribov parameter $\gamma$ is different from zero, the gluon propagator is positivity violating. The case with $M^2=0$ obviously violates positivity since $\mathcal{D}(0)=0$ and, therefore, $\rho(M^2)$ cannot be positive for all $M$. For $\mathcal{D}(0)\neq 0$, this conclusion is not immediate, but it is possible to show the positivity violation also, \cite{Dudal:2008sp}. The conclusion we can reach with this analysis is that irrespective of the inclusion of further non-perturbative effects as condensates, the restriction to the Gribov region $\Omega$ implies a positivity violating gluon propagator. This is a non-trivial feature, but we should emphasize at this point that this analysis was carried out at tree-level. In principle, we should prove that higher order corrections preserve the positivity violating behavior. 

\subsection{Strong coupling constant}

It is possible to provide a renormalization group invariant definition of an effective strong coupling constant $g^2_R(p^2)$ using the gluon and ghost form factors \cite{Alkofer:2000wg,vonSmekal:1997ohs,Bloch:2003sk},

\begin{equation}
g^2_R(p^2)=g^2(\bar{\mu})\tilde{\mathcal{D}}(p^2,\bar{\mu}^2)\tilde{\mathcal{G}}^2(p^2,\bar{\mu}^2)\,,
\label{3.45}
\end{equation}

\noindent where 

\begin{equation}
\tilde{\mathcal{D}}(p^2)=p^2\mathcal{D}(p^2)\,,\,\,\,\,\,\mathrm{and}\,\,\,\,\,\,\tilde{\mathcal{G}}(p^2)=p^2\mathcal{G}(p^2)\,,
\label{3.46}
\end{equation}

\noindent are the gluon and ghost form factors\footnote{Note that we use ``form factor" both for $\mathcal{D}$ or $p^2\mathcal{D}$. However, we emphasize which expression we are using to avoid confusion.} respectively. Before the paradigm change in 2007, where lattice simulations reported a finite value for $\mathcal{D}(0)$ and a non enhanced ghost propagator, many evidences from Schwinger-Dyson results, lattice simulations and the GZ approach pointed towards the existence of a non-vanishing infrared fixed point for $g^{2}_R(p^2)$. In particular, these studies proposed a power law behavior for $\tilde{\mathcal{D}}(p^2)$ and $\tilde{\mathcal{G}}(p^2)$ in the IR, namely

\begin{equation}
\lim_{p\rightarrow 0}\tilde{\mathcal{D}}(p^2)\propto (p^2)^\theta\,\,\,\,\,\,\,\,\,\,\,\,\mathrm{and}\,\,\,\,\,\,\,\,\,\,\,\,\lim_{p\rightarrow 0}\tilde{\mathcal{G}}(p^2)\propto (p^2)^\omega\,,
\label{3.47}
\end{equation}

\noindent with the constraint

\begin{equation}
\theta+2\omega=0\,,
\label{3.48}
\end{equation}

\noindent which implies a non-vanishing IR fixed point. If we set $M^2=0$ in eq.(\ref{3.24}), conditions (\ref{3.47}) and (\ref{3.48}) are satisfied with $\theta=2$ and $\omega=-1$. We remember the reader that for $M^2=0$, the gluon and ghost propagator are of scaling type. On the other hand, in the RGZ scenario, namely, $M^2\neq 0$ we have a massive/decoupling gluon propagator and a ghost propagator which is not enhanced anymore. In this situation, 

\begin{equation}
\lim_{p^2\rightarrow 0}\tilde{\mathcal{D}}(p^2) \propto p^2\,\,\,\,\,\,\,\,\,\,\mathrm{and}\,\,\,\,\,\,\,\,\,\,\lim_{p^2\rightarrow 0}\tilde{\mathcal{G}}(p^2) \propto (p^2)^0\,.
\label{3.49}
\end{equation}

\noindent The relation (\ref{3.48}) is violated by (\ref{3.49}) and the strong coupling constant vanishes as long as $p^2=0$. This behavior observed in the RGZ framework is in agreement with large volume lattice simulations \cite{Cucchieri:2007md,Cucchieri:2008fc,Cucchieri:2007rg,Cucchieri:2008qm}.

\subsection{BRST soft breaking}

In Subsect.~\ref{brstbreakingGZ}, we have pointed out the influence of the restriction of the path integral domain to the Gribov region $\Omega$. The resulting effect is that the BRST symmetry is softly broken by a term proportional to the Gribov parameter $\gamma^2$. In the RGZ scenario, the addition of dimension two operators does not change the situation. In particular, the operator $\mathcal{O}_{\varphi\omega}$ is BRST invariant and $\mathcal{O}_{A^2}$ is BRST invariant on-shell. On the other hand, this breaking is a mechanism responsible for ensuring the Gribov parameter $\gamma$ is physical as we showed in Sect.~\ref{gammaphys}. Very recently, some evidences from lattice simulations that BRST symmetry is indeed broken in the IR were provided in \cite{Cucchieri:2014via}.

\subsection{What about unitarity?}

It is well-known that BRST symmetry plays a very important role in the proof of unitarity at the perturbative regime of Yang-Mills theories, \cite{Kugo:1979gm}. However, there is some ``folk theorem" that is frequently evoked in the literature which says that if we spoil BRST symmetry we automatically destroy unitarity (and sometimes, also the opposite). This is a naive conclusion and, most important, general arguments based on asymptotic states of elementary particles of the theory are frequently used. However, in the case of pure Yang-Mills theories at the non-perturbative level, we are dealing with a confining gauge theory and therefore defining asymptotic states for its elementary excitations is not even possible. To see how subtle is the unitarity issue, we have to guarantee that the $S$-matrix associated with physical excitations of the spectrum of the theory must be unitary. But as just discussed in Subsect.~\ref{positivityviol}, gluons cannot be associated with physical excitations in the spectrum of the theory in the non-perturbative sector and the issue of unitarity of the $S$-matrix is meaningless for them. Certainly, this is signal that in the non-perturbative regime we do not have to expect to treat the unitarity issue in the same footing as in the perturbative level. A very important comment is that having a nilpotent BRST symmetry gives us a powerful tool to identify a subspace of renormalizable ``quantum" operators which have a correspondence with classical physical (gauge invariant operators) operators. This identification, however, is not the full story. We still have to prove this subspace is \textit{physical} by ensuring positivity. Proving this is independent from the existence of a nilpotent BRST symmetry which was used just to \textit{identify} such candidate space. An interesting progress in this issue in the context of the GZ action which breaks the BRST symmetry was addressed in \cite{Dudal:2012sb}, where the identification of such subspace of renormalizable quantum operators was done. Again, what remains is the proof of the physicality of such subspace, but this task is as difficult as if we considered standard BRST invariant Yang-Mills theories. Also an interesting counterexample of a theory that enjoys a nilpotent BRST symmetry but is not unitary was worked out in \cite{Dudal:2007ch}.

\subsection{(Recent) Applications}

The RGZ framework provides a local and renormalizable way of implementing the restriction of the path integral domain to $\Omega$ and taking into account the formation of condensates. In recent years, the RGZ action was used for various applications and so far, very nice results were obtained. It is far beyond the scope of this thesis to trace all the details of these applications, but we will mention at least a set of interesting results that might be faced as a further encouragement to a deep understanding of the effects of Gribov copies and dimension two condensates to non-perturbative Yang-Mills theories. As a (sub)set of applications we have,

\begin{itemize}

\item \textit{Computation of glueball spectra:} In \cite{Vandersickel:2011zc,Dudal:2009zh,Dudal:2010cd,Dudal:2013wja} the RGZ framework was employed to compute masses of glueballs. Remarkably, the results obtained compare well with lattice results even within a simple approximation used. 

\item \textit{Casimir energy:} In \cite{Canfora:2013zna}, the computation of the Casimir energy for the MIT bag model was perfomed using the RGZ gluon propagator. This provided a ``correct" sign for the Casimir energy, namely a negative one, a non-trivial result since the computation using the perturbative propagator indicates a positive energy \cite{Boyer:1968uf}.

\item \textit{Non-perturbative matter propagators:} Very recently, the authors of \cite{Capri:2014bsa} proposed an extension of the BRST soft breaking to the matter sector to be included in the pure Yang-Mills action. In particular, they considered scalar matter in the adjoint representation of the gauge group and quarks. Besides the soft BRST breaking effect, the effects of condensation of LCO's is also considered and the resulting propagators of the matter fields studied are in good qualitative agreement with lattice results, see \cite{Capri:2014bsa} and references therein. 

\item \textit{Finite temperature studies:} In \cite{Canfora:2013kma}, a study of the GZ gap equation at finite temperature was done. More recently, \cite{Canfora:2015yia}, a study concerning the Polyakov loop was also performed, again in the GZ action, but the first extensions to the RGZ framework were done. This topic might be of major interest in the following years and dealing with the RGZ framework at finite temperature, albeit challenging, is potentially a very rich arena to provide other physical quantities to be compared with lattice and functional approaches. 

\end{itemize}

To conclude this chapter, we remark that we have at our disposal a powerful analytical framework that takes into account non-perturbative features and provides a list of results which is compatible with recent lattice and functional methods data. We should emphasize, however, a strong limitation of this framework: Everything said so far was particularly valid in the Landau gauge. With the very exceptions of the maximal Abelian and Coulomb\footnote{Although a naive construction of the RGZ action for Coulomb gauge is possible, we remind that very few is known about the renormalizability properties of this gauge. Hence, one might argue that in the Coulomb gauge we do not have a complete satisfactory formulation so far.} gauges, there is no construction of the RGZ action for other gauges from first principles. To be fair enough, the construction of a RGZ action for Coulomb gauge is also limited due to our ignorance about the renormalizability properties of this gauge. Hence, we could honestly state that we are able to construct a local and renormalizable RGZ action for the Landau and maximal Abelian gauges from first principles. These are special particular cases due to the fact that they have a Hermitian Faddeev-Popov operator as pointed out in Ch.~\ref{ch.2}. It is precisely this property which ensures a geometric construction of a Gribov region which is suitable to restrict the path integral domain. However, the Faddeev-Popov operator is not Hermitian in general, a fact that jeopardizes a general construction of the GZ action from first (geometric) principles. Since we are dealing with a gauge theory we expect that physical quantities should be independent from the gauge choice and a proof of principle would be to compute a physical quantity in one gauge and compare with the same quantity in a different gauge. We know this statement is under control in the perturbative regime due to the BRST symmetry, which is crucial to prove gauge parameter independence of physical quantities. In the (R)GZ framework we have neither BRST symmetry nor the formulation of the action in a general gauge and this issue can pose a very natural question: Is the (R)GZ scenario consistent with gauge independence of physical quantities? In \cite{Lavrov:2011wb,Lavrov:2012gb,Lavrov:2015pka} it was advocated that the BRST soft breaking leads to inconsistency with gauge parameter independence of observables and to consistent extend the GZ action to different gauges we should follow a general receipt \cite{Lavrov:2013boa,Reshetnyak:2013bga,Moshin:2014xka,Moshin:2015gsa}. The proposal is very general: It takes for granted the (R)GZ formulation for Landau gauge and under the use of finite field dependent BRST transformations \cite{Joglekar:1994tq,Lavrov:2013rla,Upadhyay:2015lha} they generate a general expression for a would be (R)GZ action in an arbitrary gauge. Although very general, it is not clear, for instance, how to obtain the (R)GZ action for the maximal Abelian gauge in this framework such that it matches the one proposed from first principles. Also, it is not clear to us how to cast the formalism proposed in a local form. Clearly, these questions should be clarified and possibly will be addressed in the near future. 
	In any case, the second part of this thesis is devoted to the task of extending the RGZ action to other gauges from first principles and see how this extensions affect the way we should formulate a local action which takes into account gauge copies and LCO's condensation. As we shall present, a deep conceptual change in the way we formulate the RGZ action in the Landau gauge will enable us to construct a BRST symmetry for this action. With this, a BRST quantization is proposed and all consistency regarding gauge independence of observables is automatically guaranteed. However, we will present this (re)construction of the RGZ action in a step by step way. First we address the problem of the construction of the RGZ action is a general linear covariant gauge. It is precisely this problem that will suggest a complete reformulation of the RGZ framework in the Landau gauge and an introduction of a new (non-perturbative) BRST symmetry. The rest will be the very first consequences of this reformulation and the formal developments of the new BRST symmetry. 

\chapter{A first step towards linear covariant gauges} \label{ch.5}

In this chapter we give a tentative construction of the Gribov-Zwanziger action (and its refinement) in general linear covariant gauges (LCG). As we discussed in the previous chapters, given a gauge condition, an infinitesimal Gribov copy is associated with a zero-mode of the Faddeev-Popov operator. The basis of the solution proposed by Gribov and Zwanziger relies on the hermiticity of such operator. This is the case for the Landau gauge (a particular case of linear covariant gauge) and also for the maximal Abelian gauge. Linear covariant gauges are defined by 

\begin{equation}
\partial_{\mu}A^{a}_{\mu}=\alpha b^a\,,
\label{6.1}
\end{equation}

\noindent with $\alpha \geq 0$. The choice $\alpha = 0$ corresponds to the Landau gauge and, consequently, we have a Hermitian Faddeev-Popov operator. In general, however, for $\alpha >0$ the Faddeev-Popov operator is not Hermitian. To see this, we recognize the Faddeev-Popov operator in linear covariant gauge,

\begin{equation}
\EuScript{M}^{ab}_{\mathrm{LCG}}=-\partial_{\mu}D^{ab}_{\mu}=-\delta^{ab}\partial^2+gf^{abc}A^{c}_{\mu}\partial_{\mu}+\underbrace{gf^{abc}(\partial_{\mu}A^{c}_{\mu})}_{(\ast)}\,,
\label{6.2}
\end{equation}

\noindent where we notice that $(\ast)$ is zero in the Landau gauge, but generically non-vanishing for arbitrary $\alpha$. This fact jeopardizes the hermiticity of $\EuScript{M}^{ab}_{\mathrm{LCG}}$. Let us consider two test functions $\psi^a$ and $\chi^a$,

\begin{eqnarray}
(\EuScript{M}^{ab}_{\mathrm{LCG}}\psi^b,\chi^a)&=&\int d^dx \left(\EuScript{M}^{ab}_{\mathrm{LCG}}\psi^{\dagger b}\right)\chi^b=-\int d^dx\left(\partial_{\mu}(\delta^{ab}\partial_{\mu}-gf^{abc}A^{c}_{\mu})\psi^{\dagger b}\right)\chi^a\nonumber \\
&=&-\int d^dx\,(\delta^{ab}\partial^2\psi^{\dagger b})\chi^a+\int d^dx\,gf^{abc}(\partial_{\mu}A^{c}_{\mu})\psi^{\dagger b}\chi^a+\int d^dx\,gf^{abc}A^{c}_{\mu}(\partial_{\mu}\psi^{\dagger b})\chi^a\nonumber\\
&=&-\int d^dx\,\psi^{\dagger a}\delta^{ab}\partial^2\chi^b-\int d^dx\,\psi^{\dagger a}gf^{abc}(\partial_{\mu}A^{c}_{\mu})\chi^b+\int d^dx\,\psi^{\dagger a}gf^{abc}(\partial_{\mu}A^{c}_{\mu})\chi^b\nonumber\\
&+&\int d^dx\,\psi^{\dagger b}gf^{abc}A^{c}_{\mu}(\partial_{\mu}\chi^a)= (\psi^a,\EuScript{M}^{ab}_{\mathrm{LCG}}\chi^b)-\int d^dx\,\psi^{\dagger a}gf^{abc}(\partial_{\mu}A^{c}_{\mu})\chi^b\nonumber\\
&\neq& (\psi^a,\EuScript{M}^{ab}_{\mathrm{LCG}}\chi^b)\,,
\label{6.3}
\end{eqnarray}

\noindent where is clear that $\EuScript{M}^{ab}_{\mathrm{LCG}}$ is Hermitian only if $\partial_{\mu}A^{a}_{\mu}=0$. From this explicit proof, we start to identify the difficulty to extend what was done in the Landau gauge for a general linear covariant gauge. To begin with, we are not even able to define a region where $\EuScript{M}^{ab}_{\mathrm{LCG}}$ is positive, since it is not Hermitian. This obstacle hinders a trivial construction of a region akin to the Gribov region $\Omega$ in the Landau gauge in linear covariant gauges. On the other hand, an idea put forwarded in \cite{Sobreiro:2005vn} is to use an ``auxiliary" operator which is Hermitian and work an analogous construction of the Gribov-Zwanziger solution. This Hermitian operator is such that avoiding its zero-modes we automatically avoid zero-modes of the Faddeev-Popov operator $\EuScript{M}^{ab}_{\mathrm{LCG}}$. In the next subsection we will present this construction and out of this, we will propose a RGZ action for linear covariant gauges. As we will point out in the end of this chapter, this construction shows an inconsistency with the standard RGZ action in the Landau gauge. This inconsistency will motivate us to reformulate of the RGZ action, and this is presented in the next chapter.

\section{Gribov region in LCG: A proposal} \label{gribovreg}

Performing a gauge transformation over (\ref{6.1}), we define the Gribov copies equation in LCG, 

\begin{equation}
\EuScript{M}^{ab}_{\mathrm{LCG}}\xi^b \equiv  -\partial_{\mu}D^{ab}_{\mu}\xi^b=0\,,
\label{6.4}
\end{equation}

\noindent with $\xi^a$ being the infinitesimal parameter of the gauge transformation.

\noindent An attempt to define a region free from infinitesimal copies in linear covariant gauges was done in \cite{Sobreiro:2005vn}. In this work, a region free from infinitesimal copies was identified, under the assumption that the gauge parameter $\alpha$ is infinitesimal, {\it i.e.} $\alpha \ll 1$. The region introduced in \cite{Sobreiro:2005vn} is defined by demanding that  the transverse component of the gauge field, $A^{Ta}_\mu= \left(\delta_{\mu\nu}-\partial_{\mu}\partial_{\nu}/\partial^2\right)A^{a}_{\nu}$,  belongs to the Gribov region $\Omega$ of Landau gauge. Here, we review and extend this result for a finite value of $\alpha$ presented in \cite{Capri:2015pja}. The extension relies on  the following theorem, 

\begin{theorem}
If the transverse component of the gauge field\footnote{Remind the definition of $\Omega$, Def.~\ref{gribovregion}.} $A^{Ta}_{\mu}=\left(\delta_{\mu\nu}-\partial_{\mu}\partial_{\nu}/\partial^2\right)A^{a}_{\nu}$ $\in$ $\Omega$, then the equation $\EuScript{M}^{ab}_{\mathrm{LCG}}\xi^b=0$ has only the trivial solution $\xi^b=0$.
\label{thma}
\end{theorem}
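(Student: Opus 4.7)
My approach would exploit the Helmholtz-type decomposition $A^a_\mu = A^{Ta}_\mu + A^{La}_\mu$, where the LCG condition \eqref{6.1} forces the longitudinal piece to have the explicit form $A^{La}_\mu = \partial_\mu \phi^a$ with $\phi^a = \alpha\,\partial^{-2}b^a$. Inserting this in \eqref{6.2} and using $\partial_\mu A^{Tc}_\mu = 0$, the LCG Faddeev--Popov operator splits cleanly as
\begin{equation}
\EuScript{M}^{ab}_{\mathrm{LCG}} \;=\; \EuScript{M}^{ab}_L(A^T) \;+\; gf^{abc}A^{Lc}_\mu \partial_\mu \;+\; \alpha\, gf^{abc}b^c ,
\end{equation}
where $\EuScript{M}^{ab}_L(A^T) = -\delta^{ab}\partial^2 + gf^{abc}A^{Tc}_\mu \partial_\mu$ is precisely the Landau-gauge Faddeev--Popov operator evaluated on the transverse part of $A$. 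By the hypothesis $A^T\in\Omega$, this operator is Hermitian and strictly positive, and in particular invertible. The whole idea is to feed this positivity back into the full LCG copies equation and rule out any nontrivial zero mode.

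\textbf{Main steps.} Assume $\xi^a$ solves $\EuScript{M}^{ab}_{\mathrm{LCG}}\xi^b = 0$, contract with $\xi^a$, and integrate. The explicit $\alpha gf^{abc}b^c\xi^a\xi^b$ contribution vanishes identically by the antisymmetry of $f^{abc}$ in $(a,b)$ against the symmetric $\xi^a\xi^b$, a welcome cancellation that kills the most obvious gauge-parameter-dependent source of non-hermiticity at the level of the norm of a zero mode. One is then left with the identity $\langle\xi,\EuScript{M}_L(A^T)\xi\rangle = -I$, where $I = \int d^dx\, gf^{abc}(\partial_\mu\phi^c)\xi^a\partial_\mu\xi^b$ carries the entire $A^L$-dependence. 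Integrating by parts the derivative acting on $\phi^c$ and again using the antisymmetry of $f^{abc}$ against the symmetric combination $(\partial_\mu\xi^a)(\partial_\mu\xi^b)$, one obtains the cleaner rewriting $I = -\int d^dx\, gf^{abc}\phi^c\,\xi^a\,\partial^2\xi^b$. Substituting the copies equation itself to express $\partial^2\xi^b$ reduces $I$ to a quadratic form in $\xi$ whose coefficients are built from $\phi$ and $A^T$, making it amenable to bounding against the strictly positive quantity $\langle\xi,\EuScript{M}_L(A^T)\xi\rangle$.

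\textbf{Main obstacle.} The real difficulty is to show that $I$ cannot conspire to cancel $\langle\xi,\EuScript{M}_L(A^T)\xi\rangle>0$ for any admissible configuration. The natural strategy is to recast the copies equation as a fixed-point problem $\xi = -\EuScript{M}_L^{-1}(A^T)\,K\xi$, with $K = gf^{abc}A^{Lc}_\mu\partial_\mu + \alpha gf^{abc}b^c$, which is well-defined because $\EuScript{M}_L(A^T)$ is invertible by hypothesis, and then to prove that $\EuScript{M}_L^{-1}(A^T)K$ has spectral radius strictly less than one in a suitable Sobolev-type norm; a Neumann-series argument would then force $\xi=0$. For $\alpha\ll 1$ this is essentially the perturbative estimate of \cite{Sobreiro:2005vn}, but extending it to finite $\alpha$ requires a uniform estimate over all admissible pairs $(A^T,b)$ with $A^T\in\Omega$. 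Securing this uniform control, without having to impose any extra smallness condition on $\alpha$ or on $b$, is where I expect the essential technical work to lie.
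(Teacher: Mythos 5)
Your setup coincides with the paper's: you decompose $A=A^T+A^L$ with $A^{La}_\mu=\alpha\,\partial_\mu\partial^{-2}b^a$, note that $A^T\in\Omega$ makes $\EuScript{M}^{T}$ positive and hence invertible, and arrive at the fixed-point equation $\xi=-(\EuScript{M}^T)^{-1}K\xi$. But from there you head toward proving that $(\EuScript{M}^T)^{-1}K$ has spectral radius strictly less than one, uniformly over all admissible $(A^T,b)$ and for finite $\alpha$, and you correctly flag that you cannot secure this estimate. That is a genuine gap, and it is not the step the paper takes: no such uniform bound is proved there (nor is one expected to hold for arbitrary finite $\alpha$), so as written your proposal does not establish the theorem. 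Your preliminary quadratic-form manipulations (the vanishing of the $\alpha g f^{abc}b^c\xi^a\xi^b$ term by antisymmetry, the rewriting of $I$ after integration by parts) are correct but do not by themselves produce a contradiction with $\langle\xi,\EuScript{M}^T\xi\rangle>0$.

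The paper's actual key idea is different and much cheaper. Since the gauge condition forces $A^{La}_\mu=\alpha\,\partial_\mu\partial^{-2}b^a$, the perturbation $K$ carries an explicit overall factor of $\alpha$, so the fixed-point equation has the structure $\xi(x,\alpha)=\alpha\,\EuScript{T}[\xi](x,\alpha)$ for a suitable operator $\EuScript{T}$ built from $(\EuScript{M}^T)^{-1}$ and $b$. The paper then restricts to zero-modes that are smooth in $\alpha$, expands $\xi(x,\alpha)=\sum_{n\geq0}\alpha^n\xi^a_n(x)$, and matches powers of $\alpha$: the zeroth order gives $\xi_0=0$, and recursively $\xi_n=\alpha\phi_{n-1}=0$ for all $n$, hence $\xi\equiv0$ within the radius of convergence and, by continuity, everywhere. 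This sidesteps any norm or spectral estimate entirely, at the price of excluding only \emph{regular} zero-modes (those Taylor-expandable in $\alpha$) --- a caveat the paper states explicitly immediately after the proof. If you supplement your fixed-point formulation with this regularity hypothesis and exploit the explicit factor of $\alpha$ in $K$, the recursion closes immediately and no uniform operator bound is needed.
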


\begin{proof}
Since $A^{aT}_\mu$ $\in$ $\Omega$ by assumption, the  operator

\begin{equation}
\EuScript{M}^{Tab}\equiv -\delta^{ab}\partial^2+gf^{abc}A^{Tc}_{\mu}\partial_{\mu}\,,
\label{proof1}
\end{equation}

\noindent is positive definite and, therefore, is invertible. As a consequence, eq.(\ref{6.4}) can be rewritten as

\begin{eqnarray}
\xi^a(x,\alpha) &=& -g\left[(\EuScript{M}^T)^{-1}\right]^{ad}f^{dbc}\partial_{\mu}(A^{Lc}_{\mu}\xi^{b}) \nonumber \\
&=& -g\alpha\left[(\EuScript{M}^T)^{-1}\right]^{ad}f^{dbc}\partial_{\mu}\left(\left(\frac{\partial_{\mu}}{\partial^2}b^{c}\right)\xi^{b}\right)\,,
\label{proof2}
\end{eqnarray}

\noindent where the gauge condition (\ref{6.1}) was used. We consider here zero modes  $\xi(x,\alpha)$ which are smooth functions of the gauge parameter $\alpha$. This requirement is motivated by the physical consideration that the quantity $\xi(x, \alpha)$ corresponds to the parameter of an infinitesimal gauge transformation. On physical grounds, we expect thus a regular behaviour of $\xi(x,\alpha)$ as function of $\alpha$, \textit{i.e.} infinitesimal modifications on the value of $\alpha$ should not produce a drastic singular behaviour of $\xi$, a feature also supported by the important fact that an acceptable zero mode has to be a square-integrable function, {\it i.e.} $\int d^4x \;\xi^a \xi^a <\infty$. Also, the $\alpha$-dependence should be such that in the limit $\alpha \rightarrow 0$ we recover the zero-modes of the Landau gauge.  Therefore, we require smoothness of $\xi(x,\alpha)$ with respect to $\alpha$.  Thus, for a certain radius of convergence $\EuScript{R}$, we can write the zero-mode $\xi(x,\alpha)$ as a Taylor expansion in $\alpha$,

\begin{equation}
\xi^a(x,\alpha)=\sum^{\infty}_{n=0}\alpha^n\xi^a_{n}(x)\,.
\label{proof3}
\end{equation}

\noindent For such radius of convergence, we can plug eq.(\ref{proof3}) into eq.(\ref{proof2}), which gives

\begin{equation}
\sum^{\infty}_{n=0}\alpha^n\xi^a_{n}(x)=-\sum^{\infty}_{n=0}g\alpha^{n+1}\left[(\EuScript{M}^T)^{-1}\right]^{ad}f^{dbc}\partial_{\mu}\left(\left(\frac{\partial_{\mu}}{\partial^2}b^{c}\right)\xi^{b}_{n}\right)\equiv \sum^{\infty}_{n=0}\alpha^{n+1}\phi^a_{n}\,.
\label{proof4}
\end{equation}

\noindent Since $\alpha$ is arbitrary, eq.(\ref{proof4}) should hold order by order in $\alpha$, which implies

\begin{equation}
\xi^a_0=0\,\,\,\, \Rightarrow \,\,\,\, \phi^a_0 = -\alpha\left[(\EuScript{M}^T)^{-1}\right]^{ad}f^{dbc}\partial_{\mu}\left(\left(\frac{\partial_{\mu}}{\partial^2}b^{c}\right)\xi^{b}_{0}\right)=0\,,
\label{proof5}
\end{equation}

\noindent at zeroth order. Therefore, 

\begin{equation}
\xi^a_1 = \alpha\phi^a_0=0\,,
\label{proof5a}
\end{equation}

\noindent and by recursion,

\begin{equation}
\xi^a_n = \alpha\phi^a_{n-1}=0,\,\,\forall n\,.
\label{proof6}
\end{equation}

\noindent Hence, the zero-mode $\xi(x,\alpha)$ must be identically zero within $\EuScript{R}$. Due to the requirement of smoothness, {\it i.e.}  of differentiability and  continuity of $\xi$, the zero-mode must vanish everywhere.        $\blacksquare$
\end{proof}

\noindent A comment is in order here. As emphasized before, Theorem~\ref{thma} holds for zero-modes which have a Taylor expansion in powers of $\alpha$. Although this does not seem to be a very strong requirement, since we expect smooth functions of $\alpha$, \textit{i.e.} small perturbations on $\alpha$ should not result on abrupt changes on $\xi$, one could think about the possibility to have zero-modes which might eventually display a pathological behavior, {\it i.e.} which could be singular for some values of the gauge parameter $\alpha$. For this reason, we shall refer to the zero-modes that admit a Taylor expansion as \textit{regular} zero-modes. Motivated by the previous theorem, we introduce the following ``Gribov region" ${\Omega}_{\mathrm{LCG}}$ in linear covariant gauges:  

\begin{definition}
The Gribov region ${\Omega}_{\mathrm{LCG}}$ in linear covariant gauges is given by

\begin{equation}
{\Omega}_{\mathrm{LCG}} = \left\{A^{a}_{\mu},\,\, \partial_{\mu}A^{a}_{\mu}-\alpha b^a =0,\,\,\EuScript{M}^{Tab} > 0 \right\}\,,    \; 
\label{defiLCG}
\end{equation}

\label{defi}
\end{definition}
\noindent where the operator $\EuScript{M}^{Tab}$ is given in eq.\eqref{proof1}

\noindent From the previous Theorem, it follows that the  region ${\Omega}_{\mathrm{LCG}}$ is free from  infinitesimal Gribov copies which are regular, {\it i.e.} smooth functions of the gauge parameter $\alpha$.   

\section{A natural candidate for the GZ action in LCG} \label{horizonLCG}

Definition~\ref{defi} provides a consistent candidate for the Gribov region in LCG. Following the strategy employed  by Gribov in \cite{Gribov:1977wm} and generalized by Zwanziger \cite{Zwanziger:1989mf}, we should restrict the path integral to the region ${\Omega}_{\mathrm{LCG}}$, \textit{i.e.}

\begin{equation}
\EuScript{Z} = \int_{{\Omega}_{\mathrm{LCG}}}  \left[\EuScript{D}\mathbf{\Phi}\right] \mathrm{e}^{-(S_{\mathrm{YM}}+S_\mathrm{gf})}\,,
\label{hor1}
\end{equation}

\noindent where $\mathbf{\Phi}$ represents all fields of the theory $(A,\bar{c},c,b)$. From equation \eqref{defiLCG}, one immediately sees that the region ${\Omega}_{\mathrm{LCG}}$ is defined by the positivity of the operator $\EuScript{M}^T$ which contains only the transverse component of the gauge field, eq.\eqref{proof1}. In other words, $\EuScript{M}^T$ is nothing but the Faddeev-Popov operator of the Landau gauge.  As a consequence, the whole procedure performed by Gribov \cite{Gribov:1977wm} and Zwanziger \cite{Zwanziger:1989mf} in the case of the Landau gauge, can be repeated here, although one has to keep in mind that the restriction of the domain of integration to the region ${\Omega}_{\mathrm{LCG}}$ affects only the transverse component of the gauge field, while the longitudinal sector remains unmodified. Therefore, following \cite{Gribov:1977wm,Zwanziger:1989mf}, for the restriction to the region ${\Omega}_{\mathrm{LCG}}$ we write 

\begin{equation}
\int_{{\Omega}_{\mathrm{LCG}}}  \left[\EuScript{D}\mathbf{\Phi}\right] \mathrm{e}^{-(S_{\mathrm{YM}}+S_\mathrm{gf})} = \int  \left[\EuScript{D}\mathbf{\Phi}\right] \mathrm{e}^{-\tilde{S}_{\mathrm{GZ}}} \;, \label{new3}   \; 
\end{equation}

\noindent where the action  $\tilde{S}_{\mathrm{GZ}}$ is given by 

\begin{equation}
\tilde{S}_{\mathrm{GZ}} = S_{\mathrm{YM}} + S_{\mathrm{gf}} + \gamma^4H(A^T) - dV\gamma^4(N^2-1)\,.
\label{hor3}
\end{equation}

The quantity $H(A^T)$ is the non-local horizon function which depends only on the transverse component of the gauge field $A^T$, namely 

\begin{equation}
H(A^T) =  g^2 \int d^dx~f^{adc}A^{Tc}_{\mu}[\left(\EuScript{M}^T\right)^{-1}]^{ab}f^{bde}A^{Te}_{\mu}\,.
\label{hor4}
\end{equation}

\noindent The parameter $\gamma$ in eq.\eqref{hor3} is the Gribov parameter. As in the case of the Landau gauge, it is  determined in a  self-consistent  way by the gap equation

\begin{equation}
\langle H(A^T) \rangle = dV(N^2-1)\,,
\label{gapequation}
\end{equation}

\noindent where the vacuum expectation value $\langle H(A^T) \rangle$ has to be evaluated  now with the measure defined in eq.\eqref{new3}.

\noindent The effective action (\ref{hor3}) implements the restriction to the region ${\Omega}_{\mathrm{LCG}}$. Here, an important feature has to be pointed out. Formally, the horizon function (\ref{hor4}) is the same as in the Landau gauge. However, in this case, although the longitudinal component of the gauge field does not enter the horizon function, it appears explicitly in the action $\tilde{S}_{\mathrm{GZ}}$. As we shall see, this property will give rise to several differences with respect to the Landau gauge. Another important point to be emphasized concerns  the vacuum term $dV\gamma^4(N^2-1)$ in expression \eqref{hor3}. This term is related to the spectrum of the operator $\EuScript{M}^{T}$ \cite{Vandersickel:2012tz} which does not depend on the gauge parameter $\alpha$. Therefore, at least at the level of the construction of the effective action $\tilde{S}_{\mathrm{GZ}}$, eq.(\ref{hor3}), the vacuum term is independent from $\alpha$. Of course, we should check out if quantum corrections might eventually introduce some $\alpha$-dependence in the vacuum term. This would require a lengthy analysis of the renormalizability properties of $\tilde{S}_{\mathrm{GZ}}$.

\noindent From now on, we will refer to the action (\ref{hor3}) as the Gribov-Zwanziger action in LCG. As it happens in the Landau gauge, this action is non-local, due to the non-locality of the horizon function. However, as we shall see in the next section, it is possible to localize this action by the introduction of a suitable set of auxiliary fields. Here,  differences with respect to the Landau gauge will show up, due to the unavoidable presence of the longitudinal component $A^{aL}_\mu$ of the gauge field. 

\section{Localization of the GZ action in LCG} \label{localization}

In order to have a suitable framework to apply the usual tools of quantum field theory, we have to express the action (\ref{hor3}) in local form. In the case of linear covariant gauges, the localization is not as direct as in Landau gauge (see Subsect.~\ref{localizationGZaction}). The difficulty relies on  the fact that the horizon function (\ref{hor4}) has two kinds of non-localities. The first one is the same as in Landau gauge, \textit{i.e.} the presence of the inverse of the Faddeev-Popov operator (or, in this case, an auxiliary operator)  $\EuScript{M}^T$. The other one follows from the fact that the decomposition of the  gauge field into transverse and longitudinal components is non-local, see eq.(\ref{a9.1}). Therefore, if we apply the same procedure used in the Landau gauge, the localization of the horizon function would give rise to a term of the type

\begin{equation}
\int d^dx~g\gamma^2f^{abc}A^{Tc}_{\mu}(\varphi+\bar{\varphi})^{ab}_{\mu}\,,
\label{hor5}
\end{equation}

\noindent which is still a non-local term, due to the presence of the transverse component  $A^{aT}_\mu$. However, it is possible to localize the action (\ref{hor3}) using an additional step. First, let us write the transverse component  of the gauge field as

\begin{equation}
A^{Ta}_{\mu} = A^{a}_{\mu} - h^{a}_{\mu}\,,
\label{hor6}
\end{equation}

\noindent where the  field $h^a_\mu$ will be identified with the longitudinal component, {\it i.e.} we shall impose that 

\begin{equation}
h^{a}_{\mu} = \frac{\partial_{\mu}\partial_{\nu}}{\partial^2}A^{a}_{\nu}\,.
\label{hor7}
\end{equation}

\noindent In other words, we introduce an extra field $h^a_\mu$ and state that the transverse part of the gauge field can be written in a local way using eq.(\ref{hor6}). Clearly, we must impose a constraint to ensure that, on-shell, eq.(\ref{hor6}) is equivalent to the usual decomposition eq.(\ref{a9.1}). Before introducing this constraint, we rewrite the horizon function in terms of $h^a_\mu$. As a matter of notation, we will denote $\EuScript{M}^{T}$ as $\EuScript{M}(A-h)$, when the transverse gauge field is expressed in terms of $h^a_\mu$. The horizon function \eqref{hor4} is now written as  

\begin{equation}
H(A,h) = g^2 \int d^dx~f^{adc}(A^{c}_{\mu}-h^{c}_{\mu})[\left(\EuScript{M}(A-h)\right)^{-1}]^{ab}f^{bde}(A^{e}_{\mu}-h^{e}_{\mu})\,.
\label{hor8}
\end{equation} 

\noindent The constraint\footnote{In \cite{Capri:2015pja}, a second constraint was introduced. At the classical level, this second constraint is not relevant and it turns out that it is not convenient at the quantum level. All the necessary properties are captured by (\ref{hor7}). Even the present localization procedure is not the most convenient one. The reason has to do with renormalizability, which is discussed in details in \cite{Capriren,Terin}. However, for our current purposes, this localization is enough.} given by eq.(\ref{hor7}) is imposed by the introduction of a Lagrange multiplier $\lambda^{a}_{\mu}$, \textit{i.e.}, by the introduction of the term

\begin{equation}
S_{\lambda}=\int d^dx~\lambda^{a}_{\mu}(\partial^2h^{a}_{\mu}-\partial_{\mu}\partial_{\nu}A^{a}_{\nu})\,.
\label{hor9}
\end{equation}

Therefore, the introduction of the extra field $h^a_\mu$ in eq.(\ref{hor8}) by means of the constraints (\ref{hor7}) implemented by the terms (\ref{hor9}) provides an action $S'_{\mathrm{GZ}}$ 

\begin{equation}
S'_{\mathrm{GZ}}=S_{\mathrm{YM}}+S_{\mathrm{gf}}+ S_{\lambda}+\gamma^4H(A,h) - dV\gamma^4(N^2-1)
\label{hor12}
\end{equation}

\noindent which is on-shell equivalent to the the non-local Gribov-Zwanziger action (\ref{hor3}). The introduction of the fields $h^{a}_{\mu}$and $\lambda^{a}_{\mu}$ has to be done through a BRST doublet \cite{Piguet:1995er} to avoid the appearance of such fields in the non-trivial part of the cohomology of the BRST operator $s$, a property which will be important for the renormalizability analysis. Therefore, we introduce the BRST doublets $(h^{a}_{\mu},\xi^{a}_{\mu})$ and $(\bar{\lambda}^{a}_{\mu},\lambda^{a}_{\mu})$, \textit{i.e.}

\begin{alignat}{2}
  sh^{a}_{\mu} &= \xi^{a}_{\mu}\,,\,\,\,\,\,\,\,\,\,\,\,\,\,\,\,\, &&s\bar{\lambda}^a_{\mu} = \lambda^a_{\mu}\,,\nonumber  \\
  s\xi^a_{\mu} &=0\,,\,\,\,\,\,\,\,\,\,\,\,\,\,\,\,\, &&s\lambda^a_{\mu} =0\,,  
\label{hor13}	
\end{alignat}

\noindent and define the following BRST exact term

\begin{eqnarray}
S_{\bar{\lambda}\lambda} &=& s\int d^dx~\bar{\lambda}^{a}_{\mu}(\partial^2h^{a}_{\mu}-\partial_{\mu}\partial_{\nu}A^{a}_{\nu})=\int d^dx~\lambda^a_{\mu}(\partial^2h^{a}_{\mu}-\partial_{\mu}\partial_{\nu}A^{a}_{\nu})\nonumber \\
&-&\int d^dx~\bar{\lambda}^{a}_{\mu}(\partial^2\xi^a_{\mu}+\partial_{\mu}\partial_{\nu}D^{ab}_{\nu}c^b)
\label{hor14}
\end{eqnarray}

\noindent The term (\ref{hor14}) implements the constraint (\ref{hor7}) in a manifest BRST invariant way. What remains now is the localization of the term (\ref{hor8}). Since this term has just the usual non-locality of the Gribov-Zwanziger action in the Landau gauge, given by the inverse of the operator $\EuScript{M}(A-h)$, we can localize it by the introduction of the same set of auxiliary fields employed in the case of the localization of the horizon function in the Landau gauge, see Subsect.~\ref{localizationGZaction}. Thus,  the term  (\ref{hor8}) is replaced by the the following local expression

\begin{eqnarray}
S_H &=& - s\int d^dx~\bar{\omega}^{ac}_{\mu}\EuScript{M}^{ab}(A-h)\varphi^{bc}_{\mu} + \gamma^2g\int d^dx~f^{abc}(A^{a}_{\mu}-h^{a}_{\mu})(\varphi+\bar{\varphi})^{bc}_{\mu} \nonumber \\
&=& \int d^dx~\left(\bar{\varphi}^{ac}_{\mu}\partial_{\nu}D^{ab}_{\nu}\varphi^{bc}_{\mu}-\bar{\omega}^{ac}_{\mu}\partial_{\nu}D^{ab}_{\nu}\omega^{bc}_{\mu}-gf^{adb}(\partial_{\nu}\bar{\omega}^{ac}_{\mu})(D^{de}_{\nu}c^{e})\varphi^{bc}_{\mu}\right) \nonumber \\
&+& \int d^dx~\left(gf^{adb}(\partial_{\nu}\bar{\varphi}^{ac}_{\mu})h^{d}_{\nu}\varphi^{bc}_{\mu}-gf^{adb}(\partial_{\nu}\bar{\omega}^{ac}_{\mu})h^{d}_{\nu}\omega^{bc}_{\mu}-gf^{adb}(\partial_{\nu}\bar{\omega}^{ac}_{\mu})\xi^{d}_{\nu}\varphi^{bc}_{\mu}\right).\nonumber \\
&+& \gamma^2g\int d^dx~f^{abc}(A^{a}_{\mu}-h^{a}_{\mu})(\varphi+\bar{\varphi})^{bc}_{\mu}\,, \nonumber \\
\label{hor15}
\end{eqnarray}

\noindent where

\begin{alignat}{2}
  s\varphi^{ab}_{\mu} &= \omega^{ab}_{\mu}\,,\,\,\,\,\,\,\,\,\,\,\,\,\,\,\,\, &&s\bar{\omega}^{ab}_{\mu} =\bar{\varphi}^{ab}_{\mu} \nonumber  \\
  s\omega^{ab}_{\mu} &=0\,,\,\,\,\,\,\,\,\,\,\,\,\,\,\,\,\, &&s\bar{\varphi}^{ab}_{\mu} = 0\,.  
\label{hor15A}	
\end{alignat}

\noindent Finally,  the action $S_{\mathrm{GZ}}$ given by

\begin{equation}
S_{\mathrm{GZ}}=S_{\mathrm{YM}}+S_{\mathrm{gf}}+S_{\bar{\lambda}\lambda}+S_H
\label{hor16}
\end{equation}

\noindent is local and, on-shell, equivalent to the non-local action (\ref{hor4}). Explicitly, $S_{\mathrm{GZ}}$ is written as

\begin{eqnarray}
S_{\mathrm{GZ}} &=& \frac{1}{4}\int d^dx~F^{a}_{\mu\nu}F^{a}_{\mu\nu} + \int d^dx~b^a\left(\partial_{\mu}A^{a}_{\mu}-\frac{\alpha}{2}b^a\right)+\int d^dx~\bar{c}^a\partial_{\mu}D^{ab}_{\mu}c^b  \nonumber \\
&+& \int d^dx~\lambda^a_{\mu}(\partial^2h^{a}_{\mu}-\partial_{\mu}\partial_{\nu}A^{a}_{\nu}) +\int d^dx~\bar{\lambda}^{a}_{\mu}(\partial^2\xi^a_{\mu}+\partial_{\mu}\partial_{\nu}D^{ab}_{\nu}c^b)\nonumber \\
&+& \int d^dx~\left(\bar{\varphi}^{ac}_{\mu}\partial_{\nu}D^{ab}_{\nu}\varphi^{bc}_{\mu}-\bar{\omega}^{ac}_{\mu}\partial_{\nu}D^{ab}_{\nu}\omega^{bc}_{\mu}-gf^{adb}(\partial_{\nu}\bar{\omega}^{ac}_{\mu})(D^{de}_{\nu}c^{e})\varphi^{bc}_{\mu}\right) \nonumber \\ 
&+&\int d^dx~\left(gf^{adb}(\partial_{\nu}\bar{\varphi}^{ac}_{\mu})h^{d}_{\nu}\varphi^{bc}_{\mu}-gf^{adb}(\partial_{\nu}\bar{\omega}^{ac}_{\mu})h^{d}_{\nu}\omega^{bc}_{\mu}-gf^{adb}(\partial_{\nu}\bar{\omega}^{ac}_{\mu})\xi^{d}_{\nu}\varphi^{bc}_{\mu}\right)\nonumber \\
&+& \gamma^2g\int d^dx~f^{abc}(A^{a}_{\mu}-h^{a}_{\mu})(\varphi+\bar{\varphi})^{bc}_{\mu}-dV\gamma^4(N^2-1)  \,, 
\label{hor17}
\end{eqnarray}

\noindent and we will refer to it as the local GZ action in LCG. We highlight that, in the limit $\gamma\rightarrow 0$, the term (\ref{hor15}) can be trivially integrated out to give a unity.  The remaining action is simply the gauge fixed Yang-Mills action with the addition of the constraint over $h^a_\mu$. This constraint is also easily integrated out, so that the resulting action is simply the usual Yang-Mills action in linear covariant gauges.

\noindent  Let us end this section by noticing that, in the local formulation, the gap equation \eqref{gapequation} takes the following expression 

\begin{equation} 
\frac{ \partial {\cal E}_v}{\partial \gamma^2} = 0  \;, \label{locgap}   
\end{equation}

\noindent where ${\cal E}_v$ denotes the vacuum energy, obtained from 

\begin{equation} 
\mathrm{e}^{-V{\cal E}_v} =  
\int  \left[\EuScript{D}\mu\right] \mathrm{e}^{-S_{\mathrm{GZ}}}\,. \label{ve}     
\end{equation} 

\noindent with $\mu$ being the complete set of fields, {\it i.e.}, the usual ones from the Faddeev-Popov quantization and the auxiliary fields introduced to implement the constraint and to localize the Gribov-Zwanziger action.

\section{BRST soft breaking again} \label{brstbreaking}

As it happens in the case of the GZ action in the Landau and maximal Abelian gauges, expression (\ref{hor17}) is not invariant under the BRST transformations. The only term of the action which is not invariant under BRST transformations is

\begin{equation}
g\gamma^2\int d^dx~f^{abc}(A^{a}_{\mu}-h^{a}_{\mu})(\varphi+\bar{\varphi})^{bc}_{\mu}\,,
\label{brstbreaking1}
\end{equation}

\noindent giving

\begin{equation}
sS_{\mathrm{GZ}}\equiv \Delta_{\gamma^2} = g\gamma^2\int d^dx~f^{abc}\left[-(D^{ad}_{\mu}c^d+\xi^a_{\mu})(\varphi+\bar{\varphi})^{bc}_{\mu}+(A^{a}_{\mu}-h^{a}_{\mu})\omega^{bc}_{\mu}\right]\,.
\label{brstbreaking2}
\end{equation}

\noindent  From eq.(\ref{brstbreaking2}), we see that the BRST breaking is soft, {\it i.e.} it is of dimension two in the quantum fields.  This is precisely the same situation of the Landau and maximal Abelian gauges. The restriction of the domain of integration in the path integral to the Gribov region generates a soft breaking of the BRST symmetry which turns out to be  proportional to the parameter $\gamma^2$. As discussed before, when we take the limit $\gamma\rightarrow 0$, we obtain the usual Faddeev-Popov gauge fixed Yang-Mills action which is BRST invariant. Thus, the breaking of the BRST symmetry is a direct consequence of the restriction of the path integral to the Gribov region $\Omega_{\mathrm{LCG}}$. We also emphasize that, although the gauge condition we are dealing with contains a gauge parameter $\alpha$, the BRST breaking term does not depend on such parameter, due to the fact that the horizon function  takes into account only the transverse component  of the gauge fields, as eq.(\ref{hor4}) shows. 

\section{Gap equation at one-loop order: Explicit analysis} \label{oneloopgap}

As discussed above, in the construction of the effective action which takes into account the presence of infinitesimal Gribov copies a non-perturbative  parameter $\gamma$, {\it i.e.} the Gribov parameter, shows up in the theory. However, this parameter is not free, being determined by the gap equation \eqref{locgap}. In the Landau gauge, the Gribov parameter encodes the restriction of the domain of integration to the Gribov region $\Omega$.  Also, physical quantities like the glueball masses were computed in the Landau gauge \cite{Dudal:2010cd,Dudal:2013wja}, exhibiting an explicit dependence from $\gamma$. It is therefore of primary importance to look at the Gribov parameter in LCG, where both  the longitudinal component of the gauge field and the gauge parameter $\alpha$ are present in the explicit loop computations. We should check out  the possible (in)dependence of $\gamma$ from $\alpha$. Intuitively, from our construction, we would expect that the Gribov parameter would be independent from $\alpha$, as a consequence of the fact that we are imposing a restriction of the domain of integration in the path integral which affects essentially only the transverse sector of the theory. Moreover, the independence from $\alpha$ of the Gribov parameter would also imply that physical quantities like the glueball masses would, as expected,  be $\alpha$-independent. To obtain some computational confirmation of the possible $\alpha$-independence of the Gribov parameter, we provide here the explicit computation of the gap equation at one-loop order. 

\noindent According to eqs.\eqref{locgap} and \eqref{ve},  the one-loop vacuum energy can be  computed by retaining the quadratic part of $S_{\mathrm{GZ}}$ and integrating over the auxiliary fields, being given by 

\begin{equation}
\mathrm{e}^{-V{\cal E}^{(1)}_{v}}=\int \left[\EuScript{D}A\right]\mathrm{e}^{-\int \frac{d^dp}{(2\pi)^d}~\frac{1}{2}A^{a}_{\mu}(p)\tilde{\Delta}^{ab}_{\mu\nu}A^{b}_{\nu}(-p)+dV\gamma^4(N^2-1)}\,,
\label{gapeq1}
\end{equation}

\noindent where 

\begin{equation}
\tilde{\Delta}^{ab}_{\mu\nu}=\delta^{ab}\left[\delta_{\mu\nu}\left(p^{2}+\frac{2Ng^2\gamma^4}{p^2}\right)+p_{\mu}p_{\nu}\left(\left(\frac{1-\alpha}{\alpha}\right)-\frac{2Ng^2\gamma^4}{p^4}\right)\right]\,. 
\label{gapeq2}
\end{equation}

\noindent Performing the functional integral over the gauge fields, we obtain 

\begin{equation}
V{\cal E}^{(1)}_{v} = \frac{1}{2}\mathrm{Tr~ln}\tilde{\Delta}^{ab}_{\mu\nu}-dV\gamma^4(N^2-1)\,.
\label{gapeq3}
\end{equation}

\noindent The remaining step is to compute the functional trace in eq.(\ref{gapeq3}). This is a standard computation, see \cite{Vandersickel:2012tz}. We have 

\begin{equation}
{\cal E}^{(1)}_{v}=\frac{(N^2-1)(d-1)}{2}\int\frac{d^dp}{(2\pi)^d}\mathrm{ln}\left(p^2 + \frac{2Ng^2\gamma^4}{p^2}\right)-d\gamma^4(N^2-1)\,,
\label{gapeq4}
\end{equation}

\noindent We see thus from eq.(\ref{gapeq4}) that the one-loop vacuum energy does not depend on $\alpha$ and the gap equation which determines the Gribov parameter is written as

\begin{equation}
\frac{\partial {\cal E}^{(1)}_{v}}{\partial \gamma^2} = 0\,\,\, \Rightarrow \,\,\, \frac{(d-1)Ng^2}{d}\int\frac{d^dp}{(2\pi)^d}\frac{1}{p^4+2Ng^2\gamma^4} = 1\,.
\label{gapeq5}
\end{equation}

\noindent This equation states that, at one-loop order, the Gribov parameter $\gamma$ is independent from $\alpha$ and, therefore, is the same as in the Landau gauge, which agrees with our expectation. Although being a useful check of our framework, it is important to state that this result has to be extended at higher orders, a non-trivial topic.

\section{Dynamical generation of condensates}\label{condensatessect}

The RGZ action \cite{Dudal:2008sp,Dudal:2011gd} takes into account the existence of dimension two condensates in an effective way already at the level of the starting action as discussed in Ch.~\ref{RGZch}. Here, we expect that, in analogy with the Landau gauge, dimension two condensates will show up in a similar way. In fact, the presence of these dimension two condensates can be established as in Ch.~\ref{RGZch} through a one-loop elementary computation, which shows that the following dimension two condensates 

\begin{equation}
\langle A^{Ta}_{\mu}A^{Ta}_{\mu}\rangle\;,  \qquad \langle \bar{\varphi}^{ab}_{\mu}\varphi^{ab}_{\mu}-\bar{\omega}^{ab}_{\mu}\omega^{ab}_{\mu}\rangle\,.
\label{cond0}
\end{equation} 

\noindent are non-vanishing already at one-loop order, being proportional to the Gribov parameter $\gamma$, in analogy with Landau gauge. In particular, it should be observed that the condensate $\langle A^{Ta}_{\mu}A^{Ta}_{\mu}\rangle$ contains only the transverse component of the gauge field. This is a direct consequence of the fact that the horizon function of the linear covariant gauges, eq.\eqref{hor4}, depends only on the transverse component $A_\mu^{Ta}$. In order to evaluate the condensates  $\langle A^T A^T \rangle$ and $\langle \bar{\varphi}\varphi-\bar{\omega}\omega\rangle$ at one-loop order, one needs the quadratic part of the GZ action in LCG, namely 

\begin{eqnarray}
S^{(2)}_{\mathrm{GZ}} &=& \int d^dx \left[\frac{1}{2}\left(-A^{a}_{\mu}\delta_{\mu\nu}\delta^{ab}\partial^2A^{b}_{\nu}+\left(1-\frac{1}{\alpha}\right)A^{a}_{\mu}\delta^{ab}\partial_{\mu}\partial_{\nu}A^{b}_{\nu}\right) + \bar{c}^{a}\delta^{ab}\partial^2c^{b} \right.\nonumber \\
&+&\left.\bar{\varphi}^{ac}_{\mu}\delta^{ab}\partial^2\varphi^{bc}_{\mu} - \bar{\omega}^{ac}_{\mu}\delta^{ab}\partial^2\omega^{bc}_{\mu}\right]+\gamma^2g\int d^dx~f^{abc}A^{a}_{\nu}\left(\delta_{\mu\nu}-\frac{\partial_{\mu}\partial_{\nu}}{\partial^2}\right)(\varphi + \bar{\varphi})^{bc}_{\mu}\nonumber \\
&-&dV\gamma^4(N^2-1)\,.
\label{cond1}
\end{eqnarray}

\noindent  Further, we  introduce the operators  $\int d^dx A^T A^T $ and $\int d^dx ( \bar{\varphi}\varphi-\bar{\omega}\omega) $ in the action by coupling them to two constant sources $J$ and $m$, and we define the vacuum functional ${\cal E}(m,J)$ defined by 
\begin{equation}
\mathrm{e}^{-V{\cal E}(m,J)}=\int \left[\EuScript{D}\mu\right]\mathrm{e}^{-S^{(2)}_{\mathrm{GZ}}+J\int d^dx\left(\bar{\varphi}^{ac}_{\mu}\varphi^{ac}_{\mu}-\bar{\omega}^{ac}_{\mu}\omega^{ac}_{\mu}\right)-m\int d^dx~A^{a}_{\mu}\left(\delta_{\mu\nu}-\frac{\partial_{\mu}\partial_{\nu}}{\partial^{2}}\right)A^{a}_{\nu}}\,.
\label{cond2}
\end{equation}

\noindent It is apparent to check that the  condensates $\langle A^T A^T \rangle$ and $\langle \bar{\varphi}\varphi-\bar{\omega}\omega\rangle$ are obtained by differentiating ${\cal E}(m,J)$ with respect to the sources $(J,m)$, which are set to zero at the end, {\it i.e.}

\begin{eqnarray}
\langle \bar{\varphi}^{ac}_{\mu}(x)\varphi^{ac}_{\mu}(x)-\bar{\omega}^{ac}_{\mu}(x)\omega^{ac}_{\mu}(x)\rangle &=& - \frac{\partial {\cal E}(m,J)}{\partial J}\Big|_{J=m=0} \nonumber  \\
\langle A^{Ta}_{\mu}(x)A^{Ta}_{\mu}(x)\rangle &=& \frac{\partial {\cal E}(m,J)}{\partial m}\Big|_{J=m=0}\,.
\label{cond3}
\end{eqnarray}

\noindent A direct computation shows that

\begin{equation}
\mathrm{e}^{-V{\cal E}(m,J)}=\mathrm{e}^{-\frac{1}{2}\mathrm{Tr\,ln}\Delta^{ab}_{\mu\nu}+dV\gamma^4(N^2-1)}\,,
\label{cond4}
\end{equation}

\noindent with

\begin{equation}
\Delta^{ab}_{\mu\nu}=\delta^{ab}\left[\delta_{\mu\nu}\left(p^2+\frac{2\gamma^4g^2N}{p^2+J}+2m\right)+p_{\mu}p_{\nu}\left(\left(\frac{1-\alpha}{\alpha}\right)-\frac{2\gamma^4g^2N}{p^2(p^2+J)}-\frac{2m}{p^2}\right)\right]\,. 
\label{cond5}
\end{equation}

\noindent Evaluating the trace, we obtain\footnote{In this computation we are concerned just with the contribution associated to the restriction of the path integral to ${\Omega}_{\mathrm{LCG}}$.}

\begin{equation}
{\cal E}(m,J)=\frac{(d-1)(N^2-1)}{2}\int \frac{d^dp}{(2\pi)^d}~\mathrm{ln}\left(p^2+\frac{2\gamma^4g^2N}{p^2+J}+2m\right)-d\gamma^4(N^2-1)\,. 
\label{cond6}
\end{equation}

\noindent Eq.(\ref{cond3}) and (\ref{cond6}) gives thus

\begin{equation}
\langle \bar{\varphi}^{ac}_{\mu}\varphi^{ac}_{\mu}-\bar{\omega}^{ac}_{\mu}\omega^{ac}_{\mu}\rangle = \gamma^4g^2N(N^2-1)(d-1)\int \frac{d^dp}{(2\pi)^d}\frac{1}{p^2}\frac{1}{(p^4+2g^2\gamma^4N)}
\label{cond7}
\end{equation}

\noindent and

\begin{equation}
\langle A^{Ta}_{\mu}A^{Ta}_{\mu}\rangle = -\gamma^4(N^2-1)(d-1)\int\frac{d^dk}{(2\pi)^d}\frac{1}{k^2}\frac{2g^2N}{(k^4+2g^2\gamma^4N)}\,,
\label{cond8}
\end{equation}

\noindent Eq.(\ref{cond7}) and eq.(\ref{cond8}) show that, already at one-loop order, both condensates  $\langle A^T A^T \rangle$ and $\langle \bar{\varphi}\varphi-\bar{\omega}\omega\rangle$ are non-vanishing and proportional to the Gribov parameter $\gamma$. Notice also that both integrals in eqs.(\ref{cond7}) and (\ref{cond8})  are perfectly convergent in the ultraviolet region by power counting for $d=3,4$. We see thus that, in perfect analogy with the case of the Landau gauge, dimension two condensates are automatically generated by the restriction of the domain of integration to the Gribov region, as encoded in the  parameter $\gamma$. As discussed in Ch.~\ref{RGZch}, the presence of these condensates can be taken into account directly in the starting action giving rise to the refinement of the GZ action. Also, higher order contributions can be systematically evaluated through the calculation of the effective potential for the corresponding dimension two operators by means of the LCO technique, see \cite{Dudal:2011gd} and Ap.~\ref{LCO}.

In the present case, for the refined version of the GZ action which takes into account the presence of the dimension two condensates, we have

 \begin{equation}
S_{\mathrm{RGZ}} = S_{\mathrm{GZ}} + \frac{{\hat m}^2}{2}\int d^dx~(A^{a}_{\mu}-h^{a}_{\mu})(A^{a}_{\mu}-h^{a}_{\mu})-{\hat M}^2\int d^dx~(\bar{\varphi}^{ab}_{\mu}\varphi^{ab}_{\mu}-\bar{\omega}^{ab}_{\mu}\omega^{ab}_{\mu})\,,
\label{propa1}
\end{equation}

\noindent where the parameters $({\hat m},{\hat M})$ can be determined order by order in a self-content way through the evaluation of the corresponding effective potential, as outlined in the case of the Landau gauge  \cite{Dudal:2011gd}.  Let us remark here that the calculation of the vacuum functional ${\cal E}(m,J)$, eq.\eqref{cond2}, done in the previous section shows that, at one-loop order, these parameters turn out to be independent from $\alpha$. The study of the effective potential for the dimension two operators  $\int d^4x A^T A^T $ and $\int d^4x ( \bar{\varphi}\varphi-\bar{\omega}\omega) $ is of utmost importance in order to extend this feature to higher orders. We are now ready to evaluate the tree level gluon propagator in the LCG. This will be the topic of the next section.  

\section{Gluon propagator}\label{propagator}

From the RGZ action, eq.(\ref{propa1}), one can immediately evaluate the  tree level gluon propagator in LCG, given by the following expression 

\begin{equation}
\langle A^{a}_{\mu}(k)A^{b}_{\nu}(-k)\rangle = \delta^{ab}\left[\frac{k^2+{\hat M}^2}{(k^2+{\hat m}^2)(k^2+{\hat M}^2)+2g^2\gamma^4N}\left(\delta_{\mu\nu}-\frac{k_{\mu}k_{\nu}}{k^2}\right)+\frac{\alpha}{k^2}\frac{k_{\mu}k_{\nu}}{k^2}\right]\,.
\label{propa2}
\end{equation}

\noindent A few comments are now in order.  First, the tree-level longitudinal sector is not affected by the restriction to the Gribov region, \textit{i.e.} the longitudinal component of the propagator is the same as the perturbative one. This is an expected result (at tree-level), since the Gribov region ${\Omega}_{\mathrm{LCG}}$ for linear covariant gauges does not impose any restriction to the longitudinal component $A^{La}_\mu$. Second, in the limit $\alpha\rightarrow 0$, the gluon propagator coincides precisely with the well-known result in Landau gauge \cite{Dudal:2008sp}. In particular, this  implies that all features of the gluon propagator in the RGZ framework derived in Landau gauge remains true for  the transverse component of the correlation function \eqref{propa2} at tree-level.

\subsection{Lattice results}

Unlike the Landau, Coulomb and maximal Abelian gauges, the construction of a minimizing functional to define LCG is highly non-trivial. This feature is the source of several complications for a lattice formulation of these gauges. The study of the LCG through lattice numerical simulations represents a big challenge and is gaining special attention in recent years from this community. The first attempt to implement these gauges on the lattice was undertaken by \cite{Giusti:1996kf,Giusti:1999im,Giusti:2000yc,Giusti:2001kr}. More recently, the authors of  \cite{Cucchieri:2008zx,Mendes:2008ux,Cucchieri:2010ku,Cucchieri:2009kk,Cucchieri:2011pp,Cucchieri:2011aa,Bicudo:2015rma} have been able to implement the linear covariant gauges on the lattice by means of a different procedure. 

\noindent With respect to the most recent data \cite{Cucchieri:2008zx,Mendes:2008ux,Cucchieri:2010ku,Cucchieri:2009kk,Cucchieri:2011pp,Cucchieri:2011aa,Bicudo:2015rma} obtained on bigger lattices, our results are in very good qualitative agreement: The tree level transverse gluon propagator does not depend on $\alpha$ and, therefore, behaves like the gluon propagator in the RGZ framework in the Landau gauge. On the other hand, the longitudinal form factor $\mathcal{D}_L(k^2)$ defined by 

\begin{equation} 
\mathcal{D}_L(k^2) =k^2  \frac{\delta^{ab}}{N^2-1} \frac{k_\mu k_\nu}{k^2} \langle A^a_\mu(k) A^b_\nu(-k) \rangle= \alpha   \;. \label{lff}
\end{equation} 
 
\noindent is equal to the gauge parameter $\alpha$, being not affected by the restriction to the Gribov region ${\Omega}_{\mathrm{LCG}}$. These results are in complete agreement with the numerical data of \cite{Cucchieri:2008zx,Mendes:2008ux,Cucchieri:2010ku,Cucchieri:2009kk,Cucchieri:2011pp,Cucchieri:2011aa,Bicudo:2015rma}. Although many properties of the Gribov region ${\Omega}_{\mathrm{LCG}}$ in LCG need to be further established, the qualitative agreement of our results on the gluon propagator with the recent numerical ones are certainly reassuring, providing a good support  for the introduction of the region ${\Omega}_{\mathrm{LCG}}$. Also we point out that, recently, results for the gluon and ghost propagators in linear covariant gauges have been  obtained through the use of the Dyson-Schwinger equations by \cite{Aguilar:2015nqa,Huber:2015ria}.

We are obliged to emphasize, however, although a lot of attention is being devoted to LCG in the recent years from different approaches, the results are not well established as in the Landau gauge. Therefore, the references presented up to now correspond to the status of the art of the gluon propagator from lattice and Schwinger-Dyson equations. The tree-level gluon propagator (\ref{propa2}) is not protected from quantum corrections and up to now, there is no way to establish if $\alpha$-dependent corrections will enter in both transverse and longitudinal sectors. So, it is completely reasonable that loop corrections will destroy the transverse independence from $\alpha$ and also the ``perturbative" like longitudinal part. To understand it, we should go to higher orders in the two-point function computation, a non-trivial task (see the case of Landau gauge, for instance \cite{Gracey:2006dr}). This is far beyond the scope of this thesis. Also, the current lattice and functional results are concentrated mostly for small values of $\alpha$. The role of the $\alpha$-dependence in the gluon two-point function is a very interesting topic that deserves attention from different approaches. 

\noindent An important point to be emphasized is: In functional approaches, the longitudinal part of the gluon propagator is protected to loop corrections, being described by (\ref{lff}) to all orders. This is an immediate consequence of the BRST invariance, an assumption used in these methods to hold at the non-perturbative level. On the other hand, in the RGZ scenario, the BRST symmetry is softly broken and we have no argument to protect the longitudinal sector from loop corrections. Therefore, the result (\ref{lff}) might be just a consequence of the tree-level approximation, rather than a general result.

\section{A first look at the RGZ propagators in the LCG}

An obvious requirement we demand from the RGZ action in the LCG is that as soon as we take the limit $\alpha=0$, all results from Landau gauge should be recovered. A particular and very important check is the full set of propagators of the RGZ action.  The list of all propagators of bosonic fields in the RGZ action in LCG is,

\begin{eqnarray}
\langle A^{a}_{\mu}(-k)A^{b}_{\nu}(k) \rangle &=& \delta^{ab}\left[\frac{k^2+\hat{M}^2}{(k^2+\hat{M}^2)(k^2+\hat{m}^2)+2\gamma^4g^2N}\mathcal{P}_{\mu\nu}+\frac{\alpha}{k^2}\mathcal{L}_{\mu\nu}\right] \\
\langle A^{a}_{\mu}(-k)b^b(k) \rangle &=& -i\delta^{ab}\frac{k_{\mu}}{k^2} \\
\langle A^{a}_{\mu}(-k)\varphi^{bc}_{\nu}(k)\rangle &=& f^{abc}\frac{\gamma^2g}{(k^2+\hat{M}^2)(k^2+\hat{m}^2)+2\gamma^4g^2N}\mathcal{P}_{\mu\nu}\\
\langle A^{a}_{\mu}(-k)\bar{\varphi}^{bc}_{\nu}(k)\rangle &=& f^{abc}\frac{\gamma^2g}{(k^2+\hat{M}^2)(k^2+\hat{m}^2)+2\gamma^4g^2N}\mathcal{P}_{\mu\nu}\\
\langle A^{a}_{\mu}(-k)\lambda^{b}_{\nu}(k) \rangle &=& -\delta^{ab}\frac{\hat{m}^2(k^2+\hat{M}^2)+2\gamma^4g^2N}{k^2\left[(k^2+\hat{M}^2)(k^2+\hat{m}^2)+2\gamma^4g^2N\right]}\mathcal{P}_{\mu\nu} \\
\langle A^{a}_{\mu}(-k)h^{b}_{\nu}(k) \rangle &=& \delta^{ab}\frac{\alpha}{k^2}\mathcal{L}_{\mu\nu} \\
\langle b^a(-k)h^{b}_{\mu}(k)\rangle &=& \delta^{ab}i\frac{k_{\mu}}{k^2} \\
\langle \varphi^{ab}_{\mu}(-k)\varphi^{cd}_{\nu}(k) \rangle &=& f^{abm}f^{mcd}\frac{\gamma^4g^2}{(k^2+\hat{M}^2)\left[(k^2+\hat{M}^2)(k^2+\hat{m}^2)+2\gamma^4g^2N\right]}\mathcal{P}_{\mu\nu}\nonumber\\
\label{prop1}
\end{eqnarray}

\begin{eqnarray}
\langle \varphi^{ab}_{\mu}(-k)\bar{\varphi}^{cd}_{\nu}(k) \rangle &=& f^{abm}f^{mcd}\frac{\gamma^4g^2}{(k^2+\hat{M}^2)\left[(k^2+\hat{M}^2)(k^2+\hat{m}^2)+2\gamma^4g^2N\right]}\mathcal{P}_{\mu\nu}\nonumber\\
&-& \delta^{ac}\delta^{bd}\frac{1}{k^2+\hat{M}^2}\delta_{\mu\nu}\\
\langle \varphi^{ab}_{\mu}(-k)\lambda^{c}_{\nu}(k) \rangle &=& f^{abc}\left[-\frac{1}{k^2(k^2+\hat{M}^2)}\frac{\hat{m}^2\gamma^2g(k^2+\hat{M}^2)+2\gamma^6g^3N}{(k^2+\hat{M}^2)(k^2+\hat{m}^2)+2\gamma^4g^2N}\mathcal{P}_{\mu\nu}\right.\nonumber\\
&+&\left.\frac{2\gamma^2g}{k^2(k^2+\hat{M}^2)}\delta_{\mu\nu}\right] \\
\langle \bar{\varphi}^{ab}_{\mu}(-k)\bar{\varphi}^{cd}_{\nu}(k) \rangle &=& f^{abm}f^{mcd}\frac{\gamma^4g^2}{(k^2+\hat{M}^2)\left[(k^2+\hat{M}^2)(k^2+\hat{m}^2)+2\gamma^4g^2N\right]}\mathcal{P}_{\mu\nu}\nonumber\\
\label{prop1b}
\end{eqnarray}

\begin{eqnarray}
\langle \bar{\varphi}^{ab}_{\mu}(-k)\lambda^{c}_{\nu}(k) \rangle &=& f^{abc}\left[-\frac{1}{k^2(k^2+\hat{M}^2)}\frac{\hat{m}^2\gamma^2g(k^2+\hat{M}^2)+2\gamma^6g^3N}{(k^2+\hat{M}^2)(k^2+\hat{m}^2)+2\gamma^4g^2N}\mathcal{P}_{\mu\nu}\right.\nonumber\\
&+&\left.\frac{2\gamma^2g}{k^2(k^2+\hat{M}^2)}\delta_{\mu\nu}\right] \\
\langle \lambda^{a}_{\mu}(-k)\lambda^{b}_{\nu}(k) \rangle &=&  \delta^{ab}\left\{\left[\frac{\hat{m}^2}{k^4}\frac{\hat{m}^2(k^2+\hat{M}^2)+2\gamma^4g^2N}{(k^2+\hat{M}^2)(k^2+\hat{m}^2)+2\gamma^4g^2N}+\frac{2\gamma^2gN}{k^4(k^2+\hat{M}^2)}\right.\right.\nonumber\\
&\times& \left.\left.\frac{\hat{m}^2\gamma^2g(k^2+\hat{M}^2)+2\gamma^6g^3N}{(k^2+\hat{M}^2)(k^2+\hat{m}^2)+2\gamma^4g^2N}\right]\mathcal{P}_{\mu\nu}-\frac{2\gamma^4g^2N}{k^4(k^2+\hat{M}^2)}\delta_{\mu\nu}\right\}\nonumber\\
\\
\langle h^{a}_{\mu}(-k)h^{b}_{\nu}(k)\rangle &=& \delta^{ab}\frac{\alpha}{k^2}\mathcal{L}_{\mu\nu}\\
\langle b^a(-k)b^b(k) \rangle &=& 0 \label{bbproprgzlcg1}\\
\langle b^a(-k)\varphi^{bc}_{\mu}(k) \rangle &=& 0 \\
\langle b^a(-k)\bar{\varphi}^{bc}_{\mu}(k) \rangle &=& 0 \\
\langle b^a(-k)\lambda^{b}_{\mu}(k)\rangle &=& 0 \\
\langle \varphi^{ab}_{\mu}(-k)h^{c}_{\nu}(k) \rangle &=& 0\\
\langle \bar{\varphi}^{ab}_{\mu}(-k)h^{c}_{\nu}(k) \rangle &=& 0 \\
\langle \lambda^{a}_{\mu}(-k)h^{b}_{\nu}(k)\rangle &=& 0
\label{prop3}
\end{eqnarray}

\noindent Clearly, in this situation we are dealing with more fields, introduced in the localization procedure, and the list of propagators is larger than in the Landau gauge. However, an important difference with Landau gauge shows up. In the standard RGZ for the Landau gauge, the propagator for the Nakanishi-Lautrup field is

\begin{equation}
\langle b^a(-k)b^b(k)\rangle^{\mathrm{RGZ}}_{\mathrm{Landau}} = \delta^{ab}\frac{2\gamma^4g^2N}{k^2(k^2+M^2)}\,,
\label{bb1}
\end{equation}

\noindent which is clearly different from (\ref{bbproprgzlcg1}) at $\alpha=0$. In fact, this incompatibility is crucial as a motivation for the reformulation we will present in the next chapter. As is well-known, in a theory which enjoys BRST symmetry, the Nakanishi-Lautrup two-point function is zero. The reason is that it is nothing but the expectation value of a BRST exact quantity, namely

\begin{equation}
\langle b^a(-k)b^b(k)\rangle = \langle s\left(\bar{c}^a(-k)b^b(k)\right)\rangle = 0\,.
\label{bb2}
\end{equation}

\noindent Eq.(\ref{bb1}) is precisely a signal of BRST breaking (a good check is that we recover a vanishing propagator when $\gamma=0$ in (\ref{bb1})). The origin of these different results stands for the fact that in the Landau gauge the complete field $A_{\mu}$ is transverse on-shell, while in LCG, this is not true. Therefore, the discrimination of the field which enters in the horizon function (the full field or rather its transverse components) is essential to the propagators computations. An unifying framework which solves this ``apparent" inconsistency and reformulates the RGZ action which enjoys a novel BRST symmetry is presented in the next chapter. 

\chapter{A non-perturbative BRST symmetry} \label{nonpBRSTRGZ}

In the last chapters we have presented a concrete way to eliminate a certain class of Gribov copies from the path integral domain. Such elimination can be viewed as an improvement of the Faddeev-Popov gauge fixing procedure, which is not completely efficient in the IR regime. An important consequence of the elimination of gauge copies, as demonstrated for Landau and linear covariant gauges, is the soft breaking of the BRST symmetry. Essentially, the breaking term comes with a mass parameter, the Gribov parameter, which is the agent responsible for making the break explicit in the IR, while it vanishes in the UV. With this, we note the transition of perturbative (UV) to non-perturbative (IR) sectors in pure Yang-Mills theories is realized by a soft breaking of BRST in the (R)GZ context. 

Since the BRST symmetry is an immediate outcome of the Faddeev-Popov procedure, it is rather simple to accept that a modification of such method will produce some effect on this symmetry. However, a natural question is whether a modification of the BRST transformations is possible in such a way that we incorporate the new effects of the elimination of copies and obtain a true symmetry of the (R)GZ action. In this chapter, we discuss some of these points and we present a set of BRST transformations which corresponds to an exact symmetry of the (R)GZ action in the Landau gauge and the nilpotency of the BRST operator is preserved. To introduce it though, we need to reformulate the (R)GZ action using appropriate variables. Before addressing this issue, we will review some important progress obtained in the understanding of the BRST breaking in the (R)GZ context, not only for completeness, but mainly to clarify where complications show up in the standard formulation of the (R)GZ action. This \textit{non-perturbative} BRST symmetry was introduced in \cite{Capri:2015ixa} and was named \textit{non-perturbative} due to the fact that the proposed modifications are non-vanishing at the IR regime, leaving the perturbative regime with the standard BRST symmetry untouched. 

Let us start the logical trip to the non-perturbative BRST symmetry with a discussion on the origin of the BRST breaking and then present the paths pursued to the construction of a new BRST symmetry. We underline that it is not our intention to provide here a review of all attempts to the understanding of the BRST breaking. We refer the reader to \cite{Maggiore:1993wq,Schaden:2014bea,Schaden:2015uua,Serreau:2012cg,Serreau:2013ila,Serreau:2015yna} for a partial list of different perspective on this issue. 

\section{Why the GZ action breaks BRST?}

In Subsect.~\ref{brstbreakingGZ} we presented a qualitative argument why BRST should be broken in the presence of a boundary (horizon) in the configuration space. Inhere, we present this argument more carefully as in \cite{Dudal:2008sp}. Also, for simplicity, we construct the argument in the Landau gauge. Let us prove a very simple statement,

\begin{stmt}
 Given a gauge field $A^{a}_{\mu}\,\in\,\Omega$, then the infinitesimal gauge transformed field $\tilde{A}^{a}_{\mu}=A^{a}_{\mu}-D^{ab}_{\mu}\xi^b$, with $\xi$ the infinitesimal parameter of the gauge transformation, does not belong to $\Omega$.
\label{stmt1}
\end{stmt}

\begin{proof}
The proof is straightforward: Let us assume that $\tilde{A}^{a}_{\mu}\,\in\,\Omega$. Then, since it is transverse by assumption (since we are assuming the Landau gauge condition),

\begin{equation}
\partial_{\mu}\tilde{A}^{a}_{\mu}=\partial_{\mu}A^{a}_{\mu}-\partial_{\mu}D^{ab}_{\mu}(A)\xi^{b}=0\,\,\Rightarrow\,\,-\partial_{\mu}D^{ab}_{\mu}(A)\xi^{b}=0\,,
\label{npbrst1}
\end{equation} 

\noindent which obviously contradicts the assumption that $A^{a}_{\mu}\,\in\,\Omega$ and then $-\partial_{\mu}D^{ab}_{\mu}(A)>0$.    $\blacksquare$
\end{proof}

To supplement Statement~\ref{stmt1}, we also refer to a theorem enunciated and proved by Gribov \cite{Gribov:1977wm,Sobreiro:2005ec},

\begin{stmt}
For a configuration $A^a_{\mu}$ close\footnote{By close to the horizon we mean $A^{a}_{\mu}=C^{a}_{\mu}+a^{a}_{\mu}$, where $C^{a}_{\mu}\,\in\,\partial\Omega$ and $a^{a}_{\mu}$ is an infinitesimal perturbation.} to the horizon $\partial\Omega$ there is a copy $\tilde{A}^{a}_{\mu}$ also close to the horizon, but located on the other side\footnote{We remember that the Gribov region $\Omega$ is bounded in every direction and, therefore, speaking about ``the other side" of the Gribov horizon. } of $A^{a}_{\mu}$.
\label{stmt2}
\end{stmt} 

From these results, we can use the fact that an infinitesimal gauge transformation for the gauge field is formally identical to the BRST transformation, just replacing the infinitesimal gauge parameter by the ghost field. Therefore, following Statements~1 and 2, we automatically conclude that a BRST transformation of a field configuration which lies inside $\Omega$ results in a configuration outside $\Omega$. Since the construction of the GZ action is based on the restriction to the region $\Omega$, the breaking of BRST symmetry is unavoidable. 

\section{The failure of a local modification of the standard BRST}

In the last section, we showed why the BRST symmetry is broken due to the introduction of a boundary $\partial\Omega$ in the configuration space. As discussed in the beginning of this chapter, it is reasonable to conceive a modification of the standard BRST transformations in such a way that we have a symmetry for the GZ action. What we will show in this section is that a \textit{local} modification in the standard formulation of the (R)GZ action is not viable. 

To show it, we need first to characterize the form of the modification:

\begin{itemize}

\item It should be local in the fields;

\item Whenever we take the limit $\gamma\rightarrow 0$, we should recover the standard BRST symmetry.

\end{itemize}

\noindent The first requirement is a working hypothesis, while the second is essential for consistency with the UV behavior. Also, the second requirement implies the modification should contain some $\gamma$ dependence since it must feel the limit $\gamma\rightarrow 0$. However, we have to keep in mind $\gamma$ has mass dimension which constrains the combination of fields we can insert as possible modifications to the BRST transformations. Hence, the possible modification of the BRST operator is

\begin{equation}
s'=s+\tilde{s}_{\gamma}\,,
\label{npbrst2}
\end{equation}

\noindent where $\tilde{s}_{\gamma}$ corresponds to $\gamma$-dependent terms. Considering the dimensionality of the set of fields\footnote{See Tables~\ref{table2} and \ref{tablea1}} $(A^{a}_{\mu},c^a,\bar{c}^a,b^a,\varphi^{ab}_{\mu},\bar{\varphi}^{ab}_{\mu},\omega^{ab}_{\mu},\bar{\omega}^{ab}_{\mu})$, the fact that the BRST operator has ghost number one and does not alter the field dimensions, Lorentz covariance and color group structure, there is no viable local $\tilde{s}_{\gamma}$ modification containing $\gamma$. This implies there is no possible local modification of $s$ in such way that we recover $s$ in the UV limit. This implies we could try to evade the first assumption of \textit{locality} on the modification and provide a non-local modification of the BRST transformations. In the next section we explore this possibility.

\section{Non-local modification of the standard BRST}

In \cite{Sorella:2009vt} and also in \cite{Kondo:2009qz} a different possibility to modify the standard BRST transformations was considered. Since a local modification is forbidden by the argument of previous section, we should not exclude a non-local modification. Also, we have at our disposal the non-local operator\footnote{For concreteness, we will restrict ourselves to the Landau gauge.} $-(\partial D)^{-1}$ which is well-defined due to the restriction of the path integral to the Gribov region $\Omega$. In this thesis we will focus on the presentation of the results presented in \cite{Sorella:2009vt} due to reasons that we will clarify later on. 

Let us consider the Gribov-Zwanziger action in the Landau gauge as presented in (\ref{2.54}). For this action we will demand the auxiliary fields transform as BRST doublets, namely, 

\begin{eqnarray}
s\varphi^{ab}_{\mu}&=&\omega^{ab}_{\mu}\,,\,\,\,\,\,\,\,\,\,\,\,\, s\omega^{ab}_{\mu}=0\nonumber\\
s\bar{\omega}^{ab}_{\mu}&=&\bar{\varphi}^{ab}_{\mu}\,,\,\,\,\,\,\,\,\,\,\,\,\, s\bar{\varphi}^{ab}_{\mu}=0\,.
\label{npbrst3}
\end{eqnarray}

\noindent We note this is a bit different from what is done in (\ref{2.58}) and (\ref{2.59}) where the BRST transformations are defined with respect to a shifted $\omega$ field. However, we will not perform this shift here and still demand (\ref{npbrst3}). Acting with the BRST operator on the Gribov-Zwanziger action (\ref{2.54}), we obtain

\begin{eqnarray}
sS_{\mathrm{GZ}}&=&\int d^dx~gf^{adb}(D^{de}_{\nu}c^e)\left[(\partial_{\nu}\bar{\varphi}^{ac}_{\mu})\varphi^{bc}_{\mu}-(\partial_{\nu}\bar{\omega}^{ac}_{\mu})\omega^{bc}_{\mu}\right]-\gamma^{2}g\int d^dx~\left(f^{abc}(-D^{ad}_{\mu}c^d)(\varphi\right.\nonumber\\
&+&\bar{\varphi})^{bc}_{\mu}+\left.f^{abc}A^{a}_{\mu}\omega^{bc}_{\mu}\right)\,.
\label{npbrst4}
\end{eqnarray} 

\noindent Expression (\ref{npbrst4}) contains terms with and without the Gribov parameter $\gamma$. Let us note, however, the term which does not contain the Gribov parameter is a BRST exact term,

\begin{equation}
\int d^dx~gf^{adb}(D^{de}_{\nu}c^e)\left[(\partial_{\nu}\bar{\varphi}^{ac}_{\mu})\varphi^{bc}_{\mu}-(\partial_{\nu}\bar{\omega}^{ac}_{\mu})\omega^{bc}_{\mu}\right]=-s\int d^dx~gf^{adb}(D^{de}_{\nu}c^e)(\partial_{\nu}\bar{\omega}^{ac}_{\mu})\varphi^{bc}_{\mu}\,,
\label{npbrst5}
\end{equation}

\noindent which implies it is not a genuine breaking. In particular, if we set $\gamma=0$, it is clear that the auxiliary field sector in expression (\ref{2.54}) can be integrated giving an unity, and preserving Yang-Mills gauge fixed action dynamics. The physical content is preserved when $\gamma=0$ due to the fact that the auxiliary fields are introduced as BRST doublets. On the other hand, the breaking term with the Gribov parameter in eq.(\ref{npbrst4}) is not a BRST exact form and thus is a truly breaking term. After these considerations, we can make use of the well-defined inverse of the Faddeev-Popov operator and write the following expressions,

\begin{equation}
\frac{\delta S_{\mathrm{GZ}}}{\delta \bar{c}^a}=-\EuScript{M}^{ab}c^b\,\,\,\,\,\,\Rightarrow\,\,\,\,\,\,c^a(x)=-\int d^dy\left[\EuScript{M}^{-1}\right]^{ab}(x,y)\frac{\delta S_{\mathrm{GZ}}}{\delta \bar{c}^b(y)}
\label{npbrst6}
\end{equation}

\noindent and

\begin{equation}
\frac{\delta S_{\mathrm{GZ}}}{\delta \bar{\omega}^{ac}_{\mu}}=\EuScript{M}^{ab}\omega^{bc}_{\mu}\,\,\,\,\,\,\Rightarrow\,\,\,\,\,\,\omega^{ac}_{\mu}(x)=\int d^dy\left[\EuScript{M}^{-1}\right]^{ab}(x,y)\frac{\delta S_{\mathrm{GZ}}}{\delta \bar{\omega}^{bc}_{\mu}(y)}\,.
\label{npbrst7}
\end{equation}

\noindent We note the fact the Faddeev-Popov operator is well-defined is crucial to write eqs.(\ref{npbrst6}) and (\ref{npbrst7}). After some algebraic gymnastics, we obtain

\begin{equation}
sS_{\mathrm{GZ}}=\int d^dx\left(-(D^{ab}_{\mu}\Lambda^b_{\mu})_x\left[\EuScript{M}^{-1}\right]^{ad}_{xy}\right)\frac{\delta S_{\mathrm{GZ}}}{\delta \bar{c}^{d}(y)}-\gamma^2 g\int d^dx~f^{abc}A^{a}_{\mu}\left[\EuScript{M}^{-1}\right]^{bd}_{xy}\frac{\delta S_{\mathrm{GZ}}}{\delta \bar{\omega}^{dc}_{\mu}(y)}\,,
\label{npbrst8}
\end{equation}

\noindent where we used the shorthand notation

\begin{equation}
\left[\EuScript{M}^{-1}\right]^{ad}_{xy}\left(\ldots\right)=\int d^dy \left[\EuScript{M}^{-1}\right]^{ad}(x,y)\left(\ldots\right)\,.
\label{npbrst9}
\end{equation}

\noindent and 

\begin{equation}
\Lambda^{b}_{\mu}=-gf^{abd}((\partial_{\mu}\bar{\varphi}^{ac}_{\nu})\varphi^{dc}_{\nu}-(\partial_{\mu}\bar{\omega}^{ac}_{\nu})\omega^{dc}_{\nu})-\gamma^2gf^{bac}(\varphi+\bar{\varphi})^{ac}_{\mu}\,.
\label{npbrst10}
\end{equation}

\noindent From eq.(\ref{npbrst8}), we see thus the breaking term can be recast as a contact term, namely, a term related to equations of motion. With this expression at our disposal, is easy to define a set of modified BRST transformations by

\begin{align}
\tilde{s}_{\gamma}A^{a}_{\mu}&=-D^{ab}_{\mu}c^b\,,     &&\tilde{s}_{\gamma}c^a=\frac{g}{2}f^{abc}c^bc^c\,, \nonumber\\
\tilde{s}_{\gamma}\bar{c}^a&=b^{a}+\int d^dy\left((D^{cb}_{\mu}\Lambda^b_{\mu})_y\left[\EuScript{M}^{-1}\right]^{ca}_{yx}\right)\,,     &&\tilde{s}_{\gamma}b^{a}=0\,, \nonumber\\
\tilde{s}_{\gamma}\varphi^{ab}_{\mu}&=\omega^{ab}_{\mu}\,,   &&\tilde{s}_{\gamma}\omega^{ab}_{\mu}=0\,, \nonumber\\
\tilde{s}_{\gamma}\bar{\omega}^{ab}_{\mu}&=\bar{\varphi}^{ab}_{\mu}+\gamma^2 g\int d^dy~f^{deb}A^{d}_{\mu}(y)\left[\EuScript{M}^{-1}\right]^{ea}_{yx}\,,         &&\tilde{s}_{\gamma}\bar{\varphi}^{ab}_{\mu}=0\,.
\label{npbrst11}
\end{align}

\noindent This set of transformations corresponds to a symmetry of the GZ action and the operator $\tilde{s}_{\gamma}$ is a deformation of the the standard BRST $s$ operator where $\gamma$-dependent terms are introduced. We emphasize a crucial fact, however: $\tilde{s}_{\gamma}$ defines a set of non-local transformations which corresponds to a symmetry of the GZ action and reduces to the standard BRST operator $s$ as long as $\gamma=0$, but lacks nilpotency. In summary, $\tilde{s}_{\gamma}$ is a non-local and non-nilpotent BRST symmetry for the GZ action. Although nilpotency is a very important feature of standard BRST, having a non-nilpotent symmetry is still useful for characterization of intrinsically non-perturbative Ward identities. To make this statement more precise, we note that in a BRST soft breaking theory we have, in general,

\begin{equation}
\langle s\left(\ldots\right)\rangle\neq 0\,.
\label{npbrst12}
\end{equation}

\noindent With $\tilde{s}_{\gamma}$, we can study the condition

\begin{equation}
\langle \tilde{s}_{\gamma}\Theta (x)\rangle = \langle 0 |\tilde{s}_{\gamma}\Theta (x)|0 \rangle=0\,.
\label{npbrst13}
\end{equation}

\noindent Let us choose $\Theta (x)=gf^{abc}A^{a}_{\mu}(x)\bar{\omega}^{bc}_{\mu}(x)$. Imposing condition (\ref{npbrst13}), we obtain

\begin{eqnarray}
\langle \tilde{s}_{\gamma}\left(gf^{abc}A^{a}_{\mu}(x)\bar{\omega}^{bc}_{\mu}(x)\right)\rangle &=& -\langle gf^{abc}\left(D^{ad}_{\mu}c^d\right)\bar{\omega}^{bc}_{\mu}\rangle+\langle gf^{abc}A^{a}_{\mu}\bar{\varphi}^{bc}_{\mu}\rangle \nonumber\\
&+& \gamma^2 g^2f^{abc}\int d^dy~f^{dec}\langle A^{a}_{\mu}(x)A^{d}_{\mu}(y)\left[\EuScript{M}^{-1}\right]^{eb}_{yx}\rangle\,.
\label{npbrst14}
\end{eqnarray}

\noindent Before going ahead, we highlight two discrete symmetries of the GZ action,

\begin{eqnarray}
\bar{\varphi}^{bc}_{\mu}& \rightarrow &\varphi^{bc}_{\mu}\,,\,\,\,\,\,\varphi^{bc}_{\mu}\rightarrow \bar{\varphi}^{bc}_{\mu}\nonumber\\
b^{a}& \rightarrow & b^{a} - gf^{abc}\bar{\varphi}^{bd}_{\mu}\varphi^{cd}_{\mu}\,,
\label{npbrst15}
\end{eqnarray}

\noindent and

\begin{equation}
\bar{\omega}^{bc}_{\mu}\rightarrow -\bar{\omega}^{bc}_{\mu}\,,\,\,\,\,\,\,\,\,\,\,\omega^{bc}_{\mu}\rightarrow -\omega^{bc}_{\mu}\,.
\label{npbrst16}
\end{equation}

\noindent Using symmetry (\ref{npbrst16}), we conclude the first term 

\begin{equation}
\langle gf^{abc}\left(D^{ad}_{\mu}c^d\right)\bar{\omega}^{bc}_{\mu}\rangle\,,
\label{npbrst17}
\end{equation}

\noindent automatically vanishes. Now, employing symmetry (\ref{npbrst15}), we recast eq.(\ref{npbrst14}) as 

\begin{eqnarray}
\langle \tilde{s}_{\gamma}\left(gf^{abc}A^{a}_{\mu}(x)\bar{\omega}^{bc}_{\mu}(x)\right)\rangle &=& \frac{1}{2}\langle gf^{abc}A^{a}_{\mu}\bar{\varphi}^{bc}_{\mu}\rangle + \frac{1}{2}\langle gf^{abc}A^{a}_{\mu}\varphi^{bc}_{\mu}\rangle \nonumber\\
&+& \gamma^2 \underbrace{g^2f^{abc}\int d^dy~f^{dec}\langle A^{a}_{\mu}(x)A^{d}_{\mu}(y)\left[\EuScript{M}^{-1}\right]^{eb}_{yx}\rangle}_{-\langle h_L(x)\rangle}\,,
\label{npbrst18}
\end{eqnarray}

\noindent with $h_L(x)$ being such that

\begin{equation}
H_L=\gamma^2\int d^dx~h_L(x)\,,
\label{npbrst19}
\end{equation} 

\noindent where $H_L$ is the horizon function in the Landau gauge. Imposing condition (\ref{2.49}), eq.(\ref{npbrst18}) reduces to

\begin{equation}
\langle \tilde{s}_{\gamma}\left(gf^{abc}A^{a}_{\mu}(x)\bar{\omega}^{bc}_{\mu}(x)\right)\rangle = \frac{1}{2}\langle gf^{abc}A^{a}_{\mu}\bar{\varphi}^{bc}_{\mu}\rangle + \frac{1}{2}\langle gf^{abc}A^{a}_{\mu}\varphi^{bc}_{\mu}\rangle -\gamma^2 d (N^2-1) \nonumber\\
= 0\,,
\label{npbrst20}
\end{equation}

\noindent where eq.(\ref{2.55}), the gap equation in its local form, was imposed. Therefore, we see the gap equation is compatible with the condition (\ref{npbrst13}). 

In a natural way, the modified BRST transformations can be written as

\begin{equation}
\tilde{s}_{\gamma}=s+\tilde{\delta}_{\gamma}\,,
\label{npbrst21}
\end{equation}

\noindent where $\tilde{\delta}_{\gamma}$ denotes the non-local terms introduced in (\ref{npbrst11}) to make $\tilde{s}_{\gamma}$ a symmetry of the Gribov-Zwanziger action. Hence,

\begin{equation}
\langle \tilde{s}_{\gamma}\Theta (x)\rangle=0\,\,\,\Rightarrow\,\,\,\langle s\Theta (x)\rangle=-\langle \tilde{\delta}_{\gamma}\Theta (x)\rangle\,,
\label{npbrst22}
\end{equation}

\noindent which is precisely the statement of correlation functions of (standard) BRST exact quantities may be not zero and proportional to the Gribov parameter. This is a characterization of what we expect from a theory which does not enjoy standard BRST symmetry, namely, we do not access the vacuum of the theory with the standard operator $s$. In this context, the non-vanishing dimension two condensate of the Zwanziger localizing auxiliary fields is naturally computed (and expected), 

\begin{equation}
\langle \tilde{s}_{\gamma}(\bar{\omega}^{ab}_{\mu}(x)\varphi^{ab}_{\mu}(x)) \rangle = 0\,\,\,\Rightarrow\,\,\, \langle (\bar{\varphi}^{ab}_{\mu}(x)\varphi^{ab}_{\mu}(x)-\bar{\omega}^{ab}_{\mu}(x)\omega^{ab}_{\mu}(x)) \rangle= -\langle \tilde{\delta}_{\gamma}(\bar{\omega}^{ab}_{\mu}(x)\varphi^{ab}_{\mu}(x)) \rangle
\label{npbrst23}
\end{equation}

\noindent which automatically implies

\begin{equation}
\langle (\bar{\varphi}^{ab}_{\mu}(x)\varphi^{ab}_{\mu}(x)-\bar{\omega}^{ab}_{\mu}(x)\omega^{ab}_{\mu}(x)) \rangle=-\gamma^2 gf^{deb}\int d^dy~\langle A^{d}_{\mu}(y)\left[\EuScript{M}^{-1}\right]^{ea}_{yx}\varphi^{ab}_{\mu}(x)\rangle\,.
\label{npbrst24}
\end{equation}

\noindent From eq.(\ref{npbrst24}) we can compute the value of the correlation function order by order. At one-loop order, the result was already presented in this thesis, see eq.(\ref{3.5}). 

Since the modified BRST transformations defined by the operator (\ref{npbrst21}) correspond to deformations of the standard BRST transformations with $\gamma$-dependent terms, they carry an intrinsic non-perturbative nature. Albeit an exact symmetry, the modified BRST transformations are not particularly useful for the renormalizability analysis of the GZ action since those are non-local. Nevertheless, as explicitly shown by eqs.(\ref{npbrst22}) and (\ref{npbrst24}) the operator $\tilde{s}_{\gamma}$ is useful to characterize in a practical way non-perturbative Ward identities. A natural question at this point is whether the RGZ action enjoys a similar non-local modified BRST symmetry. The answer to this question is positive and discussed already in $\cite{Sorella:2009vt}$. Since the reasoning is very similar to the GZ case, we do not present more details and simply refer the reader to $\cite{Sorella:2009vt}$.

Remarkably, it is possible to modify the (R)GZ action while keeping its physical content by means of the introduction of extra auxiliary fields to cast the modified BRST transformations (\ref{npbrst11}) in local form. This analysis was carried out in \cite{Dudal:2010hj}. Therefore, a construction of a modification of the (R)GZ action which is physically equivalent to the original one and enjoys a local modified BRST symmetry is allowed. We should remark that albeit local, the modified BRST transformations are still non-nilpotent as in the non-local picture. Also, the construction of a \textit{local} modified BRST transformation in this context is not in contradiction with the ``no-go" theorem of last section due to the fact that the new local transformations act on a larger set of fields than the standard GZ ones. The localization process, as usual, enlarges the amount of fields in the game, a fair price to pay depending on the interests.

We must note, though, the modified BRST symmetry was constructed upon Landau's gauge properties. In particular, the (R)GZ action in the Landau gauge is constructed using the very important properties of the Gribov region $\Omega$, characterized by the Faddeev-Popov operator $\EuScript{M}^{ab}=-\partial_{\mu}D^{ab}_{\mu}$ positivity. This implies the modified BRST symmetry incorporates \textit{particular} features of the Landau gauge as, for instance, the inverse of $\EuScript{M}^{ab}$. This is different from what happens with the standard BRST symmetry which is a feature of Yang-Mills action with a gauge-fixing term, this being arbitrary. This is a signal of the fact that the Gribov problem manifests itself in different ways for different gauges. In this sense, it is very natural to expect these modifications on the BRST transformations capture particular features of the chosen gauge. 

A final remark regarding the (non-)nilpotency of the modified BRST operator. Pictorially, the nilpotency of the standard BRST operator is associated with the fact that, at the perturbative level, we have a systematic way of performing the gauge fixing procedure and thus, we can explore different gauges while physical quantities are not affected. On the other hand, in the (R)GZ setting, the ``proper" gauge fixing is not systematic \textit{i.e.} it depends on particular properties of the gauge choice as discussed before. In this sense, it is not even natural to expect the nilpotency of the aforementioned modified BRST transformations. Although ``expected", the lack of nilpotency prevents us to explore the power of cohomology tools, which have been used in a vast applications in standard gauge theories. 

\section{On the construction of a nilpotent non-perturbative BRST symmetry}

Last sections illustrated how subtle the issue of BRST invariance is in the context of the (R)GZ framework. In particular, it was pointed out that a non-local deformation of the standard BRST transformations should be done in order to find a proper symmetry. Although possible, the resulting transformations are not nilpotent. At least, they can be cast in a local way via auxiliary fields to be introduced. From what was discussed, we have at least two possibilities now: Either we keep ourselves with the non-nilpotent local (or not) modified BRST symmetry or we try to achieve nilpotency through some conceptual reformulation of the (R)GZ action. Here is important to understand what we mean by ``conceptual". As we discussed in the last section, the nilpotency of the standard BRST operator is associated with our ability to gauge fix in a systematic way at the perturbative level. Hence, we can expect that a possible nilpotent modified BRST operator in the (R)GZ context must come from some analogue of what would be a ``systematic" construction of the (R)GZ action in different gauges. 

Although highly counter intuitive, \textit{it is} possible to construct such modified nilpotent operator. For this, we begin by a reformulation of the GZ action in the Landau gauge. Our starting point is the Gribov-Zwanziger action written in its non-local form, namely,

\begin{equation}
\EuScript{Z}=\int_{\Omega}\left[\EuScript{D}A\right]\mathrm{det}\left(\EuScript{M}\right)\delta\left(\partial A\right)\mathrm{e}^{-S_{\mathrm{YM}}}=\int\left[\EuScript{D}A\right]\mathrm{det}\left(\EuScript{M}\right)\delta\left(\partial A\right)\mathrm{e}^{-\left(S_{\mathrm{YM}}+\gamma^4H_L(A)-dV\gamma^4(N^2-1)\right)}\,,
\label{npbrst25}
 \end{equation}

\noindent where

\begin{equation}
H_L(A)=g^2\int d^d x d^dy~f^{abc}A^{b}_{\mu}(x)\left[\EuScript{M}^{-1}(x,y)\right]^{ad}f^{dec}A^{e}_{\mu}(y)\,.
\label{npbrst26}
\end{equation}

\noindent The novelty is the introduction of the transverse field $A^{h}_{\mu}=A^{h,a}_{\mu}T^a$, defined in Ap.~\ref{constructionAh}. This field is obtained through the minimization of the functional $\mathcal{A}^{2}_{\mathrm{min}}$ given by eq.(\ref{ah4}) along the gauge orbit of a given configuration $A^{a}_{\mu}$. It is possible to write a formal series for $A^{h,a}_{\mu}$ (see the derivation in  Ap.~\ref{constructionAh}),

\begin{eqnarray}
A^{h}_{\mu}&=&\left(\delta_{\mu\nu}-\frac{\partial_{\mu}\partial_{\nu}}{\partial^2}\right)\left(A_{\nu}-ig\left[\frac{1}{\partial^2}\partial A,A_\nu\right]+\frac{ig}{2}\left[\frac{1}{\partial^2}\partial A,\partial_{\nu}\frac{1}{\partial^2}\partial A\right]+\mathcal{O}(A^3)\right)\nonumber\\
&=&A_{\mu}-\partial_{\mu}\frac{1}{\partial^2}\partial A+ig\left[A_{\mu},\frac{1}{\partial^2}\partial A\right]-ig\frac{1}{\partial^2}\partial_{\mu}\left[A_{\alpha},\partial_{\alpha}\frac{1}{\partial^2}\partial A\right]+\frac{ig}{2}\frac{1}{\partial^2}\partial_{\mu}\left[\frac{1}{\partial^2}\partial A,\partial A\right]\nonumber\\
&+&\frac{ig}{2}\left[\frac{1}{\partial^2}\partial A,\partial_{\mu}\frac{1}{\partial^2}\partial A\right] + \mathcal{O}(A^3)\,.
\label{npbrst27}
\end{eqnarray}

\noindent From eq.(\ref{npbrst28}) we see $A^{h,a}_{\mu}$ is manifestly transverse, namely, $\partial_{\mu}A^{h,a}_{\mu}=0$. Another fundamental property of $A^{h,a}_{\mu}$ is that it is invariant order by order in $g$ under gauge transformations, see Ap.~\ref{constructionAh}. This property is crucial for the construction. As already discussed in this chapter, infinitesimal gauge transformations are formally equal to BRST transformation of the gauge field $A^{a}_{\mu}$. Therefore, this property automatically implies

\begin{equation}
sA^{h,a}_{\mu}=0\,,
\label{npbrst28}
\end{equation}

\noindent which will be heavily explored later on. Also, from eq.(\ref{npbrst27}) it is clear that disregarding the first term of the formal series which is $A^{a}_{\mu}$, all terms contain at least one factor of $\partial_{\mu}A^{a}_{\mu}$. A consequence of this property is that we are able to rewrite the horizon function as 

\begin{equation}
H(A)=H(A^h)-\int d^dx\, d^dy~R^a(x,y)(\partial_{\mu} A^a_{\mu})_y\,,
\label{npbrst29}
\end{equation}

\noindent with $R^a(x,y)$ being a formal power series of $A^{a}_{\mu}$. We note that we used the fact that we have $\partial_{\mu}A^{a}_{\mu}$ factors at our disposal in all terms but the first one of $A^{h,a}_{\mu}$. This implies the GZ action can be written as 

\begin{equation}
\tilde{S}_{\mathrm{GZ}}=S_{\mathrm{YM}}+\int d^dx\left(b^{h,a}\partial_{\mu}A^{a}_{\mu}+\bar{c}^{a}\partial_{\mu}D^{ab}_{\mu}c^b\right)+\gamma^4H(A^h)\,,
\label{npbrst30}
\end{equation}

\noindent where $b^{h,a}$ is a redefinition of the $b^a$ field with trivial Jacobian, given by

\begin{equation}
b^{h,a}=b^a-\gamma^4\int d^dy~R^a(y,x)\,.
\label{npbrst31}
\end{equation}

\noindent Now, employing the standard localization procedure of the GZ action, we obtain

\begin{eqnarray}
S_{\mathrm{GZ}}&=& S_{\mathrm{YM}}+\int d^dx\left(b^{h,a}\partial_{\mu}A^{a}_{\mu}+\bar{c}^{a}\partial_{\mu}D^{ab}_{\mu}c^{b}\right)\nonumber\\
&-&\int d^dx\left(\bar{\varphi}^{ac}_{\mu}\left[\EuScript{M}(A^h)\right]^{ab}\varphi^{bc}_{\mu}-\bar{\omega}^{ac}_{\mu}\left[\EuScript{M}(A^h)\right]^{ab}\omega^{bc}_{\mu}+g\gamma^2f^{abc}A^{h,a}_{\mu}(\varphi+\bar{\varphi})^{bc}_{\mu}\right)\,, \nonumber\\
\label{npbrst32}
\end{eqnarray}

\noindent with

\begin{equation}
\EuScript{M}^{ab}(A^h)=-\partial_{\mu}D^{ab}_{\mu}(A^h)=-\delta^{ab}\partial^{2}+gf^{abc}A^{h,c}_{\mu}\partial_{\mu}\,\,\,\mathrm{with}\,\,\,\partial_{\mu}A^{h,a}_{\mu}=0\,.
\label{npbrst33}
\end{equation}

Before going ahead, we make some important comments concerning eqs.(\ref{npbrst32}) and (\ref{npbrst33}),

\begin{itemize}

\item The action (\ref{npbrst32}) contains insertions of $A^{h}_{\mu}$ instead of simply $A_{\mu}$ in the terms with the auxiliary Zwanziger's field.

\item Due to the transversality of $A^{h}_{\mu}$, the operator $\EuScript{M}^{ab}(A^h)$ is automatically Hermitian. 

\item The action (\ref{npbrst32}) is still non-local, besides the standard Zwanziger's localization procedure was applied. The reason is that the horizon functional $H_L(A^h)$ displays two types of non-localities: First, the standard inverse of the operator $\EuScript{M}$ non-local structure (which is localized via the standard procedure). Second, the $A^h$ is also non-local. Therefore, even after the localization of the inverse of $\EuScript{M}$ we still have the non-local argument $A^h$. 

\item Using (\ref{npbrst32}) (with the standard vacuum term of the GZ action), the gap equation which is responsible to fix $\gamma$ is written as

\begin{equation}
\langle H_L(A^h) \rangle = dV(N^2-1)\,.
\label{npbrst34}
\end{equation}

This equation displays a very important property: It is manifestly gauge invariant. This is an immediate consequence of the gauge invariance of $A^h$. Although this discussion is a bit misleading at this point due to the fact that we are using the horizon function which was defined to the Landau gauge in principle, we hope it will become clear in the next chapter. 

\end{itemize}

Acting with the standard BRST operator $s$ in the GZ action (\ref{npbrst32}) we obtain

\begin{eqnarray}
sS_{\mathrm{GZ}} &=& \int d^dx\left[-\left(\gamma^4\int d^dy~R^a(y,x)\right)\partial_{\mu}A^{a}_{\mu}-b^{h,a}\partial_{\mu}D^{ab}_{\mu}(A)c^b+b^a\partial_{\mu}D^{ab}_{\mu}(A)c^b\right]\nonumber\\
&+&\int d^dx~gf^{abc}\gamma^2 A^{h,a}_{\mu}\omega^{bc}_{\mu}\,.
\label{npbrst35}
\end{eqnarray}

\noindent Proceeding as in the last section,

\begin{eqnarray}
\frac{\delta S_{\mathrm{GZ}}}{\delta {b}^{h,a}(x)}&=&\partial_{\mu}A^{a}_{\mu}(x)\nonumber\\
\frac{\delta S_{\mathrm{GZ}}}{\delta \bar{c}^{a}(x)}&=&-\EuScript{M}^{ab}(A)c^b(x)\nonumber\\
\frac{\delta S_{\mathrm{GZ}}}{\delta\bar{\omega}^{ac}_{\mu}(x)}&=&\EuScript{M}^{ab}(A^{h})\omega^{bc}_{\mu}(x)\,\,\,\Rightarrow\,\,\, \omega^{ac}_{\mu}(x)=\int d^dy\left[\EuScript{M}^{-1}(A^{h})\right]^{ab}(x,y)\frac{\delta S_{\mathrm{GZ}}}{\delta\bar{\omega}^{bc}_{\mu}(y)}\,,
\label{npbrst36}
\end{eqnarray}

\noindent and plugging into eq.(\ref{npbrst36}),

\begin{eqnarray}
sS_{\mathrm{GZ}} &=& \int d^dx\left[-\left(\gamma^4\int d^dy~sR^a(y,x)\right)\frac{\delta S_{\mathrm{GZ}}}{\delta {b}^{h,a}}-b^{h,a}\frac{\delta S_{\mathrm{GZ}}}{\delta \bar{c}^{a}}+b^a\frac{\delta S_{\mathrm{GZ}}}{\delta \bar{c}^{a}}\right]\nonumber\\
&+&\int d^dx~gf^{abc}\gamma^2 A^{h,a}_{\mu}\left[\EuScript{M}^{-1}(A^{h})\right]^{ac}_{xy}\frac{\delta S_{\mathrm{GZ}}}{\delta\bar{\omega}^{cb}_{\mu}(y)}\nonumber\\
&=& \int d^dx\left[-\left(\gamma^4\int d^dy~sR^a(y,x)\right)\frac{\delta S_{\mathrm{GZ}}}{\delta {b}^{h,a}}+\left(\gamma^4\int d^dy~R^a(y,x)\right)\frac{\delta S_{\mathrm{GZ}}}{\delta \bar{c}^{a}}\right]\nonumber\\
&+&\int d^dx~gf^{abc}\gamma^2 A^{h,a}_{\mu}\left[\EuScript{M}^{-1}(A^{h})\right]^{bd}_{xy}\frac{\delta S_{\mathrm{GZ}}}{\delta\bar{\omega}^{dc}_{\mu}(y)}\,,
\label{npbrst37}
\end{eqnarray}

\noindent we have the breaking written as a contact term. From eq.(\ref{npbrst37}) it is immediate to read off the appropriate modifications for the BRST transformations to construct a symmetry. Therefore,

\begin{align}
s_{\gamma^2}A^{a}_{\mu}&=-D^{ab}_{\mu}c^b\,,     &&s_{\gamma^2}c^a=\frac{g}{2}f^{abc}c^bc^c\,, \nonumber\\
s_{\gamma^2}\bar{c}^a&=b^{h,a}\,,     &&s_{\gamma^2}b^{h,a}=0\,, \nonumber\\
s_{\gamma^2}\varphi^{ab}_{\mu}&=\omega^{ab}_{\mu}\,,   &&s_{\gamma^2}\omega^{ab}_{\mu}=0\,, \nonumber\\
s_{\gamma^2}\bar{\omega}^{ab}_{\mu}&=\bar{\varphi}^{ab}_{\mu}-\gamma^2gf^{cdb}\int d^dy~A^{h,c}_{\mu}(y)\left[\EuScript{M}^{-1}(A^h)\right]^{da}_{yx}\,,         &&s_{\gamma^2}\bar{\varphi}^{ab}_{\mu}=0\,, 
\label{npbrst38}
\end{align}
correspond to a modified BRST symmetry for the GZ action,

\begin{equation}
s_{\gamma^2}S_{\mathrm{GZ}}=0\,.
\label{npbrst39}
\end{equation}

\noindent The transformations defined by $s_{\gamma^2}$ can be decomposed as

\begin{equation}
s_{\gamma^2}=s+\delta_{\gamma^2}\,,
\label{npbrst40}
\end{equation}

\noindent with 

\begin{align}
sA^{a}_{\mu}&=-D^{ab}_{\mu}c^b\,,     &&sc^a=\frac{g}{2}f^{abc}c^bc^c\,, \nonumber\\
s\bar{c}^a&=b^{a}\,,     &&sb^{h,a}=-\gamma^4\int d^dy~sR^a(y,x)\,, \nonumber\\
s\varphi^{ab}_{\mu}&=\omega^{ab}_{\mu}\,,   &&s\omega^{ab}_{\mu}=0\,, \nonumber\\
s\bar{\omega}^{ab}_{\mu}&=\bar{\varphi}^{ab}_{\mu}\,,         &&s\bar{\varphi}^{ab}_{\mu}=0\,, 
\label{npbrst41}
\end{align}

\noindent and

\begin{align}
\delta_{\gamma^2}A^{a}_{\mu}&=0\,,     &&\delta_{\gamma^2}c^a=0\,, \nonumber\\
\delta_{\gamma^2}\bar{c}^a&=-\gamma^4\int d^dy~R^a(y,x)\,,     &&\delta_{\gamma^2}b^{h,a}=\gamma^4\int d^dy~sR^a(y,x)\,, \nonumber\\
\delta_{\gamma^2}\varphi^{ab}_{\mu}&=0\,,   &&\delta_{\gamma^2}\omega^{ab}_{\mu}=0\,, \nonumber\\
\delta_{\gamma^2}\bar{\omega}^{ab}_{\mu}&=-\gamma^2gf^{cdb}\int d^dy~A^{h,c}_{\mu}(y)\left[\EuScript{M}^{-1}(A^h)\right]^{da}_{yx}\,,         &&\delta_{\gamma^2}\bar{\varphi}^{ab}_{\mu}=0\,. 
\label{npbrst42}
\end{align}

\noindent An explicit property of the operator $s_{\gamma^2}$ is that as long as the Gribov parameter is set to zero, we recover the standard BRST operator \textit{i.e.}

\begin{stmt} \label{sgammagoestos}
If the Gribov parameter is set to zero $\gamma\rightarrow 0$, then $s_{\gamma^2}\rightarrow s$.
\end{stmt}

\noindent Statement~\ref{sgammagoestos} is crucial for a consistent treatment of the GZ action. As was presented in this thesis, the limit $\gamma\rightarrow 0$ should bring us back to the full standard Faddeev-Popov theory.

As in the previous section, the deformation of the usual BRST transformation is such that it contains the Gribov parameter. This fact guarantees the smooth reduction to the standard BRST transformations in the perturbative (UV) regime. On the other hand, since these modified BRST transformations bring non-perturbative information due to the nature of $\gamma$. Also, transformations (\ref{npbrst38}) are non-local as those presented in the last section. Nevertheless, the use of the gauge invariant field $A^{h}_{\mu}$ allows a profound difference with respect to the set (\ref{npbrst11}): The operator $s_{\gamma^2}$ is nilpotent,

\begin{equation}
s^2_{\gamma^2}=0\,.
\label{npbrst43}
\end{equation}

\noindent The proof is automatic if we use the fact that $s_{\gamma^2}A^{h}_{\mu}=0$. Another interesting feature which is also easily checked is that the modification $\delta_{\gamma^2}$ is nilpotent alone, namely, 

\begin{equation}
\delta^2_{\gamma^2}=0\,.
\label{npbrst44}
\end{equation}

\noindent Due to the fact that the modified BRST transformations (\ref{npbrst38}) are an exact symmetry of the Gribov-Zwanziger action \textit{and} the operator $s_{\gamma^2}$ is nilpotent, an outstanding property of the standard BRST operator, we call them, from now on, as \textit{non-perturbative BRST symmetry} or \textit{non-perturbative BRST transformations}\footnote{As discussed, they are called ``non-perturbative" due to the presence of $\gamma$.}. As a summary of the properties enjoyed by the non-perturbative BRST operator, we write

\begin{equation}
s^{2}_{\gamma^2}=s^2=\delta^{2}_{\gamma^2}=0\,\,\,\Rightarrow\,\,\, \left\{s,\delta_{\gamma^2}\right\}=0\,,
\label{npbrst45}
\end{equation}

\noindent where $\left\{\cdot,\cdot\right\}$ stands for the anticommutator. 

With the nilpotent non-perturbative BRST operator $s_{\gamma^2}$, a very important check is that

\begin{equation}
s_{\gamma^2}\frac{\partial S_{\mathrm{GZ}}}{\partial\gamma^2}=f^{abc}A^{h,a}_{\mu}\omega^{bc}_{\mu}\neq 0\,\,\,\,\Rightarrow\,\,\, \frac{\partial S_{\mathrm{GZ}}}{\partial\gamma^2}\neq s_{\gamma^2}(\ldots)\,,
\label{npbrst46}
\end{equation}

\noindent which indicates the Gribov parameter $\gamma$ is not akin to a gauge parameter. This conclusion was already achieved through the standard BRST soft breaking. For a more detailed discussion on the physicality of parameters introduced using standard BRST soft breaking, we refer to \cite{Baulieu:2008fy}. Expression (\ref{npbrst46}) is supported by the gauge invariant horizon condition (\ref{npbrst34}), which gives an explicit physical character to $\gamma$. Hence, is expected that $\gamma$ will enter physical quantities as gauge invariant colorless operators. Also, we have at our disposal an exact and nilpotent BRST symmetry and, as a consequence, the cohomology toolbox which enables the classification of quantum extension of classical gauge invariant operators. A final remark concerning the properties of the non-perturbative BRST operator, we write the GZ action in Landau gauge as

\begin{eqnarray}
S_{\mathrm{GZ}}&=& S_{\mathrm{YM}}+s_{\gamma^2}\int d^dx~\bar{c}^{a}\partial_{\mu}A^{a}_{\mu}\nonumber\\
&-&\int d^dx\left(\bar{\varphi}^{ac}_{\mu}\left[\EuScript{M}(A^h)\right]^{ab}\varphi^{bc}_{\mu}-\bar{\omega}^{ac}_{\mu}\left[\EuScript{M}(A^h)\right]^{ab}\omega^{bc}_{\mu}+g\gamma^2f^{abc}A^{h,a}_{\mu}(\varphi+\bar{\varphi})^{bc}_{\mu}\right)\,, \nonumber\\
\label{npbrst47}
\end{eqnarray}
 
\noindent where is explicitly seen that the second line (\ref{npbrst47}) is written in terms of the gauge invariant variable $A^h_{\mu}$, while the first line contains a \textit{non-perturbative} gauge-fixing term written as a non-perturbative BRST exact term. 

From eq.(\ref{npbrst40}), it is very easy to check that

\begin{equation}
\langle s_{\gamma^2}\Theta (x) \rangle=0\,\,\,\,\Rightarrow\,\,\,\langle s\Theta (x) \rangle=-\langle \delta_{\gamma^2}\Theta (x) \rangle\,,
\label{npbrst48}
\end{equation}

\noindent where the non-trivial value of a standard BRST exact correlation function is equal to the non-trivial value of an exact $\delta_{\gamma^2}$ variation, which is necessarily $\gamma$-dependent. This ensures the standard BRST breaking is soft. 

In summary, we have an exact nilpotent BRST operator which defines a symmetry of the GZ action in the Landau gauge and encodes non-perturbative information due to the Gribov parameter. Interesting enough, the construction of this symmetry is based on the introduction of the transverse gauge invariant field $A^{h}_{\mu}$. The structure of the reformulated GZ action in terms of $A^{h}_{\mu}$ as in eq.(\ref{npbrst47}) suggests a ``non-perturbative BRST quantization" procedure which will be elaborated in the next chapter. A final important point is that albeit nilpotent the non-perturbative BRST transformations are non-local as the GZ action itself, when written in terms of $A^h_{\mu}$, a non-local formal power series of $A_{\mu}$. It is highly desirable to cast the reformulated GZ action and the non-perturbative BRST transformations in local form. This will be explored later on in this thesis. 

\chapter{Linear covariant gauges revisited}\label{LCGrevisited}

In Ch.~\ref{ch.5} we presented a first attempt to construct a local action akin to the (R)GZ action to general linear covariant gauges. As we have seen explicitly, this construction is based on the restriction of the path integral to a region $\Omega_{\mathrm{LCG}}$. However, the construction automatically inserts the transverse component of the gauge field in the horizon function. In this case, the longitudinal sector of the gauge field is not vanishing as in Landau gauge and, therefore, including the full gauge field or the transverse field component is a complete different story. As a result, for example, we have checked the propagator for the auxiliary $b$ field vanishes at the tree-level, while in the standard GZ setting for the Landau gauge, this is not the case. It means the deformation of the LCG GZ action to the Landau ones for $\alpha\rightarrow 0$ is not straightforward. In this chapter, inspired by the non-perturbative BRST symmetry introduced in Ch.~\ref{nonpBRSTRGZ}, we promote our proposal for LCG GZ action to a non-perturbative BRST invariant LCG GZ action. We show how this proposal takes into account the removal of zero-modes of the Faddeev-Popov operator in LCG and how the non-perturbative BRST transformations keep the consistency with the non-perturbative BRST invariant GZ action in the Landau gauge. 

\section{The non-perturbative BRST invariant GZ action in LCG}

We define the LCG GZ action in a manifest non-perturbative BRST invariant way by

\begin{eqnarray}
S^{\mathrm{LCG}}_{\mathrm{GZ}}&=& S_{\mathrm{YM}}+s_{\gamma^2}\int d^dx~\bar{c}^{a}\left(\partial_{\mu}A^{a}_{\mu}-\frac{\alpha}{2}b^{h,a}\right)\nonumber\\
&-&\int d^dx\left(\bar{\varphi}^{ac}_{\mu}\left[\EuScript{M}(A^h)\right]^{ab}\varphi^{bc}_{\mu}-\bar{\omega}^{ac}_{\mu}\left[\EuScript{M}(A^h)\right]^{ab}\omega^{bc}_{\mu}+g\gamma^2f^{abc}A^{h,a}_{\mu}(\varphi+\bar{\varphi})^{bc}_{\mu}\right)\nonumber\\
&=&  S_{\mathrm{YM}} + \int d^dx\left(b^{h,a}\left(\partial_{\mu}A^{a}_{\mu}-\frac{\alpha}{2}b^{h,a}\right)+\bar{c}^{a}\partial_{\mu}D^{ab}_{\mu}c^{b}\right)\nonumber\\
&-&\int d^dx\left(\bar{\varphi}^{ac}_{\mu}\left[\EuScript{M}(A^h)\right]^{ab}\varphi^{bc}_{\mu}-\bar{\omega}^{ac}_{\mu}\left[\EuScript{M}(A^h)\right]^{ab}\omega^{bc}_{\mu}+g\gamma^2f^{abc}A^{h,a}_{\mu}(\varphi+\bar{\varphi})^{bc}_{\mu}\right)\,,\nonumber\\
\label{rlcg1}
\end{eqnarray}

\noindent where the gauge condition for linear covariant gauges is defined as

\begin{equation}
\partial_{\mu}A^{a}_{\mu}-\alpha b^{h,a}=0\,,
\label{rlcg2}
\end{equation}

\noindent with $\alpha$ being an arbitrary positive parameter. The action (\ref{rlcg1}) is manifestly invariant due to the nilpotency of the operator $s_{\gamma^2}$. We must note this action is exactly the same as in Landau gauge with the obvious change in the gauge fixing part only. However, since this part is an exact non-perturbative BRST term, moving from Landau to LCG does not spoil the non-perturbative BRST invariance. In this sense we can speak about a ``non-perturbative BRST" quantization. From eq.(\ref{rlcg1}) and (\ref{npbrst47}) it is easy to check

\begin{equation}
S^{\mathrm{LCG}}_{\mathrm{GZ}}\Big|_{\alpha=0}=S^{\mathrm{Landau}}_{\mathrm{GZ}}\,.
\label{rlcg3}
\end{equation}

\noindent From action (\ref{rlcg1}), we can derive the horizon condition/gap equation, which is responsible to fix $\gamma$,

\begin{equation}
\langle H(A^h) \rangle = dV(N^2-1)\,,
\label{rlcg4}
\end{equation}

\noindent with 

\begin{equation}
 H(A^h)=g^2\int d^dx d^dy f^{abc}A^{h,a}_{\mu}(x)\left[\EuScript{M}^{-1}(A^h)\right]^{be}(x,y)f^{dec}A^{h,d}_{\mu}(y)\,,
\label{rlcg5}
\end{equation}

\noindent which is the same expression used in the gap equation (\ref{npbrst34}). Since we are using a general $\alpha$, we drop the subscript $L$ used before. We see thus the gauge invariance of (\ref{rlcg5}) automatically implies independence from $\alpha$ of the Gribov parameter $\gamma$. This is a very important requirement since as discussed in Ch.~\ref{nonpBRSTRGZ} this parameter is physical. 

Another prerequisite action (\ref{rlcg1}) should satisfy is that for $\gamma\rightarrow 0$, we should recover the standard Faddeev-Popov theory. From expression (\ref{rlcg1}) this is evident, because when $\gamma\rightarrow 0$, the resulting part with auxiliary fields can be immediately integrated giving an unity. Therefore, 

\begin{equation}
S^{\mathrm{LCG}}_{\mathrm{GZ}}\Big|_{\gamma=0}=S^{\mathrm{LCG}}_{\mathrm{FP}}=S_{\mathrm{YM}}+\int d^dx\left(b^{h,a}\partial_{\mu}A^{a}_{\mu}-\frac{\alpha}{2}b^{h,a}b^{h,a}+\bar{c}^{a}\partial_{\mu}D^{ab}_{\mu}(A)c^{b}\right)\,.
\label{rlcg6}
\end{equation}

\noindent fulfilling the desired requirement. 

\section{A ``geometrical picture"}

In Ch.~\ref{ch.5}, we proved that infinitesimal Gribov copies in the LCG are associated with zero-modes of the Faddeev-Popov operator, namely

\begin{equation}
\EuScript{M}^{ab}_{\mathrm{LCG}}\xi^{b}=-\partial_{\mu}\xi^{a}+gf^{abc}(\partial_{\mu}A^{c}_{\mu})\xi^{b}+gf^{abc}A^{c}_{\mu}\partial_{\mu}\xi^{b}=0\,,
\label{rlcg7}
\end{equation} 

\noindent and that this operator is not Hermitian. This fact makes the construction of Gribov region out of $\EuScript{M}^{ab}_{\mathrm{LCG}}$ very difficult and as a proposal, we have introduced a subsidiary condition with respect to a \textit{Hermitian} operator $\EuScript{M}^{ab}_{\mathrm{LCG}}(A^T)$, where $A^T$ is the transverse component of the gauge field, to construct the candidate Gribov region $\Omega_{\mathrm{LCG}}$. Theorem~\ref{thma} ensures that demanding the positivity of $\EuScript{M}^{ab}_{\mathrm{LCG}}(A^T)$ automatically implies the removal of zero-modes of $\EuScript{M}^{ab}_{\mathrm{LCG}}$.

The GZ action in LCG (\ref{rlcg1}) compatible with the non-perturbative BRST symmetry imposes the restriction of the path integral domain to the region defined by

\begin{equation}
\EuScript{M}^{ab}(A^h)>0\,,
\label{rlcg8}
\end{equation}

\noindent where we remind $\EuScript{M}^{ab}(A^h)$ \textit{is} Hermitian due to the transversality of $A^h_{\mu}$. Therefore, in the non-perturbative BRST invariant formulation, the Gribov region for LCG is defined as

\begin{definition} \label{omegah}
The Gribov region $\Omega^{h}_{\mathrm{LCG}}$ which is free from regular infinitesimal copies in LCG is given by
\begin{equation}
\Omega^{h}_{\mathrm{LCG}}=\left\{\EuScript{M}^{ab}(A^h)>0\,\,\Big|\,\,\partial_{\mu}A^{a}_{\mu}=\alpha b^a\right\}\,.
\label{rlcg9}
\end{equation}
\end{definition}

\noindent As we did for $\EuScript{M}^{ab}(A^T)$ in Ch.~\ref{ch.5}, we should prove condition (\ref{rlcg8}) implies the removal of zero-modes of the Faddeev-Popov operator $\EuScript{M}^{ab}_{\mathrm{LCG}}$ and as a consequence, Def.~\ref{omegah} is indeed sensible. 

\begin{theorem} \label{thma2}
If $\EuScript{M}^{ab}(A^{h})>0$, then $\EuScript{M}^{ab}_{\mathrm{LCG}}\xi^{b}=0$ is only satisfied\footnote{We are assuming, as before, that $\xi$ is a regular function.} by $\xi^a=0$.
\end{theorem}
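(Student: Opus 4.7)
The plan is to mirror the strategy used in Theorem~\ref{thma}, replacing the role of $\EuScript{M}^{T}$ by the operator $\EuScript{M}(A^{h})$ and exploiting the fact that the latter is Hermitian and, by hypothesis, strictly positive, hence invertible. The starting point will be the algebraic identity
\begin{equation}
\EuScript{M}^{ab}_{\mathrm{LCG}}\xi^{b} \;=\; \EuScript{M}^{ab}(A^{h})\,\xi^{b} \;+\; gf^{abc}(A^{c}_{\mu}-A^{h,c}_{\mu})\,\partial_{\mu}\xi^{b} \;+\; g\alpha\,f^{abc}\,b^{c}\,\xi^{b}\,,
\end{equation}
obtained by splitting $A_{\mu}=A^{h}_{\mu}+(A_{\mu}-A^{h}_{\mu})$, using the gauge condition $\partial_{\mu}A^{a}_{\mu}=\alpha b^{a}$, and using the transversality of $A^{h}_{\mu}$ to cancel the would-be $\partial_{\mu}A^{h,c}_{\mu}$ term.

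Assuming $\EuScript{M}(A^{h})>0$, I can invert it and recast the zero-mode equation $\EuScript{M}^{ab}_{\mathrm{LCG}}\xi^{b}=0$ as the fixed-point equation
\begin{equation}
\xi^{a}(x,\alpha) \;=\; -\,g\bigl[\EuScript{M}^{-1}(A^{h})\bigr]^{ad}\!\left[f^{dbc}(A^{c}_{\mu}-A^{h,c}_{\mu})\,\partial_{\mu}\xi^{b} \;+\; \alpha\,f^{dbc}\,b^{c}\,\xi^{b}\right].
\end{equation}
The key observation, to be made precise next, is that the whole right-hand side is proportional to $\alpha$. The second bracket is manifestly $O(\alpha)$; for the first bracket, I would invoke the explicit formal series \eqref{npbrst27} for $A^{h}_{\mu}$, from which every term of $A_{\mu}-A^{h}_{\mu}$ carries at least one factor $\partial_{\nu}A^{a}_{\nu}$, which on the gauge slice is exactly $\alpha b^{a}$. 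Thus $A_{\mu}-A^{h}_{\mu}=\alpha\,\Phi_{\mu}[A,b]$ for some (non-local) expression $\Phi_{\mu}$, and the fixed-point equation becomes $\xi=\alpha\,\EuScript{K}[\xi]$ for an $\alpha$-independent integral operator $\EuScript{K}$.

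Restricting to regular zero-modes, I would then expand $\xi^{a}(x,\alpha)=\sum_{n\ge 0}\alpha^{n}\xi^{a}_{n}(x)$ within its radius of convergence and match powers of $\alpha$ on both sides. Order zero gives $\xi^{a}_{0}=0$ directly; the recursion $\xi^{a}_{n+1}=\EuScript{K}[\xi_{n}]^{a}$ then propagates this to $\xi^{a}_{n}=0$ for every $n$, so that $\xi$ vanishes on a neighborhood of $\alpha=0$ and, by smoothness in $\alpha$, everywhere. This reproduces the structure of the proof of Theorem~\ref{thma} and completes the argument.

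The main technical obstacle I anticipate is the justification, at all orders in $g$, that every contribution to $A_{\mu}-A^{h}_{\mu}$ can be written with at least one explicit factor of $\partial_{\nu}A^{a}_{\nu}$. This is visible from the low-order expansion \eqref{npbrst27}, but making it rigorous to all orders requires either an inductive argument on the minimization construction of $A^{h}_{\mu}$ detailed in Appendix~\ref{constructionAh}, or the use of the manifest gauge invariance of $A^{h}_{\mu}$ together with its uniqueness on the orbit to conclude that $A^{h}_{\mu}[A]\big|_{\partial A=0}=A_{\mu}$, which is precisely what is needed. Once this ``factorization of $\alpha$'' is established, the rest of the proof is the same $\alpha$-expansion argument already used for $\EuScript{M}^{T}$, and Definition~\ref{omegah} is consistent with the removal of regular infinitesimal Gribov copies in linear covariant gauges. \hfill$\blacksquare$
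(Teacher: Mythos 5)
Your proposal is correct and follows essentially the same route as the paper: decompose $A_\mu=A^{h}_{\mu}+(A_\mu-A^{h}_{\mu})$, use positivity of $\EuScript{M}(A^h)$ to invert and obtain a fixed-point equation whose right-hand side is $O(\alpha)$, then run the same Taylor-expansion recursion in $\alpha$ as in Theorem~\ref{thma} to kill $\xi_n$ order by order. If anything, you are more careful than the paper at the one delicate step, since the paper simply sets $\tau^a_\mu=A_\mu-A^{h}_{\mu}=\alpha\,\partial_\mu b^a/\partial^2$ (its leading form), whereas you correctly note that the all-orders statement is that every term of $A_\mu-A^{h}_{\mu}$ carries a factor of $\partial_\nu A_\nu=\alpha b$, which is what the series \eqref{npbrst27} and the construction in Ap.~\ref{constructionAh} provide.
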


\begin{proof}
Let us assume we have a zero-mode $\xi$ which is different from $\xi=0$. Then, by assumption,

\begin{equation}
\EuScript{M}^{ab}_{\mathrm{LCG}}\xi^{b}=0\,\,\,\Rightarrow\,\,\,-\delta^{ab}\partial^2\xi^b+f^{abc}A^{c}_{\mu}\partial_{\mu}\xi^{b}+gf^{abc}(\partial_{\mu}A^{c}_{\mu})\xi^{b}=0\,.
\label{rlcg10}
\end{equation}

\noindent Now, we can write the gauge field $A^{a}_{\mu}$ as

\begin{equation}
A^{a}_{\mu}=A^{h,a}_{\mu}+\tau^{a}_{\mu}\,\,\,\Rightarrow\,\,\,\partial_{\mu}A^{a}_{\mu}=\partial_{\mu}\tau^{a}_{\mu}=\alpha b^a\,
\label{rlcg11}
\end{equation}

\noindent which implies

\begin{eqnarray}
&-&\delta^{ab}\partial^2\xi^b+f^{abc}A^{h,c}_{\mu}\partial_{\mu}\xi^{b}+gf^{abc}(\partial_{\mu}A^{h,c}_{\mu})+gf^{abc}(\partial_{\mu}\tau^{c}_{\mu})+f^{abc}\tau^{c}_{\mu}\partial_{\mu}\xi^{b}=0\nonumber\\
&\Rightarrow& \EuScript{M}^{ab}(A^h)\xi^b=-f^{abc}\partial_{\mu}(\tau^{c}_{\mu}\xi^{b})\nonumber\\
&\Rightarrow& \xi^a= -f^{dbc}\left[\EuScript{M}^{-1}(A^h)\right]^{ad}\partial_{\mu}(\tau^{c}_{\mu}\xi^{b})\,,
\label{rlcg12}
\end{eqnarray}

\noindent where we used the fact $\EuScript{M}^{ab}(A^h)$ is positive and, thus, invertible. Now, using the fact that $\xi$ is a regular zero-mode, we can expand it in a power series like

\begin{equation}
\xi^{a}(x;\alpha)=\sum^{\infty}_{n=0}\alpha^{n}\xi^{a}_{n}(x)\,,
\label{rlcg13}
\end{equation}

\noindent and using eq.(\ref{rlcg11}), we write

\begin{equation}
\tau^{a}=\alpha\frac{1}{\partial^2}\partial_{\mu}b^a\equiv\alpha\psi^a\,,
\label{rlcg14}
\end{equation}

\noindent where expression (\ref{rlcg14}) ensures that for $\alpha=0$, $\tau^{a}=0$. Plugging eqs.(\ref{rlcg13}) and (\ref{rlcg14}) into eq.(\ref{rlcg12}), we obtain

\begin{equation}
\sum^{\infty}_{n=0}\alpha^{n}\xi^{a}_{n}(x)= -f^{dbc}\sum^{\infty}_{n=0}\alpha^{n+1}\left[\EuScript{M}^{-1}(A^h)\right]^{ad}\partial_{\mu}(\psi^{c}\xi^{b}_{n}(x))\,,
\label{rlcg15}
\end{equation}

\noindent which in analogy to the proof of Theorem~\ref{thma} implies

\begin{equation}
\xi^{a}_{n}=0\,,\,\,\,\forall n
\label{rlcg16}
\end{equation}

\noindent which contradicts our assumption that $\xi\neq 0$.    $\blacksquare$
\end{proof}

\noindent Theorem~\ref{thma2} gives support to Def.~\ref{omegah} and makes the restriction to $\Omega^{h}_{\mathrm{LCG}}$ well grounded. As presented is Ch.~\ref{ch.3}, we can construct the no-pole condition for the $\EuScript{M}^{ab}(A^h)$. This construction is simple following the steps described in Ch.~\ref{ch.3} and the fact the no-pole condition can be generalized to all orders as discussed in \cite{Capri:2012wx}. Some details can be found in \cite{Capri:2015ixa}. As a comment about $\Omega^{h}_{\mathrm{LCG}}$ is that it enjoys similar properties as the Gribov region $\Omega$ in the Landau gauge,

\begin{itemize}
\item It is bounded in all directions;

\item It is convenx.
\end{itemize}

\noindent These properties follow directly from the fact that $A^{h}_{\mu}$ is transverse. 

\section{Non-perturbative BRST-invariant RGZ action} \label{npBRSTRGZLCG1}

As discussed in Ch.~\ref{RGZch} and Ch.~\ref{ch.5} it was shown that the GZ action in the Landau and linear covariant gauges suffer from instabilities that give rise to the dynamical generation of condensates. Such effects can be taken into account by the construction of the so-called RGZ action, \cite{Dudal:2008sp,Dudal:2007cw,Gracey:2010cg}. The same issue was analyzed in the maximal Abelian gauge (already under the non-perturbative BRST framework), \cite{Capri:2015pfa} and in the Coulomb gauge \cite{Guimaraes:2015bra}. 

In Ch.~\ref{ch.5}, we constructed the RGZ action in LCG. The construction of this chapter is related to the one of Ch.~\ref{ch.5} in the approximation

\begin{equation}
A^{h,a}_{\mu}\approx A^{a}_{\mu}-\frac{\partial_{\mu}}{\partial^2}\partial_{\nu}A^{a}_{\nu}\equiv A^{Ta}_{\nu}\,,
\label{rlcg17}
\end{equation} 

\noindent where $A^{Ta}_{\mu}$ stands for the transverse component of the gauge field. We see from expression (\ref{ah20}), that the transverse $A^{Ta}_{\mu}$ component is the zeroth order part (in $g$) of $A^{h,a}_{\mu}$. In this perspective, the results obtained in Ch.~\ref{ch.5} might me seen as an approximation of the full construction with $A^{h}$. Clearly, at one-loop order, the computation using action (\ref{hor3}) or the one using expression (\ref{rlcg1}) leads to the same results for existence of the condensates. However, we should keep in mind these formulations are conceptually very different. In the present case, we are concerned with the condensates

\begin{equation}
\langle A^{h,a}_{\mu}A^{h,a}_{\mu}\rangle\,\,\,\,  \mathrm{and}\,\,\,\, \langle \bar{\varphi}^{ab}_{\mu}\varphi^{ab}_{\mu}-\bar{\omega}^{ab}_{\mu}\omega^{ab}_{\mu}\rangle\,,
\label{rlcg18}
\end{equation}

\noindent while in Ch.~\ref{ch.5}, the dimension-two gluon condensate was $\langle A^{Ta}_{\mu}A^{Ta}_{\mu}\rangle$. Explicitly, the value of the condensates (\ref{rlcg18}) can be computed by coupling such composite operators with constant sources $m$ and $J$, namely

\begin{equation}
\mathrm{e}^{-V\mathcal{E}(m,J)}=\int\left[\EuScript{D}\Phi\right]\mathrm{e}^{-(S^{\mathrm{LCG}}_{\mathrm{GZ}}+m\int d^dx~A^{h,a}_{\mu}A^{h,a}_{\mu}-J\int d^dx(\bar{\varphi}^{ab}_{\mu}\varphi^{ab}_{\mu}-\bar{\omega}^{ab}_{\mu}\omega^{ab}_{\mu}))}
\label{rlcg19}
\end{equation}

\noindent and

\begin{eqnarray}
\langle \bar{\varphi}^{ab}_{\mu}\varphi^{ab}_{\mu}-\bar{\omega}^{ab}_{\mu}\omega^{ab}_{\mu} \rangle &=& -\frac{\partial \mathcal{E}(m,J)}{\partial J}\Big|_{m=J=0}\nonumber\\
\langle A^{h,a}_{\mu}A^{h,a}_{\mu}\rangle &=& \frac{\partial\mathcal{E}(m,J)}{\partial m}\Big|_{m=J=0}\,.
\label{rlcg20}
\end{eqnarray}

\noindent At one-loop order, 

\begin{equation}
{\cal E}(m,J)=\frac{(d-1)(N^2-1)}{2}\int \frac{d^dk}{(2\pi)^d}~\mathrm{ln}\left(k^2+\frac{2\gamma^4g^2N}{k^2+J}+2m\right)-d\gamma^4(N^2-1)\,,
\label{rlcg21}
\end{equation}

\noindent which results in

\begin{equation}
\langle \bar{\varphi}^{ac}_{\mu}\varphi^{ac}_{\mu}-\bar{\omega}^{ac}_{\mu}\omega^{ac}_{\mu}\rangle = g^2\gamma^4N(N^2-1)(d-1)\int \frac{d^dk}{(2\pi)^d}\frac{1}{k^2}\frac{1}{(k^4+2g^2\gamma^4N)}
\label{rlcg22}
\end{equation}

\noindent and

\begin{equation}
\langle A^{h,a}_{\mu}A^{h,a}_{\mu}\rangle = -2g^2\gamma^4N(N^2-1)(d-1)\int\frac{d^dk}{(2\pi)^d}\frac{1}{k^2}\frac{1}{(k^4+2g^2\gamma^4N)}\,.
\label{rlcg23}
\end{equation}

From (\ref{rlcg22}) and (\ref{rlcg23}), we see the integrals are perfectly convergent in the UV and depend explicitly on $\gamma$. For $d=3,4$, these integrals are defined in the IR and correspond to well-defined quantities. Nevertheless, in $d=2$, due to the $1/k^2$ factor in the integrals, we have a non-integrable singularity which makes the condensates ill-defined. This IR pathology in $d=2$ is a typical behavior of two-dimensional theories, see \cite{Dudal:2008xd} and references therein. In this way, these results suggest such condensates should be taken into account in $d=3,4$, giving rise to a refinement of the GZ action. In $d=2$, as happens in other gauges, these condensates cannot be safely introduced as they give rise to non-integrable IR singularities. As a consequence, in $d=2$ the GZ theory does not need to be refined.  Therefore, for $d=3,4$, the RGZ action in LCG is written as

\begin{eqnarray}
S^{\mathrm{LCG}}_{\mathrm{RGZ}}&=& S_{\mathrm{YM}} + \int d^dx\left(b^{h,a}\left(\partial_{\mu}A^{a}_{\mu}-\frac{\alpha}{2}b^{h,a}\right)+\bar{c}^{a}\partial_{\mu}D^{ab}_{\mu}c^{b}\right)\nonumber\\
&-&\int d^dx\left(\bar{\varphi}^{ac}_{\mu}\left[\EuScript{M}(A^h)\right]^{ab}\varphi^{bc}_{\mu}-\bar{\omega}^{ac}_{\mu}\left[\EuScript{M}(A^h)\right]^{ab}\omega^{bc}_{\mu}+g\gamma^2f^{abc}A^{h,a}_{\mu}(\varphi+\bar{\varphi})^{bc}_{\mu}\right)\nonumber\\
&+&\frac{m^2}{2}\int d^dx~A^{h,a}_{\mu}A^{h,a}_{\mu}-M^2\int d^dx\left(\bar{\varphi}^{ac}_{\mu}\varphi^{ac}_{\mu}-\bar{\omega}^{ac}_{\mu}\omega^{ac}_{\mu}\right)\,,
\label{rlcg24}
\end{eqnarray}

\noindent while for $d=2$, the action is simply the GZ action, given by eq.(\ref{rlcg1}). Notice that $M^2\geq0$, otherwise the theory would be plagued by a tachyon in the $(\omega,\bar\omega)$-sector.

The action (\ref{rlcg24}) enjoys a non-perturbative nilpotent BRST symmetry, which is precisely the same as (\ref{npbrst38}) with the only modification of

\begin{equation}
s_{\gamma^2}\bar{\omega}^{ab}_{\mu}=\bar{\varphi}^{ab}_{\mu}-\int d^dy~g\gamma^2f^{cdb}A^{h,c}_{\mu}(y)\left(\left[\EuScript{M}(A^h)- \mathds{1} M^2 \right]^{-1}\right)^{da}_{yx}\,,
\label{rlcg25}
\end{equation}

\noindent where $\mathds{1}$ stands for the identity operator. Therefore, the RGZ action in LCG takes into account the presence of dimension-two condensates and is invariant under (\ref{npbrst38}) and (\ref{rlcg25}), a non-perturbative nilpotent BRST symmetry. This construction shows how the RGZ action is compatible with the non-perturbative nilpotent BRST symmetry proposed in Ch.~\ref{nonpBRSTRGZ}.

Besides the IR singularity of the one-loop computation of the condensates, there is an additional problem, of a more fundamental nature, that prohibits the dynamical occurrence of refinement in $d=2$. We recall here that the starting point was to avoid a large class of infinitesimal gauge copies in LCG. This was achieved by requiring that $\EuScript{M}^{ab}(A^h)>0$. For a general classical field $A^h$ we can use Wick's theorem to invert the operator $\EuScript{M}^{ab}(A^h)$. In momentum space, one finds \cite{Capri:2012wx,Capri:2015ixa}

\begin{equation}
\Braket{p | \frac{1}{\EuScript{M}^{ab}(A^h)} |p}=\mathcal{G}^{ab}(A^h, p^2)=\frac{\delta^{ab}}{N^2-1}\mathcal{G}^{cc}(A^h,p^2)=\frac{\delta^{ab}}{N^2-1}\frac{1+\sigma(A^h,p^2)}{p^2}\,.
\label{rlcg26}	
\end{equation}

\noindent At zero momentum, we find consequently \cite{Capri:2012wx}

\begin{equation}
\sigma(A^h,0)=-\frac{g^2}{Vd(N^2-1)}\int \frac{d^dk}{(2\pi)^d}\frac{d^dq}{(2\pi)^d} A_\mu^{h,ab}(-k) \left[(\EuScript{M}(A^h))^{-1}\right]^{bc}_{k-q}A_\mu^{h,ca}(q)\,,
\label{rlcg27}
\end{equation}

\noindent with $A_\mu^{h,ab}(q)\equiv f^{abc}A^{h,c}_{\mu}(q)$, and this leads to the exact identification

\begin{equation}
\sigma(A^h,0)=\frac{H(A^h)}{Vd(N^2-1)}\,.
\label{rlcg28}
\end{equation}

\noindent At the level of expectation values, we can rewrite eq.~\eqref{rlcg26} as

\begin{equation}
\mathcal{G}^h(p^2)= \braket{ \mathcal{G}^{aa}(A^h, p^2)}_{\mathrm{conn}}=\frac{1}{p^2(1-\braket{\sigma(A^h,p^2)}_{\mathrm{1PI}})},
\label{rlcg29}
\end{equation}

\noindent so that we must impose at the level of the path integral

\begin{equation}
\sigma(0)\equiv\braket{\sigma(A^h,0)}_{\mathrm{1PI}} < 1
\label{rlcg30}
\end{equation}

\noindent to ensure a positive operator\footnote{To avoid confusion, we emphasize that the quantity $\sigma(k)$ introduced in eq.~\eqref{rlcg31} is not referring to the (inverse) Faddeev-Popov ghost propagator for general $\alpha$. The connection with the ghost self-energy is only valid for the Landau gauge $\alpha=0$.} $\EuScript{M}(A^h)$.

\begin{figure}[t]\label{ghostAh}
    \begin{center}
        \scalebox{0.4}{\includegraphics{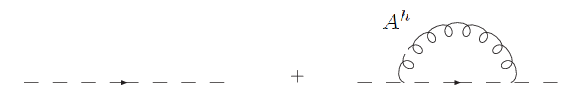}}
        \caption{The leading order correction to $\EuScript{M}^{-1}(A^h)$. The wiggled line represents a $\braket{A^h A^h}$ propagator, the broken line represents the tree level approximation to $\EuScript{M}^{-1}(A^h)$, $\frac{1}{p^2}$ in momentum space.}
    \end{center}
\end{figure}

We will now show that in the presence of the extra mass scale $M^2$, it is impossible to comply with the necessary condition \eqref{rlcg30} in $d=2$. It is sufficient to work at leading order, as the problem will already reveal itself at this order. Since this corresponds to working at order $g^2$ with two factors of $g$ already coming from the term $gf^{abc}A^{h,c}_\mu\partial_\mu$ in the operator $\EuScript{M}^{ab}(A^h)$, we may cut off the expansion of $A^h$ at order $g^0$, \textit{i.e.}~use the approximation \eqref{rlcg17}. Doing so, we find at leading order (see also Fig.~\ref{ghostAh})

\begin{equation}
\sigma(k) = g^2N \frac{k_{\mu} k_{\nu}}{k^2} \int \frac{d^2
q}{(2\pi)^2} \frac{1}{(k-q)^2} \frac{q^2 + M^2}{q^4 + (M^2+m^2)q^2 +
\lambda^4}\left(\delta_{\mu\nu}-\frac{q_\mu q_\nu}{q^2}\right)\,.
\label{rlcg31}
\end{equation}

\noindent We set here $\lambda^4=2g^2N \gamma^4+m^2M^2$.  The quantity $\frac{q^2 + M^2}{q^4 + (M^2+m^2)q^2 +\lambda^4}$ is the transverse piece of the would-be RGZ gluon propagator in $d=2$ (for the RGZ gluon propagator, see eq.(\ref{3.24})).

\noindent The above integral $\sigma(k)$ can be evaluated exactly using polar coordinates. Choosing the $q_x$-axis along
$\vec{k}$, we get

\begin{eqnarray}
\sigma(k) &=& \frac{g^2N}{4\pi^2}\int_0^{\infty}q d
q\frac{q^2+M^2}{q^4+(M^2+m^2)q^2+\lambda^4}\int_{0}^{2\pi} d\phi
\frac{1}{k^2+q^2-2qk\cos\phi}(1-\cos^2\phi)\nonumber\\
&=&\frac{g^2N}{4\pi}\left(\frac{1}{k^2}\int_0^{k}\frac{q(q^2+M^2)}{q^4+(M^2+m^2)q^2+\lambda^4}d
q+\int_k^{\infty}\frac{q^2+M^2}{q(q^4+(M^2+m^2)q^2+\lambda^4)}d
q\right)\,.\nonumber\\
\label{rlcg32}
\end{eqnarray}

\noindent where we employed $\vec{k}\cdot\vec{q}=kq\cos\phi$ and

\begin{equation}
\int_{0}^{2\pi}d\phi\frac{1-\cos^2\phi}{k^2+q^2-2qk\cos\phi}=  \frac{\pi}{q^ 2}\theta(q^2-k^2)+\frac{\pi}{k^ 2}\theta(k^2-q^2)\,.
\label{rlcg33}
\end{equation}

From the integrals appearing in \eqref{rlcg32}, we can extract the leading small $k^2$ behavior to be

\begin{equation}
\left.\sigma(k)\right|_{k^2\approx 0}\approx -\frac{g^2N}{8\pi}\frac{M^2}{\lambda^4}\ln(k^2)\,.
\label{rlcg34}		
\end{equation}

\noindent Since $M^2\geq 0$, we unequivocally find that $\sigma(k^2)$ will become (much) larger than $1$ if the momentum gets too small for $M^2>0$, that is it would become impossible to fulfill condition \eqref{rlcg30} and thus to ensure the positivity of $\EuScript{M}(A^h)$.

We are thus forced to conclude that $M^2=0$. Notice however that we are not able to prove that $m^2=0$. Indeed, if $M^2=0$, we are already back to the scaling case irrespective of the value for $m^2$. Scaling implies a vanishing of the transverse gluon form factor at zero momentum, which is in general sufficient to eliminate IR problems in the ghost form factor $\sigma(k^2)$, see \cite{Cucchieri:2012cb} for a general discussion.  Only an explicit discussion of the effective potential of the condensate related to $m^2$ (that is, $\braket{A^h A^h}$) will reveal whether it can be introduced or not into the theory. Though, this will not affect the conclusion that in $d=2$, a masssive/decoupling behavior is excluded.

\section{Gluon propagator} \label{GpropLCGAh}

As discussed in Sect.~\ref{npBRSTRGZLCG1}, the restriction of the path integral to a suitable region which is free of a large set of Gribov copies and is intimately related to the introduction of the Gribov parameter $\gamma$ generates dynamically dimension-two condensates. This generation is consistent in $d=3,4$, while in $d=2$ is absent. In this way, the gluon propagator is further affected by the introduction of such operators.

In $d=3,4$, the tree-level gluon two-point function is\footnote{In this expression we should keep in mind the meaning of indices and dimensions for different choices of $d$}

\begin{equation}
\langle A^{a}_{\mu}(k)A^{b}_{\nu}(-k)\rangle_{d=3,4} = \delta^{ab}\left[\frac{k^2+{M}^2}{(k^2+{m}^2)(k^2+{M}^2)+2g^2\gamma^4N}\left(\delta_{\mu\nu}-\frac{k_{\mu}k_{\nu}}{k^2}\right)+\frac{\alpha}{k^2}\frac{k_{\mu}k_{\nu}}{k^2}\right]\,.
\label{rlcg35}
\end{equation}

\noindent and in $d=2$,

\begin{equation}
\langle A^{a}_{\mu}(k)A^{b}_{\nu}(-k)\rangle_{D=2} = \delta^{ab}\left[\frac{k^2}{k^4+2g^2\gamma^4N}\left(\delta_{\mu\nu}-\frac{k_{\mu}k_{\nu}}{k^2}\right)+\frac{\alpha}{k^2}\frac{k_{\mu}k_{\nu}}{k^2}\right]\,.
\label{rlcg36}
\end{equation}

As is clear from (\ref{rlcg35}) and (\ref{rlcg36}), the longitudinal part of the tree level gluon propagator is not affected by non-perturbative effects, \textit{i.e.}, it has the same form as in the standard Faddeev-Popov quantization scheme. It is ensured by the non-perturbative BRST symmetry to hold to all orders and, therefore, is not a peculiarity of the tree-level approximation. At the current stage we can already provide a path integral proof of this fact and in the next chapter we shall discuss it from a more rigorous point of view.

For the path integral proof, we add a source term $\int \d^dx J^a b^{h,a}$ to the action  $S^{\mathrm{LCG}}_{\mathrm{RGZ}}$ to write (suppressing color indices)

\begin{equation}
\braket{b^{h,a}(x) b^{h,b}(y)}=\frac{\delta^2}{\delta J^{a}(x) \delta J^b(y)}\int [\EuScript{D}\Phi][\EuScript{D} b^h]e^{-S^{\mathrm{LCG}}_{\mathrm{RGZ}}}\Big|_{J=0}\,.
\label{rlcg37}
\end{equation}

As the $b^h$-field appears at most quadratically, we find exactly

\begin{equation}
\int [\EuScript{D}\Phi]e^{-S}= \int [\EuScript{D}\Phi]e^{-\int d^dx\left(\frac{1}{2\alpha} (\partial_{\mu} A^a_{\mu})^2+\frac{1}{\alpha}J^a\partial_{\mu} A^a_{\mu}+\frac{J^aJ^a}{2\alpha}+\textrm{rest}\right)}\,.
\label{rlcg38}		
\end{equation}

\noindent or, using \eqref{rlcg37}

\begin{equation}
\braket{b^{h,a}(x) b^{h,b}(y)}=\frac{1}{\alpha^2}\braket{\partial_{\mu} A^{a}_{\mu}(x) \partial_{\nu} A^{b}_{\nu}(y)}-\delta^{ab}\frac{\delta(x-y)}{\alpha}\,.
\label{rlcg39}	
\end{equation}

\noindent On the other hand, 

\begin{equation}
\braket{b^{h,a}(x) b^{h,b}(y)}=\braket{s_{\gamma^2}(\bar c^{a}(x) b^{h,b}(y) )}=0
\label{rlcg40}
\end{equation}

\noindent because of the non-perturbative BRST symmetry generated by $s_{\gamma^2}$. Combining \eqref{rlcg39} and \eqref{rlcg40}, we obtain

\begin{equation}
0=\frac{1}{\alpha^2}\braket{\partial_{\mu} A^{a}_{\mu}(x) \partial_{\nu} A^{b}_{\nu}(y)}-\delta^{ab}\frac{\delta(x-y)}{\alpha}\,,
\label{rlcg41}	
\end{equation}

\noindent which becomes in momentum space

\begin{equation}
\langle A^{a}_{\mu}(k)A^{b}_{\nu}(-k)\rangle =\delta^{ab}\left[\mathcal{D}_T(k^2)\left(\delta_{\mu\nu}-\frac{k_\mu k_\nu}{k^2}\right)+\alpha\frac{k_\mu k_\nu}{k^4}\right]\,,
\label{rlcg42}
\end{equation}

\noindent where the non-trivial information is encoded in the transverse form factor $D_T(k^2)$. For the transverse component of the gluon propagator, we see that a \textit{decoupling}-like behavior for $d=3,4$ is apparent, \textit{i.e.}, it has a non-vanishing form factor for zero momentum, while in $d=2$, a \textit{scaling}-like behavior is observed.

This result deserves some comments: 

\begin{itemize}

\item We derived the behavior of the gluon propagator for different values of $d$ in LCG. We cannot characterize how general this property is because the propagator is gauge dependent. Nevertheless, it was known and confirmed by lattice simulations the same behavior for the gluon propagator in Landau gauge, see \cite{Dudal:2008xd}. Of course, this is a particular case of LCG and, therefore, our results should reproduce Landau's features at least for $\alpha=0$. It turns out that the property seems to hold for arbitrary $\alpha$.

\item This scaling-like Vs. decoupling-like behavior in different dimensions was also observed in the MAG and in the Coulomb gauge, see \cite{Guimaraes:2015bra,Capri:2015pfa}. As we explicitly checked, this difference arises from typical IR singularities in $d=2$ and, therefore, we can conjecture the refinement will be potentially problematic in $d=2$ and these different behaviors will be present in different dimensions. 

\item Up to now, there is no lattice results for the gluon propagator in $d=2,3$ for LCG. We see then, that the non-perturbative BRST invariant construction could be partially test by confronting it against lattice simulations data regarding this quantity.

\end{itemize}

The gluon propagator presented here was computed at tree-level. The transverse component, irrespective of $d$, is equal to the tree-level gluon propagator in the (R)GZ set up (which is transverse). This is ensured by the independence from $\alpha$ of the Gribov parameter which appears in the propagator. On the other hand, we should emphasize that loop corrections can, in principle, add $\alpha$-dependent terms to the transverse sector of this propagator, displaying different results than Landau gauge ones. For this fact, we have no results to report yet and this is a future perspective. On the other hand, we are able to make some ``practical" comments: \textit{(i)} The assumption we made about the ``regularity" of zero-modes to prove Theorems~\ref{thma} and \ref{thma2} is exact as long as $\alpha << 1$, where all zero-modes are regular, but we have no \textit{a priori} strong argument to prevent completely the existence of ``pathological" zero-modes. Therefore, we cannot discard the possibility that as long as $\alpha$ increases, these pathological zero-modes play a relevant role and our elimination procedure does not take into account all infinitesimal Gribov copies and therefore not enough. Nevertheless, for many practical reasons, mainly from the numerical point of view, we restrict our analysis to values of alpha considered ``small" and then, we expect our procedure to be efficient. Up to now, the most recent lattice/functional results are roughly limited to $\alpha \leq 1$, but in principle, nothing (but practicality) forbids the choice \textit{e.g.} $\alpha = 10^10$. The current lattice and Schwinger-Dyson results indicate the $\alpha$ dependence of the transverse sector of the gluon two-point function should not be too strong, at least for values of $\alpha \leq 1$. For this reason, we are tempted to believe the loop corrections from the RGZ framework will introduce a not so strong $\alpha$-dependence on the transverse part. Just an explicit analysis will reveal if this is the case or not. Let us also point out that our reformulation of the (R)GZ action has introduced an important conceptual change in the variables used. The consistency of this formulation with gauge independence of physical quantities is a strong consistency check of the formulation, since in principle, for a BRST soft breaking theory as the (R)GZ, it is not trivial how to prove this. We will come back to this point in a more rigorous language in the next section; \textit{(ii)} The non-perturbative nilpotent BRST operator starts to show its power: Thanks to this symmetry, we were able to prove the exact form of the longitudinal part of the gluon propagator. In particular, in the standard Faddeev-Popov quantization, the perturbative (or standard) BRST invariance allows us to prove that the longitudinal part of the propagator is precisely the tree-level ones, \textit{i.e.} it does not receive loop corrections. Inhere, the same conclusion is achieved through the non-perturbative BRST symmetry, proving that unlike in the initial proposal of Ch.\ref{ch.5}, which there was no \textit{a priori} reason to declare the non-renormalization of the longitudinal sector. 

\section{A short look at the ghost propagator} \label{ghost-0}

Having worked out the expression of the gluon propagator in $d=4$, eq.\eqref{rlcg42}, we can have a short preliminary look at the ghost propagator. In this thesis, we limit ourselves to the one-loop order and a more complete and detailed analysis is beyond the our purposes.

For the one-loop ghost propagator in $d=4$, we have

\begin{equation}
\frac{1}{N^2-1} \sum_{ab} \delta^{ab}\langle {\bar c}^a(k) c^b(-k) \rangle_{1-loop} = \frac{1}{k^2} \frac{1}{1 - \omega(k^2)}  \,, 
\label{ghost-1}
\end{equation}

\noindent where

\begin{equation}
\omega(k^2) = \frac{Ng^2}{k^2(N^2-1)} \int \frac{d^4q}{(2\pi)^4} \frac{k_\mu (k-q)_\nu}{(k-q)^2} \langle A^{a}_{\mu}(q)A^{a}_{\nu}(-q)\rangle \,.
\label{ghost-2}
\end{equation}

\noindent From expression \eqref{rlcg42}, we get

\begin{equation}
\omega(k^2) = \omega^T(k^2) + \omega^L(k^2)  \,, 
\label{ghost-3}
\end{equation}

\noindent where $\omega^T(k^2) $ stands for the contribution corresponding to the transverse component of the gluon propagator, {\it i.e.}

\begin{equation}
\omega^T(k^2) = {Ng^2}\frac{k_\mu k_\nu}{k^2}  \int \frac{d^4q}{(2\pi)^4} \frac{1}{(k-q)^2} \frac{q^2+{M}^2}{(q^2+{m}^2)(q^2+{M}^2)+2g^2\gamma^4N}\left(\delta_{\mu\nu}-\frac{q_{\mu}q_{\nu}}{q^2}\right) \,, 
\label{ghost-4}
\end{equation}

\noindent while $\omega^L(k^2)$ is the contribution stemming from the longitudinal component, namely

\begin{equation}
\omega^L(k^2) = \alpha \frac{Ng^2}{k^2} \int \frac{d^4q}{(2\pi)^4} \frac{k_\mu (k-q)_\nu}{(k-q)^2} \frac{q_\mu q_\nu}{q^4}   \,. 
\label{ghost-5}
\end{equation}

Employing dimensional regularization in the $\overline{\mathrm{MS}}$ scheme, expression \eqref{ghost-5} yields

\begin{equation}
\omega^L(k^2) = \alpha \frac{Ng^2}{64 \pi^2} \log{\frac{k^2}{{\bar \mu}^2} }\,. 
\label{ghost-6}
\end{equation}

\noindent This result for $\omega^L(k^2)$ obviously coincides with the standard perturbative result at one loop. It is worth emphasizing that the result \eqref{ghost-6} is a consequence of the non-trivial fact that the longitudinal component of the gluon propagator is left unmodified by the addition of the horizon function $H(A^h)$. The presence of terms of the type of eq.\eqref{ghost-6} seems therefore unavoidable when evaluating the ghost form factor for non-vanishing values of the gauge parameter $\alpha$. When passing from the $1PI$ Green function to the connected one, such terms should lead to a ghost form factor which is severely suppressed in the infrared region $k^2\approx 0$ with respect to the case of the Landau gauge, {\it i.e.} $\alpha=0$, as discussed recently within the framework of the Dyson-Schwinger equations \cite{Aguilar:2015nqa,Huber:2015ria}.

\chapter{Non-perturbative BRST-invariant RGZ action: A local formulation} \label{locnonpBRST}

The construction of a would-be (R)GZ action for LCG forced us to introduce the non-local gauge invariant variable $A^h$. Fortunately, this reformulation enables us to define a nilpotent exact BRST symmetry. As a byproduct, we were able to cast the gap equation which fixes the Gribov parameter $\gamma$ in a self-consistent way in a gauge invariant form, a non-trivial but highly desirable feature, since it enters expectation values of physical operators. Also, the exact non-perturbative BRST symmetry allowed us to prove \textit{e.g.} the exact form of the longitudinal form of the gluon propagator. 

Although the construction of a non-perturbative exact \textit{and} nilpotent BRST symmetry corresponds to a strong conceptual advance in the (R)GZ setting, it is based on a non-local action and BRST transformations so far. This sort of non-locality forbids the use of the most powerful tools in QFT, an in particular, the new set of BRST transformations cannot be used for a renormalizability analysis. Hence, a local formulation of the (R)GZ action in LCG as well as of the non-perturbative BRST transformations would be a big technical step enabling us to use the full local QFT machinery. 

We remind the reader that the standard (R)GZ action also displays a non-locality due to the horizon function, see eq.(\ref{2.46}). As discussed in Sect.~\ref{localizationGZaction}, this non-local expression can be traded by a local one upon the introduction of Zwanziger's auxiliary fields. In the reformulation of the (R)GZ action with $A^{h}$, a novel non-locality arises from the very non-local form of $A^h$ itself as explained in Ap.~\ref{constructionAh}. Already at Ch.~\ref{ch.5}, this problem was faced in the approximation $A^h\approx A^T$. There, we were able to present a full local (R)GZ action via the introduction of extra auxiliary fields. 

In this chapter we present a complete localization of both (R)GZ action in LCG and non-perturbative BRST transformations. The procedure, again, is based on the introduction of auxiliary fields. At the end of the procedure, we obtain a full local framework allowing the use of local QFT principles and theorems. Our main goal is to present the localization procedure step by step always showing how to reobtain the non-local formulation. 

\section{Local non-perturbative BRST-invariant GZ action in LCG}

The non-perturbative BRST-invariant GZ action in LCG is given by

\begin{eqnarray}
S^{\mathrm{LCG}}_{\mathrm{GZ}}&=& S_{\mathrm{YM}}+s_{\gamma^2}\int d^dx~\bar{c}^{a}\left(\partial_{\mu}A^{a}_{\mu}-\frac{\alpha}{2}b^{h,a}\right)\nonumber\\
&-&\int d^dx\left(\bar{\varphi}^{ac}_{\mu}\left[\EuScript{M}(A^h)\right]^{ab}\varphi^{bc}_{\mu}-\bar{\omega}^{ac}_{\mu}\left[\EuScript{M}(A^h)\right]^{ab}\omega^{bc}_{\mu}+g\gamma^2f^{abc}A^{h,a}_{\mu}(\varphi+\bar{\varphi})^{bc}_{\mu}\right)\nonumber\\
&=&  S_{\mathrm{YM}} + \int d^dx\left(b^{h,a}\left(\partial_{\mu}A^{a}_{\mu}-\frac{\alpha}{2}b^{h,a}\right)+\bar{c}^{a}\partial_{\mu}D^{ab}_{\mu}c^{b}\right)\nonumber\\
&-&\int d^dx\left(\bar{\varphi}^{ac}_{\mu}\left[\EuScript{M}(A^h)\right]^{ab}\varphi^{bc}_{\mu}-\bar{\omega}^{ac}_{\mu}\left[\EuScript{M}(A^h)\right]^{ab}\omega^{bc}_{\mu}+g\gamma^2f^{abc}A^{h,a}_{\mu}(\varphi+\bar{\varphi})^{bc}_{\mu}\right)\,,\nonumber\\
\label{lrlcg1}
\end{eqnarray} 

\noindent where the set of Zwanziger's auxiliary fields were already introduced. We recall these fields were originally introduced to localize the standard horizon function $H_{L}(A)$ in the Landau gauge. In the present case, we have a horizon function with two sorts of non-localities,

\begin{equation}
H(A^h)=g^2\int d^dx d^dy~f^{abc}A^{h,b}_{\mu}(x)\left[\EuScript{M}^{-1}(A^{h})\right]^{ad}(x,y)f^{dec}A^{h,e}_{\mu}(y)\,,
\label{lrlcg2}
\end{equation}

\noindent namely the inverse of $\EuScript{M}(A^{h})$ which is similar to standard non-locality of the horizon function in the Landau gauge $H_{L}(A)$ and the non-local form of $A^{h}$ itself. The set of auxiliary fields $(\bar{\varphi},\varphi,\bar{\omega},\omega)$ is responsible to localize the first type of non-locality, as explicitly showed in (\ref{lrlcg1}). Still, the resulting action (\ref{lrlcg1}) is non-local due to the presence of $A^h$, which is explicitly written as a formal non-local power series, see eq.(\ref{ah19}). 

The localization procedure we propose for the $A^h$ field relies on the introduction of a Stueckelberg-like field $\xi^a$. This field appears as

\begin{equation}
h=\mathrm{e}^{ig\xi^a T^a}\equiv \mathrm{e}^{ig\xi}\,,
\label{lrlcg3}
\end{equation}

\noindent and we rewrite the $A^h$ field as\footnote{When color indices are suppressed, we are employing matrix variables.}

\begin{equation}
A^h_{\mu}=h^{\dagger}A_{\mu}h+\frac{i}{g}h^{\dagger}\partial_{\mu}h\,.
\label{lrlcg4}
\end{equation}

\noindent Clearly, to recover the non-local power series expression for $A^{h}$ from eq.(\ref{lrlcg5}) we still have to impose the transversality condition,

\begin{equation}
\partial_{\mu}A^h_{\mu}=0\,.
\label{lrlcg5}
\end{equation}

\noindent Before introducing constraint (\ref{rlcg5}), let us discuss some properties of eq.(\ref{lrlcg4}). First, the defining expression (\ref{lrlcg4}) is \textit{local} albeit non-polynomial on $\xi$. Second, under a gauge transformation with group element $V$,

\begin{equation}
A_{\mu}\,\,\longrightarrow\,\, A'_{\mu}=V^{\dagger}A_{\mu}V+\frac{i}{g}V^{\dagger}\partial_{\mu}V\,,\,\,\,\,h\,\,\longrightarrow\,\, h'=V^{\dagger}h\,,\,\,\,\,h^{\dagger}\,\,\longrightarrow\,\, h'^{\dagger}=h^{\dagger}V
\label{lrlcg6}
\end{equation} 

\noindent it is very simple to check

\begin{equation}
A^{h}_{\mu},\,\longrightarrow\,\, A'^{h}_{\mu} = A^{h}_{\mu}\,,
\label{lrlcg7}
\end{equation}

\noindent which establishes the gauge invariance of $A^{h}_{\mu}$. To obtain the expression of $A^{h}_{\mu}$ in terms of $\xi$, we refer to Ap.~\ref{constructionAh}, where this computation was explicitly carried out where instead of a field $\xi$, a gauge parameter $\phi$ is used. This establishes the promotion of a gauge parameter to a dynamical field of the theory. Moreover, we should demand transversality of the now local field $A^{h}_{\mu}$. This is enforced by the introduction of a Lagrangian multiplier $\tau^a$ in the following way,

\begin{equation}
S_{\tau}=\int d^dx~\tau^a\partial_{\mu}A^{h,a}_{\mu}\,.
\label{lrlcg8}
\end{equation}

\noindent Finally, the \textit{local} GZ action in LCG, $S^{\mathrm{loc}}_{\mathrm{LCG}}$ is

\begin{eqnarray}
S^{\mathrm{loc}}_{\mathrm{LCG}} &=&  S_{\mathrm{YM}} + \int d^dx\left(b^{h,a}\left(\partial_{\mu}A^{a}_{\mu}-\frac{\alpha}{2}b^{h,a}\right)+\bar{c}^{a}\partial_{\mu}D^{ab}_{\mu}c^{b}\right)-\int d^dx\left(\bar{\varphi}^{ac}_{\mu}\left[\EuScript{M}(A^h)\right]^{ab}\varphi^{bc}_{\mu}\right.\nonumber\\
&-&\left.\bar{\omega}^{ac}_{\mu}\left[\EuScript{M}(A^h)\right]^{ab}\omega^{bc}_{\mu}+g\gamma^2f^{abc}A^{h,a}_{\mu}(\varphi+\bar{\varphi})^{bc}_{\mu}\right)+\int d^dx~\tau^a\partial_{\mu}A^{h,a}_{\mu}\,,
\label{lrlcg9}
\end{eqnarray} 

\noindent where $A^{h,a}_{\mu}$ is given by eq.(\ref{lrlcg4}) supplied by eq.(\ref{lrlcg3}). Expression (\ref{lrlcg9}) is a \textit{local} action which reduces to the non-local expression (\ref{lrlcg1}) upon elimination of the auxiliary fields. It is worth underlining that albeit local, the local GZ action in LCG (\ref{lrlcg9}) is a non-polynomial expression. With expression (\ref{lrlcg9}) at our disposal, we now proceed to the localization of the non-perturbative BRST transformations. 

\section{Localization of the non-perturbative BRST transformations}

In Ch.~\ref{nonpBRSTRGZ} we introduced the non-perturbative BRST transformations which correspond to an exact symmetry of the GZ action \textit{and} enjoy nilpotency. As argued in the same chapter, this sort of symmetry is viable if these transformations are non-local and the explicit construction shows this is the case. To make use of the powerful BRST set up, it is important to cast these transformations in local way. As we also pointed out, the non-nilpotent set of BRST transformation which also forms a symmetry of the GZ action in the Landau gauge was localized in \cite{Dudal:2010hj}. In this section, we localize the nilpotent non-perturbative BRST transformations (\ref{npbrst38}). Before doing that, we set the standard/perturbative BRST transformations for the auxiliary field introduced in previous section. 

The Stueckelberg field $\xi$ enters in the GZ action inside expression (\ref{lrlcg3}). For $h$, the BRST transformation is given by

\begin{equation}
sh=-igch\,.
\label{lrlcg10}
\end{equation}

\noindent Expanding $h$ in power series, we can write

\begin{eqnarray}
s\left(\mathrm{e}^{ig\xi}\right)&=&-igc\left(\mathrm{e}^{ig\xi}\right)\Rightarrow s\left(1+ig\xi-\frac{g^2}{2}\xi\xi-i\frac{g^3}{3!}\xi\xi\xi + \ldots\right)=-igc\left(1+ig\xi\phantom{\frac{1}{2}}\right.\nonumber\\
&-&\left.\frac{g^2}{2}\xi\xi +\ldots\right)\Rightarrow igs\xi - \frac{g^{2}}{2}((s\xi)\xi+\xi(s\xi))-i\frac{g^3}{3!}((s\xi)\xi+\xi(s\xi)\xi\nonumber\\
&+&\xi\xi(s\xi))+\ldots=-igc+g^2c\xi+i\frac{g^3}{2}c\xi\xi + \ldots\,\,\Rightarrow\nonumber\\
s\xi&=&-c-igc\xi+\frac{g^2}{2}c\xi\xi-i\frac{g}{2}\left[\left(-c-igc\xi-i\frac{g}{2}((s\xi)\xi+\xi(s\xi))\right)\xi\right.\nonumber\\
&+&\left.\xi\left(-c-igc\xi-i\frac{g}{2}((s\xi)\xi+\xi(s\xi))\right)\right]-\frac{g^2}{3!}(c\xi\xi+\xi c\xi + \xi\xi c)+\ldots\,.
\label{lrlcg11}
\end{eqnarray}

\noindent We can solve eq.(\ref{lrlcg11}) iteratively to obtain an explicit expression for $s\xi$,

\begin{equation}
s\xi^a=-c^a+\frac{g}{2}f^{abc}c^b \xi^c-\frac{g^2}{12}f^{acd}f^{cmb}c^m \xi^b \xi^d + \mathcal{O}(g^3)\,.
\label{lrlcg12}
\end{equation}

\noindent As discussed in Ap.~\ref{constructionAh}, the field $A^{h}_{\mu}$ should be BRST invariant. Written in terms of $h$, eq.(\ref{lrlcg4}), we can easily check it,

\begin{eqnarray}
sA^{h}_{\mu} &=& (sh^{\dagger})A_{\mu}h+h^{\dagger}(sA_{\mu})h+h^{\dagger}A_{\mu}(sh)+\frac{i}{g}(sh^{\dagger})\partial_{\mu}h+\frac{i}{g}h^{\dagger}\partial_{\mu}(sh) \nonumber \\
&=& (igh^{\dagger}c)A_{\mu}h-h^{\dagger}(\partial_{\mu}c)h+igh^{\dagger}[A_{\mu},c]h-ig h^{\dagger}A_{\mu}ch-h^{\dagger}c\partial_{\mu}h+h^{\dagger}\partial_{\mu}(ch)\nonumber\\
&=& igh^{\dagger}c A_{\mu}h-h^{\dagger}(\partial_{\mu}c)h + igh^{\dagger}A_{\mu}ch-igh^{\dagger}c A_{\mu}h-igh^{\dagger}A_{\mu}ch-h^{\dagger}c\partial_{\mu}h\nonumber\\
&+&h^{\dagger}(\partial_{\mu}c)h+h^{\dagger}c\partial_{\mu}h=0\,.
\label{lrlcg13}
\end{eqnarray}

\noindent Finally, the auxiliary field $\tau^a$, introduced to enforce the transversality of $A^{h}_{\mu}$, is coupled to the BRST invariant $\partial_{\mu}A^{h}_{\mu}$ term. Therefore, we demand $\tau^a$ to be BRST invariant to guarantee $S_{\tau}$ is BRST invariant,

\begin{equation}
s\tau^{a}=0\,.
\label{lrlcg14}
\end{equation}

\noindent In summary, the standard/perturbative BRST transformations are

\begin{align}
sA^{a}_{\mu}&=-D^{ab}_{\mu}c^{b}\,,     && s\varphi^{ab}_{\mu}=\omega^{ab}_{\mu}\,,&&&      sh =-igch\,,\nonumber\\
sc^a&=\frac{g}{2}f^{abc}c^{b}c^{c}\,,     && s\omega^{ab}_{\mu}=0\,,&&&     sA^{h,a}_{\mu}=0\,,\nonumber\\
s\bar{c}^{a}&=b^{a}\,,   && s\bar{\omega}^{ab}_{\mu}=\bar{\varphi}^{ab}_{\mu}\,,&&&     s\tau^{a} =0\,.\nonumber\\
sb^a&=0\,,         && s\bar{\varphi}^{ab}_{\mu}=0\,. 
\label{lrlcg15}
\end{align}

\noindent We see thus the standard BRST breaking of (\ref{lrlcg9}) is the same as in the non-local formulation (\ref{lrlcg1}). From the non-local transformations (\ref{npbrst38}), we see this breaking is healed by the non-local modification of the BRST transformation of $\bar{\omega}^{ab}_{\mu}$. To localize this structure, we must introduce auxiliary fields whose equations of motion reproduce the non-local modification. To accomplish this task, the following trick is useful: First we rewrite the starting point, the horizon function as

\begin{equation}
\mathrm{e}^{-\gamma^4 H(A^{h})}=\mathrm{e}^{-\frac{\gamma^4}{2} H(A^{h})}\mathrm{e}^{-\frac{\gamma^4}{2} H(A^{h})}\,.
\label{lrlcg16}
\end{equation}

\noindent The second step is to use Zwanziger's localization procedure to localize each factor of eq.(\ref{lrlcg16}) separately,

\begin{equation}
\mathrm{e}^{-\frac{\gamma^4}{2} H(A^{h})}=\int \left[\EuScript{D}\varphi\right]\left[\EuScript{D}\bar{\varphi}\right]\left[\EuScript{D}\omega\right]\left[\EuScript{D}\bar{\omega}\right]\mathrm{e}^{-\int d^dx\left(-\bar{\varphi}^{ac}_{\mu}\EuScript{M}^{ab}(A^h)\varphi^{bc}_{\mu}+\bar{\omega}^{ac}_{\mu}\EuScript{M}^{ab}(A^h)\omega^{bc}_{\mu}+g\frac{\gamma^2}{\sqrt{2}}f^{abc}A^{h,a}_{\mu}(\varphi+\bar{\varphi})^{bc}_{\mu}\right)}
\label{lrlcg17}
\end{equation}

\noindent and

\begin{equation}
\mathrm{e}^{-\frac{\gamma^4}{2} H(A^{h})}=\int \left[\EuScript{D}\beta\right]\left[\EuScript{D}\bar{\beta}\right]\left[\EuScript{D}\zeta\right]\left[\EuScript{D}\bar{\zeta}\right]\mathrm{e}^{-\int d^dx\left(-\bar{\beta}^{ac}_{\mu}\EuScript{M}^{ab}(A^h)\beta^{bc}_{\mu}+\bar{\zeta}^{ac}_{\mu}\EuScript{M}^{ab}(A^h)\zeta^{bc}_{\mu}-g\frac{\gamma^2}{\sqrt{2}}f^{abc}A^{h,a}_{\mu}(\beta+\bar{\beta})^{bc}_{\mu}\right)}
\label{lrlcg18}
\end{equation}

\noindent where we explore the symmetry $\gamma^2\,\leftrightarrow\, -\gamma^2$ in the localization procedure. The set $(\varphi,\bar{\varphi},\omega,\bar{\omega})$ are the standard localizing Zwanziger's fields, $(\beta,\bar{\beta})$ are commuting fields and $(\zeta,\bar{\zeta})$ anti-commuting ones. 

We see from eq.(\ref{lrlcg18}) that the equation of motion of $\beta$ gives

\begin{equation}
-\EuScript{M}^{ab}(A^h)\bar{\beta}^{bc}_{\mu}-g\frac{\gamma^2}{\sqrt{2}}f^{bac}(A^h)^{b}_{\mu}=0\,\,\,\Rightarrow\,\,\,\bar{\beta}^{ab}_{\mu}=-g\frac{\gamma^2}{\sqrt{2}}f^{kcb}(A^h)^{k}_{\mu}\left[\EuScript{M}^{-1}(A^h)\right]^{ac}\,,
\label{lrlcg19}
\end{equation}

\noindent which is precisely the non-local structure present in the non-perturbative BRST transformation for $\bar{\omega}$. In eq.(\ref{lrlcg19}), we used the fact that $\EuScript{M}(A^h)$ is positive and, thus, invertible. Immediately, we see that the field $\bar{\beta}$ allows the localization of the non-perturbative BRST transformations and the non-local expression is recovered by the integration of $(\beta,\bar{\beta},\zeta,\bar{\zeta})$. 

The local GZ action in LCG written with the auxiliary fields introduced in eq.(\ref{lrlcg18}) is written as

\begin{eqnarray}
S^{\mathrm{loc}}_{{\mathrm{LCG}}}&=& S_{\mathrm{YM}}+\int d^dx\left(b^{h,a}\partial_{\mu}A^{a}_{\mu}-\frac{\alpha}{2}b^{h,a}b^{h,a}+\bar{c}^{a}\partial_{\mu}D^{ab}_{\mu}(A)c^b\right)+\int d^dx~\tau^a\partial_{\mu}A^{h,a}_{\mu}\nonumber\\
&-&\int d^dx\left(\bar{\varphi}^{ac}_{\mu}\EuScript{M}^{ab}(A^h)\varphi^{bc}_{\mu}-\bar{\omega}^{ac}_{\mu}\EuScript{M}^{ab}(A^h)\omega^{bc}_{\mu}-g\frac{\gamma^2}{\sqrt{2}}f^{abc}A^{h,a}_{\mu}(\varphi+\bar{\varphi})^{bc}_{\mu}\right)\nonumber\\
&-&\int d^dx\left(\bar{\beta}^{ac}_{\mu}\EuScript{M}^{ab}(A^h)\beta^{bc}_{\mu}-\bar{\zeta}^{ac}_{\mu}\EuScript{M}^{ab}(A^h)\zeta^{bc}_{\mu}+g\frac{\gamma^2}{\sqrt{2}}f^{abc}A^{h,a}_{\mu}(\beta+\bar{\beta})^{bc}_{\mu}\right)\,. \nonumber\\
\label{lrlcg20}
\end{eqnarray}

\noindent Action (\ref{lrlcg20}) is invariant under the non-perturbative \textit{and} nilpotent BRST transformations, generated by the operator $s_{l}$,

\begin{align}
s_{l}A^{a}_{\mu}&=-D^{ab}_{\mu}c^{b}\,,     && s_{l}\varphi^{ab}_{\mu}=\omega^{ab}_{\mu}\,,&&&      s_{l}h =-igch\,,&&&& s_{l}\beta^{ab}_{\mu}=\omega^{ab}_{\mu}\,,\nonumber\\
s_{l}c^a&=\frac{g}{2}f^{abc}c^{b}c^{c}\,,     && s_{l}\omega^{ab}_{\mu}=0\,,&&&     s_{l}A^{h,a}_{\mu}=0\,,&&&& s_{l}\bar{\zeta}^{ab}_{\mu}=0\,,\nonumber\\
s_{l}\bar{c}^{a}&=b^{h,a}\,,   && s_{l}\bar{\omega}^{ab}_{\mu}=\bar{\varphi}^{ab}_{\mu}+\bar{\beta}^{ab}_{\mu}\,,&&&     s_{l}\tau^{a} =0\,,&&&&s_{l}\zeta^{ab}_{\mu}=0\,,\nonumber\\
s_{l}b^{h,a}&=0\,,         && s_{l}\bar{\varphi}^{ab}_{\mu}=0\,,&&&s_{l}\bar{\beta}^{ab}_{\mu}=0\,.
\label{lrlcg21}
\end{align}

\noindent The structure of the BRST transformations for $(\beta,\bar{\beta},\zeta,\bar{\zeta})$ naturally arises from the nilpotency requirement of $s_{l}$ and invariance of (\ref{lrlcg20}). Hence, $s_l$ defines a \textit{local, non-perturbative and nilpotent} BRST symmetry of action (\ref{lrlcg20}). The non-local GZ action, written in terms of the horizon function, is manifestly invariant under  $\gamma^2\,\leftrightarrow\, -\gamma^2$. In local form, this symmetry is manifest through a discrete transformation of the auxiliary localizing fields, namely

\begin{align}
\varphi^{ab}_{\mu}&\rightarrow -\beta^{ab}_{\mu}\,, &&\bar{\varphi}^{ab}_{\mu}\rightarrow -\bar{\beta}^{ab}_{\mu}\,,\nonumber\\
\beta^{ab}_{\mu}&\rightarrow -\varphi^{ab}_{\mu}\,, &&\bar{\beta}^{ab}_{\mu}\rightarrow -\bar{\varphi}^{ab}_{\mu}\,, \nonumber\\
\omega^{ab}_{\mu}&\rightarrow -\zeta^{ab}_{\mu}\,, &&\bar{\omega}^{ab}_{\mu}\rightarrow -\bar{\zeta}^{ab}_{\mu}\,, \nonumber\\
\zeta^{ab}_{\mu}&\rightarrow -\omega^{ab}_{\mu}\,, &&\bar{\zeta}^{ab}_{\mu}\rightarrow -\bar{\omega}^{ab}_{\mu}\,.
\label{lrlcg22}
\end{align}

Conveniently, we can introduce the following variables

\begin{eqnarray}
\kappa^{ab}_{\mu}&=&\frac{1}{\sqrt{2}}\left(\varphi^{ab}_{\mu}+\beta^{ab}_{\mu}\right)\,,\nonumber\\
\lambda^{ab}_{\mu}&=& \frac{1}{\sqrt{2}}\left(\varphi^{ab}_{\mu}-\beta^{ab}_{\mu}\right)\,,
\label{lrlcg23}
\end{eqnarray}

\noindent which satisfy

\begin{eqnarray}
s_l\kappa^{ab}_{\mu}&=&\sqrt{2}\omega^{ab}_{\mu}\,,\nonumber\\
s_l\lambda^{ab}_{\mu}&=&0\,.
\label{lrlcg24}
\end{eqnarray}

\noindent From relations (\ref{lrlcg23}), we can write

\begin{eqnarray}
\varphi^{ab}_{\mu}&=&\frac{1}{\sqrt{2}}\left(\kappa^{ab}_{\mu}+\lambda^{ab}_{\mu}\right)\,,\nonumber\\
\beta^{ab}_{\mu}&=& \frac{1}{\sqrt{2}}\left(\kappa^{ab}_{\mu}-\lambda^{ab}_{\mu}\right)\,,
\label{lrlcg25}
\end{eqnarray}

\noindent and plugging eq.(\ref{lrlcg25}) into (\ref{lrlcg20}), we obtain

\begin{eqnarray}
S^{\mathrm{loc}}_{{\mathrm{LCG}}}&=& S_{\mathrm{YM}}+\int d^dx\left(b^{h,a}\partial_{\mu}A^{a}_{\mu}-\frac{\alpha}{2}b^{h,a}b^{h,a}+\bar{c}^{a}\partial_{\mu}D^{ab}_{\mu}(A)c^b\right)+\int d^dx~\tau^a\partial_{\mu}A^{h,a}_{\mu}\nonumber\\
&-&\int d^dx\left(\bar{\lambda}^{ac}_{\mu}\EuScript{M}^{ab}(A^h)\lambda^{bc}_{\mu}-\bar{\zeta}^{ac}_{\mu}\EuScript{M}^{ab}(A^h)\zeta^{bc}_{\mu}-g\gamma^2f^{abc}A^{h,a}_{\mu}(\lambda+\bar{\lambda})^{bc}_{\mu}\right)\nonumber\\
&-&\int d^dx\left(\bar{\kappa}^{ac}_{\mu}\EuScript{M}^{ab}(A^h)\kappa^{bc}_{\mu}-\bar{\omega}^{ac}_{\mu}\EuScript{M}^{ab}(A^h)\omega^{bc}_{\mu}\right)\,. 
\label{lrlcg26}
\end{eqnarray}

\noindent In terms of variables (\ref{lrlcg23}), the complete set of local non-perturbative BRST transformations is

\begin{align}
s_{l}A^{a}_{\mu}&=-D^{ab}_{\mu}c^{b}\,,     && s_{l}\kappa^{ab}_{\mu}=\sqrt{2}\omega^{ab}_{\mu}\,,&&&      s_{l}h =-igch\,,&&&& s_{l}\zeta^{ab}_{\mu}=0\,,\nonumber\\
s_{l}c^a&=\frac{g}{2}f^{abc}c^{b}c^{c}\,,     && s_{l}\omega^{ab}_{\mu}=0\,,&&&     s_{l}A^{h,a}_{\mu}=0\,,&&&& s_{l}\bar{\lambda}^{ab}_{\mu}=0\,,\nonumber\\
s_{l}\bar{c}^{a}&=b^{h,a}\,,   && s_{l}\bar{\omega}^{ab}_{\mu}=\sqrt{2}\bar{\kappa}^{ab}_{\mu}\,,&&&     s_{l}\tau^{a} =0\,,&&&&s_{l}\lambda^{ab}_{\mu}=0\,.\nonumber\\
s_{l}b^{h,a}&=0\,,         && s_{l}\bar{\kappa}^{ab}_{\mu}=0\,,&&&s_{l}\bar{\zeta}^{ab}_{\mu}=0\,,
\label{lrlcg27}
\end{align}

\noindent From transformations (\ref{lrlcg27}), we identify immediately $(A^{h},\bar{\lambda},\lambda,\bar{\zeta},\zeta,\tau)$ as BRST singlets. Endowed with (\ref{lrlcg27}), we can give a simple algebraic proof of the physicality of $\gamma$,

\begin{equation}
s_{l}\frac{\partial S^{\mathrm{loc}}_{\mathrm{LCG}}}{\partial\gamma^2}=s_{l}\int d^dx~gf^{abc}A^{h,a}_{\mu}(\lambda+\bar{\lambda})^{bc}_{\mu}=0\,,\,\,\,\mathrm{and}\,\,\,\frac{\partial S^{\mathrm{loc}}_{\mathrm{LCG}}}{\partial\gamma^2}\neq s_l\left(\ldots\right)\,.
\label{lrlcg28}
\end{equation}

\noindent Therefore, $\gamma$ couples to a closed BRST form, establishing its physical nature. This is in agreement with the gauge invariant gap equation (\ref{rlcg4}) which establishes the independence from $\alpha$ of $\gamma$. On the other hand, the parameter $\alpha$ is introduced in (\ref{lrlcg26}) via a BRST exact term, and due to the invariance of (\ref{lrlcg26}) under (\ref{lrlcg27}), we ensure it cannot enter physical quantities and, in particular, to (\ref{lrlcg28}).  

\section{A comment on the Stueckelberg field}

Action (\ref{lrlcg20}) is a local action which implements the restriction of the path integral domain to $\EuScript{M}(A^h)>0$ in a non-perturbative BRST invariant way. With this at our disposal we may write down its correspondent Wards identities following the Quantum Action Principle and draw possible conclusions. In particular, the existence of a non-perturbative exact BRST symmetry allows the construction of a ``non-perturbative" Slavnov-Taylor identity which \textit{e.g.} should be responsible to control the gauge parameter dependence. However, before moving to this direction, we point out a feature which appears at the level of the computation of tree-level propagators associated with (\ref{lrlcg20}). To avoid an abrupt interruption in the text, we collect the propagators in Ap.~\ref{proplrgzlcg}. As an important particular case, the Stueckelberg field has a propagator which behaves like

\begin{equation}
\langle \xi^{a}(p)\xi^{b}(-p)\rangle = \alpha\frac{\delta^{ab}}{p^4}\,,
\label{lrlcg29}
\end{equation}

\noindent which might generate IR singularities for explicit computations. Expression (\ref{lrlcg29}) can be regularized in a BRST-invariant fashion though. To show this, we begin by the observation that, from eq.(\ref{lrlcg12}), we can write the following identity,

\begin{equation}
s\left(\frac{\xi^a\xi^a}{2}\right)=-c^a\xi^a\,.
\label{lrlcg30}
\end{equation}
This is proved as follows: Expanding the exponential in Taylor series of $s_l\left(\mathrm{e}^{ig\xi}\right)$, one gets

\begin{equation}
s_{l} \left( 1 + ig \xi - \frac{g^2}{2} \xi \xi - i \frac{g^3}{3!} \xi \xi \xi + \cdot  \cdot  \right)  = -igc \left(   1 + ig \xi - \frac{g^2}{2} \xi \xi - i \frac{g^3}{3!} \xi \xi \xi + \cdot  \cdot \right)  \;. \label{lrlcg30.1}
\end{equation}
Multiplying both sides of eq.\eqref{lrlcg30.1} by $\xi$, yields

\begin{equation}
\xi \; s_{l} \left( 1 + ig \xi - \frac{g^2}{2} \xi \xi - i \frac{g^3}{3!} \xi \xi \xi + \cdot  \cdot  \right)  = -ig \xi \;c \left(   1 + ig \xi - \frac{g^2}{2} \xi \xi - i \frac{g^3}{3!} \xi \xi \xi + \cdot  \cdot \right)  \;. \label{lrlcg30.2}
\end{equation}
Equating order by order in $g$ the expression \eqref{lrlcg30.2} immediately provides eq.\eqref{lrlcg30} at leading order. Now, we can straightforwardly introduce the following non-perturbative BRST exact term

\begin{equation}
S_{\mathrm{R}}=s_l\int d^dx~\left(\frac{1}{2}\rho\xi^a\xi^a\right)=\int d^dx~\left(\frac{1}{2}\tilde{m}^{4}\xi^a\xi^a+\rho c^a\xi^a\right)\,,
\label{lrlcg31}
\end{equation}

\noindent with 

\begin{equation}
s_l\rho=\tilde{m}^4\,\,\,\,\,\,\mathrm{and}\,\,\,\,\,\, s_l\tilde{m}^4=0\,.
\label{lrlcg32}
\end{equation}

\noindent With the introduction of the manifest non-perturbative BRST invariant term (\ref{lrlcg31}), the Stueckelberg field propagator is written as 

\begin{equation}
\langle \xi^{a}(p)\xi^{b}(-p)\rangle = \alpha\frac{\delta^{ab}}{p^4+\alpha \tilde{m}^{4}}\,,
\label{lrlcg33}
\end{equation}

\noindent which, thanks to the introduction of (\ref{lrlcg32}) is regularized in the IR. To recover the ``correct" physical limit of the original theory, after explicit computations we should take $\tilde{m}^4\rightarrow 0$. Although for the purposes of this thesis we will not perform any explicit computation under this framework, it is somehow reassuring we are able to implement a BRST-invariant regularization for this field. 

\section{Non-perturbative Ward identities}

With a fully local and BRST invariant set up at our disposal, an immediate natural task is the analysis of the Ward identities and which sort of restrictions they impose to our quantum action. In particular, the theory has an exact Slavnov-Taylor identity associated with the non-perturbative BRST invariance. 

In the literature, a trick that provided an explicit control over the gauge parameter $\alpha$-dependence is to introduce a BRST transformation of the gauge parameter itself,

\begin{equation}
s_l\alpha=\chi\,,\,\,\,\,\,\, s_l\chi=0\,,
\label{lrlcg34}
\end{equation}

\noindent with $\chi$ an anti-commuting parameter. We refer to \cite{Piguet:1984js} for further details on this contruction. The set of the already introduced local BRST transformations plus those introuced in (\ref{lrlcg34}) form the so-called \textit{extended BRST transformations}. As a final ``preparation step" to introduce the Ward identities, we couple each non-linear BRST transformation to an external source, a standard procedure (see Ap.~\ref{appendixA}),

\begin{equation}
S_{\mathrm{ext}}=\int d^dx~\left(\Omega^{a}_{\mu}s_l A^{a}_{\mu}+L^{a}s_l c^a+K^{a}s_l\xi^{a}\right)\,,
\label{lrlcg35}
\end{equation}

\noindent with 

\begin{equation}
s_l\Omega^{a}_{\mu}=s_l L^a = s_l \xi^a=0\,.
\label{lrlcg36}
\end{equation}

\noindent This automatically implies

\begin{equation}
s_l S_{\mathrm{ext}}=0\,.
\label{lrlcg37}
\end{equation}

\noindent Hence, the full action we start with, taking into account the extended BRST transformations, the IR regulator for the Stueckelberg field and external sources coupled to non-linear BRST transformations is given by

\begin{eqnarray}
\Sigma &=& S_{\mathrm{YM}}+\int d^dx~\left(b^{h,a}\partial_{\mu}A^{a}_{\mu}-\frac{\alpha}{2}b^{h,a}b^{h,a}+\bar{c}^{a}\partial_{\mu}D^{ab}_{\mu}(A)c^b+\frac{\chi}{2}\bar{c}^{a}b^{h,a}\right)\nonumber\\
&+&\int d^dx\left(\tau^a\partial_{\mu}A^{h,a}_{\mu}-\bar{\varphi}^{ac}_{\mu}\EuScript{M}^{ab}(A^h)\varphi^{bc}_{\mu}-\bar{\omega}^{ac}_{\mu}\EuScript{M}^{ab}(A^h)\omega^{bc}_{\mu}-g\frac{\gamma^2}{\sqrt{2}}f^{abc}A^{h,a}_{\mu}(\varphi+\bar{\varphi})^{bc}_{\mu}\right)\nonumber\\
&-&\int d^dx\left(\bar{\beta}^{ac}_{\mu}\EuScript{M}^{ab}(A^h)\beta^{bc}_{\mu}-\bar{\zeta}^{ac}_{\mu}\EuScript{M}^{ab}(A^h)\zeta^{bc}_{\mu}+g\frac{\gamma^2}{\sqrt{2}}f^{abc}A^{h,a}_{\mu}(\beta+\bar{\beta})^{bc}_{\mu}\right)+S_{\mathrm{R}}+S_{\mathrm{ext}}\,. \nonumber\\
\label{lrlcg38}
\end{eqnarray}

\noindent By construction, action $\Sigma$ is invariant under $s_l$. The Slavnov-Taylor identity associated with this symmetry is

\begin{eqnarray}
\mathcal{S}(\Sigma)&=&\int d^dx\left[\frac{\delta\Sigma}{\delta\Omega^{a}_{\mu}}\frac{\delta\Sigma}{\delta A^{a}_{\mu}}+\frac{\delta\Sigma}{\delta L^{a}}\frac{\delta\Sigma}{\delta c^a}+\frac{\delta\Sigma}{\delta K^a}\frac{\delta\Sigma}{\delta \xi^a}+b^a\frac{\delta\Sigma}{\delta\bar{c}^a}+\omega^{ab}_{\mu}\frac{\delta\Sigma}{\delta\varphi^{ab}_{\mu}}+\omega^{ab}_{\mu}\frac{\delta\Sigma}{\delta\beta^{ab}_{\mu}}\right.\nonumber\\
&+&\left.(\bar{\varphi}^{ab}_{\mu}+\bar{\beta}^{ab}_{\mu})\frac{\delta\Sigma}{\delta\bar{\omega}^{ab}_{\mu}}\right]+\tilde{m}^4\frac{\partial\Sigma}{\partial\rho}+\chi\frac{\partial\Sigma}{\partial\alpha}=0\,.
\label{lrlcg39}
\end{eqnarray}

\noindent It is also simple to read off two more Ward identities from \eqref{lrlcg38}: The equation of motion for $b$ (\textit{i.e.} the gauge fixing conditions) and the antighost equation. They are, respectively,

\begin{equation}
\frac{\delta\Sigma}{\delta b^{h,a}}=\partial_{\mu}A^{a}_{\mu}-\alpha b^{h,a}+\frac{1}{2}\chi\bar{c}^a\,,
\label{lrlcg40}
\end{equation}
and

\begin{equation}
\frac{\delta\Sigma}{\delta\bar{c}^{a}}+\partial_{\mu}\frac{\delta\Sigma}{\delta\Omega^{a}_{\mu}}=-\frac{1}{2}\chi b^{h,a}\,.
\label{lrlcg41}
\end{equation}
Employing the \textit{Quantum Action Principle} we promote the aforementioned functional equations to the classical action $\Sigma$ to symmetries of the quantum action (generating functional of 1PI diagrams) $\Gamma$, \textit{i.e.}

\begin{eqnarray}
\mathcal{S}(\Gamma)&=&\int d^dx\left[\frac{\delta\Gamma}{\delta\Omega^{a}_{\mu}}\frac{\delta\Gamma}{\delta A^{a}_{\mu}}+\frac{\delta\Gamma}{\delta L^{a}}\frac{\delta\Gamma}{\delta c^a}+\frac{\delta\Gamma}{\delta K^a}\frac{\delta\Gamma}{\delta \xi^a}+b^a\frac{\delta\Gamma}{\delta\bar{c}^a}+\omega^{ab}_{\mu}\frac{\delta\Gamma}{\delta\varphi^{ab}_{\mu}}+\omega^{ab}_{\mu}\frac{\delta\Gamma}{\delta\beta^{ab}_{\mu}}\right.\nonumber\\
&+&\left.(\bar{\varphi}^{ab}_{\mu}+\bar{\beta}^{ab}_{\mu})\frac{\delta\Gamma}{\delta\bar{\omega}^{ab}_{\mu}}\right]+\tilde{m}^4\frac{\partial\Gamma}{\partial\rho}+\chi\frac{\partial\Gamma}{\partial\alpha}=0\,,
\label{lrlcg42}
\end{eqnarray}

\begin{equation}
\frac{\delta\Gamma}{\delta b^{h,a}}=\partial_{\mu}A^{a}_{\mu}-\alpha b^{h,a}+\frac{1}{2}\chi\bar{c}^a\,,
\label{lrlcg43}
\end{equation}
and

\begin{equation}
\frac{\delta\Gamma}{\delta\bar{c}^{a}}+\partial_{\mu}\frac{\delta\Gamma}{\delta\Omega^{a}_{\mu}}=-\frac{1}{2}\chi b^{h,a}\,.
\label{lrlcg44}
\end{equation}
Having the Ward Identities \eqref{lrlcg42},\eqref{lrlcg43} and \eqref{lrlcg44} written in terms of $\Gamma$, we can Legendre transform and write them with respect to the connected diagrams $\EuScript{W}$ (see Ap.~\ref{EA}),

\begin{equation}
\EuScript{W}[J,\EuScript{J},\mu]=\Gamma[\Phi,\EuScript{J},\mu]+\sum_{i}\int d^dx~J^{(\Phi)}_{i}\Phi_i\,,
\label{lrlcg45}
\end{equation}
where 

\begin{eqnarray}
\Phi&\equiv&\left\{A,b,\bar{c},c,\xi,\tau,\varphi,\bar{\varphi},\omega,\bar{\omega},\beta,\bar{\beta},\zeta,\bar{\zeta}\right\}\nonumber\\
\EuScript{J}&\equiv&\left\{\Omega,L,K\right\}\nonumber\\
\mu&\equiv&\left\{\rho,\tilde{m}^4,\alpha,\chi\right\}\,,
\label{lrlcg46}
\end{eqnarray}
with $J^{(\Phi)}$ being the source coupled to the field $\Phi$ at the path integral. Taking care of the (anti-)commuting nature of fields and sources, we can write

\begin{eqnarray}
\frac{\delta\Gamma}{\delta\Phi^{(c)}_{i}}&=&-J^{(\Phi^{(c)})}_i\nonumber\\
\frac{\delta\Gamma}{\delta\Phi^{(a)}_{i}}&=&J^{(\Phi^{(a)})}_i\,,
\label{lrlcg47}
\end{eqnarray}
where the superscripts $(a)$ and $(c)$ stand for anti-commuting and commuting, respectively. For the sources $\EuScript{J}$ and the parameters $\mu$ satisfy,

\begin{equation}
\frac{\delta\Gamma}{\delta\EuScript{J}}=\frac{\delta\EuScript{W}}{\delta\EuScript{J}}\,\,\,\,\,\,\,\,\,\,\mathrm{and}\,\,\,\,\,\,\,\,\,\,\frac{\partial\Gamma}{\partial\mu}=\frac{\partial\EuScript{W}}{\partial\mu}\,.
\label{lrlcg48}
\end{equation}
To avoid confusion we remind the reader that $\Phi$ stands for $\delta\EuScript{W}/\delta J^{(\Phi)}$, namely, it denotes the ``classical field" introduced in Ap.~\ref{EA}. We are ready to write the Ward identities with respect to $\EuScript{W}$,

\begin{itemize}

\item Slavnov-Taylor identity

\begin{eqnarray}
&\phantom{=}&\int d^dx\left[-J^{(A),a}_{\mu}(x)\frac{\delta\EuScript{W}}{\delta\Omega^{a}_{\mu}(x)}+J^{(c),a}(x)\frac{\delta\EuScript{W}}{\delta L^{a}(x)}-J^{(\xi),a}(x)\frac{\delta\EuScript{W}}{\delta K^a(x)}\right.\nonumber\\
&+&\left.J^{(\bar{c}),a}(x)\frac{\delta\EuScript{W}}{\delta J^{(b),a}(x)}-J^{(\varphi),ab}_{\mu}(x)\frac{\delta\EuScript{W}}{\delta J^{(\omega),ab}_{\mu}(x)}-J^{(\beta),ab}_{\mu}(x)\frac{\delta\EuScript{W}}{\delta J^{(\omega),ab}_{\mu}(x)}\right.\nonumber\\
&+&\left.\left(\frac{\delta\EuScript{W}}{\delta J^{(\bar{\varphi}),ab}_{\mu}(x)}+\frac{\delta\EuScript{W}}{\delta J^{(\bar{\beta}),ab}_{\mu}(x)}\right)J^{(\bar{\omega}),ab}_{\mu}(x)\right]+\tilde{m}^4\frac{\partial\EuScript{W}}{\partial\rho}+\chi\frac{\partial\EuScript{W}}{\partial\alpha}=0\,,
\label{lrlcg49}
\end{eqnarray}

\item Gauge-fixing condition

\begin{equation}
-J^{(b),a}(x)=\partial_{\mu}\frac{\delta\EuScript{W}}{\delta J^{(A),a}_{\mu}(x)}-\alpha\frac{\delta\EuScript{W}}{\delta J^{(b),a}(x)}+\frac{1}{2}\chi\frac{\delta\EuScript{W}}{\delta J^{(\bar{c}),a}(x)}\,,
\label{lrlcg50}
\end{equation}

\item Anti-ghost equation of motion

\begin{equation}
J^{(\bar{c}),a}(x)+\partial_{\mu}\frac{\delta\EuScript{W}}{\delta\Omega^{a}_{\mu}(x)}=-\frac{1}{2}\chi\frac{\delta\EuScript{W}}{\delta J^{(b),a}(x)}\,.
\label{lrlcg51}
\end{equation}

\end{itemize}
These Ward identities give us a powerful set up to prove the non-renormalization of the longitudinal sector of the gluon propagator and the gauge independence of physical operators. We expose the details of these derivations in the following subsections.

\subsection{Longitudinal part of the gluon propagator remains unchanged}

To prove this fact (already proved at the path integral level in Ch.~\ref{LCGrevisited}), we start by acting with the operator

\begin{equation}
\frac{\delta}{\delta J^{(A),a}_{\mu}(z)}\frac{\delta}{\delta J^{(\bar{c}),b}(y)}
\label{lrlcg52}
\end{equation}
on \eqref{lrlcg49}, the (non-perturbative) Slavnov-Taylor identity, and setting all sources and the parameters $\tilde{m}^{4}$ and $\chi$ to zero yields

\begin{equation}
\frac{\delta^{2}\EuScript{W}}{\delta J^{(\bar{c}),b}(y)\delta\Omega^{a}_{\mu}(z)}-\frac{\delta^2\EuScript{W}}{\delta J^{(A),a}_{\mu}(z)\delta J^{(b),b}(y)}=0\,.
\label{lrlcg53}
\end{equation}
Applying $\partial^{z}_{\mu}$ to eq.(\ref{lrlcg54}), we immediately get

\begin{equation}
\frac{\delta}{\delta J^{(\bar{c}),b}(y)}\partial^{z}_{\mu}\frac{\delta\EuScript{W}}{\delta\Omega^{a}_{\mu}(z)}-\partial^{z}_{\mu}\frac{\delta^2\EuScript{W}}{\delta J^{(A),a}_{\mu}(z)\delta J^{(b),b}(y)}=0\,.
\label{lrlcg54}
\end{equation}
At this point, it is clear the first term of (\ref{lrlcg54}) can be properly determined through the anti-ghost equation of motion (taking $\chi=0$). So, from (\ref{lrlcg51}) we get

\begin{equation}
J^{(\bar{c}),a}(z)+\partial^{z}_{\mu}\frac{\delta\EuScript{W}}{\delta\Omega^{a}_{\mu}(z)}=0\,,
\label{lrlcg55}
\end{equation}
which is plugged into eq.(\ref{lrlcg54}) giving rise to

\begin{equation}
\delta^{ab}\delta (y-z)+\partial^{z}_{\mu}\frac{\delta^{2}\EuScript{W}}{\delta J^{(A),a}_{\mu}(z)\delta J^{(b),b}(y)}=0\,\,\Rightarrow\,\,\delta^{ab}\delta (y-z)+\partial^{z}_{\mu}\langle A^{a}_{\mu}(z)b^{b}(y)\rangle_{c}=0\,.
\label{lrlcg56}
\end{equation}
Taking the Fourier transform of eq.(\ref{lrlcg56}) results

\begin{equation}
\delta^{ab}-ip_{\mu}\langle A^{a}_{\mu}b^{b}\rangle_{c}(p)=0 \,\, \Rightarrow \langle A^{a}_{\mu}b^{b}\rangle_{c}(p)=-i\frac{p_{\mu}}{p^2}\delta^{ab}\,,
\label{lrlcg57}
\end{equation}
where we explored Lorentz invariance to obtain the mixed $\langle Ab\rangle$ two-point function. Finally, we act with $\delta/\delta J^{(A),b}_{\nu}(y)$ on the Ward identity associated with the gauge-fixing condition eq.(\ref{lrlcg50}), with $\chi=0$. The result is

\begin{equation}
\partial^{z}_{\mu}\frac{\delta^{2}\EuScript{W}}{\delta J^{(A),b}_{\nu}(y)\delta J^{(A),a}_{\mu}(z)}-\alpha\frac{\delta^2\EuScript{W}}{\delta J^{(A),b}_{\nu}(y)\delta J^{(b),a}(z)}=0
\label{lrlcg58}
\end{equation}
which is automatically translated to

\begin{equation}
\partial^{z}_{\mu}\langle A^{b}_{\nu}(y)A^{a}_{\mu}(z)\rangle_c-\alpha\langle A^{b}_{\nu}(y)b^{a}(z)\rangle_c=0\,.
\label{lrlcg59}
\end{equation}
By taking the Fourier transform of eq.(\ref{lrlcg59}) we obtain the longitudinal part of the gluon propagator,

\begin{equation}
p_{\mu}\langle A^{a}_{\mu}A^{b}_{\nu}\rangle_c(p)=\alpha\frac{p_{\nu}}{p^2}\delta^{ab}\,.
\label{lrlcg60}
\end{equation}
It is clear, from eq.(\ref{lrlcg60}) that the general form of the gluon propagator is

\begin{equation}
\langle A^{a}_{\mu}A^{b}_{\nu}\rangle_c(p)=\delta^{ab}\left[\left(\delta_{\mu\nu}-\frac{p_{\mu}p_{\nu}}{p^2}\right)\mathcal{D}_T(p^2)+\frac{p_{\mu}p_{\nu}}{p^2}\mathcal{D}_{L}(p^2)\right]\,,
\label{lrlcg61}
\end{equation}
with

\begin{equation}
\mathcal{D}_{L}(p^2)=\frac{\alpha}{p^2}\,.
\label{lrlcg62}
\end{equation}
Once again we reinforce the non-triviality of the non-renormalization of the longitudinal sector in a theory which breaks (standard) BRST softly. In standard perturbative Yang-Mills, it is easy to prove the relation (\ref{lrlcg62}) to all orders by using the standard BRST symmetry. In the GZ setting, this was possible thanks to the non-perturbative BRST symmetry. Without this novel symmetry, this would be a highly non-trivial question. It is remarkable that this novel symmetry, although a deformation of the perturbative BRST preserves its strength to establish such result. 

\subsection{Physical operators are independent of gauge parameter}

One of the main worries one might have in a standard-BRST breaking theory is gauge dependence of what would-be physical observables. As is widely known in standard gauge theories, it is precisely the BRST symmetry enjoyed by them which controls gauge parameter dependence of physical operators \cite{Piguet:1984js}. In fact, some claims that standard-BRST soft breaking theories are inconsistent with gauge independence were put forward in \cite{Lavrov:2011wb}. In this subsection, we show that the non-perturbative Slavnov-Taylor identity allows an elegant algebraic proof for gauge independence for correlation functions of operators $\EuScript{O}$ which belong to the non-perturbative BRST cohomology and have vanishing ghost number \textit{i.e.}

\begin{equation}
s_l\EuScript{O}=0\,,\,\,\,\,\,\,\,\,\, \EuScript{O}\neq s_l\tilde{\EuScript{O}}\,,
\label{lrlcg63}
\end{equation}
for any local operator $\tilde{\EuScript{O}}$. In order to compute the $n$-point function $\langle \EuScript{O}(x_1)\ldots\EuScript{O}(x_n)\rangle$, we introduce in the starting action the term

\begin{equation}
\int d^dx~J^{(\EuScript{O})}(x)\EuScript{O}(x)\,,
\label{lrlcg64}
\end{equation}
where we choose the external sources $J^{(\EuScript{O})}$ to be BRST invariant. Since $\EuScript{O}$ is also BRST invariant, the introduction of (\ref{lrlcg64}) keeps the non-perturbative Slavnov-Taylor identity \eqref{lrlcg49} untouched. To compute the correlator $\langle \EuScript{O}(x_1)\ldots\EuScript{O}(x_n)\rangle$, we proceed in the standard way,

\begin{equation}
\langle \EuScript{O}(x_1)\ldots\EuScript{O}(x_n)\rangle_c=\frac{\delta}{\delta J^{(\EuScript{O})}(x_1)}\ldots\frac{\delta}{\delta J^{(\EuScript{O})}(x_n)}\EuScript{W}\Big|_{J=\EuScript{J}=\tilde{m}=\rho=\chi=0}\,.
\label{lrlcg65}
\end{equation}
The proof that \eqref{lrlcg65} in independent of $\alpha$ follows immediately from the Slavnov-Taylor identity \eqref{lrlcg49}. First, we act with the operator

\begin{equation}
\frac{\delta}{\delta J^{(\EuScript{O})}(x_1)}\ldots\frac{\delta}{\delta J^{(\EuScript{O})}(x_n)}
\label{lrlcg66}
\end{equation}
and set sources and $\tilde{m}$ to zero. After we apply $\partial/\partial\chi$ on \eqref{lrlcg49} on the resulting expression, which becomes

\begin{equation}
\frac{\partial}{\partial\alpha}\frac{\delta^{n}\EuScript{W}}{\delta J^{(\EuScript{O})}(x_1)\ldots \delta J^{(\EuScript{O})}(x_n)}\Big|_{J=\EuScript{J}=\tilde{m}=\rho=0}-\chi\frac{\partial^2}{\partial\chi\partial\alpha}\frac{\delta^{n}\EuScript{W}}{\delta J^{(\EuScript{O})}(x_1)\ldots \delta J^{(\EuScript{O})}(x_n)}\Big|_{J=\EuScript{J}=\tilde{m}=\rho=0}=0\,.
\label{lrlcg67}
\end{equation}
Finally, setting $\chi=0$ and using eq.(\ref{lrlcg65}) we obtain,

\begin{equation}
\frac{\partial}{\partial\alpha}\langle \EuScript{O}(x_1)\ldots\EuScript{O}(x_n)\rangle_c=0\,,
\label{lrlcg68}
\end{equation}
which is nothing but the all-order proof of the independence of $n$-point function of operators which belong to the non-perturbative BRST cohomology of the gauge parameter. As in the previous subsection, we emphasize how important is the presence of an exact and nilpotent (non-perturbative) BRST symmetry for this proof. In fact, this is an immediate consequence of its existence. 

\section{Refinement in local fashion}

In Ch.~\ref{LCGrevisited} we have proposed a refinement of the GZ action in LCG which is consistent with the proposed non-perturbative BRST symmetry\footnote{At this level, the proposal was based on the non-local BRST transformations (\ref{npbrst38})}. In this section, we translate the refinement of the GZ action in LCG within the complete local setting introduced in this chapter. 

From our experience on the construction of the refinement of the GZ action, the following local composite operators are candidates to be taken into account,

\begin{equation}
A^{h,a}_{\mu}A^{h,a}_{\mu}\,,\,\,\,\,\,\bar{\omega}^{ab}_{\mu}\omega^{ab}_{\mu}\,,\,\,\,\,\,\bar{\varphi}^{ab}_{\mu}\varphi^{ab}_{\mu}\,,\,\,\,\,\,\bar{\beta}^{ab}_{\mu}\beta^{ab}_{\mu}\,\,\,\,\,\mathrm{and}\,\,\,\,\,\bar{\zeta}^{ab}_{\mu}\zeta^{ab}_{\mu}\,.
\label{lrlcg69}
\end{equation} 
In principle, the form of the refinement term to be added to the GZ action (\ref{lrlcg20}) is written as

\begin{equation}
\tilde{S}_{\mathrm{cond}}=\int d^dx\left[\frac{m^2}{2}A^{h,a}_{\mu}A^{h,a}_{\mu}+M^2_{1}\bar{\omega}^{ab}_{\mu}\omega^{ab}_{\mu}+M^{2}_{2}\bar{\varphi}^{ab}_{\mu}\varphi^{ab}_{\mu}+M^2_{3}\bar{\beta}^{ab}_{\mu}\beta^{ab}_{\mu}+M^{2}_{4}\bar{\zeta}^{ab}_{\mu}\zeta^{ab}_{\mu}\right]\,.
\label{lrlcg70}
\end{equation}

\noindent Nevertheless, we have to impose some contraints over (\ref{lrlcg70}). First, we demand BRST invariance of (\ref{lrlcg70}). This imposes

\begin{equation}
M^2_1=-M^2_2=-M^2_3\equiv M^2\,.
\label{lrlcg71}
\end{equation}
Now, requiring invariance of \eqref{lrlcg70} with respect to the discrete transformations (\ref{lrlcg22}), we constraint

\begin{equation}
M^{2}_4=M^2\,.
\label{lrlcg72}
\end{equation}
Finally, the refinement term is reduced to 

\begin{equation}
S_{\mathrm{cond}}=\int d^dx\left[\frac{m^2}{2}A^{h,a}_{\mu}A^{h,a}_{\mu}+M^2\left(\bar{\omega}^{ab}_{\mu}\omega^{ab}_{\mu}-\bar{\varphi}^{ab}_{\mu}\varphi^{ab}_{\mu}-\bar{\beta}^{ab}_{\mu}\beta^{ab}_{\mu}+\bar{\zeta}^{ab}_{\mu}\zeta^{ab}_{\mu}\right)\right]\,.
\label{lrlcg73}
\end{equation}
As previously discussed, the mass parameters introduced in the refinement are not free. They are fixed dynamically through the minimization of the effective potential when these dimension two operators are taken into account. The computation of these parameters, although crucial, is not part of the scope of this thesis and is object of ongoing investigations.

A very important remark at this level is that the mass parameters $m^2$ and $M^2$ are coupled to dimension two operators which belong to the cohomology of the non-perturbative BRST. Since these mass parameters are not coupled to BRST-exact terms, they are not akin to gauge parameters \textit{i.e.} they are physical parameters. Also, we note that although $\bar{\zeta}\zeta$ is BRST invariant by its own, the discrete symmetry (\ref{lrlcg22}) connects its mass parameter with the mass parameter of the other auxiliary fields.

\noindent \textbf{Remark:} After the submission of this thesis, the paper \cite{Capri:2016aqq} was published. It contains the main results here presented.

\chapter{En route to non-linear gauges: Curci-Ferrari gauges}\label{CFGaugeCh}

The entire machinery introduced and developed in the last chapters was simply to extend in a consistent fashion the RGZ setting to LCG. This problem is notably non-trivial and many subtleties show up. From the technical issue of dealing with a non Hermitian Faddeev-Popov operator to the presence of a gauge parameter, the RGZ scenarion in LCG led to the introduction of many important technical and conceptual novelties with respect to the standard formulation in the Landau gauge. 

One might very well insist on the idea of extending further the formalism to even more complicated gauges and see what kind of new effects to the formalism these more elaborated choices can bring. It turns out, however, that the formalism introduced to deal with LCG is more powerful than one might expect at first glance. In this chapter, we will argue why what was introduced so far is enough to construct the RGZ action in a class of non-linear gauges known as Curci-Ferrari gauges. We emphasize that, so far, neither lattice nor functional methods results are available for this class of gauges in such a way that the results here presented, once confronted with future results from these different approaches, might represent a very non-trivial check of our proposal. 

\section{Establishing the Gribov problem in Curci-Ferrari gauges} \label{conv}

In \cite{Baulieu:1981sb,Delduc:1989uc} an one-parameter family of renormalizable non-linear gauges was introduced. Quite often, these gauges are called Curci-Ferrari gauges because the gauge-fixing Lagrangian is exactly the same introduced in \cite{Curci:1976bt,Curci:1976kh} by Curci and Ferrari. There, however, a mass term for the gluons is introduced to discuss massive Yang-Mills theories. Inhere, we will deal with the massless case.  

\subsection{Conventions and standard BRST quantization}

The gauge fixed Yang-Mills action in Curci-Ferrari gauges in $d$ Euclidean dimensions is given by

\begin{eqnarray}
S_{\mathrm{FP}}&=& S_{\mathrm{YM}}+s\int d^dx~\bar{c}^{a}\left[\partial_{\mu}A^{a}_{\mu}-\frac{\alpha}{2}\left(b^a-\frac{g}{2}f^{abc}\bar{c}^{b}c^{c}\right)\right]\nonumber\\
&=&S_{\mathrm{YM}}+\int d^dx\left[b^{a}\partial_{\mu}A^{a}_{\mu}+\bar{c}^{a}\partial_{\mu}D^{ab}_{\mu}(A)c^{b}-\frac{\alpha}{2}b^{a}b^{a}+\frac{\alpha}{2}gf^{abc}b^{a}\bar{c}^{b}c^{c}\right.\nonumber\\
&+&\left.\frac{\alpha}{8}g^{2}f^{abc}f^{cde}\bar{c}^{a}\bar{c}^{b}c^{d}c^{e}\right]\,,
\label{csbrst1}
\end{eqnarray}
with $\alpha$ a non-negative gauge parameter. This action is manifestly invariant under the standard BRST transformations,

\begin{eqnarray}
sA^{a}_{\mu}&=&-D^{ab}_{\mu}c^{b}\nonumber\\
sc^{a}&=&\frac{g}{2}f^{abc}c^{b}c^{c}\nonumber\\
s\bar{c}^{a}&=&b^{a}\nonumber\\
sb^{a}&=&0\,,
\label{csbrst2}
\end{eqnarray}
and is renormalizable to all orders in perturbation theory \cite{Delduc:1989uc}. It is worth mentioning that action \eqref{csbrst1} contains an interaction term between Faddeev-Popov ghosts and the auxiliary field $b$ and a quartic interaction of ghosts. The presence of such terms is responsible to drive different dynamical effects with respect to linear covariant gauges, for instance, as we shall see. In particular, the equation of motion for the auxiliary field $b$ and also for the anti-ghost $\bar{c}$ do not correspond to Ward identities in this case due to the non-linear character of this gauge. In linear gauges, these equations do correspond to Ward identities which are pivotal to the renormalizability proof. 

On the other hand, action \eqref{csbrst1} enjoys another global symmetry besides BRST which will generate a Ward identity that plays the analogous role of the anti-ghost equation. This symmetry is known as $SL(2,\mathbb{R})$ symmetry and together with the Slavnov-Taylor identity guarantee the all order proof of renormalizability of such gauge \cite{Delduc:1989uc}. The $SL(2,\mathbb{R})$ symmetry is defined by the following set of transformations:

\begin{eqnarray}
\delta\bar{c}^{a}&=&c^a\nonumber\\
\delta b^{a}&=&\frac{g}{2}f^{abc}c^{b}c^{c}\nonumber\\
\delta A^{a}_{\mu}&=&\delta c^{a} = 0\,.
\label{csbrst3}
\end{eqnarray}

\noindent An useful property is that the $SL(2,\mathbb{R})$ operator $\delta$ commutes with the BRST operator $s$ \textit{i.e.} $[s,\delta]=0$.

\subsection{Construction of a copies equation}

As discussed in Ch.~\ref{ch.2}, given a gauge condition $F[A]=0$, one way to characterize the existence of Gribov copies by performing a gauge transformation over $F[A]=0$ and looking for solutions of the resulting equation - the copies equation. Nevertheless, in the case of Curci-Ferrari gauges, is not clear how to read off an equation as $F[A]=0$, with $F$ a functional of the gauge field, from action \eqref{csbrst1}. To see this, let us compute the equation of motion for $b$, which in linear gauges gives the gauge-fixing condition,

\begin{equation}
\frac{\delta S_{\mathrm{FP}}}{\delta b^a}=\partial_{\mu}A^{a}_{\mu}-\alpha b^a + \frac{\alpha}{2}g f^{abc}b^{a}\bar{c}^{b}c^c\,.
\label{cce1}
\end{equation}
It is clear, due to the presence of $(\bar{c},c)$ in \eqref{cce1} that it cannot be written as $F[A]=0$. As a comparison, for instance, in LCG, the last term of \eqref{cce1} is not present and at the level of gauge-fixing, the field $b$ is nothing but a fixed function. 

On the other hand, it is possible to cast the Curci-Ferrari gauges in a similar fashion of linear covariant gauges by a convenient shift on the $b$ field. Therefore, at the level of the path integral, we can perform the shift

\begin{equation}
b^a\,\,\longrightarrow\,\, b'^a=b^a-\frac{g}{2}f^{abc}\bar{c}^bc^c\,,
\label{cce2}
\end{equation}
which entails a trivial Jacobian. The Yang-Mills action in Curci-Ferrari gauges is then written as

\begin{equation}
S_{\mathrm{FP}} = S_{\mathrm{YM}}+s\int d^4x~\bar{c}^{a}\left(\partial_{\mu}A^{a}_{\mu}-\frac{\alpha}{2}b'^a\right)\,,
\label{cce3}
\end{equation}
which is formally the same as in LCG. However, the difference between these gauges arises from the fact that the action of BRST transformations also change. Nevertheless, we can still exploit the similarity between these gauges at the formal level and keep in mind the different roles played by $b$ and $b'$. So, as a ``gauge-fixing equation", we write Curci-Ferrari gauges as

\begin{equation}
\partial_{\mu}A^{a}_{\mu}=\alpha b'^a\,.
\label{cce4}
\end{equation}
We can treat \eqref{cce4} as our desired $F[A]=0$ equation. Since it is formally identical to the gauge-fixing equation for LCG, we can immediately conclude that their solutions are formally the same. As a consequence, the framework contructed in Ch.~\ref{LCGrevisited} and Ch.~\ref{locnonpBRST} to deal with the Gribov problem in LCG, can be trivially imported to the Curci-Ferrari case. This is precisely the subject of next section.

The shifted BRST transformations are expressed as

\begin{eqnarray}
sA^{a}_{\mu}&=&-D^{ab}_{\mu}c^{b}\nonumber\\
sc^{a}&=&\frac{g}{2}f^{abc}c^{b}c^{c}\nonumber\\
s\bar{c}^{a}&=&b'^a+\frac{g}{2}f^{abc}\bar{c}^{b}c^{c}\nonumber\\
sb'^a&=&-\frac{g}{2}f^{abc}b'^bc^c+\frac{g^{2}}{8}f^{abc}f^{cde}\bar{c}^{b}c^{d}c^{e}\,.
\label{cce5}
\end{eqnarray}
Explicitly, the Faddeev-Popov action in terms of $b'$ is given by

\begin{equation}
S_{\mathrm{FP}}=S_{\mathrm{YM}}+\int d^4x\left[b'^a\partial_{\mu}A^{a}_{\mu}+\frac{1}{2}\bar{c}^{a}(\partial_{\mu}D^{ab}_{\mu}+D^{ab}_{\mu}\partial_{\mu})c^{b}-\frac{\alpha}{2}b'^{a}b'^{a}+\frac{\alpha}{8}g^{2}f^{abc}f^{cde}\bar{c}^{e}\bar{c}^{a}c^{b}c^{d}\right]\,,
\label{cce6}
\end{equation}

\noindent and the equation of motion of $b'^a$ enforces the gauge condition (\ref{cce4}),

\begin{equation}
\frac{\delta S_{\mathrm{FP}}}{\delta b'^a}=\partial_{\mu}A^{a}_{\mu}-\alpha b'^a\,,
\label{cce7}
\end{equation}
The $SL(2,\mathbb{R})$ symmetry takes the simpler form

\begin{eqnarray}
\delta\bar{c}^{a}&=&c^{a}\nonumber\\
\delta b'^a &=& 0\nonumber\\
\delta A^{a}_{\mu}&=&\delta c^{a}=0\,.
\label{cce8}
\end{eqnarray}
We see that the shift over the $b$ field simplifies the structure of the action (there are no $b'$-ghosts vertices) and the $SL(2,\mathbb{R})$ transformations form. However, it introduces a much more involved form for the BRST transformations. Therefore, we should be able to explore when using $b'$ instead of $b$ (and vice-versa) is more convenient.

\section{Construction of the GZ action}

In the last section we have established a connection between the manifestation of the Gribov problem in Curci-Ferrari gauges with LCG. The latter was object of study of Ch.~\ref{LCGrevisited} and Ch.~\ref{locnonpBRST}. In particular, since the copies equation for Curci-Ferrari and LCG are formally identical, the removal of Gribov copies from the Curci-Ferrari path integral follows exactly the same route as in LCG. As a byproduct, the resulting GZ action in Curci-Ferrari gauges enjoys non-perturbative BRST invariance. From a different perspective, we can establish from the beginning a non-perturbative BRST quantization as already proposed in \cite{Capri:2015ixa,Capri:2015nzw}.  Following this prescription, we employ the non-perturbative BRST quantization to Curci-Ferrari gauges,

\begin{eqnarray}
S^{\mathrm{CF}}_{\mathrm{GZ}}&=& S_{\mathrm{YM}}+s_{\gamma^2}\int d^dx~\bar{c}^{a}\left[\partial_{\mu}A^{a}_{\mu}-\frac{\alpha}{2}\left(b^{h,a}-\frac{g}{2}f^{abc}\bar{c}^{b}c^{c}\right)\right]\nonumber\\
&+&\int d^dx\left(\bar{\varphi}^{ac}_{\mu}\left[\EuScript{M}(A^h)\right]^{ab}\varphi^{bc}_{\mu}-\bar{\omega}^{ac}_{\mu}\left[\EuScript{M}(A^h)\right]^{ab}\omega^{bc}_{\mu}+g\gamma^2f^{abc}A^{h,a}_{\mu}(\varphi+\bar{\varphi})^{bc}_{\mu}\right)\nonumber\\
&=& S_{\mathrm{YM}}+\int d^dx\left[b^{h,a}\partial_{\mu}A^{a}_{\mu}+\bar{c}^{a}\partial_{\mu}D^{ab}_{\mu}c^{b}-\frac{\alpha}{2}b^{h,a}b^{h,a}+\frac{\alpha}{2}gf^{abc}b^{h,a}\bar{c}^{b}c^{c}\right.\nonumber\\
&+&\left.\frac{\alpha}{8}g^{2}f^{abc}f^{cde}\bar{c}^{a}\bar{c}^{b}c^{d}c^{e}\right] 
+\int d^dx\left(\bar{\varphi}^{ac}_{\mu}\left[\EuScript{M}(A^h)\right]^{ab}\varphi^{bc}_{\mu}-\bar{\omega}^{ac}_{\mu}\left[\EuScript{M}(A^h)\right]^{ab}\omega^{bc}_{\mu}\right.\nonumber\\
&+&\left.g\gamma^2f^{abc}A^{h,a}_{\mu}(\varphi+\bar{\varphi})^{bc}_{\mu}\right)\,.
\label{gz1}
\end{eqnarray}
with $s_{\gamma^2}$ the non-perturbative and nilpotent BRST operator, see eq.\eqref{npbrst38}.

As discussed in the context of LCG, the proposed non-perturbative BRST quantization gives rise to a non-local action. From eq.\eqref{npbrst38}, even the non-perturbative BRST transformations are non-local. It is of uttermost interest to cast all the framework in local fashion so that all the powerful machinery of local quantum field theories are at our disposal. As discussed in Ch.~\ref{locnonpBRST} the localization of this set up in LCG is possible. The extension to Curci-Ferrari gauges is straightforward and we will report the explicit local form in Sect.~7. However, before turning to this issue, we expose some features of the tree-level gluon propagator and for this purpose, it is not necessary to go through all the localization procedure. 

For completeness, we present the form of the GZ action in Curci-Ferrari gauges in terms of the shifted field $b'^{h}$ \eqref{cce2},

\begin{eqnarray}
S^{\mathrm{CF}}_{\mathrm{GZ}}&=&S_{\mathrm{YM}}+\int d^dx\left[b'^{h,a}\partial_{\mu}A^{a}_{\mu}+\frac{1}{2}\bar{c}^{a}(\partial_{\mu}D^{ab}_{\mu}+D^{ab}_{\mu}\partial_{\mu})c^{b}-\frac{\alpha}{2}b'^{h,a}b'^{h,a}+\frac{\alpha}{8}g^{2}f^{abc}f^{cde}\bar{c}^{e}\bar{c}^{a}c^{b}c^{d}\right]\nonumber\\
&+&\int d^dx\left(\bar{\varphi}^{ac}_{\mu}\left[\EuScript{M}(A^h)\right]^{ab}\varphi^{bc}_{\mu}-\bar{\omega}^{ac}_{\mu}\left[\EuScript{M}(A^h)\right]^{ab}\omega^{bc}_{\mu}+g\gamma^2f^{abc}A^{h,a}_{\mu}(\varphi+\bar{\varphi})^{bc}_{\mu}\right)\,,
\label{gz3}
\end{eqnarray}
which is invariant under the non-perturbative set of BRST transformations,

\begin{align}
s_{\gamma^2}A^{a}_{\mu}&=-D^{ab}_{\mu}c^b\,,     &&s_{\gamma^2}c^a=\frac{g}{2}f^{abc}c^bc^c\,, \nonumber\\
s_{\gamma^2}\bar{c}^a&=b'^{h,a}+\frac{g}{2}f^{abc}\bar{c}^bc^c\,,     &&s_{\gamma^2}b'^{h,a}=-\frac{g}{2}f^{abc}b'^bc^c+\frac{g^2}{8}f^{abc}f^{cde}\bar{c}^bc^dc^e\,, \nonumber\\
s_{\gamma^2}\varphi^{ab}_{\mu}&=\omega^{ab}_{\mu}\,,   &&s_{\gamma^2}\omega^{ab}_{\mu}=0\,, \nonumber\\
s_{\gamma^2}\bar{\omega}^{ab}_{\mu}&=\bar{\varphi}^{ab}_{\mu}-\gamma^2gf^{cdb}\int d^dy~A^{h,c}_{\mu}(y)\left[\EuScript{M}^{-1}(A^h)\right]^{da}_{yx}\,,         &&s_{\gamma^2}\bar{\varphi}^{ab}_{\mu}=0\,. 
\label{gz2}
\end{align}
As in LCG, the gap equation which fixes the Gribov parameter is manifestly gauge invariant, namely

\begin{equation}
\frac{\partial\mathcal{E}_0}{\partial\gamma^2}=0\,\,\,\Rightarrow\,\,\,\langle gf^{abc}A^{h,a}_{\mu}(\varphi+\bar{\varphi})^{bc}_{\mu}\rangle=2d\gamma^2(N^2-1)\,. 
\label{gz23}
\end{equation}
The integration over $b'^h$ can be performed and the resulting action is

\begin{eqnarray}
S^{\mathrm{CF}}_{\mathrm{GZ}}&=&S_{\mathrm{YM}}+\int d^dx\left[\frac{(\partial_{\mu}A^{a}_{\mu})^{2}}{2\alpha}+\frac{1}{2}\bar{c}^{a}(\partial_{\mu}D^{ab}_{\mu}+D^{ab}_{\mu}\partial_{\mu})c^{b}+\frac{\alpha}{8}g^{2}f^{abc}f^{cde}\bar{c}^{e}\bar{c}^{a}c^{b}c^{d}\right]\nonumber\\
&+&\int d^dx\left(\bar{\varphi}^{ac}_{\mu}\left[\EuScript{M}(A^h)\right]^{ab}\varphi^{bc}_{\mu}-\bar{\omega}^{ac}_{\mu}\left[\EuScript{M}(A^h)\right]^{ab}\omega^{bc}_{\mu}+g\gamma^2f^{abc}A^{h,a}_{\mu}(\varphi+\bar{\varphi})^{bc}_{\mu}\right)\,.\nonumber\\
\label{gz5}
\end{eqnarray}
From action \eqref{gz3} - or \eqref{gz5} - the tree-level gluon propagator computation is trivial and yields

\begin{equation}
\langle A^{a}_{\mu}(p)A^{b}_{\nu}(-p)\rangle=\delta^{ab}\left[\frac{p^2}{p^4+2g^2N\gamma^4}\left(\delta_{\mu\nu}-\frac{p_{\mu}p_{\nu}}{p^2}\right)+\frac{\alpha}{p^2}\frac{p_{\mu}p_{\nu}}{p^2}\right]\,.
\label{gz6}
\end{equation}
The transverse part receives effects from the restriction of the path integral domain due to the presence of $\gamma$, while the longitudinal part is equal to the perturbative result. We emphasize this is a tree-level computation. The transverse part has the Gribov-type behavior. It is IR suppressed and its form factor goes to zero at zero-momentum. Also, this propagator violates positivity and as such, no physical particle interpretation can be attached to the gluon field. However, as presented in Ch.~\ref{RGZch}, the GZ action suffers from IR instabilities and dimension-two condensates are formed. In the next section we take into account these effects and discuss their consequences to the gluon propagator. 

\section{Dynamical generation of condensates}

\subsection{Refinement of the Gribov-Zwanziger action}

In the Landau gauge, it was noted that the Gribov-Zwanziger action suffers from IR instabilities, \cite{Dudal:2008sp}. In particular, already at the one-loop level it is possible to compute a non-vanishing value for dimension-two condensates. Those are proportional to the Gribov parameter $\gamma$ putting in evidence that the non-trivial background of the Gribov horizon contributes to the formation of dimension-two condensates. In \cite{Capri:2015pja,Capri:2015nzw}, these results were extended to linear covariant gauges in the non-perturbative BRST framework. Also, analogous results were obtained for maximal Abelian and Coulomb gauges, \cite{Capri:2015pfa,Guimaraes:2015bra}. 

For Curci-Ferrari gauges, we can proceed in full analogy to the linear covariant gauges. In particular, both condensates considered in \cite{Dudal:2008sp}, namely,

\begin{equation}
\langle A^{h,a}_{\mu}(x)A^{h,a}_{\mu}(x)\rangle\,\,\,\,\,\mathrm{and}\,\,\,\,\,\langle \bar{\varphi}^{ab}_{\mu}(x)\varphi^{ab}_{\mu}(x)-\bar{\omega}^{ab}_{\mu}(x)\omega^{ab}_{\mu}(x)\rangle\,,
\label{dgc5.0}
\end{equation}
are dynamically generated. This fact is easily proved by the introduction of the aforementioned dimension-two operators coupled to constant sources into the Gribov-Zwanziger action. Therefore, let us consider the generating functional $\mathcal{E}(m,J)$ defined as 

\begin{equation}
\mathrm{e}^{-V\mathcal{E}(m,J)}=\int\left[\EuScript{D}\Phi\right]\mathrm{e}^{-(S^{\mathrm{CF}}_{\mathrm{GZ}}+m\int d^dx~A^{h,a}_{\mu}A^{h,a}_{\mu}-J\int d^dx(\bar{\varphi}^{ab}_{\mu}\varphi^{ab}_{\mu}-\bar{\omega}^{ab}_{\mu}\omega^{ab}_{\mu}))}
\label{dgc6.0}
\end{equation}
with $m$ and $J$ being constant sources. Hence

\begin{eqnarray}
\langle \bar{\varphi}^{ab}_{\mu}\varphi^{ab}_{\mu}-\bar{\omega}^{ab}_{\mu}\omega^{ab}_{\mu} \rangle &=& -\frac{\partial \mathcal{E}(m,J)}{\partial J}\Big|_{m=J=0}\nonumber\\
\langle A^{h,a}_{\mu}A^{h,a}_{\mu}\rangle &=& \frac{\partial\mathcal{E}(m,J)}{\partial m}\Big|_{m=J=0}\,.
\label{dgc7.0}
\end{eqnarray}
At one-loop order, employing dimensional regularization,

\begin{equation}
{\cal E}(m,J)=\frac{(d-1)(N^2-1)}{2}\int \frac{d^dp}{(2\pi)^d}~\mathrm{ln}\left(p^2+\frac{2\gamma^4g^2N}{p^2+J}+2m\right)-d\gamma^4(N^2-1)\,,
\label{dgc8.0}
\end{equation}
which results in

\begin{equation}
\langle \bar{\varphi}^{ac}_{\mu}\varphi^{ac}_{\mu}-\bar{\omega}^{ac}_{\mu}\omega^{ac}_{\mu}\rangle = g^2\gamma^4N(N^2-1)(d-1)\int \frac{d^dp}{(2\pi)^d}\frac{1}{p^2}\frac{1}{(p^4+2g^2\gamma^4N)}
\label{dgc9.0}
\end{equation}
and

\begin{equation}
\langle A^{h,a}_{\mu}A^{h,a}_{\mu}\rangle = -2g^2\gamma^4N(N^2-1)(d-1)\int\frac{d^dp}{(2\pi)^d}\frac{1}{p^2}\frac{1}{(p^4+2g^2\gamma^4N)}\,.
\label{dgc10.0}
\end{equation}
From eq.\eqref{dgc9.0} and \eqref{dgc10.0}, we see immediately the presence of the Gribov parameter as a prefactor. This implies the non-triviality of the value of such condensates due to the restriction of the path integral domain to the Gribov region, encoded in $\gamma$. Also, as discussed in \cite{Capri:2015nzw,Dudal:2008xd}, the integrals appearing in \eqref{dgc9.0} and \eqref{dgc10.0} are perfectly convergent for $d=3,4$, while for $d=2$ those develop an IR singularity. This behavior suggests the inclusion of \eqref{dgc5.0} to the Gribov-Zwanziger action for $d=3,4$, while keeping the action untouched for $d=2$. The absence of refinement of the Gribov-Zwanziger action in $d=2$ can be made more precise, see \cite{Capri:2015nzw,Dudal:2008xd}. Essentially, in $d=2$ it turns out to be impossible to stay within the Gribov region by introducing dimension two condensates \cite{Capri:2015nzw,Dudal:2008xd}. The same argument is easily extended to Curci-Ferrari gauges.

Taking into account these considerations, for the Refined Gribov-Zwanziger action in $d=3,4$ we obtain

\begin{eqnarray}
S^{\mathrm{CF}}_{\mathrm{RGZ}}&=& S_{\mathrm{YM}}+\int d^dx\left[b^{h,a}\partial_{\mu}A^{a}_{\mu}+\bar{c}^{a}\partial_{\mu}D^{ab}_{\mu}c^{b}-\frac{\alpha}{2}b^{h,a}b^{h,a}+\frac{\alpha}{2}gf^{abc}b^{h,a}\bar{c}^{b}c^{c}\right.\nonumber\\
&+&\left.\frac{\alpha}{8}g^{2}f^{abc}f^{cde}\bar{c}^{a}\bar{c}^{b}c^{d}c^{e}\right] 
+\int d^dx\left(\bar{\varphi}^{ac}_{\mu}\left[\EuScript{M}(A^h)\right]^{ab}\varphi^{bc}_{\mu}-\bar{\omega}^{ac}_{\mu}\left[\EuScript{M}(A^h)\right]^{ab}\omega^{bc}_{\mu}\right.\nonumber\\
&+&\left.g\gamma^2f^{abc}A^{h,a}_{\mu}(\varphi+\bar{\varphi})^{bc}_{\mu}\right)+\frac{m^2}{2}\int d^dx\,A^{h,a}_{\mu}A^{h,a}_{\mu}-M^2\int d^dx\,\left(\bar{\varphi}^{ab}_{\mu}\varphi^{ab}_{\mu}-\bar{\omega}^{ab}_{\mu}\omega^{ab}_{\mu}\right)\,.\nonumber\\
\label{dgc11.0}
\end{eqnarray}
while in $d=2$ the Gribov-Zwanziger action is left unmodified and expression \eqref{gz1} is preserved.

\subsection{A remark on the gluon-ghost condensate}

In the last decade, much effort has been undertaken to understand QCD vacuum and, in particular, pure Yang-Mills vacuum. Particular attention was devoted to the dynamical formation of condensates which could introduce non-perturbative effects related to chiral symmetry breaking (in the specific case of QCD) and color confinement. Also, dimension-two gluon condensates were on the mainstream of analytical and numerical approaches to confinement due to the possibility of giving rise to a possible mechanism for dynamical mass generation. On the other hand, the dimension-two gluon condensate $\langle A^{a}_{\mu}A^{a}_{\mu}\rangle$ is not gauge invariant for a generic choice of a covariant renormalizable gauge and a direct physical interpretation is unclear. Moreover, a genuine gauge invariant expression  is provided by $\langle A^h_\mu A^h_\mu\rangle$. Albeit gauge invariant, this quantity  is highly non-local, with the notable exception of the  Landau gauge, where $\mathcal{A}^{2}_{\mathrm{min}}$ reduces to the simple expression $A^{a}_{\mu}A^{a}_{\mu}$. This is a very special feature of Landau gauge. On the other hand, the existence of other dimension-two condensates is also possible. A particular example is the ghost condensate $\langle \bar{c}^ac^a\rangle$. Though, Yang-Mills theories quantized in Landau gauge displays an additional  Ward identity, the anti-ghost equation of motion, which forbids the existence of $\langle \bar{c}^ac^a\rangle$. The same Ward identity holds for linear covariant gauges. Therefore, in these cases, just the gluon condensate is allowed. However, in the Curci-Ferrari gauges, the anti-ghost equation is not a Ward identity anymore and there is no a priori reason to exclude the condensate $\langle \bar{c}^ac^a\rangle$. Hence, we can introduce the general term

\begin{equation}
\tilde{S}_{\mathrm{cond}}=\int d^dx\left(\kappa_1 A^{a}_{\mu}A^{a}_{\mu}+\kappa_2\bar{c}^ac^a\right)\,,
\label{dgc2.0}
\end{equation}
and demand invariance under BRST and the $SL(2,\mathbb{R})$ symmetry. The latter does not impose any constraint on the coefficients $\kappa_1$ and $\kappa_2$. BRST, however, does\footnote{There is no difference in making use of the standard BRST or the non-perturbative one, due to the fact that for $(A,c,\bar{c})$ these transformations are identical.}:

\begin{equation}
s\tilde{S}_{\mathrm{cond}} = \int d^dx\left(2\kappa_1(\partial_{\mu}A^{a}_{\mu})c^a+k_2b'ac^a\right)\approx 0\,\,\,\,\Rightarrow\,\,\,\, \kappa_2=-2\alpha\kappa_1\,,
\label{dgc3.0}
\end{equation}
where the symbol $\approx$ denotes that we have used the equations of motion. Therefore, modulo a prefactor, the (on-shell) BRST invariant operator is

\begin{equation}
\EuScript{O}=\frac{1}{2}A^{a}_{\mu}A^{a}_{\mu}-\alpha\bar{c}^ac^a\,.
\label{dgc4.0}
\end{equation}
Some remarks concerning expression \eqref{dgc4.0} are in order: $(i)$ The limit $\alpha\rightarrow 0$ corresponds to the Landau gauge. In this case, the operator \eqref{dgc4.0} reduces to the dimension-two gluon operator $A^{a}_{\mu}A^{a}_{\mu}$ and no ghost condensate is included. $(ii)$ As is well-known, the presence of the quartic interaction term of Faddeev-Popov ghosts is  responsible for (eventually) generating a non-vanishing ghost condensate $\langle \bar{c}^ac^a\rangle$. 

Evidences for the existence of the condensate \eqref{dgc4.0} were presented in \cite{Kondo:2001nq,Dudal:2003gu}. In \cite{Kondo:2001nq} the modification of the OPE for the gluon and ghost due to the dimension two-condensate \eqref{dgc4.0} was pointed out,  while in \cite{Dudal:2003gu} an effective potential analysis was carried out. Unfortunately, the lack of lattice simulations results for Curci-Ferrari gauges limits ourselves to have a more conclusive statement concerning the relevance of the condensate \eqref{dgc4.0}. 

Nevertheless, within the new non-perturbative BRST framework, we introduced directly the gauge invariant quantity $\langle A^h_{\mu}A^h_{\mu}\rangle$ in the refinement of the Gribov-Zwanziger action. This condensate, as the gluon-ghost condensate (\ref{dgc4.0}), reduces to $\langle A^a_{\mu}A^{a}_{\mu}\rangle$ in the Landau gauge. In this sense, the introduction of both condensates seems to be redundant. Moreover, as will be discussed in Sect.~7, we have a local set up for $\langle A^h_{\mu}A^h_{\mu}\rangle$, evading the main difficulties that earlier studies had to deal with this operator. In summary, $\langle A^h_{\mu}A^h_{\mu}\rangle$ should be responsible to carry all physical information of \eqref{dgc4.0}. A very attractive feature is that the gauge invariance of $\langle A^h_{\mu}A^h_{\mu}\rangle$ together with the non-perturbative BRST symmetry gives to us full control of the independence from $\alpha$ of correlation funtions of gauge invariant operators. Therefore, the inclusion of \eqref{dgc4.0} seems to be superfluous, due to the use of the operator $A^h_\mu A^h_\mu$. 

We remark that the formation of different ghost condensates was also studied in Curci-Ferrari gauges, see \cite{Dudal:2002ye,Lemes:2002jv}. In principle, we should take them into account as well. However, in this work we are concerned with the behavior of the gluon propagator and, for this purpose, the inclusion of these extra condensates is irrelevant. Moreover, these condensates affect the ghost propagator and, again, it  would be  desirable to have access to lattice simulations for such propagator in order to estimate the relevance played by these novel condensates.

\section{Gluon propagator}

In the last section we discussed non-trivial dynamical effects generated in Curci-Ferrari gauges. As happens in the Gribov-Zwanziger theory in the gauges already studied in the literature, the presence of the Gribov horizon contributes to the formation of dimension-two condensates. The Refined Gribov-Zwanziger action in Curci-Ferrari gauges is given by \eqref{dgc11.0}, where such condensates are taken into account from the beginning through the presence of the dynamical parameters $(M^2, m^2)$. Hence, we can easily compute the gluon propagator out of \eqref{dgc11.0}, namely

\begin{equation}
\langle A^{a}_{\mu}(p)A^{b}_{\nu}(-p)\rangle_{d=3,4}=\delta^{ab}\left[\frac{p^2+M^2}{(p^2+m^2)(p^2+M^2)+2g^2\gamma^4N}\left(\delta_{\mu\nu}-\frac{p_{\mu}p_{\nu}}{p^2}\right)+\frac{\alpha}{p^2}\frac{p_{\mu}p_{\nu}}{p^2}\right]\,,
\label{cfgp1.0}
\end{equation}
while in $d=2$, we use the Gribov-Zwanziger action \eqref{gz1},

\begin{equation}
\langle A^{a}_{\mu}(p)A^{b}_{\nu}(-p)\rangle_{d=2}=\delta^{ab}\left[\frac{p^2}{p^4+2g^2\gamma^4N}\left(\delta_{\mu\nu}-\frac{p_{\mu}p_{\nu}}{p^2}\right)+\frac{\alpha}{p^2}\frac{p_{\mu}p_{\nu}}{p^2}\right]\,.
\label{cfgp2.0}
\end{equation}
Several remarks are in order.  For $d=3,4$,

\begin{itemize}

\item The form factor of the transverse part of the propagator is IR suppressed, positivity violating and attains a finite non-vanishing value at zero-momentum, a property which follows from the inclusion of the dimension two condensate of the auxiliary fields $\langle \bar{\varphi}\varphi-\bar{\omega}\omega\rangle$. Also, at tree-level, this form factor is independent from  $\alpha$. Hence, the transverse component of the gluon propagator displays the so-called decoupling/massive behavior.

\item The limit $\alpha\rightarrow 0$ brings us back to the gluon propagator for the Refined Gribov-Zwanziger action in the Landau gauge.

\item In the linear covariant gauges, the longitudinal part of the gluon propagator does not receive non-perturbative corrections. It remains as in perturbation theory, which is known to be just the tree-level result without quantum corrections. However, in Curci-Ferrari gauges, non-linearity jeopardizes this property as follows, for example, from the existence of the interaction vertex $b$-$c$-${\bar c}$. Therefore, inhere we expect that loop corrections will affect the longitudinal sector, although an explicit verification is far beyond the scope of this work. 

\end{itemize}

\noindent In the case of $d=2$,

\begin{itemize}

\item Since in $d=2$ the Gribov-Zwanziger action does not suffer from refinement, the gluon propagator is of Gribov-type \textit{i.e.} the transverse part is IR suppressed, positivity violating and vanishes at zero-momentum. This characterizes the so-called  scaling behavior.

\item As in $d=3,4$, the Landau propagator is easily obtained for $\alpha\rightarrow 0$, giving the scaling Gribov gluon propagator in $d=2$.

\end{itemize}

\noindent From these comments we can conclude that for $d=3,4$, the transverse gluon propagator displays a decoupling/massive behavior while in $d=2$, it is of scaling type. This is precisely the same behavior obtained in the Landau gauge and reported by very large lattice simulations. As pointed out in \cite{Capri:2015nzw,Capri:2015pfa,Guimaraes:2015bra}, this feature is more general than a particular property of  the Landau gauge, being also present in the linear covariant, maximal Abelian and Coulomb gauges. Inhere, we provide evidence that this property should also hold in Curci-Ferrari gauges. The novelty here with respect to the gauges already studied is the non-triviality of the longitudinal part which, due to the very non-linear character of the Curci-Ferrari gauges,  might very well acquire corrections from higher loops. 

\section{Local Refined Gribov-Zwanziger action in Curci-Ferrari gauges}

In this section we present a localization procedure to cast the action \eqref{gz1} and transformations \eqref{gz2} in a suitable local fashion. This puts the (Refined) Gribov-Zwanziger action in Curci-Ferrari gauges within the  well-developed realm of local quantum field theory. Before starting the description of the procedure, we emphasize the already mentioned feature that the original formulation of Gribov-Zwanziger action in the Landau gauge relies on the introduction of a non-local horizon function, displaying thus a non-local character. As shown previously, this non-locality can be handled through the introduction of suitable auxiliary fields which provide a local and renormalizable framework. 

Nevertheless, as soon as we introduce the gauge invariant field $A^h$, we introduce a new source of non-locality, see eq.\eqref{ah19}. Hence, even after the introduction of the auxiliary fields introduced in the standard construction, the resulting action is still non-local due to the explicit presence of $A^h$. 

The localization of the transverse gauge invariant field $A^h$ is performed by the introduction of a Stueckelberg-type field $\xi^a$ in the form

\begin{equation}
h=\mathrm{e}^{ig\xi^aT^a}\,.
\label{lcf1.0}
\end{equation}
With \eqref{lcf1.0} we rewrite the $A^h$ field as

\begin{equation}
A^h_{\mu}=h^{\dagger}A_{\mu}h+\frac{i}{g}h^{\dagger}\partial_{\mu}h\,,
\label{lcf2.0}
\end{equation}
where a matrix notation is being employed. Expression \eqref{lcf2.0} is local albeit non-polynomial. For a $SU(N)$ element $v$, $A^h$ is left invariant under the gauge transformations

\begin{equation}
A'_{\mu}=v^{\dagger}A_{\mu}v+\frac{i}{g}v^{\dagger}\partial_{\mu}v\,,\,\,\,\,\, h'=v^{\dagger}h\,\,\,\,\, \mathrm{and}\,\,\,\,\, h'^{\dagger}=h^{\dagger}v\,,
\label{lcf3.0}
\end{equation}
{\it i.e.}
\begin{equation} 
 ({A^{h}_\mu})'  \leftarrow A^h_{\mu}  \;. \label{gah}
\end{equation}
Although gauge invariance of $A^h$ is guaranteed by \eqref{lcf3.0}, we still have to impose the transversality condition of $A^h$. This is done by means of a Lagrange multiplier $\tau^a$ which enforces this constraint, namely, we introduce the following term

\begin{equation}
S_{\tau}=\int d^dx\,\tau^a\partial_{\mu}A^{h,a}_{\mu}\,.
\label{lcf4.0}
\end{equation}
Solving the transversality condition $\partial A^h=0$ for $\xi$, we obtain the non-local expression \eqref{ah19} for $A^h$, see, for example, Ap.~D. Then, the Gribov-Zwanziger action in Curci-Ferrari gauges can be expressed in local form as follows,

\begin{eqnarray}
S^{\mathrm{loc}}_{\mathrm{CF}} &=&  S_{\mathrm{YM}} + \int d^dx\left(b^{h,a}\partial_{\mu}A^{a}_{\mu}-\frac{\alpha}{2}b^{h,a}b^{h,a}+\bar{c}^{a}\partial_{\mu}D^{ab}_{\mu}c^{b}+\frac{\alpha}{2}gf^{abc}b^{a}\bar{c}^{b}c^{c}\right.\nonumber\\
&+&\left.\frac{\alpha}{8}g^{2}f^{abc}f^{cde}\bar{c}^{a}\bar{c}^{b}c^{d}c^{e}\right)-\int d^dx\left(\bar{\varphi}^{ac}_{\mu}\left[\EuScript{M}(A^h)\right]^{ab}\varphi^{bc}_{\mu}-\bar{\omega}^{ac}_{\mu}\left[\EuScript{M}(A^h)\right]^{ab}\omega^{bc}_{\mu}\right.\nonumber\\
&+&\left.g\gamma^2f^{abc}A^{h,a}_{\mu}(\varphi+\bar{\varphi})^{bc}_{\mu}\right)+\int d^dx~\tau^a\partial_{\mu}A^{h,a}_{\mu}\,,\nonumber\\
\label{lcf5.0}
\end{eqnarray} 
with $A^h$ given by \eqref{lcf3.0}. 

The non-perturbative BRST transformations, which correspond to a symmetry of \eqref{lcf5.0}, are also non-local. As shown in Ch.~\ref{locnonpBRST}, the localization of these transformations is achieved through the introduction of extra auxiliary fields. Before doing this, we note that the standard BRST transformations for $\tau$ and $\xi$ (written implicitly in terms of $h$) are

\begin{equation}
sh=-igch\,\,\,\,\,\,\,\,\mathrm{and}\,\,\,\,\,\,\,\,s\tau^a=0\,.
\label{lcf6.0}
\end{equation}
Proceeding to the localization of the non-perturbative BRST transformations, we make use of the following trick: We rewrite the horizon function $H(A^h)$ in the path integral  as

\begin{equation}
\mathrm{e}^{-\gamma^4 H(A^{h})}=\mathrm{e}^{-\frac{\gamma^4}{2} H(A^{h})}\mathrm{e}^{-\frac{\gamma^4}{2} H(A^{h})}\,.
\label{lcf7.0}
\end{equation}
Now, employing the same localization procedure used in the standard Gribov-Zwanziger framework, we obtain

\begin{equation}
\mathrm{e}^{-\frac{\gamma^4}{2} H(A^{h})}=\int \left[\EuScript{D}\varphi\right]\left[\EuScript{D}\bar{\varphi}\right]\left[\EuScript{D}\omega\right]\left[\EuScript{D}\bar{\omega}\right]\mathrm{e}^{-\int d^dx\left(-\bar{\varphi}^{ac}_{\mu}\EuScript{M}^{ab}(A^h)\varphi^{bc}_{\mu}+\bar{\omega}^{ac}_{\mu}\EuScript{M}^{ab}(A^h)\omega^{bc}_{\mu}+g\frac{\gamma^2}{\sqrt{2}}f^{abc}A^{h,a}_{\mu}(\varphi+\bar{\varphi})^{bc}_{\mu}\right)}\,,
\label{lcf8.0}
\end{equation}
and

\begin{equation}
\mathrm{e}^{-\frac{\gamma^4}{2} H(A^{h})}=\int \left[\EuScript{D}\beta\right]\left[\EuScript{D}\bar{\beta}\right]\left[\EuScript{D}\zeta\right]\left[\EuScript{D}\bar{\zeta}\right]\mathrm{e}^{-\int d^dx\left(-\bar{\beta}^{ac}_{\mu}\EuScript{M}^{ab}(A^h)\beta^{bc}_{\mu}+\bar{\zeta}^{ac}_{\mu}\EuScript{M}^{ab}(A^h)\zeta^{bc}_{\mu}-g\frac{\gamma^2}{\sqrt{2}}f^{abc}A^{h,a}_{\mu}(\beta+\bar{\beta})^{bc}_{\mu}\right)}\,.
\label{lcf9.0}
\end{equation}
In \eqref{lcf8.0}, the fields $(\varphi,\bar{\varphi},\omega,\bar{\omega})$ are Zwanziger's localizing fields, $(\beta,\bar{\beta})$ are commuting ones while $(\zeta,\bar{\zeta})$ are anti-commuting and play the same role as Zwanziger's fields. The resulting Gribov-Zwanziger action is given by

\begin{eqnarray}
S^{\mathrm{loc}}_{{\mathrm{CF}}}&=& S_{\mathrm{YM}}+\int d^dx\left(b^{h,a}\partial_{\mu}A^{a}_{\mu}-\frac{\alpha}{2}b^{h,a}b^{h,a}+\bar{c}^{a}\partial_{\mu}D^{ab}_{\mu}c^{b}+\frac{\alpha}{2}gf^{abc}b^{a}\bar{c}^{b}c^{c}\right.\nonumber\\
&+&\left.\frac{\alpha}{8}g^{2}f^{abc}f^{cde}\bar{c}^{a}\bar{c}^{b}c^{d}c^{e}\right)-\int d^dx\left(\bar{\varphi}^{ac}_{\mu}\EuScript{M}^{ab}(A^h)\varphi^{bc}_{\mu}-\bar{\omega}^{ac}_{\mu}\EuScript{M}^{ab}(A^h)\omega^{bc}_{\mu}\right.\nonumber\\
&-&\left.g\frac{\gamma^2}{\sqrt{2}}f^{abc}A^{h,a}_{\mu}(\varphi+\bar{\varphi})^{bc}_{\mu}\right)-\int d^dx\left(\bar{\beta}^{ac}_{\mu}\EuScript{M}^{ab}(A^h)\beta^{bc}_{\mu}-\bar{\zeta}^{ac}_{\mu}\EuScript{M}^{ab}(A^h)\zeta^{bc}_{\mu}\right.\nonumber\\
&+&\left.g\frac{\gamma^2}{\sqrt{2}}f^{abc}A^{h,a}_{\mu}(\beta+\bar{\beta})^{bc}_{\mu}\right)+\int d^dx~\tau^a\partial_{\mu}A^{h,a}_{\mu}\,. \nonumber\\
\label{lcf10.0}
\end{eqnarray}
The local Gribov-Zwanziger action written as \eqref{lcf10.0} is invariant under the following local non-perturbative BRST transformations,

\begin{align}
s_{l}A^{a}_{\mu}&=-D^{ab}_{\mu}c^{b}\,,     && s_{l}\varphi^{ab}_{\mu}=\omega^{ab}_{\mu}\,,&&&      s_{l}h =-igch\,,&&&& s_{l}\beta^{ab}_{\mu}=\omega^{ab}_{\mu}\,,\nonumber\\
s_{l}c^a&=\frac{g}{2}f^{abc}c^{b}c^{c}\,,     && s_{l}\omega^{ab}_{\mu}=0\,,&&&     s_{l}A^{h,a}_{\mu}=0\,,&&&& s_{l}\bar{\zeta}^{ab}_{\mu}=0\,,\nonumber\\
s_{l}\bar{c}^{a}&=b^{h,a}\,,   && s_{l}\bar{\omega}^{ab}_{\mu}=\bar{\varphi}^{ab}_{\mu}+\bar{\beta}^{ab}_{\mu}\,,&&&     s_{l}\tau^{a} =0\,,&&&&s_{l}\zeta^{ab}_{\mu}=0\,.\nonumber\\
s_{l}b^{h,a}&=0\,,         && s_{l}\bar{\varphi}^{ab}_{\mu}=0\,,&&&s_{l}\bar{\beta}^{ab}_{\mu}=0\,,
\label{lcf11.0}
\end{align}
It is an immediate check that $s_l$ is nilpotent, $s^2_l=0$. Integration over $(\beta,\bar{\beta},\zeta,\bar{\zeta})$ gives back the non-local BRST transformations \eqref{gz2}.

In local fashion, the refinement of the Gribov-Zwanziger action is obtained by the introduction of the following term to \eqref{lcf10.0},

\begin{equation}
S_{\mathrm{cond}}=\int d^dx\left[\frac{m^2}{2}A^{h,a}_{\mu}A^{h,a}_{\mu}+M^2\left(\bar{\omega}^{ab}_{\mu}\omega^{ab}_{\mu}-\bar{\varphi}^{ab}_{\mu}\varphi^{ab}_{\mu}-\bar{\beta}^{ab}_{\mu}\beta^{ab}_{\mu}+\bar{\zeta}^{ab}_{\mu}\zeta^{ab}_{\mu}\right)\right]\,.
\label{lcf12.0}
\end{equation}
The resulting Refined Gribov-Zwanziger action, written in local form and invariant under \eqref{lcf11.0} is

\begin{eqnarray}
S^{\mathrm{RGZ}}_{\mathrm{CF}}&=&S_{\mathrm{YM}}+\int d^dx\left(b^{h,a}\partial_{\mu}A^{a}_{\mu}-\frac{\alpha}{2}b^{h,a}b^{h,a}+\bar{c}^{a}\partial_{\mu}D^{ab}_{\mu}c^{b}+\frac{\alpha}{2}gf^{abc}b^{a}\bar{c}^{b}c^{c}\right.\nonumber\\
&+&\left.\frac{\alpha}{8}g^{2}f^{abc}f^{cde}\bar{c}^{a}\bar{c}^{b}c^{d}c^{e}\right)-\int d^dx\left(\bar{\varphi}^{ac}_{\mu}\EuScript{M}^{ab}(A^h)\varphi^{bc}_{\mu}-\bar{\omega}^{ac}_{\mu}\EuScript{M}^{ab}(A^h)\omega^{bc}_{\mu}\right.\nonumber\\
&-&\left.g\frac{\gamma^2}{\sqrt{2}}f^{abc}A^{h,a}_{\mu}(\varphi+\bar{\varphi})^{bc}_{\mu}\right)-\int d^dx\left(\bar{\beta}^{ac}_{\mu}\EuScript{M}^{ab}(A^h)\beta^{bc}_{\mu}-\bar{\zeta}^{ac}_{\mu}\EuScript{M}^{ab}(A^h)\zeta^{bc}_{\mu}\right.\nonumber\\
&+&\left.g\frac{\gamma^2}{\sqrt{2}}f^{abc}A^{h,a}_{\mu}(\beta+\bar{\beta})^{bc}_{\mu}\right)+\int d^dx~\tau^a\partial_{\mu}A^{h,a}_{\mu}+\int d^dx\left[\frac{m^2}{2}A^{h,a}_{\mu}A^{h,a}_{\mu}
\right.\nonumber\\
&+&\left.M^2\left(\bar{\omega}^{ab}_{\mu}\omega^{ab}_{\mu}-\bar{\varphi}^{ab}_{\mu}\varphi^{ab}_{\mu}-\bar{\beta}^{ab}_{\mu}\beta^{ab}_{\mu}+\bar{\zeta}^{ab}_{\mu}\zeta^{ab}_{\mu}\right)\right]\,. 
\label{lcf14.0}
\end{eqnarray}
The Refined Gribov-Zwanziger action \eqref{lcf14.0} is an effective action which takes into account the presence of Gribov copies in the standard Faddeev-Popov procedure in Curci-Ferrari gauges. Moreover, this action also incorporates further non-perturbative dynamics effects as the formation of dimension-two condensates. All this setting is written in local fashion and enjoys non-perturbative BRST symmetry \eqref{lcf11.0} which ensures gauge parameter independence of correlation functions of gauge invariant composite operators, see Ch.~\ref{locnonpBRST} for a purely  algebraic proof of this statement.

\noindent \textbf{Remark:} After the submission of this thesis, the paper \cite{Pereira:2016fpn} was published. It contains the main results here presented.

\chapter{Conclusions} \label{Conc}

In this thesis, a particular approach to the infrared regime of Yang-Mills theories, the Refined Gribov-Zwanziger scenaio, was analyzed. Originally constructed in the Landau gauge, this program aims at providing a consistent quantization of Yang-Mills theories beyond perturbation theory. Although very fruitful in the Landau gauge, with very nice agreement with lattice and functional methods results, the issue of extending this framework to different gauges was not addressed in detail as in the Landau gauge. Partially, this is due to the lack of data concerning different gauges (with few exceptions) from different approaches being thus an obstacle to compare the results and partially due to technical complications as described in this thesis. 

Recently, different groups on different approaches to the IR issue in Yang-Mills theories started investigating different gauge choices than Landau gauge, a fact that plays an important role to establish an interplay between the communities. Also, from the Gribov-Zwanziger scenario point of view, a lot of non-trivial results enabled the extension of the formalism to different gauges. The prominent example of such results is the construction of the non-perturbative and nilpotent BRST symmetry for the Gribov-Zwanziger action. This, not only established a better understanding of the scenario in the Landau gauge itself, but also provided a more systematic line of attack for the formalism. Hence, the construction of the Refined Gribov-Zwanziger action in linear covariant gauges with the power of this new symmetry seems to provide a consistent picture. Gauge independence of physical observables and agreement with the most recent lattice data are certainly attractive features of the construction. 

Also, the construction of the non-perturbative BRST symmetry naturally suggested a way to take into account the Gribov problem in a class of non-linear gauges. At this level, no comparison with lattice or functional methods is possible, since these gauges were not implemented by these approaches so far. However, it is remarkable how simple the extesion to these gauges was once we had the non-perturbative BRST symmetry at our disposal. So far, we can only provide consistency checks of the formalism.

The developments here presented open a rich window of perspectives. An urgent issue to be studied is the renormalizability property of Refined Gribov-Zwanziger action in linear covariant gauges as well as in Curci-Ferrari gauges. Having a local formulation for these actions, the analysis is possible, albeit non-trivial. Also, the novel BRST symmetry here introduced encodes a non-trivial information about the geometry of functional space of gauge fields. It is not only interesting by its own, but also very pertinent to understand the geometrical interpretation of the non-perturbative BRST operator. Also, a detailed investigation of the coupling with matter fields in a non-perturbative BRST invariant way is desired. 

In the non-perturbative BRST invariant setting, a physical meaning was given to the dimension-two condensates dynamically generated. The explicit computation of these condensates through effective potential methods is a natural path to pursue. The interplay among this computation with lattice's fitting may provide well grounded conclusions about their values. 

Finally, the extension to supersymmetric theories as well as finite temperature computations are natural topics to be investigated, mainly because they were worked out (partially) in the standard (Refined) Gribov-Zwanziger formalism in the Landau gauge. In summary, a substantial ammount of applications is possible and this should fortify the power of the here presented framework.

\appendix

\chapter{Conventions} \label{appendixA}

\section{Yang-Mills theory - Setting the stage}

The central object of this thesis is pure Yang-Mills theory \textit{i.e.} we do not include fermionic matter. The arena where computations are performed is $d$-dimensional Euclidean space (with $d=2,3,4$) and the gauge group is $SU(N)$. Gauge group generators will be denoted as $T^a$, with $a\in\left\{1,\ldots,N^2-1\right\}$ being the \textit{color} index. They obey the algebra

\begin{equation}
\left[T^a,T^b\right]=if^{abc}T^c\,,
\label{a0}
\end{equation}

 \noindent with $T$ being Hermitian generators, namely $T=T^\dagger$ and $f^{abc}$ the totally antisymmetric (real) structure constants. For our purposes, the generators $T$ belong to the adjoint representation of $SU(N)$. The structure constants satisfy very useful relations, given by

\begin{equation}
f^{abc}f^{cde}+f^{adc}f^{ceb}+f^{aec}f^{cbd} = 0\,\,\,\,\mathrm{and}\,\,\,\, f^{adc}f^{bdc}=N\delta^{ab}\,.
\label{a0.1}
\end{equation}

\noindent The Killing metric is  

\begin{equation}
\mathrm{Tr}\left(T^aT^b\right)=\frac{1}{2}\delta^{ab}\,.
\label{a1}
\end{equation}

\noindent Generically, an element $U$ of $SU(N)$ will be written as

\begin{equation}
U=\mathrm{exp}\left(-ig\xi^a T^a\right)\,,
\label{a2}
\end{equation}

\noindent with $g$ and $\xi$ being real variables and, by definition, $UU^\dagger = 1$. 

\noindent After this set of definitions we can introduce the algebra-valued \textit{gauge} field or, simply, the \textit{gluon} field $A_{\mu}=A^{a}_{\mu}T^{a}$ and the Yang-Mills action\footnote{Greek indices as $\alpha,\beta,\mu,\nu,\ldots$ will denote $d$-dimensional spacetime indices and since we are in Euclidean space, we do not bother with the distinction among up or down indices.}

\begin{equation}
S_{\mathrm{YM}}=\frac{1}{2}\int d^dx~\mathrm{Tr}\left(F_{\mu\nu}F_{\mu\nu}\right)=\frac{1}{4}\int d^dx~F^{a}_{\mu\nu}F^{a}_{\mu\nu}\,,
\label{a3}
\end{equation}

\noindent with $F_{\mu\nu}=F^{a}_{\mu\nu}T^{a}=\partial_{\mu}A_{\nu}-\partial_{\nu}A_{\mu}-ig\left[A_{\mu},A_{\nu}\right]$ the \textit{field strength}. In terms of algebra components, the field strength is written as

\begin{equation}
F^{a}_{\mu\nu}=\partial_{\mu}A^{a}_{\nu}-\partial_{\nu}A^{a}_{\mu}+gf^{abc}A^{b}_{\mu}A^{c}_{\nu}\,.
\label{a4}
\end{equation}

\noindent The \textit{gauge transformation} which leaves action (\ref{a3}) invariant is defined by

\begin{equation}
A'_{\mu}=UA_{\mu}U^{\dagger}+\frac{i}{g}U\partial_{\mu}U^{\dagger}\,,
\label{a5}
\end{equation}

\noindent which induces a gauge covariant transformation of the field strength, \textit{i.e.}

\begin{equation}
F'_{\mu\nu}=UF_{\mu\nu}U^{\dagger}\,.
\label{a6}
\end{equation}

\noindent The invariance of (\ref{a3}) under (\ref{a6}) is trivial using the cyclic property of the trace with $UU^{\dagger}=1$. We define an \textit{infinitesimal} gauge transformation by taking the element $U$ of eq.(\ref{a5}) close to the identity. This corresponds to take the real parameter $\xi^{a}$ as infinitesimal, which implies

\begin{equation}
U=1-ig\xi^{a}T^{a}+\mathcal{O}(\xi^2)\approx 1-ig\xi^{a}T^{a}\,.
\label{a7}
\end{equation}

\noindent Plugging eq.(\ref{a7}) into eq.(\ref{a5}), the gauge field transforms under an infinitesimal gauge transformation as

\begin{equation}
A'^{a}_{\mu}=A^{a}_{\mu}-D^{ab}_{\mu}\xi^{b}\,,
\label{a8}
\end{equation}

\noindent whereby $D^{ab}_{\mu}$ is the covariant derivative in the adjoint representation,

\begin{equation}
D^{ab}_{\mu}=\delta^{ab}\partial_{\mu}-gf^{abc}A^{c}_{\mu}\,.
\label{a9}
\end{equation}

\noindent A very important decomposition which is heavily used in the text is the transverse/longitudinal split of the gauge field, namely

\begin{equation}
A^{a}_{\mu}=\underbrace{\left(\delta_{\mu\nu}-\frac{\partial_{\mu}\partial_{\nu}}{\partial^2}\right)}_{\mathcal{P}_{\mu\nu}}A^{a}_{\mu}+\underbrace{\frac{\partial_{\mu}\partial_{\nu}}{\partial^2}}_{\mathcal{L}_{\mu\nu}}A^{a}_{\mu}\,,
\label{a9.1}
\end{equation}

\noindent with $\mathcal{P}_{\mu\nu}$ and $\mathcal{L}_{\mu\nu}$ the \textit{transverse and longitudinal projectors}, respectively. Therefore, we define the transverse component of the gauge field as

\begin{equation}
A^{Ta}_{\mu}\equiv \mathcal{P}_{\mu\nu}A^{a}_{\nu}\,,
\label{a9.2}
\end{equation}

\noindent and the longitudinal sector,

\begin{equation}
A^{La}_{\mu}\equiv \mathcal{L}_{\mu\nu}A^{a}_{\nu}\,.
\label{a9.3}
\end{equation}

\noindent At the level of classical Yang-Mills theory, these are the most important (and widely used throughout this thesis) definitions and conventions. Now, we move to some important conventions at the quantum level. 

\section{``Quantum" conventions}

The quantization of Yang-Mills theory relies on the introduction of a constraint on the gauge field $A^{a}_{\mu}$. As discussed in Subsect.~\ref{FPtrick}, this is achieved, in the path integral quantization, through the Faddeev-Popov method at the perturbative level. Effectively, to implement the \textit{gauge fixing}, we introduce Faddeev-Popov ghosts fields $\left(\bar{c}^{a},c^{b}\right)$, with $\bar{c}$ the anti-ghost and $c$ the ghost field and a Lagrange multiplier $b^{a}$, also known as the auxiliary Lautrup-Nakanishi field, responsible to enforce the gauge fixing condition. The gauge-fixed action is given by

\begin{eqnarray}
S_{\mathrm{FP}}&=&S_{\mathrm{YM}}+s\int d^dx~\bar{c}^{a}F^a\nonumber\\
&=&S_{\mathrm{YM}}+\int d^dx~\left(b^{a}F^a+\bar{c}^{a}\frac{\delta\Delta^{a}}{\delta A^{b}_{\mu}}D^{bc}_{\mu}c^{c}\right)\,,
\label{a10}
\end{eqnarray}

\noindent with $s$ the BRST operator and $F^{a}=0$, the gauge condition. At this level, gauge symmetry is manifestly broken due to the introduction of a gauge fixing constraint\footnote{This is true just at the perturbative level.}, but BRST enters the game. In this thesis, the conventions we use for the BRST transformations of the set of fields $(A,\bar{c},c,b)$ are

\begin{eqnarray}
sA^{a}_{\mu}&=&-D^{ab}_{\mu}c^{b}\nonumber\\
sc^{a}&=&\frac{g}{2}f^{abc}c^{b}c^{c}\nonumber\\
s\bar{c}^{a}&=&b^{a}\nonumber \\
sb^{a}&=& 0\,.
\label{a11}
\end{eqnarray}

\noindent The BRST operator $s$ is nilpotent \textit{i.e.} $s^2=0$ and carries ghost number 1. To account for the non-linearity of the BRST transformation of $A$ and $c$, we introduce BRST invariant external sources $\Omega^{a}_{\mu}$ and $L^{a}$ coupled to the non-linear transformations. These are introduced by the addition of the following term to (\ref{a10}),

\begin{equation}
S_{\mathrm{ext}}=s\int d^dx\left(-\Omega^a_\mu A^a_\mu+L^ac^a\right)=\int d^dx\left(-\Omega^a_\mu D^{ab}_\mu c^b+\frac{g}{2}f^{abc}L^ac^bc^c\right)\,.
\label{a12}
\end{equation}

\noindent The quantum numbers of fields and sources are summarized in table~\ref{tablea1}.

\begin{table}[t]
\centering
\begin{tabular}{|c|c|c|c|c|c|c|c|c|}
\hline
Fields/Sources & $A$ & $F$ & $b$ & $c$ & $\bar{c}$ & $\Omega$ & $L$ & $g$ \\ \hline
Dimension & $(d-2)/2$ & $\kappa$ & $d-\kappa$ & $(d-4)/2$ & $d-\kappa$ & $d-1$ & $(3d-4)/2$ & $(4-d)/2$\\
Ghost number & 0 & 0 & 0 & 1 & $-1$ & $-1$ & $-2$ & 0 \\ \hline
\end{tabular}
\caption{Quantum numbers of the fields. The BRST operator has ghost number $1$ and is chosen to be of dimension $0$.}
\label{tablea1}
\end{table}

\chapter{Effective Action and Symmetries}\label{EA}

In classical physics, the most important object is the action. With this, it is possible to derive the equations of motion using Hamilton's principle and once we have the equations of motion and boundary conditions, we are able to describe the complete motion of the system. Curiously, in quantum field theory, the classical action also plays a fundamental role, but this time, the object which fundamentally describes the system is the partition function, or simply, generating functional. It is defined by

\begin{equation}
\EuScript{Z} = \EuScript{N}\int \left[\EuScript{D}\phi\right]\mathrm{e}^{-\frac{1}{\hbar}(S(\phi) + \int d^dx\,J\phi)}\;,
\label{b1}
\end{equation}

\noindent where $S(\phi)$ denotes the classical action defined for the system to be described, $J$ are external sources coupled to the fields and $\EuScript{N}$ is a normalization constant. As we mentioned, the classical action still plays a very important role even in quantum field theory. If we take the functional derivative of $Z$ with respect to the source $J(x)$, we obtain

\begin{equation}
{\frac{\delta \EuScript{Z}}{\delta J(x)}}\Bigg|_{J=0} = -\EuScript{N}\frac{1}{\hbar}\int \left[\EuScript{D}\phi\right]\phi(x)\mathrm{e}^{-\frac{1}{\hbar}S(\phi)} \equiv -\hbar\langle \phi(x) \rangle\;,
\label{b2}
\end{equation}

\noindent which is nothing else than the one-point function for the field $\phi$. Clearly, we could take the functional derivative with respect to an arbitrary number of sources, let us say $n$, and obtain the $n$-point function for the field $\phi$, which is given by

\begin{eqnarray}
\frac{\delta^{n}\EuScript{Z}}{\delta J(x_{1})\ldots\delta J(x_{n})}\Bigg|_{J=0} &=&\EuScript{N}\left(\frac{-1}{\hbar}\right)^{n}\int \left[\EuScript{D}\phi\right]\phi(x_{1})\ldots\phi(x_{n})\mathrm{e}^{-\frac{1}{\hbar}S(\phi)} \nonumber \\
&\equiv& (-\hbar)^{n}\langle \phi(x_{1})\ldots \phi(x_{n}) \rangle \;.
\label{b3}
\end{eqnarray}

\noindent Since it generates all correlation functions, its name of generating functional is justified. Finally, we can introduce a power series to define the generating functional as follows

\begin{equation}
\EuScript{Z}[J] = \sum^{\infty}_{n=0}\frac{(-1/\hbar)^n}{n!}\int d^dx_{1} \ldots \int d^dx_{n}J(x_{1})\ldots J(x_{n})\langle \phi (x_{1})\ldots \phi(x_{n}) \rangle\;.
\label{b4}
\end{equation}

\section{The $\EuScript{W}[J]$ Generating Functional}

As reviewed in the last section, we can construct an object which generates all correlation functions for a system. Conveniently, these correlation functions are represented through Feynman diagrams. However, the generating functional $\EuScript{Z}$ generates diagrams that can be divided as ``products" of other diagrams. These diagrams are called \textit{disconnected} ones. We say they are built by \textit{connected} diagrams. Hence, the knowledge of disconnected and connected diagrams is not bigger than the knowledge of connected diagrams, since we can simply take ``products" of them and build disconnected ones. Since $\EuScript{Z}[J]$ generates all diagrams (connected and disconnected), we may look for a quantity $\EuScript{W}[J]$ which might be seen as more fundamental than $Z[J]$, that generates exclusively connected diagrams. It is possible to show, \cite{zinnjustin} that this quantity is defined by

\begin{equation}
\EuScript{Z}[J] = \mathrm{e}^{-\frac{1}{\hbar}\EuScript{W}[J]}\;.
\label{b5}
\end{equation}

\noindent It is conventional to express the $\EuScript{W}[J]$ generating functional as $\EuScript{W}[J] = -\hbar \mathrm{ln}\EuScript{Z}[J]$. Since we are using the normalization $\EuScript{Z}[0] = 1$, we have that $\EuScript{W}[0] = 0$. In analogy with eq.\eqref{b4}, we can write the following power series for $\EuScript{W}[J]$,

\begin{equation}
\EuScript{W}[J] = \sum^{\infty}_{n=1}\frac{(-1/\hbar)^{n-1}}{n!}\int d^dx_{1} \ldots \int d^dx_{n}~J(x_1)\ldots J(x_n)\langle \phi(x_1)\ldots\phi(x_n) \rangle_{c}\;,
\label{b6}
\end{equation}

\noindent where the subscipt $c$ denotes the fact that these are connected correlation functions. 

\section{The $\Gamma$ Generating Functional}

In the previous section, we introduced a generating functional which is, in a certain sense, more fundamental than the $\EuScript{Z}[J]$ generating functional. The reason is that disconnected diagrams are not accounted since they can be generated by the union of connected diagrams. We could go ahead with this kind of analysis and look for more fundamental diagrams which are responsible to generate all the possibles in a theory. These diagrams are called \textit{one particle irreducible} (1PI) and to understand their origin we can imagine the following: take a connected diagram and make a cut on the internal lines. If after this, the resultant diagrams are still genuine diagrams of the theory, than we say that the initial diagram is \textit{reducible}. In the end of this procedure, we will have diagrams that cannot be reduced further and we call them as the 1PI diagrams. Therefore, if it is possible to define a generating functional which is responsible to generate just 1PI diagrams, then it is the most fundamental quantity that we have to build correlation functions. In fact, such generating functional exists and is denoted as $\Gamma$. This functional is of great importance for quantum field theory and it is usually called as \textit{effective action} or \textit{quantum action}. 

In order to define the effective action, we must introduce the \textit{classical field} $\varphi$, defined by

\begin{equation}
\varphi_{J}(x) = \frac{\delta \EuScript{W}[J]}{\delta J(x)}\;.
\label{b7}
\end{equation}

\noindent This quantity is essentially the expectation value of $\phi(x)$ in the presence of a source $J$. The classical field is a function of the source $J$ and as such, we assume it is invertible and that we can write $J$ as a function of $\varphi$. The effective or quantum action is defined by the following Legendre transform

\begin{equation}
\Gamma(\varphi) = \EuScript{W}[J(\varphi)] - \int d^dx\,J(x)\varphi(x)\;.
\label{b8}
\end{equation}

\noindent It is a simple exercise to show that 

\begin{equation}
\frac{\delta \Gamma[\varphi]}{\delta \varphi(x)} = - J(x)\;,
\label{b9}
\end{equation}

\noindent and if we set $J=0$, we obtain the following equation

\begin{equation}
\frac{\delta \Gamma[\varphi]}{\delta \varphi(x)}\Bigg|_{J=0} = 0\;,
\label{b10}
\end{equation}

\noindent which solutions are the expectation values of the fields $\phi$. This is the ``quantum version" of the classical equations of motions, where we replace $\Gamma$ by $S$. A very important and well-known feature about the effective action is that it can be expanded in a formal power series of $\hbar$, \textit{i.e.}

\begin{equation}
\Gamma[\varphi] = \sum^{\infty}_{n=0}\hbar^n\Gamma^{(n)}(\varphi)\;,
\label{b11}
\end{equation}

\noindent and for $n=0$, the effective action coincides with the classical action $S[\varphi]$. The power of $\hbar$ is associated with the number of loops of a diagram. In this sense, the classical action contributes with diagrams which do not have loops, \textit{i.e.}, at the tree level and quantum corrections are introduced with loops. This is why we call $\Gamma$ as the quantum action. 

\section{Composite Operators}

In many cases, we are interested in correlation functions involving composite field operators, \textit{i.e.}, which are local polynomials of field operators. Inhere, this is very important, since BRST transformations are non-linear in some fields and we are also interested on the introduction of dimension-two operators in some circumstances. In order to deal with these objects, we must introduce a new term in the classical action. To do so, let us call the composite operators as $Q$. For each composite operator, we introduce a source $\rho$, and add in the classical action, the following term

\begin{equation}
S_\mathrm{sources} = \int d^dx\,\rho Q\;.
\label{b12}
\end{equation}

\noindent Thus, the generating functional $\EuScript{Z}$ becomes

\begin{equation}
\EuScript{Z}[J,\rho] = \int \left[\EuScript{D}\phi\right]\mathrm{e}^{-\frac{1}{\hbar}(S(\phi) + \int d^dx\,\rho Q + \int d^dx\,J\phi)}\;,
\label{b13}
\end{equation}

\noindent and we can calculate the correlation functions as

\begin{equation}
\frac{\delta^{(n+m)}\EuScript{Z}[J,\rho]}{\delta J(x_{1})\ldots\delta J(x_{n})\delta\rho(y_1)\ldots\delta\rho(y_{m})}\Bigg|_{J,\rho = 0} = \left(-\frac{1}{\hbar}\right)^{n+m}\langle \phi(x_1)\ldots\phi(x_n)Q(y_1)\ldots Q(y_m)\rangle\;.
\label{b14}
\end{equation}

\noindent We can generalize the definition of $\EuScript{W}[J]$ and $\Gamma$ directly as

\begin{equation}
\EuScript{Z}[J,\rho] = \mathrm{e}^{-\frac{1}{\hbar}\EuScript{W}[J,\rho]}
\label{b15}
\end{equation}

\noindent and

\begin{equation}
\Gamma[\varphi,\rho] = \EuScript{W}[J(\varphi),\rho] - \int d^dx\, J(x)\varphi(x)\;.
\label{b16}
\end{equation}

\noindent Finally, we introduce the following useful notation:

\begin{equation}
\frac{\delta \EuScript{Z}[J,\rho]}{\delta \rho(x)}\Bigg|_{\rho=0} \equiv Q(x)\cdot \EuScript{Z}[J], \;\;\; \frac{\delta \EuScript{W}[J,\rho]}{\delta \rho(x)}\Bigg|_{\rho=0} \equiv Q(x)\cdot \EuScript{W}[J] \;\; \mathrm{and} \;\; \frac{\delta \Gamma[\varphi,\rho]}{\delta \rho(x)}\Bigg|_{\rho=0} \equiv Q(x)\cdot \Gamma[\varphi]\;.
\label{b17}
\end{equation}

\section{Symmetries}

In the context of classical field theory, symmetries are very important not only to simplify the analysis of a system, but even to define important conserved quantities via Noether's theorem. A very special question to address is if symmetries in the classical world survive quantization. In quantum field theory, symmetries are represented by Ward identities, which are relations between correlation functions. 

To avoid confusion, we will introduce an index $i$ on the fields, to say explicitly that we are dealing with a set of fields $\left\{\phi_i\right\}$, and not necessarily with just one field $\phi$. Let us suppose that the classical action which describes this system is known and we denote it by $S(\phi_i)$. A transformation $\delta$ defined by

\begin{equation}
\delta\phi_i(x) = i\epsilon^{a}R^{a}_{i}(x)
\label{b18}
\end{equation}

\noindent is called a symmetry if it leaves the action invariant, \textit{i.e.}

\begin{equation}
\delta S[\phi_i] = 0\;,
\label{b19}
\end{equation}

\noindent where $R^{a}_{i}(x)$ are formal power series of fields and its derivatives and $\epsilon^a$ are infinitesimal constant parameters. We assume that the transformation defined by eq.\eqref{b18} belongs to a representation of a Lie group, whose generators $G_a$ satisfy

\begin{equation}
[G^a,G^b]=if^{abc}G^c\;,
\label{b20}
\end{equation}

\noindent with $f^{abc}$ being the totally antisymmetric symbol. We demand that

\begin{equation}
\int d^dy~\left(R^b_j(y)\frac{\delta R^{a}_{i}(x)}{\delta \phi_j(y)} - R^{a}_{j}(y)\frac{\delta R^b_i(x)}{\delta \phi_j(y)}\right) = if^{abc}R^{c}_{i}(x)\;.
\label{b21}
\end{equation}

\noindent Given a generic functional of the fields $F[\phi]$, we can write its variation under $\delta$ transformations as

\begin{equation}
\delta F[\phi] = -i\epsilon^{a}\mathcal{W}^{a}F\;,
\label{b22}
\end{equation}

\noindent with

\begin{equation}
\mathcal{W}^{a} = -\int d^dx~R^{a}_{i}(x)\frac{\delta}{\delta \phi_i(x)}\;,
\label{b23}
\end{equation}

\noindent and using eq.(\ref{b21}), we obtain

\begin{equation}
[\mathcal{W}^{a},\mathcal{W}^{b}] = if^{abc}\mathcal{W}^{c}\;.
\label{b24}
\end{equation}

\subsection{Linear Transformations}

Let us consider the particular example of transformations which are linear in the fields, \textit{i.e.} 

\begin{equation}
R^{a}_{i}(x) = {\Lambda^{a}_i}^{j}\phi_j(x)\;,
\label{b25}
\end{equation}

\noindent where $\Lambda^{a}$ are matrices which satisfy the algebra (\ref{b20}). We can express the invariance of the classical action under this transformation as

\begin{equation}
\mathcal{W}^{a}S = - \int d^dy\,{\Lambda^{a}_i}^{j}\phi_j\frac{\delta S}{\delta \phi_i} = 0\;.
\label{b26}
\end{equation}

\noindent Using eq.(\ref{b9}), it is possible to show that

\begin{equation}
\mathcal{W}^{a}\Gamma^{(0)}[\varphi] = \int d^dx\,J_i{\Lambda^{a}_{i}}^{j}\frac{\delta \EuScript{W}^{(0)}[J]}{\delta J_{j}} = 0\;.
\label{b27}
\end{equation}

\noindent Equations (\ref{b26}) and (\ref{b27}) are called \textit{Ward Identities}.

\subsection{Nonlinear Transformations}

We assume a general transformation, $R^{a}_{i}(x)$ which is not linear on the fields anymore. As explained before, we must couple these non-linearities to sources $\rho$. So, the classical action is modified by the introduction of sources terms in the form

\begin{equation}
S'[\phi] = S[\phi] + i\epsilon^{a}\int d^dx\,R^{a}_{i}\rho^i \equiv \Gamma^{(0)}\;.
\label{b28}
\end{equation} 

\noindent It happens, however, that the action defined by eq.(\ref{b28}) is not invariant under the transformations given by eq.(\ref{b18}). It is invariant if we consider the infinitesimal parameters $\epsilon$ to be Grassmann numbers and if we consider that they transform as well by

\begin{equation}
\delta\epsilon^{a} = -\frac{1}{2}f^{abc}\epsilon^{b}\epsilon^{c}\;.
\label{b29}
\end{equation}

\noindent In this way, the operator $\delta$ which defines the transformation is nilpotent and the we can write the symmetry in a functional form

\begin{equation}
\mathcal{S}(\Gamma^{(0)}) = \int d^dx~\frac{\delta\Gamma^{(0)}}{\delta\rho^{i}}\frac{\delta\Gamma^{(0)}}{\delta\phi^{i}} - \frac{1}{2}f^{abc}\epsilon^{b}\epsilon^{c}\frac{\partial\Gamma^{(0)}}{\partial\epsilon^{a}} = 0\;.
\label{b30}
\end{equation}

\noindent This is precisely the BRST transformations. Eq.(\ref{b30}) is known as \textit{Slavnov-Taylor Identity}.

\section{Quantum action principle - QAP}

A nice way to introduce the QAP \cite{Schwinger:1951xk,Lowenstein:1971vf,Lowenstein:1971jk,Lam:1972mb,Clark:1976ym} is dividing the discussion as we done in the last sections into linear and nonlinear symmetries. 

\subsection{Linear Symmetries}

For linear symmetries, it is not necessary to introduce external sources, since there is not any non-linearity to couple with. To simplify, we will denote this symmetry as

\begin{equation}
\delta\phi_i=P_i(x)\;,
\label{b31}
\end{equation}

\noindent and if it is a symmetry, than we know that

\begin{equation}
\delta S = \int d^dx\,P_i\frac{\delta S}{\delta\phi_{i}} = 0\;.
\label{b32}
\end{equation}

\noindent The QAP states that, at the quantum level, this Ward identity is 

\begin{equation}
\int d^dx\,P_i\frac{\delta\Gamma}{\delta\phi_i} = \Delta \cdot \Gamma\;,
\label{b33}
\end{equation}

\noindent where in the right hand side, there is an insertion as introduced in eq.(\ref{b17}). This insertion is arbitrary, but has to satisfy some minimal conditions: (i) $\Delta$ must be an integrated polynominal of fields, sources and their derivatives; (ii) By consistency, the dimension of the field combination is boundend by $(d - d_i + d_P)$, where $d$ is the dimension of spacetime, $d_i$ is the dimension of the field $\phi_i$ and $d_P$ is the dimension of the symmetry $P$; (iii) Again, by consistency, $\Delta$ must have the same quantum numbers of $\mathcal{W}$, which defines the Ward identity.

\subsection{Nonlinear Symmetries}

The QAP for nonlinear symmetries is of the same form for linear ones, but now we have to introduce external sources $\rho$ to couple with non-linearities. Therefore, the QAP is written as

\begin{equation}
\int d^dx\,\frac{\delta \Gamma}{\delta\rho^a_i}\frac{\delta \Gamma}{\delta \phi_i} = \Delta^a \cdot \Gamma \;.
\label{b34}
\end{equation}

\chapter{Local composite operator formalism} \label{LCO}

In this appendix we present a short overview of the so-called \textit{local composite operator technique} (LCO technique), a method to consistently obtain an effective potential for local composite operators where basic requirements such as renormalizability and consistency with the renormalization group at all orders in perturbation theory are obeyed. This method was introduced and developed in \cite{Verschelde:1995jj} and largely used \textit{e.g.} for the computation of the effective potential associated with dimension 2 condensates in Yang-Mills theories. In the following section, we show how to introduce dimension 2 composite operators using the LCO framework in such a way to be consistent with with QAP and BRST symmetry.

\section{Introduction of dimension two LCO}

Let us consider a LCO\footnote{The index $A$ is a multi-index representation.} $\mathcal{O}^{A}$ of dimension two with ghost number $z$ and an action $\Sigma$ invariant under BRST transformations\footnote{We will present the method with particular attention to the case of Yang-Mills theories.} (\ref{a11}). Following the QAP we must couple $\mathcal{O}^{A}$ to a source $J^{A}$ with ghost number $-z$ to introduce it in the action $\Sigma$. In order to ensure $J^A$ does not enter the non-trivial part of the BRST cohomology we introduce it as BRST doublet, namely

\begin{equation}
s\lambda^{A}=J^A\,\,\,\,\,\mathrm{and}\,\,\,\,\,sJ^{A}=0\,,
\label{c1}
\end{equation}

\noindent with $\lambda^A$ (with ghost number $-z-1$) introduces to form the BRST doublet. The LCO action, which is the original action $\Sigma$ taking into account the introduction of $\mathcal{O}^{A}$ is defined as

\begin{eqnarray}
\Sigma_{\mathrm{LCO}}&=&\Sigma+s\int d^4x\left(\lambda^{A}\mathcal{O}^{A}-\beta\frac{\zeta}{2}\lambda^AJ^A\right)\nonumber\\
&=&\Sigma+\int d^4x\left(J^A\mathcal{O}^{A}+(-1)^{-z-1}\lambda^A s\mathcal{O}^{A}-\beta(z)\frac{\zeta}{2}J^{A}J^{A}\right)\,,
\label{c2}
\end{eqnarray}

\noindent where the quadratic term in the sources is allowed by power counting as long as $z=0$ and $\zeta$ is in principle a free parameter known as LCO parameter. The parameter $\beta$ is such that $\beta(0)=1$ and $\beta(z\neq 0)=0$. Novel divergences $\propto J^2$ arise from the behavior of the correlator

\begin{equation}
\lim_{x\rightarrow y}\langle\mathcal{O}^{A}(x)\mathcal{O}^{A}(y)\rangle\,.
\label{c3}
\end{equation}

\noindent For this reason, the term $\zeta J^2/2$ is responsible to absorb $\delta\zeta J^2$ counterterms. From eq.(\ref{c2}), we compute immediately the equation of motion for $\lambda^A$,

\begin{equation}
\frac{\delta\Sigma_{\mathrm{LCO}}}{\delta\lambda^{A}}= (-1)^{-z-1}s\mathcal{O}^{A}\,,
\label{c4}
\end{equation}

\noindent which is equal to a BRST variation. If $\mathcal{O}^{A}$ is BRST invariant even at the \textit{on-shell} level, this equation of motion corresponds to a Ward identity which enables us to control the introduction of the LCO at the level of renormalizability. Although we have introduced the operator $\mathcal{O}^{A}$ in a consistent way with the QAP and BRST symmetry, the presence of quadratic term in the source $J^A$ spoils the standard interpretations adopted in quantum field theory. This ``problem" can be solved though by the introduction of the so-called Hubbard-Stratonovich fields. We discuss it in the next section.

\section{Hubbard-Stratonovich fields} \label{HSFields}

Since our interest is to study the effective potential associated with $\mathcal{O}^{A}$, we set $\lambda^{A}$ to zero. The functional $W[J]$ is defined as

\begin{equation}
\mathrm{e}^{-W[J]}\equiv\int\left[\EuScript{D}\Phi\right]\exp\left[-\Sigma-\int d^4x\left(J^A\mathcal{O}^A-\frac{\zeta}{2}J^AJ^A\right)\right]\,,
\label{c5}
\end{equation}

\noindent which implies

\begin{equation}
\frac{\delta W[J]}{\delta J^{A}}\Big|_{J^{A}=0}=-\langle \mathcal{O}^{A}\rangle\,.
\label{c6}
\end{equation}

\noindent Now we can write the unity in the following way

\begin{equation}
1=\EuScript{N}\int\left[\EuScript{D}\sigma\right]\exp\left[-\frac{1}{2\zeta}\int d^4x\left(\frac{\sigma^A}{g}+\mathcal{O}^{A}-\zeta J^A\right)\left(\frac{\sigma^A}{g}+\mathcal{O}^{A}-\zeta J^A\right)\right]\,,
\label{c7}
\end{equation}

\noindent with $\EuScript{N}$ is a normalization factor. An easy algebraic manipulation leads to 

\begin{eqnarray}
&-&\frac{1}{2\zeta}\left(\frac{\sigma^A}{g}+\mathcal{O}^{A}-\zeta J^A\right)\left(\frac{\sigma^A}{g}+\mathcal{O}^{A}-\zeta J^A\right)
=-\frac{\sigma^{A}\sigma^{A}}{2\zeta g^2}-\frac{\sigma^{A}\mathcal{O}^{A}}{\zeta g}\nonumber\\
&+&\frac{\sigma^{A}J^{A}}{g}-\frac{1}{2\zeta}\mathcal{O}^{A}\mathcal{O}^{A}-\frac{\zeta}{2}J^{A}J^{A}+\mathcal{O}^{A}J^{A}\,.
\label{c8}
\end{eqnarray}

\noindent Inserting (\ref{c7}) in (\ref{c5}) we obtain the following partition function (we omit the normalization factor),

\begin{equation}
\mathrm{e}^{-W[J]}=\int \left[\EuScript{D}\Phi\right]\left[\EuScript{D}\sigma\right]\exp\left[-\Sigma-\int d^4x\left(\frac{\sigma^{A}\sigma^{A}}{2\zeta g^2}+\frac{\sigma^{A}\mathcal{O}^{A}}{\zeta g}+\frac{1}{2\zeta}\mathcal{O}^{A}\mathcal{O}^{A}-\frac{\sigma^{A}J^{A}}{g}\right)\right]\,.
\label{c9}
\end{equation}

\noindent From eq.(\ref{c9}) is immediate 

\begin{equation}
\langle \mathcal{O}^{A}\rangle=-\frac{1}{g}\langle \sigma^{A}\rangle\,,
\label{c10}
\end{equation}

\noindent and also from eq.(\ref{c9}) without the $J^2$ term we can proceed with a standard analysis of the effective potential. 

\section{Condensation}

From eq.(\ref{c9}) we define the action $S_{\sigma}$ associated with the Hubbard-Stratonovich fields,

\begin{equation}
S_{\sigma}=\int d^4x\left(\frac{\sigma^{A}\sigma^{A}}{2\zeta g^2}+\frac{\sigma^{A}\mathcal{O}^{A}}{\zeta g}+\frac{1}{2\zeta}\mathcal{O}^{A}\mathcal{O}^{A}\right)\,.
\label{c11}
\end{equation}

\noindent To define the quantum action, we first introduce the classical or background field $\sigma^{\mathrm{cl}}_{J}$ given by

\begin{equation}
\sigma^{\mathrm{cl},A}=\frac{\delta W[J]}{\delta J^A(x)}=-\langle \sigma^{A}(x)\rangle_{J}\,.
\label{c12}
\end{equation} 

\noindent Considering we can write $J^A$ as a function of $\sigma^{\mathrm{cl}}$, we define the quantum action as 

\begin{equation}
\Gamma[\sigma^{\mathrm{cl}}]=W[\sigma^{\mathrm{cl}}]-\int d^4x~J^{A}_{\sigma^{\mathrm{cl}}}(x)\sigma^{\mathrm{cl},A}(x)\,,
\label{c13}
\end{equation}

\noindent where as previously discussed, 

\begin{equation}
\frac{\delta\Gamma[\sigma^{\mathrm{cl}}]}{\delta \sigma^{\mathrm{cl},A}(x)}=-J^{A}_{\sigma^{\mathrm{cl}}}(x)\,.
\label{c14}
\end{equation}

\noindent The effective potential is defined for a constant background configuration $\sigma^{A}_{*}$. The effective potential computed for $\sigma^{A}_{*}$ is defined as

\begin{equation}
\Gamma[\sigma_{*}]=-\int d^4x~V_{\mathrm{eff}}(\sigma_{*})\,\,\,\,\Rightarrow\,\,\,\,V_{\mathrm{eff}}(\sigma_{*})=-\frac{1}{V}\Gamma[\sigma_{*}]\,,
\label{c15}
\end{equation}

\noindent with $V$ the spacetime volume. Therefore,

\begin{equation}
\frac{d}{d\sigma^{A}_{*}}V_{\mathrm{eff}}=J^{A}\,,
\label{c16}
\end{equation}

\noindent where $V$ is absorbed somehow in $J$. We see then if $J^{A}=0$ and a trivial solution $\sigma^{A}_{*}\neq 0$ to the equation

\begin{equation}
\frac{d}{d\sigma^{A}_{*}}V_{\mathrm{eff}}=0\,,
\label{c17}
\end{equation}

\noindent we have a non-trivial extrema for the effective potential which meand $\mathcal{O}^{A}$ condenses. For a minimization of the effective potential, this condensate is said to be energetically favored and might be take into account in the theory. 

A last remark concerning the parameter $\zeta$. In principle, this parameter is free. However, if we demand the effective action $\Gamma$ obeys a homogeneous renormalization group equation, then we can fix $\zeta$ as a function of $g$ as

\begin{equation}
\zeta = \frac{\zeta_0}{g^2}+\zeta_1+g^2\zeta_2+g^4\zeta_3+\ldots\,.
\label{c18}
\end{equation}

\chapter{The construction of $A^h_{\mu}$} \label{constructionAh}

In this Appendix, we provide a derivation for the expression of the gauge invariant $A^h_\mu$ field which was crucial to formulate the non-perturbative BRST symmetry in Ch.~\ref{nonpBRSTRGZ}. We begin with the definition of the functional $f_{A}[u]$ given by

\begin{equation}
f_{A}[u]\equiv \mathrm{Tr\,}\int d^4x\,A^{u}_{\mu}A^{u}_{\mu}= \mathrm{Tr\,}\int d^4x\,\left(u^{\dagger}A_{\mu}u+\frac{i}{g}u^{\dagger}\partial_{\mu}u\right)\left(u^{\dagger}A_{\mu}u+\frac{i}{g}u^{\dagger}\partial_{\mu}u\right)\,.
\label{ah1}
\end{equation}

\noindent For a given gauge field configuration $A_{\mu}$, $f_{A}[u]$ is functional over its gauge orbit. A minimum $f_{A}[h]$ is obtained when

\begin{eqnarray}
\delta f_{A}[u]\big|_{u=h} &=& 0\nonumber\\
\delta^2 f_{A}[u]\big|_{u=h} &>& 0\,,
\label{ah2}
\end{eqnarray}

\noindent and a minimum is \textit{absolute} if

\begin{equation}
f_{A}[h] \leq f_{A}[u]\,,\,\,\, \forall u\,\,\in\,\, \mathcal{U}\,,
\label{ah3}
\end{equation}

\noindent where $\mathcal{U}$ is the space of local gauge transformations. With the absolute minimum $f_{A}[h]$ at our disposal, is possible to define a gauge invariant quantity through

\begin{equation}
\mathcal{A}^{2}_{\mathrm{min}}=\underset{\left\{u\right\}}{\mathrm{min}}\,\mathrm{Tr\,}\int d^4x\,A^{u}_{\mu}A^{u}_{\mu}=f_{A}[h]\,.
\label{ah4}
\end{equation}

\noindent Searching for absolute minimum is a tremendous task. However, we should start at least by demanding conditions (\ref{ah2}). To achieve this, we can perform an infinitesimal expansion around $h$ (which satisfies conditions (\ref{ah2})). We define

\begin{equation}
v=h\mathrm{e}^{ig\omega}\equiv h\mathrm{e}^{ig\omega^{a}T^a}\,,
\label{ah5}
\end{equation} 

\noindent with $T^a$ the $SU(N)$ generators as defined in Ap.~\ref{appendixA} and $\omega^a$ a small parameter. Due to this assumption, we retain terms up to $\omega^2$, which is enough for our interests. By definition,

\begin{eqnarray}
A^{v}_{\mu}&=& v^{\dagger}A_{\mu}v\frac{i}{g}v^{\dagger}\partial_{\mu}v\nonumber\\
&=&\mathrm{e}^{-ig\omega}h^{\dagger}A_{\mu}h\mathrm{e}^{ig\omega}+\frac{i}{g}\mathrm{e}^{-ig\omega}h^{\dagger}(\partial_{\mu}h)\mathrm{e}^{ig\omega}+\frac{i}{g}\mathrm{e}^{-ig\omega}\partial_{\mu}\mathrm{e}^{ig\omega}\nonumber\\
&=&\mathrm{e}^{-ig\omega}A^{h}_{\mu}\mathrm{e}^{ig\omega}+\frac{i}{g}\mathrm{e}^{-ig\omega}\partial_{\mu}\mathrm{e}^{ig\omega}\,,
\label{ah6}
\end{eqnarray}

\noindent where we have used the definition of $A^h_{\mu}$ and $h^{\dagger}h=1$. Expanding eq.(\ref{ah6}) up to quadratic order in $\omega$, we obtain

\begin{eqnarray}
A^{v}_{\mu}&=&\left(1-ig\omega-\frac{g^2}{2}\omega^2+\mathcal{O}(\omega^3)\right)A^{h}_{\mu}\left(1+ig\omega-\frac{g^2}{2}\omega^2+\mathcal{O}(\omega^3)\right)\nonumber\\
&+&\frac{i}{g}\left(1-ig\omega-\frac{g^2}{2}\omega^2+\mathcal{O}(\omega^3)\right)\partial_{\mu}\left(1+ig\omega-\frac{g^2}{2}\omega^2+\mathcal{O}(\omega^3)\right)\nonumber\\
&=& A^h_{\mu}+igA^{h}_{\mu}\omega-\frac{g^{2}}{2}A^{h}_{\mu}\omega^2-ig\omega A^{h}_{\mu}+g^2\omega A^{h}_{\mu}\omega-\frac{g^2}{2}\omega^2 A^{h}_{\mu}+\frac{i}{g}\left(ig\partial_{\mu}\omega\phantom{\frac{1}{2}}\right.\nonumber\\
&-&\left.\frac{g^2}{2}(\partial_{\mu}\omega)\omega-\frac{g^2}{2}\omega\partial_{\mu}\omega+g^2\omega\partial_{\mu}\omega\right)+\mathcal{O}(\omega^3)\,.
\label{ah7}
\end{eqnarray}

\noindent After few simple manipulations, we can rewrite eq.(\ref{ah7}) as

\begin{equation}
A^{v}_{\mu}=A^{h}_{\mu}-\partial_{\mu}\omega+\frac{ig}{2}[\omega,\partial_{\mu}\omega]+ig[A^{h}_{\mu},\omega]+\frac{g^2}{2}[[\omega,A^{h}_{\mu}],\omega]+\mathcal{O}(\omega^3)\,.
\label{ah8}
\end{equation}

\noindent Now, we explicitly compute $f_{A}[v]$,

\begin{eqnarray}
f_{A}[v]&=&\mathrm{Tr\,}\int d^4x\,A^{v}_{\mu}A^{v}_{\mu}\nonumber\\
&=& \mathrm{Tr\,}\int d^4x\,\left(A^{h}_{\mu}-\partial_{\mu}\omega+\frac{ig}{2}[\omega,\partial_{\mu}\omega]+ig[A^{h}_{\mu},\omega]+\frac{g^2}{2}[[\omega,A^{h}_{\mu}],\omega]+\mathcal{O}(\omega^3)\right)\nonumber\\
&\times&\left(A^{h}_{\mu}-\partial_{\mu}\omega+\frac{ig}{2}[\omega,\partial_{\mu}\omega]+ig[A^{h}_{\mu},\omega]+\frac{g^2}{2}[[\omega,A^{h}_{\mu}],\omega]+\mathcal{O}(\omega^3)\right)
\label{ah9}
\end{eqnarray}

\noindent which implies

\begin{eqnarray}
f_{A}[v]&=&  \mathrm{Tr\,}\int d^4x\,\left[A^{h}_{\mu}A^{h}_{\mu}-A^{h}_{\mu}\partial_{\mu}\omega+\frac{ig}{2}A^{h}_{\mu}[\omega,\partial_{\mu}\omega]+igA^{h}_{\mu}[A^{h}_{\mu},\omega]+\frac{g^{2}}{2}A^{h}_{\mu}[[\omega,A^{h}_{\mu}],\omega]\right.\nonumber\\
&-&\left.(\partial_{\mu}\omega)A^{h}_{\mu}+(\partial_{\mu}\omega)(\partial_{\mu}\omega)-ig(\partial_{\mu}\omega)[A^{h}_{\mu},\omega]+\frac{ig}{2}[\omega,\partial_{\mu}\omega]A^{h}_{\mu}+ig[A^{h}_{\mu},\omega]A^{h}_{\mu}\right.\nonumber\\
&-&\left.ig[A^{h}_{\mu},\omega]\partial_{\mu}\omega-g^{2}[A^{h}_{\mu},\omega][A^{h}_{\mu},\omega]+\frac{g^{2}}{2}[[\omega,A^{h}_{\mu}],\omega]A^{h}_{\mu}+\mathcal{O}(\omega^3)\right]\,.
\label{ah10}
\end{eqnarray}

\noindent After few algebraic steps, we obtain

\begin{eqnarray}
f_{A}[v]&=&f_{A}[h]+2\,\mathrm{Tr\,}\int d^4x\,\omega(\partial_{\mu}A^{h}_{\mu})+\mathrm{Tr\,}\int d^4x\,\left(2g^2\omega A^{h}_{\mu}\omega A^{h}_{\mu}-2g^2 A^{h}_{\mu}A^{h}_{\mu}\omega^2\right)\nonumber\\
&-& g^2\mathrm{Tr\,}\int d^4x\,(A^{h}_{\mu}\omega-\omega A^{h}_{\mu})(A^{h}_{\mu}\omega-\omega A^{h}_{\mu})+\mathrm{Tr\,}\int d^4x\,(\partial_{\mu}\omega)(\partial_{\mu}\omega)\nonumber\\
&-&\mathrm{Tr\,}\int d^4x\,ig(\partial_{\mu}\omega)[A^{h}_{\mu},\omega]+\mathcal{O}(\omega^3)\nonumber\\
&=& f_{A}[h]+2\,\mathrm{Tr\,}\int d^4x\,\omega(\partial_{\mu}A^{h}_{\mu})+\mathrm{Tr\,}\int d^4x\,(\partial_{\mu}\omega)\underbrace{(\partial_{\mu}\omega-ig[A^{h}_{\mu},\omega])}_{D_{\mu}(A^h)\omega}+\mathcal{O}(\omega^3)\nonumber\\
&=& f_{A}[h]+2\,\mathrm{Tr\,}\int d^4x\,\omega(\partial_{\mu}A^{h}_{\mu})-\mathrm{Tr\,}\int d^4x\,\omega\partial_{\mu}D_{\mu}(A^h)\omega+\mathcal{O}(\omega^3)\,.
\label{ah11}
\end{eqnarray}

\noindent From eq.(\ref{ah11}) we automatically satisfy condition (\ref{ah2}) with

\begin{eqnarray}
\partial_{\mu}A^{h}_{\mu}&=& 0 \nonumber\\
-\partial_{\mu}D_{\mu}(A^h)&>& 0\,.
\label{ah12}
\end{eqnarray}

\noindent Using the transversality condition $\partial_{\mu}A^{h}_{\mu} = 0$, we can solve $h=h(A)$ as a power series in $A_{\mu}$. As a result, we write $A^h_{\mu}=A^h_{\mu}(A)$ which is very useful. We start from the definition of $A^h_{\mu}$,

\begin{equation}
A^h_{\mu}=h^{\dagger}A_{\mu}h+\frac{i}{g}h^{\dagger}\partial_{\mu}h\,,
\label{ah13}
\end{equation}

\noindent and we write 

\begin{equation}
h=\mathrm{e}^{ig\phi^aT^a}\equiv\mathrm{e}^{ig\phi}=1+ig\phi-\frac{g^2}{2}\phi^2+\mathcal{O}(\phi^3)\,.
\label{ah14}
\end{equation}

\noindent Plugging eq.(\ref{ah14}) in eq.(\ref{ah13}), we obtain

\begin{eqnarray}
A^{h}_{\mu}&=&\left(1-ig\phi-\frac{g^2}{2}\phi^2+\mathcal{O}(\phi^3)\right)A_{\mu}\left(1+ig\phi-\frac{g^2}{2}\phi^2+\mathcal{O}(\phi^3)\right)\nonumber\\
&+&\frac{i}{g}\left(1-ig\phi-\frac{g^2}{2}\phi^2+\mathcal{O}(\phi^3)\right)\partial_{\mu}\left(1+ig\phi-\frac{g^2}{2}\phi^2+\mathcal{O}(\phi^3)\right)\nonumber\\
&=& A_{\mu}+igA_{\mu}\phi-\frac{g^{2}}{2}A_{\mu}\phi^2-ig\phi A_{\mu}+g^2\phi A_{\mu} \phi-\frac{g^2}{2}\phi^2 A_{\mu}-\partial_{\mu}\phi \nonumber\\
&-&\frac{ig}{2}(\phi\partial_{\mu}\phi+(\partial_{\mu}\phi)\phi)+ig\phi\partial_{\mu}\phi + \mathcal{O}(\phi^3)
\label{ah15}
\end{eqnarray}

\noindent The result is

\begin{equation}
A^{h}_{\mu}=A_{\mu}+ig[A_{\mu},\phi]-\frac{g^2}{2}A_{\mu}\phi^2+g^2\phi A_{\mu}\phi-\frac{g^2}{2}\phi^2 A_{\mu}-\partial_{\mu}\phi+\frac{ig}{2}[\phi,\partial_{\mu}\phi]+\mathcal{O}(\phi^3)\,.
\label{ah16}
\end{equation}

\noindent Imposing the transversality of $A^{h}_{\mu}$ on (\ref{ah16}), we write

\begin{eqnarray}
\partial^2\phi&=&\partial_{\mu}A_{\mu}+ig\partial_{\mu}[A_{\mu},\phi]-\frac{g^2}{2}\partial_{\mu}(A_{\mu}\phi^2)+g^2\partial_{\mu}(\phi A_{\mu}\phi)-\frac{g^2}{2}\partial_{\mu}(\phi^2 A_{\mu})+\frac{ig}{2}\partial_{\mu}[\phi,\partial_{\mu}\phi]\nonumber\\
&+&\mathcal{O}(\phi^3)\,,
\label{ah17}
\end{eqnarray}

\noindent which can be solved iteratively. For concreteness we will retain terms up to $A^2$,

\begin{equation}
\phi=\frac{1}{\partial^2}\partial A + \frac{ig}{2}\frac{1}{\partial^2}\left[\partial A,\frac{1}{\partial^2}\partial A\right]+ig\frac{1}{\partial^2}\left[A_{\alpha},\frac{\partial_{\alpha}}{\partial^2}\partial A\right]+\mathcal{O}(A^3)\,.
\label{ah18}
\end{equation}

\noindent Substituting eq.(\ref{ah18}) in (\ref{ah16}), we obtain an explicit expression for $A^h_{\mu}$ as a power series of $A_{\mu}$,

\begin{eqnarray}
A^{h}_{\mu}&=&A_{\mu}-\partial_{\mu}\frac{1}{\partial^2}\partial A+ig\left[A_{\mu},\frac{1}{\partial^2}\partial A\right]-ig\frac{1}{\partial^2}\partial_{\mu}\left[A_{\alpha},\partial_{\alpha}\frac{1}{\partial^2}\partial A\right]+\frac{ig}{2}\frac{1}{\partial^2}\partial_{\mu}\left[\frac{1}{\partial^2}\partial A,\partial A\right]\nonumber\\
&+&\frac{ig}{2}\left[\frac{1}{\partial^2}\partial A,\partial_{\mu}\frac{1}{\partial^2}\partial A\right] + \mathcal{O}(A^3)\,.
\label{ah19}
\end{eqnarray}

\noindent We can also rewrite eq.(\ref{ah19}) using the transverse projector $\mathcal{P}_{\mu\nu}$ which makes the transversality condition manifest,

\begin{equation}
A^{h}_{\mu}=\underbrace{\left(\delta_{\mu\nu}-\frac{\partial_{\mu}\partial_{\nu}}{\partial^2}\right)}_{\mathcal{P}_{\mu\nu}}\underbrace{\left(A_{\nu}-ig\left[\frac{1}{\partial^2}\partial A,A_\nu\right]+\frac{ig}{2}\left[\frac{1}{\partial^2}\partial A,\partial_{\nu}\frac{1}{\partial^2}\partial A\right]+\mathcal{O}(A^3)\right)}_{\phi_\nu}\,.
\label{ah20}
\end{equation}

\noindent It is possible to prove $A^{h}_{\mu}$ is gauge invariant order by order in $g$. As a first evidence, we use expression (\ref{ah20}) to prove it up to first order in $g$. The infinitesimal gauge transformation is

\begin{equation}
\delta A_{\mu}=-\partial_{\mu}\omega+ig[A_{\mu},\omega]\,,
\label{ah21}
\end{equation}

\noindent and we apply it to $\phi_{\nu}$,

\begin{eqnarray}
\delta\phi_\nu &=& -\partial_\nu\omega+ig[A_\nu,\omega]-ig\left[\frac{1}{\partial^2}(-\partial^2\omega),A_\nu\right]-ig\left[\frac{1}{\partial^2}(\partial A),(-\partial_{\nu}\omega)\right]\nonumber\\
&+&\frac{ig}{2}\left[\frac{1}{\partial^2}(-\partial^2\omega),\partial_{\nu}\frac{1}{\partial^2}\partial A\right]+\frac{ig}{2}\left[\frac{1}{\partial^2}\partial A,\partial_{\nu}\frac{1}{\partial^2}(-\partial^2\omega)\right]+\mathcal{O}(g^2)\nonumber\\
&=&-\partial_{\nu}\omega+ig[A_{\nu},\omega]+ig[\omega,A_{\nu}]+ig\left[\frac{1}{\partial^2}\partial A,\partial_{\nu}\omega\right]-ig\left[\omega,\partial_{\nu}\frac{1}{\partial^2}\partial A\right]\nonumber\\
&-&\frac{ig}{2}\left[\frac{1}{\partial^2}\partial A,\partial_{\nu}\omega\right]+\mathcal{O}(g^2)\nonumber\\
&=& -\partial_{\nu}\omega + \frac{ig}{2}\left(\left[\frac{1}{\partial^2}\partial A,\partial_{\nu}\omega\right]+\left[\partial_{\nu}\frac{1}{\partial^2}\partial A,\omega\right]\right)+\mathcal{O}(g^2)\nonumber\\
&=&-\partial_\nu\left(\omega-\frac{ig}{2}\left[\frac{1}{\partial^2}\partial A,\omega\right]\right)+\mathcal{O}(g^2)
\label{ah22}
\end{eqnarray}

\noindent Therefore

\begin{equation}
\delta A^{h}_{\mu}=\mathcal{P}_{\mu\nu}\delta\phi_\nu=\mathcal{O}(g^2)\,,
\label{ah23}
\end{equation}

\noindent which establishes the gauge invariance up to first order in $g$.

\chapter{Propagators of the local RGZ action in LCG}\label{proplrgzlcg}

In this appendix we collect the tree-level propagators for the fields appearing in the local RGZ action in LCG presented in Ch.~\ref{locnonpBRST}. The list of propagators is

\begin{eqnarray}
\langle A^a_{\mu}(p)A^b_{\nu}(-p)\rangle &=& \frac{p^2+M^2}{p^4+(m^2+M^2)p^2+m^2M^2+2Ng^2\gamma^4}\delta^{ab}\mathcal{P}_{\mu\nu}
+\frac{\alpha}{p^4}p_{\mu}p_{\nu}
\nonumber\\
\\
\langle A_{\mu}^a(p)b^b(-p)\rangle&=& -i\frac{p^2}{p^4+\alpha \tilde{m}^4}\delta^{ab}p_{\mu}
\\
\langle A_{\mu}^a(p)\varphi_{\nu}^{bc}(-p)\rangle=\langle A_{\mu}^a(p)\bar{\varphi}_{\nu}^{bc}(-p)\rangle
&=&\frac{g\gamma^2f^{abc}}{p^4+p^2(m^2+M^2)+m^2M^2+2Ng^2\gamma^4}\mathcal{P}_{\mu\nu}\\
\langle A_{\mu}^a(p)\xi^b(-p)\rangle&=& i\frac{\alpha\delta^{ab}}{p^4+\alpha \tilde{m}^4}p_{\mu}
\\
\langle A_{\mu}^a(p)\tau^b(-p)\rangle&=&-i\frac{\alpha \tilde{m}^4}{p^2(p^4+\alpha \tilde{m}^4)}p_{\mu}\delta^{ab}
\\
\langle b^a(p)b^b(-p)\rangle &=&-\frac{\tilde{m}^4}{p^4+\alpha \tilde{m}^4}\delta^{ab}
\\
\langle b^a(p)\xi^b(-p)\rangle&=&-\frac{p^2\delta^{ab}}{p^4+\alpha \tilde{m}^4}
\\
\langle b^a(p)\tau^b(-p)\rangle&=&\frac{\tilde{m}^4}{p^2}\delta^{ab}
\end{eqnarray}

\begin{eqnarray}
\langle\bar{c}^a(p)A^b_{\mu}(-p)\rangle &=&-i\frac{\rho\,\alpha}{p^2(p^4+\alpha \tilde{m}^4)}\delta^{ab}p_{\mu}\\
\langle\bar{c}^a(p)b^b(-p)\rangle &=&-\frac{\rho}{p^4+\alpha \tilde{m}^4}\delta^{ab}\\
\langle\bar{c}^a(p)\tau^b(-p)\rangle &=&\frac{\rho}{p^4+\alpha \tilde{m}^4}\delta^{ab}\\
\langle\bar{c}^a(p)\xi^b(-p)\rangle &=&\frac{\rho\,\alpha}{p^2(p^4+\alpha \tilde{m}^4)}\delta^{ab}
\end{eqnarray}
\begin{eqnarray}
\langle \bar\varphi_{\mu}^{ab}(p)\bar\varphi_{\nu}^{cd}(-p)\rangle
&=&\frac{g^2\gamma^4f^{abm}f^{mcd}}{(p^2+M^2)[p^4+p^2(m^2+M^2)+m^2M^2+2Ng^2\gamma^4]}\mathcal{P}_{\mu\nu}
\nonumber\\
&=& \langle \varphi_{\mu}^{ab}(p)\varphi_{\nu}^{cd}(-p)\rangle\\
&&\hspace{-6cm}\langle \bar\varphi_{\mu}^{ab}(p)\varphi_{\nu}^{cd}(-p)\rangle
=\frac{g^2\gamma^4f^{abm}f^{mcd}}{(p^2+M^2)[p^4+p^2(m^2+M^2)+m^2M^2+2Ng^2\gamma^4]}\mathcal{P}_{\mu\nu}
-\frac{\delta^{ac}\delta^{bd}}{p^2+M^2}\delta_{\mu\nu}\nonumber\\
\\
\langle \varphi_{\mu}^a(p)\tau^b(-p)\rangle=\langle \bar{\varphi}_{\mu}^a(p)\tau^b(-p)\rangle
&=&-i\frac{g\gamma^2}{p^2(p^2+M^2)}p_{\mu}f^{abc}
\\
\langle\xi^a(p)\xi^b(-p)\rangle&=&\frac{\alpha\delta^{ab}}{p^4+\alpha \tilde{m}^4}
\\
\langle\xi^a(p)\tau^b(-p)\rangle &=&\frac{p^2}{p^4+\alpha \tilde{m}^4}\delta^{ab}
\\
\langle\tau^a(p)\tau^b(-p)\rangle&=&-\left\{\frac{m^2(p^4-\alpha \tilde{m}^4)+\tilde{m}^4p^2}{p^2(p^4+\alpha \tilde{m}^4)}
+\frac{2Ng^2\gamma^4}{p^2(p^2+M^2)}\right\}\delta^{ab}  
\end{eqnarray}

\newpage

\end{document}